\numberwithin{equation}{section}
\numberwithin{figure}{section}
\theoremstyle{plain}
\newtheorem{thm}{\protect\theoremname}
\theoremstyle{plain}
\newtheorem{prop}[thm]{\protect\propositionname}
\theoremstyle{definition}
\theoremstyle{definition}
\theoremstyle{plain}
\newtheorem{cor}[thm]{\protect\corollaryname}
\theoremstyle{plain}
\theoremstyle{plain}
\newtheorem{lem}[thm]{\protect\lemmaname}
\theoremstyle{plain}
\theoremstyle{plain}
\newtheorem{assumption}{Assumption}
\algnewcommand{\LineComment}[1]{\Statex \hskip\ALG@thistlm \(\triangleright\) #1}
\newcommand{\dbeta}[1]{\frac{d {#1}}{d\beta}}
\newcommand{\partialbeta}[1]{\frac{\partial {#1}}{\partial \beta}}
\newcommand{\dbetaK}[2]{\frac{d^{#2} {#1}}{d\beta^{#2}}}
\newcommand{\partialbetaK}[2]{\frac{\partial^{#2} {#1}}{\partial \beta^{#2}}}
\providecommand{\remarkname}{Remark}
\providecommand{\lemmaname}{Lemma}
\providecommand{\conjecturename}{Conjecture}
\providecommand{\corollaryname}{Corollary}
\providecommand{\definitionname}{Definition}
\providecommand{\examplename}{Example}
\providecommand{\propositionname}{Proposition}
\providecommand{\theoremname}{Theorem}
\newcommand{\mat}[1]{\left(\begin{matrix} #1\end{matrix}\right)}
\DeclareMathOperator*{\argmin}{argmin}
\DeclareMathOperator*{\eig}{eig}
\DeclareMathOperator*{\diag}{diag}
\DeclareMathOperator*{\supp}{supp}
\newcommand{\inputmarginalSymbol}{r}
\newcommand{\inputmarginal}[1]{\inputmarginalSymbol(\hat{x}#1)}
\newcommand{\inputmarginalVect}{\bm{\inputmarginalSymbol}}
\newcommand{\intermediateencoderSymbol}{q}
\newcommand{\intermediateencoder}[2]{\intermediateencoderSymbol(\hat{x}#1| x#2)}
\newcommand{\intermediateencoderVect}{\bm{\intermediateencoderSymbol}}
\newcommand{\outputmarginal}[1]{s(\hat{x}#1)}
\newcommand{\outputmarginalVect}{\bm{s}}
\newcommand{\expectedDxWRTencoder}{\mathbb{E}_{\intermediateencoder{'}{}}\left[d(x, \hat{x}')\right]}
\newcommand{\expectedDxWRTencoderK}[1]{\mathbb{E}_{\intermediateencoder{'}{}}\left[d(x, \hat{x}')^{#1}\right]}
\newcommand{\dbar}{\bar{d}}
\algnewcommand\algorithmicinput{\textbf{Input:}}
\algnewcommand\Input{\item[\algorithmicinput]}
\algnewcommand\algorithmicoutput{\textbf{Output:}}
\algnewcommand\Output{\item[\algorithmicoutput]}
\newcommand*{\compileappendices}{}
\newcommand*{\compilefigs}{}
\begin{document}
	
\title{Root Tracking for Rate-Distortion: Approximating a Solution Curve with Higher Implicit Multivariate Derivatives}

\author{Shlomi Agmon$^1$\thanks{This work was supported by the ISF under grant 1641/21.}}

\date{%
	$^1$ School of Computer Science and Engineering, \\
	The Hebrew University of Jerusalem,
	Jerusalem,
	Israel\\
	Email: shlomi.agmon@mail.huji.ac.il \\
	[2ex]%
}

\maketitle
\global\long\def\cal#1{\mathcal{#1}}
\global\long\def\bb#1{\mathbb{#1}}
\global\long\def\bf#1{\mathbf{#1}}
\global\long\def\frak#1{\mathfrak{#1}}

\begin{abstract}
	The rate-distortion curve captures the fundamental tradeoff between compression length and resolution in lossy data compression. 
	However, it conceals the underlying dynamics of optimal source encodings or \textit{test channels}.
	We argue that these typically follow a piecewise smooth trajectory as the source information is compressed.
	These smooth dynamics are interrupted at \textit{bifurcations}, where solutions change qualitatively.
	Sub-optimal test channels may collide or exchange optimality there, for example. 
	There is typically a plethora of sub-optimal solutions, which stems from restrictions of the reproduction alphabet.
	
	We devise a family of algorithms that exploits the underlying dynamics to track a given test channel along the rate-distortion curve.
	To that end, we express implicit derivatives at the roots of a non-linear operator by higher derivative tensors.
	Providing closed-form formulae for the derivative tensors of Blahut's algorithm thus yields implicit derivatives of arbitrary order at a given test channel, thereby approximating others in its vicinity.
	Finally, our understanding of bifurcations guarantees the optimality of the root being traced, under mild assumptions, while allowing us to detect when our assumptions fail.
	
	Beyond the interest in rate distortion, this is an example of how understanding a problem's bifurcations can be translated to a numerical algorithm. 
\end{abstract}

\renewcommand*{\thefootnote}{\fnsymbol{footnote}}
\footnotetext[1]{ The author wishes to acknowledge the late Prof. Naftali Tishby for his involvement in the early stages of this work. The author is grateful to Or Ordentlich, without whose continuous support this work could not have reached its conclusion. The author thanks Shaul Zemel, Noam Agmon, Amitai Yuval and the reviewers for their helpful comments.}
\renewcommand*{\thefootnote}{\arabic{footnote}}
\setcounter{footnote}{0}

\medskip 
\textit{Keywords}: 
	Rate distortion theory, 
	Bifurcation, 
	Differential equations.
\medskip 
	

\section{Introduction}
\label{sec:introduction}

The theory of lossy data compression was introduced in the seminal works of \cite{shannon1948mathematical, shannon1959fidelity}, and has since found multiple applications beyond the obvious ones of communications and storage of information. 
Among others, clustering \citep{rose1998deterministic}, perception \citep{blau2019rethinking, sims2016rate}, and it is intimately related to the information bottleneck principle \citep{tishby1999} in learning, \citep{bachrach2003}.


Formally, let $X \sim p_X$ and $\hat{X}$ be discrete i.i.d. random variables on finite \textit{source} and \textit{reproduction} alphabets, respectively denoted $\mathcal{X}$ and $\hat{\mathcal{X}}$.
A \textit{rate distortion} (RD) problem is defined by a \textit{distortion measure} $d:\mathcal{X}\times \hat{\mathcal{X}} \to \mathbb{R}_{\geq 0}$ and a source distribution $p_X(x)$, or $p(x)$ for short. 
One seeks \citep{shannon1948mathematical, shannon1959fidelity} the minimal rate $I(X; \hat{X}) := \bb{E}_{p(\hat{x}|x) p_X(x)} \log \frac{p(\hat{x}|x)}{p(\hat{x})}$ subject to a constraint $D$ on the expected distortion.
The optimal tradeoff between the information rate per message to the distortion is encoded by the rate-distortion function
\begin{equation}		\label{eq:RD-func-def}
	R(D) :=
	\min_{p(\hat{x}|x)}
	\left\{ 
	I(X; \hat{X}): \; \bb{E}_{p(\hat{x}|x) p_X(x)} \left[ d(x, \hat{x}) \right] \leq D
	\right\} \; .
\end{equation}
Despite the interest in rate distortion, there are surprisingly few ways to calculate this tradeoff.

While the minimization problem \eqref{eq:RD-func-def} can be solved analytically in special cases, e.g., \cite[2.6]{berger71} or \cite[10.3]{Cover2006}, a solution is often obtained numerically by the iterative Blahut-Arimoto (BA) algorithm, due to \cite{blahut1972}. Using alternating minimizations, it converges to a test channel $p(\hat{x}|x)$ which achieves the minimum of \eqref{eq:RD-func-def}, \cite{csiszar1974computation}.
However, the BA algorithm suffers from \textit{critical slowing down} near critical points, \cite{agmon2021critical}, points at which the number of symbols $\hat{x}$ required for optimal reproduction decreases.
That is, there is a dramatic increase in the computational costs until convergence there.
Further, one is often interested in the entire $R(D)$ curve. However, standard computation techniques solve \eqref{eq:RD-func-def} only at specified grid points. This yields isolated samples along the curve while making little to no use of previously computed solutions.

To alleviate computational costs, one could consider more efficient BA variants or its approximations thereof, such as the related \citep{Yu10squeezing, matz2004information, sayir2000iterating} for channel capacity or \citep{sutter2015efficient} for constrained capacity.
Alternatively, the choice of initial condition could be improved. For example, deterministic annealing \citep{rose1990deterministic} uses the solution at each grid-point as the initialization for the next, in analogy to annealing in statistical physics \citep{rose1998deterministic}.
The algorithms we propose aim to improve upon this choice significantly.

In an attempt to tackle the above shortcomings of BA, we propose a new family of algorithms whose purpose is to follow the path of a known solution as some (scalar) control parameter is varied.
This is especially appealing for RD problems, as a solution at the extremities of the curve is nearly trivial to obtain. 
For, the constant encoding to $\argmin_{\hat{x}} \bb{E}[d(X, \hat{x})]$ and $x\mapsto \argmin_{\hat{x}} d(x, \hat{x})$ are respectively optimal at zero rate or when the smallest distortion is desired.
Unlike BA, our algorithm for RD provides a piecewise polynomial approximation of the path traversed by the distributions achieving \eqref{eq:RD-func-def}, with uniform convergence guarantees outside a small vicinity of the critical points.
Building on the work of \cite{agmon2021critical}, our algorithm does \textit{not} suffer from an increased computational cost near critical points, but rather from a reduced accuracy there.
Nevertheless, it admits a comprehensible tradeoff between accuracy and computational cost, permitting intelligible choices when high accuracy is desired.
The computational cost of our algorithm is comparable to that of BA with reverse deterministic annealing, with the advantage of computing the entire curve of solutions rather than only solving on a grid.

Using the method of Lagrange multipliers for \eqref{eq:RD-func-def}, with\footnote{ The normalization constraint is omitted for clarity.} $I(X; \hat{X}) + \beta \bb{E} \left[ d(x, \hat{x}) \right]$ and $\beta > 0$, one obtains a pair of equations for $p(\hat{x}|x)$ and the marginal $p(\hat{x})$.
Iterating over these equations boils down to the Blahut-Arimoto algorithm, \citep[10.7-10.8]{Cover2006}.
That is, a necessary condition for a distribution to achieve the minimum at \eqref{eq:RD-func-def} is that it is a fixed point of BA, or equivalently a root of the operator
\begin{equation}		\label{eq:RD-operator-def}
	F := Id - BA_\beta \;, 
\end{equation}
as noted by \citeauthor{agmon2021critical}. Where, $BA_\beta$ denotes a single BA iteration at the multiplier value $\beta$, and $Id$ is the identity. 
The marginal $p(\hat{x})$ may be taken as our variable (see Section \ref{part:how-and-what}.\ref{sub:high-order-deriv-tensors-of-BA}). 
However, to facilitate the discussion we shall write $\bm{x}\in\bb{R}^T$ instead, for some $T > 0$. 
Namely, we consider roots of an equation 
\begin{equation}		\label{eq:solution-as-root-of-functional-eq}
	F(\bm{x}, \beta) = \bm{0} \;,
\end{equation}
for an operator $F\left(\cdot, \beta\right)$ on $\bb{R}^T$, $F:\bb{R}^{T}\times\bb{R} \to \bb{R}^T$, and $\beta$ a real independent ``time-like'' parameter.

This work stems from the following intuition.
Suppose that the Jacobian matrix $D_{\bm{x}}F$ of $F$ is non-singular at a root $(\bm{x}_0, \beta_0)$ of \eqref{eq:solution-as-root-of-functional-eq}. Then, by the Implicit Function Theorem, there not only exists a function $\bm{x}(\beta)$ satisfying \eqref{eq:solution-as-root-of-functional-eq} through the root, $\bm{x}(\beta_0) = \bm{x}_0$, but $\bm{x}(\beta)$ also inherits the differentiability properties of $F$, \cite[I.1.7]{kielhofer2011bifurcation}. 
For example, $\bm{x}$ is analytic in $\beta$ if $F$ is analytic.
When $F$ is particularly well-behaved, then one might expect the derivatives of $\bm{x}$ with respect to $\beta$ determine the path $\bm{x}(\beta)$.
Indeed, when $F$ is real-analytic as in rate-distortion \eqref{eq:RD-operator-def}, then each coordinate of $\bm{x}(\beta)$ can be written as power-series around $\beta_0$. If the series' convergence radii happen to be large or infinite, then the entire solution path $\bm{x}(\beta)$ can be extrapolated from the point $(\bm{x}_0, \beta_0)$, at least in principle.
But even if these radii are small (but non-zero), determining $\bm{x}(\beta)$ only in some small vicinity of $\beta_0$, then one can extrapolate the path $\bm{x}(\beta)$ segment by segment, so long that $D_{\bm{x}} F$ remains non-singular.

We provide three novelties to transform this intuition into an arguably practical algorithm.
The derivatives of $\bm{x}$ with respect to $\beta$ are \textit{implied} by the requirement $F = \bm{0}$ \eqref{eq:solution-as-root-of-functional-eq}. 
To calculate implicit derivatives, we first provide a recursive formula (Theorem \ref{thm:formula-for-high-order-expansion-of-F-in-main-result-sect} in Section \ref{part:how-and-what}.\ref{sub:high-order-beta-derivatives-at-an-operator-root}) for the implicit derivatives $\tfrac{d^l}{d\beta^l} \bm{x}$ at a root in terms of the derivative tensors of $F$, for any order $l > 0$. 
While \cite{zemel2019combinatorics} already solved this for a univariate $\bm{x}$, we are unaware of such formulae when $\bm{x}$ is multivariate.
Second, we provide arbitrary-order closed-form formulae (Theorem \ref{thm:high-order-derivs-of-BA-in-main-text} in Section \ref{part:how-and-what}.\ref{sub:high-order-deriv-tensors-of-BA}) for the derivative tensors of rate-distortion \eqref{eq:RD-operator-def}.
\cite{agmon2021critical} already provided its Jacobian, while for the related channel capacity \cite{Yu10squeezing} provided the Jacobian and \cite{nakagawa2021analysis} also the Hessian.

Together, the first two components allow us to calculate numerically the implicit derivatives
\begin{equation}		\label{eq:l-th-deriv-at-beta0}
	\dbetaK{\bm{x}}{l}\bigg\rvert_{(\bm{x}_0, \beta_0)}
\end{equation}
of arbitrary order at a root of rate-distortion \eqref{eq:RD-operator-def}, under mild assumptions.
In fact, we have discovered a first-order ordinary differential equation (ODE) satisfied by RD roots (Theorem \ref{thm:beta-ODE-in-marginal-coords} in Section \ref{part:details}.\ref{sub:encoders-beta-derivatives}).
Using the implicit derivatives \eqref{eq:l-th-deriv-at-beta0} we can approximate $\bm{x}(\beta)$ nearby, for example via
\begin{equation}		\label{eq:solution-by-beta-as-taylor-approx}
	\bm{x}(\beta_0 + \Delta \beta) \cong
	\frac{1}{0!} \bm{x}_0 + 
	\frac{1}{1!} \dbeta{\bm{x}} \bigg\rvert_{(\bm{x}_0, \beta_0)} \Delta \beta +
	\frac{1}{2!} \dbetaK{\bm{x}}{2} \bigg\rvert_{(\bm{x}_0, \beta_0)} \Delta \beta^2 + \dots +
	\frac{1}{l!} \dbetaK{\bm{x}}{l} \bigg\rvert_{(\bm{x}_0, \beta_0)} \Delta \beta^l \;,
\end{equation}
where $\Delta \beta := \beta - \beta_0$. 
One can then take a step $\Delta \beta$ and recompute expansion \eqref{eq:solution-by-beta-as-taylor-approx} repeatedly.
This simple algorithm based on the Taylor method \citep{butcher2016numerical, atkinson2011numerical} gives a piecewise polynomial approximation of the path $\bm{x}(\beta)$.
After computing \eqref{eq:solution-by-beta-as-taylor-approx} along a grid, then any off-grid point is obtained by merely evaluating a polynomial.
e.g., the right of Figure \ref{fig:approx_from_point_and_RTRD}. 
With that, the above is not to be confused with the gradient flow towards a root at a fixed multiplier value $\beta_0$, which \citep{parker2010symmetry} describe in a related context by an ODE. 
In contrast, the implicit derivatives \eqref{eq:l-th-deriv-at-beta0} describe how a root evolves \textit{with} $\beta$. 

\begin{figure}[h!]
	\centering
	\hspace*{-40pt}
	\includegraphics[width=1.2\textwidth]{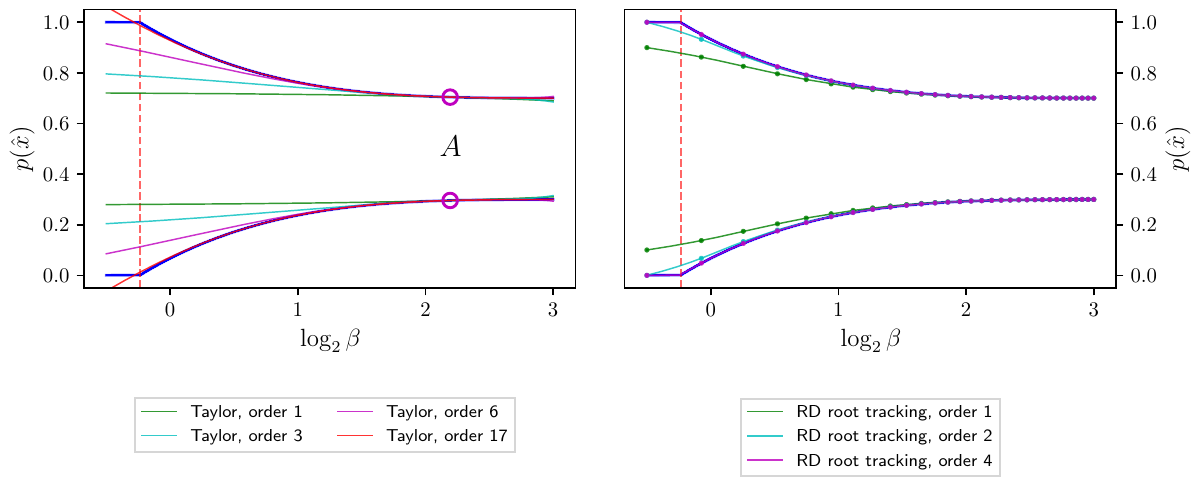}
	\caption{
		\textbf{Approximating the entire solution curve with implicit derivatives \eqref{eq:l-th-deriv-at-beta0}}, for a binary source with a Hamming distortion, compared to its analytical solution in Section \ref{part:proofs}.\ref{sec:binary-source-with-hamming-dist-appendix} (thick blue).
		Each color depicts a marginal probability distribution $p(\hat{x})$ as a function of the Lagrange multiplier $\beta$, for $\hat{x} \in \{1,2\}$; this suffices to parametrize a root (cf., Section \ref{part:how-and-what}.\ref{sub:high-order-deriv-tensors-of-BA}).
		\textbf{Left}: Taylor expansions \eqref{eq:solution-by-beta-as-taylor-approx} of several orders around the point $A$. 
		\textbf{Right}: root-tracking for RD (Algorithm \ref{algo:root-tracking-for-RD}, with $\delta = 10^{-2}$) detects and handles the bifurcation (dashed red vertical). 
		Adding grid-points bootstraps the approximation towards higher accuracy, with each grid point leveraging the computational cost invested earlier, to its right.
		Unlike BA, the entire solution curve is extrapolated here from only 31 grid points.
	}
	\label{fig:approx_from_point_and_RTRD}
\end{figure}

While the above goes a long way towards reconstructing the entire solution path, it is not enough. 
For, as we show in Section \ref{part:details}.\ref{sec:RD-bifurcations-and-root-tracking}, RD problems typically have a plethora of sub-optimal roots, \textit{not} achieving the minimum in \eqref{eq:RD-func-def}.
RD roots can collide and merge into one, or exchange optimality in some cases where the RD curve has a linear segment.
These are instances of \textit{bifurcations} --- a change in the number of roots \eqref{eq:solution-as-root-of-functional-eq}.
To ensure that the root being traced is optimal, some understanding of the solution structure is necessary. 
That is, of the bifurcations of the fixed-point equations of RD, encoded by \eqref{eq:RD-operator-def}.
Together with the two components above, the understanding established in Section \ref{part:details}.\ref{sec:RD-bifurcations-and-root-tracking} provides tools to detect and handle bifurcations (of some types), culminating in our algorithm for tracking the path of an optimal RD root (Algorithm \ref{algo:root-tracking-for-RD} in Section \ref{part:how-and-what}.\ref{sub:RD-root-tracking-near-bifrcations}), subject to mild assumptions.
In addition to the interest in rate distortion itself, this provides an example of how an understanding of a problem's bifurcations can be translated to a numerical algorithm. As the computation of the rate-distortion function is equivalent to a process of deterministic annealing \citep{rose1998deterministic}, we believe that similar tools might facilitate the numerical solution of other problems as well.

RD bifurcations were noted already by \cite{berger71} and others due to the resulting non-smoothness of the RD curve \eqref{eq:RD-func-def} at the points of bifurcation.
\cite{rose1994mapping} uses a mapping approach to provide insights for continuous source alphabets, usually assuming a squared-error distortion, $d(x, \hat{x}) = |x - \hat{x}|^2$. 
While allowing for a much broader class of distortion measures, our results in Section \ref{part:details}.\ref{sec:RD-bifurcations-and-root-tracking} suggest a slightly different picture of RD bifurcations for sources of finite alphabet.
Further, they allow a clearer view of cases where BA with reverse deterministic annealing \citep{rose1990deterministic, rose1998deterministic} follows a sub-optimal solution path rather than the optimal one.

Our algorithm is perhaps best compared to solving an ODE numerically. 
Where, one usually exploits derivatives only to the order dictated by the ODE in order to propagate the solution. 
On the other hand, the mathematical machinery we provide allows us to compute implicit multivariate derivatives $\dbetaK{\bm{x}}{l}$ \eqref{eq:l-th-deriv-at-beta0} of arbitrary order, which we specialize to RD. 
In principle, one may even change the order at will with this machinery. 
With that, we have made several deliberate concessions for the sake of simplicity. 
Among others, fixing the order and step-size results in inefficient use of computational resources; see Section \ref{part:how-and-what}.\ref{sub:efficient-RD-root-tracking} on various possible improvements and \ref{part:details}.\ref{sec:error-analysis} on the root cause of the computational difficulty. 
Nevertheless, despite these concessions, the cost of an entire solution curve appears to be roughly comparable to BA with reverse annealing, as suggested by Figure \ref{fig:err-to-computational-cost-tradeoff}. 
Furthermore, we handle only cluster-vanishing bifurcations as in \citep{agmon2021critical}, although our understanding of RD bifurcations (in Section \ref{part:details}.\ref{sec:RD-bifurcations-and-root-tracking}) permits more than that.

The error of an $l$-th order Taylor method for RD, as in expansion \eqref{eq:solution-by-beta-as-taylor-approx}, is of order $O(|\Delta \beta|^l)$ for small step sizes $|\Delta \beta|$ (Theorem \ref{thm:taylor-method-converges-for-RD-root-tracking-away-of-bifurcation} in Section \ref{part:how-and-what}.\ref{sub:taylor-method-for-RD-root-tracking}); cf., Figure \ref{fig:approx_from_point_and_RTRD}. 
Increasing $l$ or taking $|\Delta \beta|$ smaller improves the approximation in general, as one might expect. 
Though, for simplicity, we fix these parameters while computing. 
This method can be seen to have better accuracy in general than interpolating the output distributions of BA, for example, at least for orders $l > 1$. Though, the details of this are deferred to future work. 
The computational costs of Algorithm \ref{algo:root-tracking-for-RD} are only linear in source alphabet size $|\mathcal{X}|$ thanks to our choice of coordinates for $\bm{x}$ (in Section \ref{part:how-and-what}.\ref{sub:high-order-deriv-tensors-of-BA}), and are asymptotically polynomial in $|\hat{\cal{X}}|$ when $l$ is fixed. 
On the other hand, its computational costs grow (hyper-)exponentially with the order $l$. 
However, despite this rapid growth rate, higher orders $l$ make much better use of the invested computational costs, in general, when the step size $|\Delta \beta|$ is small. 
The cost-to-error tradeoff is demonstrated in Figure \ref{fig:err-to-computational-cost-tradeoff}, with details in Section \ref{part:how-and-what}.\ref{sub:costs-and-error-to-cost-tradeoff-of-RD-root-tracking-in-main-results-section}.

\begin{figure}[h!]
	\centering
	\hspace*{-40pt}
	\includegraphics[trim={3cm 0 2cm 0}, clip, width=1.2\textwidth]{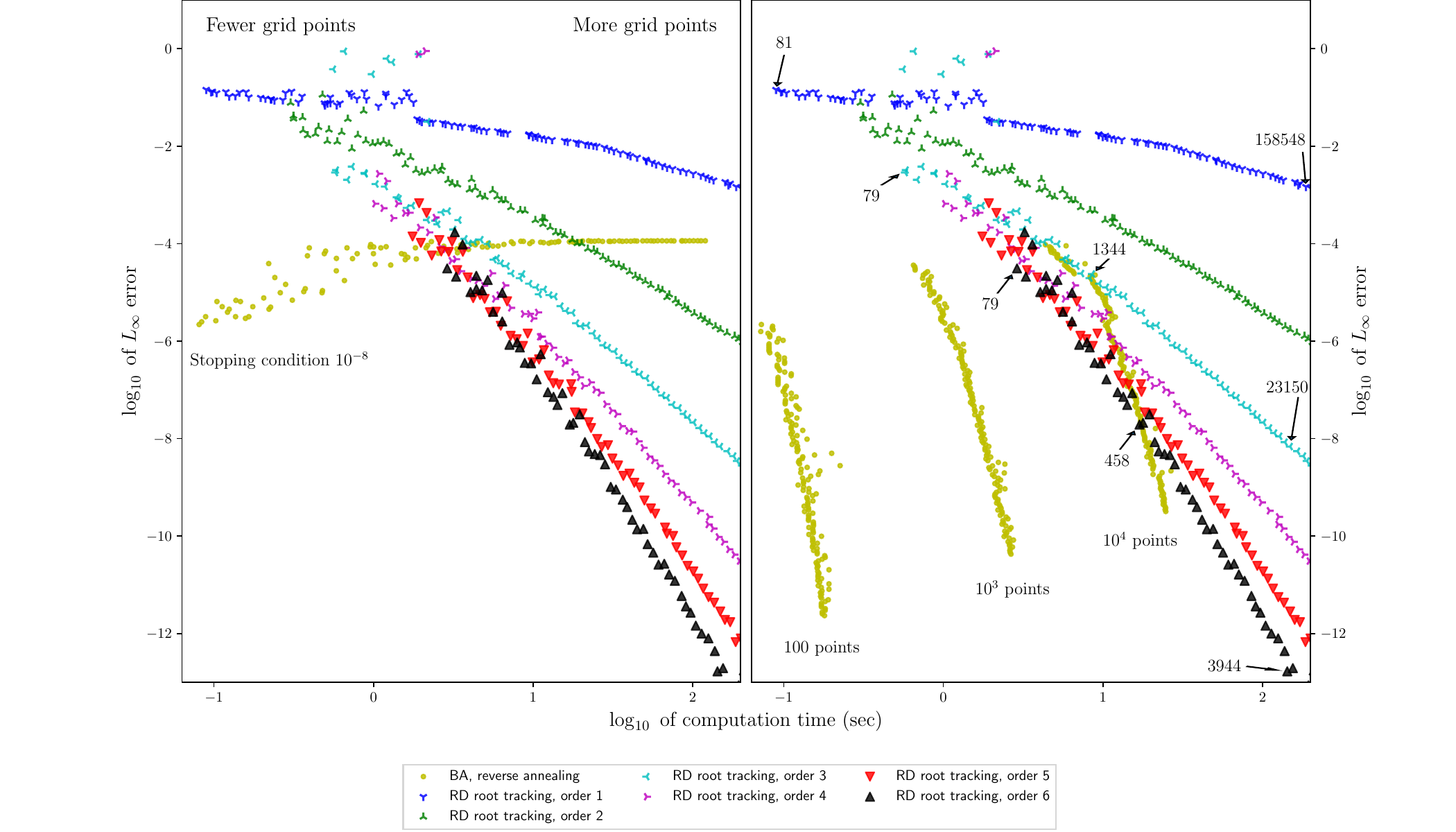}
	\caption{
		\textbf{The tradeoff between error and computational cost, by the number of grid points}.
		For each number of equally spaced grid points, the maximal error from the true solution (over all grid points) is plotted against the computation time, which serves as a rough measure for the computational complexity of the entire solution curve.
		Computation time was measured with our implementation, running single-threaded on a 1.80GHz Intel i7-8550U processor; see \ref{part:how-and-what}.\ref{sub:costs-and-error-to-cost-tradeoff-of-RD-root-tracking-in-main-results-section} on complexities and cost-to-error tradeoff. 
		Results are shown for RD root tracking (Algorithm \ref{algo:root-tracking-for-RD}) of several orders, and for Blahut-Arimoto computed in reverse annealing.
		Unlike BA, RD root tracking approximates the whole solution curve rather than just the grid points.
		The leftmost marker for each algorithm represents a grid of about 80 points, with an $\sim 8\%$ increase between consecutive markers.
		When there are too few grid points, RD root tracking is sensitive to their precise location, often failing to detect the bifurcation. This is manifested by large errors to the top-left.
		Error calculation for root tracking ignores the point of heuristic itself, where a cluster mass threshold of $\delta = 0.01$ is crossed (see Section \ref{part:how-and-what}.\ref{sub:RD-root-tracking-near-bifrcations}).
		Solutions along the grid are compared against the analytical ones for a binary source with a Hamming distortion, Section \ref{part:proofs}.\ref{sec:binary-source-with-hamming-dist-appendix}. 
		Grid points were selected uniformly for all algorithms in an attempt to avoid bias in error estimation.
		\textbf{Left}: BA's grid size is increased gradually, from only 80 grid-points at the left to $\sim \text{760,000}$ at the right, with a stopping condition of $10^{-8}$.
		As the grid becomes denser, BA computes at points closer to the bifurcation.
		Due to BA's critical slowing down, its accuracy is therefore reduced as the grid becomes denser, which is clearly noticed to the right.
		cf., the bottom of Figure \ref{fig:derivative-calculation-loses-accuracy-near-bifurcation}.
		\textbf{Right}: BA's stopping condition is varied gradually from $10^{-8}$ to $\sim 10^{-14}$, for a grid of fixed sizes $100, 10^3$ or $10^4$.
		The number of grid points is shown at several markers, for RD root tracking of orders 1, 3, and 6.
		Notice the intersection of the plot for BA with $10^4$ grid-points with RD root tracking of order 6; the latter obtains the entire solution curve with only 458 points, at the same cost and accuracy.
	}
	\label{fig:err-to-computational-cost-tradeoff}
\end{figure}

To assist the reader, this paper is divided into three. 
Part \ref{part:how-and-what} focuses on the necessary details, such as how to compute implicit derivatives for RD problems, and how these could be used to reconstruct the solution curve.
Supporting ideas are elaborated in Part \ref{part:details}: an outline of the derivation of the formulae for RD derivative tensors, a study of RD bifurcations and their relations to root-tracking, and complexity and error analyses.
Part \ref{part:proofs} provides the proofs and technical details omitted elsewhere.
At the beginning of most parts and sections, we placed a short overview to help keep track of the logical flow.
We also provide an annotated source code of our implementation at \newline
\href{https://github.com/shlomiag/RTRD}{https://github.com/shlomiag/RTRD} 

\paragraph*{Notations.}
Vectors are denoted boldface ${\bm{x}}$, its $j$-th coordinate $x_j$ and scalars $x$ in regular-font.
An initial vector condition is denoted $\bm{x}_0$. 
Also, $\bm{\alpha} = (\alpha_0, \bm{\alpha}_+)$ when considering $\alpha_0$ as the zeroth coordinate of $\bm{\alpha}$, and $\bm{\alpha}_+$ for the rest. 
$\Delta[S]$ denotes the \textit{probability simplex} or \textit{simplex} on a (finite) set $S$, $\{ p\in \bb{R}^S: \; p(s) \geq 0 \text{ for } s\in S, \text{ and } \sum_{s\in S} p(s) = 1\}$.
$T$ is the dimension of an unspecified operator $F(\bm{x}, \beta)$, $\bm{x} \in \bb{R}^T$, as in \eqref{eq:solution-as-root-of-functional-eq}.
$N := |\mathcal{X}|$ and $M := |\hat{\mathcal{X}}|$ are the source and reproduction alphabet sizes of a given RD problem $(d, p_X)$. 
A reproduction symbol $\hat{x} \in \hat{\mathcal{X}}$ is also called a \textit{cluster}.

\clearpage
\part{How to track operator roots for rate-distortion problems}
\label{part:how-and-what}

This part aims to present the details necessary for our root-tracking Algorithm \ref{algo:root-tracking-for-RD}, which reconstructs the solution curve of a rate-distortion problem.

To that end, we start in \ref{sub:beta-derivs-at-an-operator-root} with the mathematical observation underlying the calculation of higher implicit multivariate derivatives at an operator root. 
This is accompanied by direct calculations of low-orders, and a simple yet non-trivial example in \ref{subsub:lines-intersecting-parabola-example}, before diving in \ref{sub:high-order-beta-derivatives-at-an-operator-root} into the machinery for implicit derivatives of arbitrary order. 
In \ref{sub:high-order-deriv-tensors-of-BA}, we provide closed-form formulae for the derivative tensors of the Blahut-Arimoto operator. This allows us to specialize the above machinery to implicit derivatives of rate-distortion problems of arbitrary order.

Once the results for implicit derivatives at a point are in place, we use them in \ref{sub:taylor-method-for-RD-root-tracking} to reconstruct the entire solution curve. This requires an understanding of the bifurcations of the operator at hand. For rate-distortion problems, we build on the results of \cite{agmon2021critical}, which are expanded significantly in Section \ref{part:details}.\ref{sec:RD-bifurcations-and-root-tracking}. This allows RD bifurcations to be handled in \ref{sub:RD-root-tracking-near-bifrcations}, completing the algorithm.
Error guarantees for our algorithm, its computational and memory complexities, and its error-complexity tradeoff are discussed in \ref{sub:costs-and-error-to-cost-tradeoff-of-RD-root-tracking-in-main-results-section}. We conclude this part by discussing in \ref{sub:efficient-RD-root-tracking} how our choices could be improved to yield a more efficient variant of the proposed algorithm.

Part \ref{part:details} expands on these results, explaining details that are important yet not strictly necessary for understanding the algorithm.

\medskip
\section{Implicit derivatives at an operator's root, and for rate-distortion problems}
\label{sec:beta-derivs-at-an-operator-root-and-for-RD}

This section provides the tools to calculate implicit derivatives at a root $(\bm{x}_0, \beta_0)$ of an arbitrary operator $F$, culminating with the results necessary for implicit derivatives of rate-distortion problems.

\medskip
\subsection{How implicit derivatives at an operator's root can be calculated}
\label{sub:beta-derivs-at-an-operator-root}

The sequel rests upon the following observation. 
Let $F(\cdot, \beta)$ be an operator on $\bb{R}^T$, $F:\bb{R}^{T}\times\bb{R} \to \bb{R}^T$, and let $(\bm{x}_0, \beta_0)$ be its root, $F(\bm{x}_0, \beta_0) = \bm{0}$ \eqref{eq:solution-as-root-of-functional-eq}, such that
\begin{assumption}[$\bm{x}$ can be written as a function of $\beta$]
	There exists a function $\bm{x}(\beta)$ defined in some neighborhood of $\beta_0$, such that $\bm{x}(\beta)$ is a root of $F$ through $\bm{x}_0$. 		\label{assumption:operator-root-is-a-function-of-beta}
\end{assumption}
\begin{assumption}
	The function $\bm{x}(\beta)$ is differentiable at $\beta_0$ as many times as needed.
	\label{assumption:solution-is-smooth-in-beta}
\end{assumption}
We shall assume throughout that any derivative of $F$ exists, as in the case of the Blahut-Arimoto operators.
Using $\bm{x}(\beta)$ from Assumption \ref{assumption:operator-root-is-a-function-of-beta}, we have a well-defined composition
\begin{equation}		\label{eq:key-observation-composite-is-constant-path}
	\beta \longmapsto \big( \bm{x}(\beta), \beta \big) \longmapsto F\big( \bm{x}(\beta), \beta \big) 
\end{equation}
in the vicinity of $\beta_0$.
Since $\big( \bm{x}(\beta), \beta \big)$ is a root, this composition is nothing but the constant path $\beta \mapsto \bm{0}$ in $\bb{R}^T$. Hence, by Assumption \ref{assumption:solution-is-smooth-in-beta}, its derivatives with respect to $\beta$ must vanish, to any order. That is,
\begin{equation}		\label{eq:high-order-deriv-of-F-implicit}
	\dbetaK{}{l} F\big(\bm{x}(\beta), \beta\big)\Big\rvert_{\beta_0} = \bm{0},
\end{equation}
for any $l \geq 0$. 
In essence, all the results below are encoded in Equation \eqref{eq:high-order-deriv-of-F-implicit}, with the remainder of this section dedicated to extracting the information of interest.

Before specifying \emph{how} implicit derivatives $\dbetaK{\bm{x}}{l}$ \eqref{eq:l-th-deriv-at-beta0} can be calculated from \eqref{eq:high-order-deriv-of-F-implicit}, several clarifications are due.
We shall require throughout that Assumptions \ref{assumption:operator-root-is-a-function-of-beta} and \ref{assumption:solution-is-smooth-in-beta} hold at any root of $F$, except perhaps at points of bifurcation. However, we do \emph{not} require a root $\bm{x}(\beta)$ at \ref{assumption:operator-root-is-a-function-of-beta} to be unique. 
That is, there may be multiple functions $\bm{x}(\beta)$, all of which are roots of $F$ \eqref{eq:solution-as-root-of-functional-eq} through $(\bm{x}_0, \beta_0)$. 
A-priori, it is not clear that $\bm{x}$ can be written as a function of $\beta$, nor that $\bm{x}(\beta)$ is sufficiently differentiable. Both of these assumptions follow from the Implicit Function Theorem; e.g., \cite[Theorem I.1.1]{kielhofer2011bifurcation}.
For, write $D_{\bm{x}} F(\bm{x}_0, \beta_0)$ for the Jacobian matrix of $F$ with respect to its $\bm{x}$ coordinates, evaluated at $(\bm{x}_0, \beta_0)$. Recall, it is the matrix defined by $\left(D_{\bm{x}} F\right)_{i,j} := \frac{\partial F_i}{\partial x_j}$.
When it is non-singular, then the theorem not only implies that there exists a unique function $\bm{x}(\beta)$ through $(\bm{x}_0, \beta_0)$, but also that $\bm{x}(\beta)$ inherits the differentiability properties of $F$, \cite[I.1.7]{kielhofer2011bifurcation}. That is, the derivatives $\dbetaK{\bm{x}}{l}\big\rvert_{\beta_0}$ \eqref{eq:l-th-deriv-at-beta0} exist, to any order.
However, we shall usually only require that Assumptions \ref{assumption:operator-root-is-a-function-of-beta} and \ref{assumption:solution-is-smooth-in-beta} hold, rather than than the stronger condition that $D_{\bm{x}} F$ is non-singular.
This is reasonable for rate-distortion problems --- see Section \ref{sub:taylor-method-for-RD-root-tracking} for details.
In general, however, understanding the bifurcations of $F$ is necessary --- e.g., Section \ref{part:details}.\ref{sec:RD-bifurcations-and-root-tracking} for RD.
For example, both assumptions hold for the constant one-dimensional operator $F(x, \beta) := 0$, whose Jacobian vanishes everywhere despite having no bifurcations. 
While on the other hand, the Jacobian of Example \ref{subsub:lines-intersecting-parabola-example} below is singular precisely at the point of bifurcation there, where its two solution curves intersect and annihilate each other (see Figure \ref{fig:line-intersecting-parabola-problem-def}).

\medskip 
We start with a gentle low-dimensional introduction as to how the derivatives $\dbetaK{\bm{x}}{l}$ \eqref{eq:l-th-deriv-at-beta0} can be calculated.
Write $F = (F_1, \dots, F_T)$ for the coordinates of $F$. Applying the multivariate chain-rule to the first-order equation $\tfrac{d}{d\beta} F = \bm{0}$ \eqref{eq:high-order-deriv-of-F-implicit} reads,
\begin{equation}		\label{eq:first-order-expansion-of-operator-eq-implicit-written-full}
	\dbeta{}F_i = \sum_{j=1}^T \frac{\partial F_i}{\partial x_j} \frac{dx_j}{d\beta} + \frac{\partial F_i}{\partial \beta} = 0
\end{equation}
for each $i = 1, \dots, T$. 
Functions are understood henceforth to be evaluated at $(\bm{x}_0, \beta_0)$, unless otherwise stated.
Write $D_\beta F$ for the vector whose $i$-th entry is $\tfrac{\partial F_i}{\partial \beta}$. 
The calculation of $\dbeta{\bm{x}}$ thus boils down to solving the linear equation\footnote{ It suffices to find a linear pre-image under $D_{\bm{x}} F$ if it is not invertible.}
\begin{equation}		\label{eq:ODE-implicit-form}
	D_{\bm{x}} F \; \tfrac{d\bm{x}}{d\beta} = -D_\beta F \;.
\end{equation}
This is an implicit ordinary differential equation (ODE), describing how the root $\bm{x}$ evolves with $\beta$. cf., Section \ref{sub:taylor-method-for-RD-root-tracking} on the analyticity of RD roots. 
This ODE is an immediate consequence of the Implicit Function Theorem, \cite[Theorem 5]{de2014implicit}, at least when the Jacobian is non-singular.
We derive its explicit form \eqref{eq:RD-ODE-in-main-text} for RD later, in Theorem \ref{thm:beta-ODE-in-marginal-coords} (Section \ref{part:details}.\ref{sub:encoders-beta-derivatives}). 
A marginal distribution $\inputmarginalVect$ of full support on the reproduction alphabet $\hat{\cal{X}}$ which is a fixed-point of Blahut-Arimoto satisfies
\begin{equation}		\label{eq:RD-ODE-in-main-text}
	\sum_{\hat{x}'} A_{\hat{x}, \hat{x}'} \dbeta{\inputmarginal{'}} = 
	\bb{E}_{p(\hat{x}', x)} \left[\intermediateencoder{}{} d(x, \hat{x}') \right]
	- \bb{E}_{p_X} \left[\intermediateencoder{}{} d(x, \hat{x}) \right] 
\end{equation}
for every $\hat{x}$, where $A_{\hat{x}, \hat{x}'}$ is given by \eqref{eq:A-matrix-def}, and $\intermediateencoderVect$ is the test channel or \textit{encoder} corresponding to $\inputmarginalVect$ by the BA Equation \eqref{eq:encoder-eq}. 

Before proceeding to the higher-order counterparts of the implicit ODE \eqref{eq:ODE-implicit-form}, some classic material on higher derivatives is necessary. 
e.g., \cite[VIII.12]{dieudonne1969foundations} or the very readable \cite[10.3]{aguilar2021analysis} for the following. 
Let $f(\bm{x}, \beta)$ be a real-valued function on $\bb{R}^T \times \bb{R}$, $f: \bb{R}^T \times \bb{R} \to \bb{R}$. Fixing $\beta$ at $\beta_0$ for a moment, its gradient is the vector $(\tfrac{\partial f}{\partial x_1}, \dots, \tfrac{\partial f}{\partial x_T})$.
The gradient can be considered as a linear functional, mapping a vector $\bm{v} \in \bb{R}^T$ to $v_1 \cdot \tfrac{\partial f}{\partial x_1} + \dots + v_T \cdot \tfrac{\partial f}{\partial x_T}$. This is useful when considering the first-order Taylor expansion $f(\bm{x}_0, \beta_0) + \sum_j v_j \tfrac{\partial f}{\partial x_j}$ of $f(\cdot, \beta_0)$ about $\bm{x}_0$, where $\bm{v} := \bm{x} - \bm{x}_0$ is the deviation from the point of expansion.
Similarly, the Hessian matrix $\big( \tfrac{\partial^2 f}{\partial x_i \partial x_j} \big)_{i, j=1}^T$ of $f$ is a bi-linear map. Namely, it maps a pair of vectors $\bm{u}, \bm{v} \in \bb{R}^T$ to $\sum_{i, j} u_i v_j \cdot \tfrac{\partial^2 f}{\partial x_i \partial x_j}$, in a manner which is separately linear in each of the two vectors.
The second-order expansion of $f$ maps a deviation $\bm{v} \in \bb{R}^T$ from the basepoint $\bm{x}_0$ to $f(\bm{x}_0, \beta_0) + \sum_j v_j \tfrac{\partial f}{\partial x_j} + \sum_{i, j} v_i v_j \cdot \tfrac{\partial^2 f}{\partial x_i \partial x_j}$.

Next, replace the real-valued $f$ with the $i$-th component $F_i$ of the operator $F = (F_1, \dots, F_T)$ that we have started with, and allow derivatives also with respect to $\beta$. The above generalizes as follows.
For orders $b, m \geq 0$ and a coordinate $1 \leq i \leq T$ (all integers), denote by $D_{\beta^b, \bm{x}^m}^{b + m} F_i$ the following symmetric\footnote{ 
	An $m$-multilinear map $T$ is \emph{symmetric} if its value is unchanged by an arbitrary permutation of its arguments:
	$T[\bm{v}_1, \dots, \bm{v}_{m}] = T[\bm{v}_{\sigma(1)}, \dots, \bm{v}_{\sigma(m)}]$ for any permutation $\sigma$ of $\{1, 2, \dots, m\}$.
} $m$-multilinear map, which is defined on $\bm{v}_1, \dots, \bm{v}_{m} \in \bb{R}^T$ by
\begin{equation}		\label{eq:mixed-deriv-def-evaluated-applied-to-vectors}
	D_{\beta^b, \bm{x}^m}^{b + m} F_i\Big\rvert_{(\bm{x}_0, \beta_0)}[\bm{v}_1, \dots, \bm{v}_{m}] :=
	\sum_{1 \leq i_1, i_2, \dots, i_{m} \leq T} 
	v_{1, i_1} \cdots v_{m, i_{m}} \cdot 
	\frac{\partial^{m}}{\partial x_{i_1} \partial x_{i_2} \cdots \partial x_{i_{m}} } \partialbetaK{F_i}{b}\left(\bm{x}_0, \beta_0 \right) \;.
\end{equation}
Just as in the second-order Hessian, each index $i_j$ varies independently over all the coordinates $1, \dots, T$ of $\bm{v}_j$, for $j = 1, \dots, m$. 
We write $D_{\beta^b, \bm{x}^m}^{b + m} F$ for the vector whose $i$-th coordinate is $D_{\beta^b, \bm{x}^m}^{b + m} F_i$.
Evaluation at $(\bm{x}_0, \beta_0)$ shall be omitted whenever clear from the context. 
For practical purposes, the $i$-th multilinear map $D_{\beta^b, \bm{x}^m}^{b + m} F_i$ may be considered as a ``matrix'' with $m$ axes, and $D_{\beta^b, \bm{x}^m}^{b + m} F$ as a ``matrix'' with $m + 1$ axes. 
For example, the vector $D_{\bm{x}} F_i$ is the gradient of the (scalar) function $F_i$, while $D_{\bm{x}} F$ is the $T$-by-$T$ Jacobian matrix of $F$. 
The integers $b$ and $m$ are the orders of differentiation with respect to $\beta$ and the coordinates of $\bm{x}$, respectively. 
For $b > 0$, $D_{\beta^b, \bm{x}}^{b+1} F_i$ and $D_{\beta^b, \bm{x}}^{b+1} F$ are derivative tensors of higher order, although of the same respective sizes as $D_{\bm{x}} F_i$ and $D_{\bm{x}} F$.
When $b$ and $i$ are given, one can think of the $m$-tuples $\left(i_1, i_2, \dots, i_{m}\right)$ indexing $D_{\beta^b, \bm{x}^m}^{b + m} F_i$ in \eqref{eq:mixed-deriv-def-evaluated-applied-to-vectors} as differentiations in some particular order. First derive $\partialbetaK{F_i}{b}$ with respect to $x_{i_m}$, then with respect to $x_{i_{m-1}}$, and so forth up to $x_{i_1}$. This notation is redundant, in the sense that exchanging distinct indices $i_{j_1} \neq i_{j_2}$ yields the same partial derivative when $F_i$ is well-behaved.

When $i$ is understood from the context, it is sometimes convenient to index $D_{\beta^b, \bm{x}^m}^{b + m} F_{i}$ by a multi-index $\bm{\alpha} = (\alpha_0, \bm{\alpha}_+)\in \bb{N}_0\times \bb{N}_0^{T}$, with $\bb{N}_0$ the non-negative integers. $\alpha_0$ then represents the number $b$ of differentiations with respect to $\beta$, and $\alpha_j$ the number of differentiations with respect to $x_j$ for $1 \leq j \leq T$.
Define $|\bm{\alpha}| := \sum_j \alpha_j$. When setting $\bm{\alpha} = (1, 3, 2) \in \bb{N}_0^{1+2}$ for an operator $F$ on $T = 2$ dimensions, $\tfrac{\partial^6}{\partial \beta \; \partial x_1^3 \partial x_2^2 } F_i$ for example is shortened to $\frac{\partial^{|\bm{\alpha}|}}{\partial \beta^{\alpha_0} \partial \bm{x}^{\bm{\alpha}_+}} F_i = \frac{\partial^6}{\partial \beta \partial \bm{x}^{(3, 2)}} F_i$, for $i=1, 2$.
Because the order of differentiation does not matter when $F_i$ is well-behaved, then $\bm{\alpha}$ corresponds not only to the entry $(i_1, \dots, i_5) = (1, 1, 1, 2, 2)$ of $D_{\beta^1, \bm{x}^5}^{6} F_i$ \eqref{eq:mixed-deriv-def-evaluated-applied-to-vectors}, but also to its permutations. 
See also the multivariate notation definitions \eqref{eq:multivariate-notation-defs}, in Section \ref{part:details}.\ref{sec:multivariate-faa-di-brunos-formula}. 

At the next order, differentiating $\dbeta{}F_i = 0$ \eqref{eq:first-order-expansion-of-operator-eq-implicit-written-full} with respect to $\beta$ yields
\begin{equation}		\label{eq:second-order-expansion-of-operator-eq-implicit-written-full}
	\dbetaK{}{2}F_i =
	\sum_{j, k} \frac{\partial^2 F_i}{\partial x_k \partial x_j} \frac{dx_j}{d\beta} \frac{dx_k}{d\beta} +
	\sum_j \frac{\partial^2 F_i}{\partial \beta \partial x_j} \frac{dx_j}{d\beta} +
	\sum_j \frac{\partial F_i}{\partial x_j} \frac{d^2 x_j}{d\beta^2} +
	\sum_j \frac{\partial^2 F_i}{\partial x_j \partial \beta} \frac{dx_j}{d\beta} +
	\frac{\partial^2 F_i}{\partial\beta^2} = 0 \;.
\end{equation}
Unlike the first-order expansion \eqref{eq:first-order-expansion-of-operator-eq-implicit-written-full}, the second-order one \eqref{eq:second-order-expansion-of-operator-eq-implicit-written-full} contains a 3-tensor term $\sum \frac{\partial^2 F_i}{\partial x_k \partial x_j} \tfrac{dx_j}{d\beta} \tfrac{dx_k}{d\beta}$. 
By definition \eqref{eq:mixed-deriv-def-evaluated-applied-to-vectors}, the latter is $D^2_{\bm{x}, \bm{x}}F_i [\tfrac{d\bm{x}}{d\beta}, \tfrac{d\bm{x}}{d\beta}]$, while the mixed-derivatives term following it is $D^2_{\beta, \bm{x}}F_i [\tfrac{d\bm{x}}{d\beta}]$, and so forth. 
Rewriting the first few expansion orders $\dbetaK{F}{l} = \bm{0}$ \eqref{eq:high-order-deriv-of-F-implicit} in this notation,
\begin{align}
	\bm{0} &= \tfrac{d^0 F}{d\beta^0} = F \tag{\ref{eq:solution-as-root-of-functional-eq}}	\\
	\bm{0} &= \tfrac{d^1 F}{d\beta^1} = D_{\bm{x}} F[\tfrac{d\bm{x}}{d\beta}] + D_\beta F \label{eq:first-order-expansion-of-operator-eq-implicit}	\\
	\bm{0} &= \tfrac{d^2 F}{d\beta^2} = D_{\bm{x}} F[\tfrac{d^2 \bm{x}}{d\beta^2}] + D^2_{\bm{x}, \bm{x}}F[\tfrac{d\bm{x}}{d\beta}, \tfrac{d\bm{x}}{d\beta}] + 
	2 D^2_{\beta, \bm{x}}F [\tfrac{d \bm{x}}{d\beta}] + D^2_{\beta, \beta}F \label{eq:second-order-expansion-of-operator-eq-implicit}	\\
	\bm{0} &= \tfrac{d^3 F}{d\beta^3} = D_{\bm{x}} F[\tfrac{d^3 \bm{x}}{d\beta^3}] +
	D^3_{\bm{x}, \bm{x}, \bm{x}}F[\tfrac{d\bm{x}}{d\beta}, \tfrac{d\bm{x}}{d\beta}, \tfrac{d\bm{x}}{d\beta}] + 
	3 D^2_{\bm{x}, \bm{x}}F[\tfrac{d^2\bm{x}}{d\beta^2}, \tfrac{d\bm{x}}{d\beta}] 
	\nonumber \\ & \quad \quad +
	3 D^3_{\beta, \bm{x}, \bm{x}} F[\tfrac{d\bm{x}}{d\beta}, \tfrac{d\bm{x}}{d\beta}] + 
	3 D^2_{\beta, \bm{x}}F[\tfrac{d^2 \bm{x}}{d\beta^2}] + 
	3 D^3_{\beta, \beta, \bm{x}}F[\tfrac{d\bm{x}}{d\beta}] + 
	D^3_{\beta, \beta, \beta}F		\label{eq:third-order-expansion-implicit}	
\end{align}
where the third-order expansion \eqref{eq:third-order-expansion-implicit} for $\dbetaK{}{3}F$ follows from \eqref{eq:second-order-expansion-of-operator-eq-implicit} by a straightforward calculation. 

As can be seen in \eqref{eq:first-order-expansion-of-operator-eq-implicit}-\eqref{eq:third-order-expansion-implicit}, the implicit derivative $\dbetaK{\bm{x}}{l}$ \eqref{eq:l-th-deriv-at-beta0} of highest-order appears only once in the $l$-th order expansion, at the product $D_{\bm{x}} F[\tfrac{d^l \bm{x}}{d\beta^l}]$ of $\tfrac{d^l \bm{x}}{d\beta^l}$ by the linear map $D_{\bm{x}} F$. 
While, the other terms in each equation contain only derivatives $\dbetaK{\bm{x}}{k}$ of lower orders $0 < k < l$.
This holds for any order $l > 0$, as we show in Section \ref{sub:high-order-beta-derivatives-at-an-operator-root} below. 
So, in principle, all one needs to do in order to obtain the derivatives $\dbetaK{\bm{x}}{l}$ \eqref{eq:l-th-deriv-at-beta0} of \textit{any} order $l > 0$ is to solve these equations recursively:
\begin{enumerate}
	\item Suppose that the lower-order derivatives $\dbetaK{\bm{x}}{k}$ are known for $0 < k < l$.
	\item Calculate all the derivative tensors of $F$, up to order $l$.		\label{point:calculating-all-derivatives-of-F}
	\item Evaluate the multilinear forms in the expansion of $\dbetaK{}{l}F = \bm{0}$, except for $D_{\bm{x}} F[\tfrac{d^l \bm{x}}{d\beta^l}]$.
	\item Solve a linear equation in $\tfrac{d^l \bm{x}}{d\beta^l}$ with coefficients $D_{\bm{x}} F$.
\end{enumerate}
Up to the technicalities of a general-order machinery this is the heart of Algorithm \ref{algo:high-order-derivs-of-operator-roots} below, for computing implicit multivariate derivatives of arbitrary order. 
Once these technicalities are settled in Section \ref{sub:high-order-beta-derivatives-at-an-operator-root}, there are two main tasks that one needs to tackle in order to obtain the derivatives $\dbetaK{\bm{x}}{l}$ \eqref{eq:l-th-deriv-at-beta0} at an operator root, and to be able to use them.
First, one needs to calculate the derivative tensors of $F$, as in point \ref{point:calculating-all-derivatives-of-F} above. We have accomplished this for rate-distortion problems, providing in Section \ref{sub:high-order-deriv-tensors-of-BA} closed-form formulae for the derivative tensors of $Id - BA_\beta$ \eqref{eq:RD-operator-def}, of any order.
Second, one needs to tell whether the Taylor series \eqref{eq:solution-by-beta-as-taylor-approx} for $\bm{x}(\beta)$ around $\beta_0$ indeed approximates the true solution. For, operator roots may cease to be a function of $\beta$, cease to be differentiable, or even cease to exist at critical $\beta$ values. This boils down to understanding the solution structure of $F$, or equivalently its bifurcations. 
While \ref{subsub:lines-intersecting-parabola-example} below provides an example where Assumption \ref{assumption:operator-root-is-a-function-of-beta} breaks (two roots collide and annihilate), the discussion for rate-distortion is subtler. We provide guarantees in Section \ref{sub:high-order-deriv-tensors-of-BA}, with the full details postponed till Section \ref{part:details}.\ref{sec:RD-bifurcations-and-root-tracking}.

\medskip
\subsubsection{Example: implicit derivatives of line intersections with a parabola}
\label{subsub:lines-intersecting-parabola-example}

Before presenting in Section \ref{sub:high-order-beta-derivatives-at-an-operator-root} the fully fledged machinery of arbitrary order, we give a simple yet non-trivial example how implicit derivatives $\dbetaK{\bm{x}}{l}$ can be calculated at an operator root.

\medskip
\begin{figure}[h!]
		\centering
		\vspace{-5pt}
		\includegraphics[trim={0 0 0 1cm}, clip, width=0.52\textwidth]{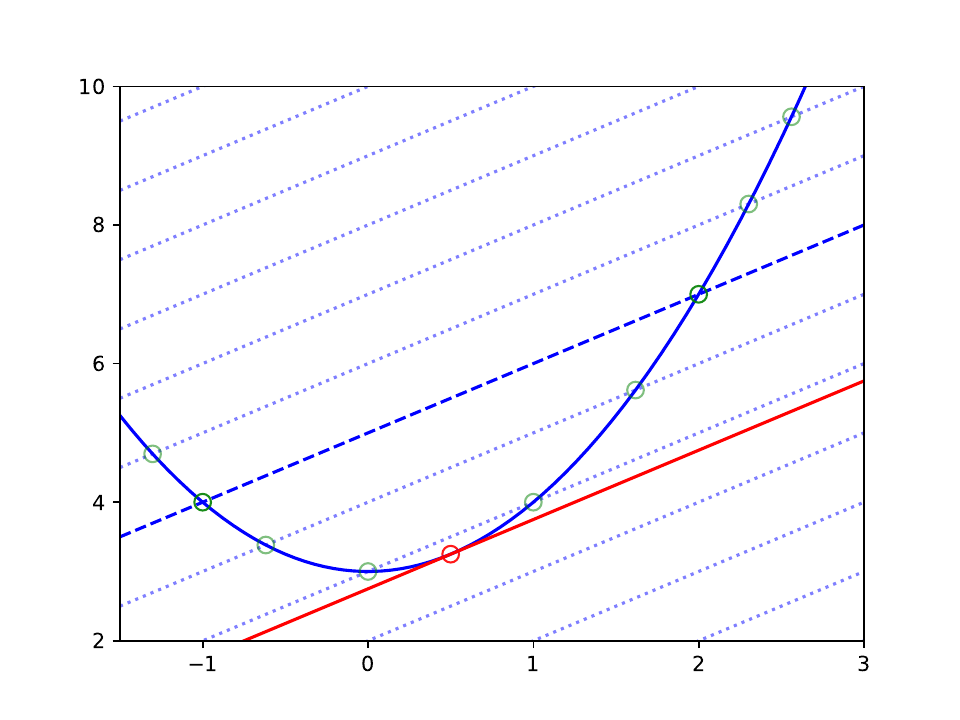}
		\caption{
			\textbf{Intersections of $y = x^2 + 3$ with the lines $y = x + \beta$.} The line at $\beta = 5$ is dashed blue, and at other integral $\beta$ values in dotted blue. 
			At $\beta_c = 2.75$ (solid red line) the system $F = \bm{0}$ \eqref{eq:line-intersecting-parabola-example} undergoes a bifurcation: there are two distinct roots above $\beta_c$, none below it, and exactly one at the bifurcation point itself.
		}
		\label{fig:line-intersecting-parabola-problem-def}
\end{figure}

A line $y = a x + \beta$ in the plane typically intersects a parabola $y = b x^2 + c x + d$ (with $b\neq 0$) at either two points or none, as in Figure \ref{fig:line-intersecting-parabola-problem-def} for instance. In the special case when it is tangent to the parabola, it intersects at a single point. Consider the problem of tracking the intersection point as the line is being translated by varying $\beta$. 

Setting
\begin{equation}		\label{eq:line-intersecting-parabola-example}
	F(x, y; \beta) := \mat{
		-y + b x^2 + c x + d \\
		-y + a x + \beta
	} \;,
\end{equation}
the intersection point is encoded by requiring $F = \bm{0}$. 
Assume that $(x_0, y_0)$ is known to be an intersection point at $\beta_0$, $F(x_0, y_0; \beta_0) = \bm{0}$, and that at the vicinity of $(x_0, y_0)$, the intersection point can be written as a function of $\beta$, $(x, y) = \big(x(\beta), y(\beta)\big)$. 
By calculating the derivative tensors of $F$ \eqref{eq:line-intersecting-parabola-example} and plugging them into the first few expansions of $\dbetaK{F}{l} = \bm{0}$ (e.g., \eqref{eq:first-order-expansion-of-operator-eq-implicit}-\eqref{eq:third-order-expansion-implicit}), we can solve for the implicit derivatives $\dbetaK{x}{l}, \dbetaK{y}{l}$ at $(x_0, y_0; \beta_0)$. See Section \ref{part:proofs}.\ref{sec:calculations-for-lines-intersecting-parabola-example-appendix} for full details. 
Doing so till fourth order yields a Taylor expansion \eqref{eq:solution-by-beta-as-taylor-approx} of the intersection point,
\begin{multline}			\label{eq:line-parabola-example-fourth-order-sol}
	\mat{x(\beta) \\ y(\beta)} \approx 
	\mat{x_0 \\ y_0} + 
	\frac{1}{\Delta} \mat{1 \\ a+\Delta} \cdot \left(\beta - \beta_0\right)
	-\frac{b}{\Delta^3} \mat{1 \\ a} \cdot \left(\beta - \beta_0\right)^2 \\ +
	\frac{2b^2}{\Delta^5} \mat{1 \\ a} \cdot \left(\beta - \beta_0\right)^3
	-\frac{5b^3}{\Delta^7} \mat{1 \\ a} \cdot \left(\beta - \beta_0\right)^4
\end{multline}
Where, $\Delta := 2 b x_0 + c - a$ vanishes precisely when the slopes $2 b x_0 + c$ of the parabola and $a$ of the line coincide. At this point a bifurcation occurs, as can be seen in Figure \ref{fig:line-intersecting-parabola-problem-def}: 
the two distinct intersection points of this problem merge then into one, and disappear beyond the critical $\beta$ value. That is, the point of intersection $(x_0, y_0)$ can no longer be written as a function of $\beta$.
Figure \ref{fig:line-intersecting-parabola-Taylor-approximations} demonstrates the first few expansion orders \eqref{eq:line-parabola-example-fourth-order-sol}, when expanded near or far of the bifurcation. 

\begin{figure}[h!]
	\vspace{-10pt}
	\centering
	\includegraphics[trim={0 0.5cm 0 0}, clip, width=1.\textwidth]{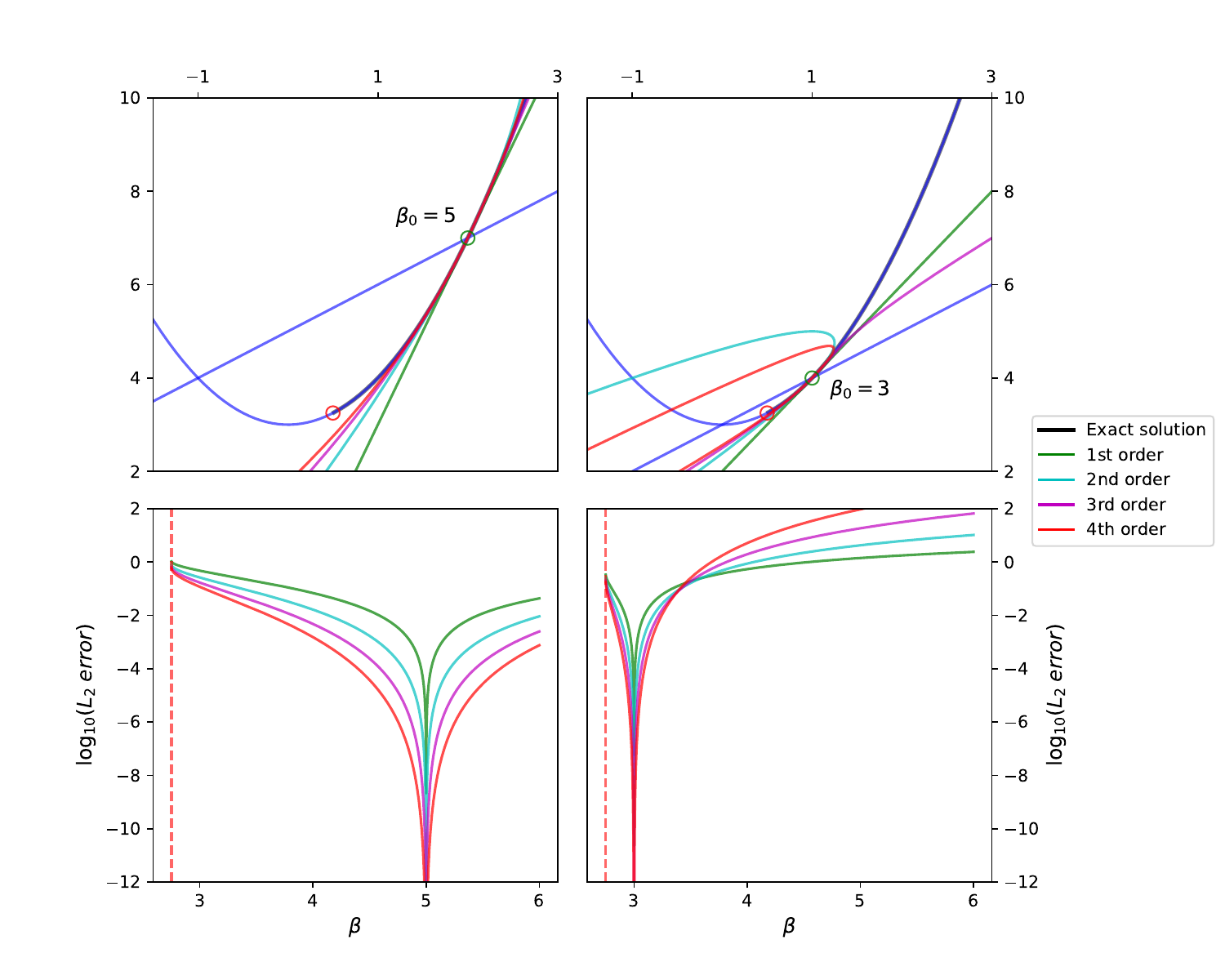}
	\caption{
		\textbf{Taylor approximations versus the exact solution \eqref{eq:line-parabola-example-exact-solution} to $F = \bm{0}$ \eqref{eq:line-intersecting-parabola-example}, by order and choice of base-point}, for the line-intersecting-parabola problem of Figure \ref{fig:line-intersecting-parabola-problem-def}.
		\textbf{Top:} the expansions \eqref{eq:line-parabola-example-fourth-order-sol} of $(x(\beta), y(\beta))$ around $\beta_0$, colored by order, and the exact solution \eqref{eq:line-parabola-example-exact-solution} in black.
		\textbf{Bottom:} the expansions' $L_2$-error from the exact solution. The comparison ends at the bifurcation point $\beta_c = 2.75$ (dashed red verticals), beyond which no solution exists. 
		\textbf{Right:} 
		the Jacobian $D_{\bm{x}} F$ becomes increasingly singular as the base-point $\beta_0$ is taken closer to $\beta_c$ (see Section \ref{part:proofs}.\ref{sec:calculations-for-lines-intersecting-parabola-example-appendix}). 
		As a result, the implicit derivatives diverge and the approximations' quality deteriorates. 
		cf., the comments after Theorem \ref{thm:formula-for-high-order-expansion-of-F-in-main-result-sect} (in Section \ref{sub:high-order-beta-derivatives-at-an-operator-root}) and Section \ref{part:details}.\ref{sub:local-error-estimates-for-beta-derivs} on the latter implication. 
	}
	\label{fig:line-intersecting-parabola-Taylor-approximations}
\end{figure}

For this simple problem, the equation $F = \bm{0}$ \eqref{eq:line-intersecting-parabola-example} can also be solved analytically, yielding 
\begin{equation}		\label{eq:line-parabola-example-exact-solution}
	x(\beta) = \frac{a - c}{2 b} \pm \sqrt{\left(\frac{a - c}{2 b}\right)^2 + \frac{\beta - d}{b}}
	\quad \text{and} \quad
	y(\beta) = \beta + a x(\beta)
\end{equation}
Our approximation \eqref{eq:line-parabola-example-fourth-order-sol} is nothing but the fourth-order Taylor expansion of the exact solution \eqref{eq:line-parabola-example-exact-solution} around $\beta_0$, as can be verified by expanding the latter directly. 
However, in contrast to the Taylor expansion of \eqref{eq:line-parabola-example-exact-solution}, the calculations leading to \eqref{eq:line-parabola-example-fourth-order-sol} can be carried out even when an exact solution is \textit{not} available, as is typical for rate-distortion problems.

\medskip
\subsection{High-order implicit derivatives at an operator's root}
\label{sub:high-order-beta-derivatives-at-an-operator-root}

We provide a machinery for calculating high-order implicit derivatives at an operator's root, built on the reasoning of Section \ref{sub:beta-derivs-at-an-operator-root}. In particular, we assume henceforth that Assumptions \ref{assumption:operator-root-is-a-function-of-beta} and \ref{assumption:solution-is-smooth-in-beta} there hold.
As before, implicit derivatives and derivative tensors are understood to be evaluated at $\left(\bm{x}_0, \beta_0\right)$.

\medskip
But first, a few definitions.
A \textit{partition} of an integer $n > 0$ is a sequence of positive integers $0 < p_1 \leq p_2 \leq \dots \leq p_m$ whose sum is $n$, $\sum_{i=1}^m p_i = n$. The integers $p_i$ are called the \textit{parts} of the partition. 
The \textit{partition function} $p(n)$ is the number of partitions of $n$. e.g., \cite{andrews1998theory}. 
When interested in the number of times $m_i$ a part $p_i$ appears (its \textit{multiplicity}), we shall sometimes write $(m_1)\cdot p_1 + \dots + (m_s)\cdot p_s$, with $0 < p_1 < \dots < p_s$ now the partition's distinct parts. The \textit{total multiplicity} is the number of parts $m := m_1 + \dots + m_s$ in a partition. 
For example, there are three partitions of the integer $n = 3$, $p(3) = 3$. Namely, $1 + 2$, $1 + 1 + 1$, and the \textit{trivial partition} $3$. Written by multiplicity, these are $(1)\cdot 1 + (1)\cdot 2$, $(3)\cdot 1$ and $(1)\cdot 3$ respectively, of total multiplicities 2, 3 and 1.

We shall need to apply the $m$-multilinear maps $D_{\beta^b, \bm{x}^m}^{b + m} F_i$ \eqref{eq:mixed-deriv-def-evaluated-applied-to-vectors} with repeated arguments. 
For a lack of better notation, if $\bm{v}_1$ appears as an argument $m_1$ times, $\bm{v}_2$ appears $m_2$ times, till $\bm{v}_s$ appearing $m_s$ times, we denote
\begin{equation}
	D_{\beta^b, \bm{x}^m}^{b + m} F_i[(\bm{v}_1)_{\times m_1}, \dots, (\bm{v}_{s})_{\times m_s}] :=
	D_{\beta^b, \bm{x}^m}^{b + m} F_i[\underset{m_1 \text{ times}}{\underbrace{\bm{v}_1, \dots, \bm{v}_1}}, \dots, 
	\underset{m_s \text{ times}}{\underbrace{\bm{v}_{s}, \dots, \bm{v}_{s}}}]
\end{equation}
With that, the arbitrary-order counterparts $\tfrac{d^l }{d\beta^l} F$ of \eqref{eq:first-order-expansion-of-operator-eq-implicit}-\eqref{eq:third-order-expansion-implicit} are as follows.

\begin{thm}			\label{thm:formula-for-high-order-expansion-of-F}
	Let $F\big(\bm{x}(\beta), \beta\big)$ be the composition of $F:\bb{R}^T\times \bb{R} \to \bb{R}^T$ with a path $\bm{x}(\beta)$ in $\bb{R}^T$, both of which are sufficiently differentiable.
	Its derivative of order $l > 0$ can be written as
	\begin{multline}		\label{eq:formula-for-high-order-expansion-of-F}
		\frac{d^l }{d\beta^l} F(\bm{x}(\beta), \beta) =
		\sum_{\text{partitions}} 
		\sum_{b=0}^{m_1 \cdot \delta(p_1 = 1)}
		\frac{l!}{b! (m_1 - b)! m_2! \cdots m_s! \cdot (p_1!)^{m_1} \cdots (p_s!)^{m_s}} \\ \cdot
		D^m_{\beta^b, \bm{x}^{m-b}} F\left[ \left( \tfrac{d^{p_1}\bm{x}}{d\beta^{p_1}} \right)_{\times (m_1 - b)}, 
		\left( \tfrac{d^{p_2}\bm{x}}{d\beta^{p_2}} \right)_{\times m_2}, 
		\dots, \left( \tfrac{d^{p_s}\bm{x}}{d\beta^{p_s}} \right)_{\times m_s} \right]
	\end{multline}
	where the outer sum is over the integer partitions $(m_1)\cdot p_1 + \dots + (m_s)\cdot p_s$ of $l$, $m := m_1 + \dots + m_s$,	and $\delta$ is Dirac's delta function.
\end{thm}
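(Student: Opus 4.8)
The statement is a multivariate Faà di Bruno formula adapted to the specific structure at hand (one distinguished scalar variable $\beta$ that enters both directly and through the path $\bm{x}(\beta)$). The natural route is induction on $l$, matching how \eqref{eq:third-order-expansion-implicit} was derived from \eqref{eq:second-order-expansion-of-operator-eq-implicit}. The base case $l=1$ is \eqref{eq:first-order-expansion-of-operator-eq-implicit}, which corresponds on the right-hand side to the single partition $(1)\cdot 1$ of the integer $1$, with the inner sum over $b\in\{0,1\}$ giving exactly the two terms $D_{\bm{x}}F[\tfrac{d\bm{x}}{d\beta}]$ (from $b=0$) and $D_\beta F$ (from $b=1$); one checks the multinomial coefficient reduces to $1$ in both cases. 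For the inductive step, assume \eqref{eq:formula-for-high-order-expansion-of-F} holds for $l$ and differentiate both sides once more with respect to $\beta$.

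The key computation is tracking what happens to a single summand of \eqref{eq:formula-for-high-order-expansion-of-F} under $\tfrac{d}{d\beta}$. Such a summand is a product of (i) the tensor $D^m_{\beta^b,\bm{x}^{m-b}}F$ evaluated along the path, and (ii) a list of repeated derivative-arguments $\tfrac{d^{p_j}\bm{x}}{d\beta^{p_j}}$. Differentiating by the product/chain rule produces: differentiating the tensor factor, which raises one of its slots — either $\beta^b\to\beta^{b+1}$, i.e. contract an extra $\tfrac{d^1\bm{x}}{d\beta}$ into a fresh $\bm{x}$-slot contributing a part of size $1$, or $\bm{x}$-differentiation of the tensor which inserts a fresh $\tfrac{d\bm{x}}{d\beta}$ argument; and differentiating one of the arguments, turning a $\tfrac{d^{p_j}\bm{x}}{d\beta^{p_j}}$ into $\tfrac{d^{p_j+1}\bm{x}}{d\beta^{p_j+1}}$. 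Each of these three operations sends a partition of $l$ to a partition of $l+1$: the first two adjoin a new part equal to $1$, the third increments an existing part $p_j$ to $p_j+1$. So one must show that, collecting all contributions that land on a fixed target partition of $l+1$ (over all source partitions of $l$ and all three move-types), the coefficients add up to the prescribed multinomial $\tfrac{(l+1)!}{b!(m_1-b)!m_2!\cdots}$ with the factorial-powers denominator. This is the combinatorial heart: it is the standard ``which part did the $+1$ come from'' bookkeeping that appears in every proof of Faà di Bruno, complicated here only by the split of the $\beta$-order $b$ across the $p=1$ parts (hence the inner sum $\sum_{b=0}^{m_1\delta(p_1=1)}$ and the separate $b!(m_1-b)!$ factors).

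Concretely I would fix a target partition $(n_1)\cdot 1 + (n_2)\cdot q_2 + \dots$ of $l+1$ together with a value of $b$ for it, and enumerate its preimages: (a) remove one part of size $1$ (if $n_1\ge 1$), giving a partition of $l$ reached by a ``tensor gains a slot'' move — and here carefully separate whether that slot is a $\beta$-slot (source had $\beta$-order $b-1$) or an $\bm{x}$-slot (source had $\beta$-order $b$); (b) for each distinct part $q_j>1$ with $n_j\ge 1$, decrement one copy of $q_j$ to $q_j-1$, giving a partition of $l$ reached by a ``differentiate an argument'' move. Then verify the coefficient identity: the ratio $\tfrac{(l+1)!}{l!}=l+1$ must be reproduced as a sum of combinatorial weights coming from (number of ways to choose which argument to differentiate) $\times$ (ratio of multinomial coefficients between source and target), which telescopes to $n_1 + \sum_j n_j q_j = $ total ``mass'' $= l+1$ — the same collapse as in the univariate case treated by \cite{zemel2019combinatorics}, here applied slot-by-slot using multilinearity and symmetry of the tensors $D^m_{\beta^b,\bm{x}^{m-b}}F$.

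The main obstacle will be exactly this coefficient-matching under the $\beta$-order split: unlike the plain multivariate Faà di Bruno formula, the parts of size $1$ play a double role (they can carry either a genuine $\tfrac{d\bm{x}}{d\beta}$ argument or be the shadow of a direct $\partial_\beta$ differentiation of $F$), and the induction must keep the two populations — the $b$ ``$\beta$-type'' unit parts and the $m_1-b$ ``$\bm{x}$-type'' unit parts — correctly separated while showing that, after summing, only the symmetric combination with weight $\tfrac{1}{b!(m_1-b)!(1!)^{m_1}}$ survives. Once that ledger balances, reassembling the sum over all target partitions of $l+1$ and all admissible $b$ gives \eqref{eq:formula-for-high-order-expansion-of-F} at order $l+1$, completing the induction. (An alternative, perhaps cleaner, route would be to regard $\beta\mapsto(\bm{x}(\beta),\beta)$ as a single path in $\bb{R}^{T+1}$ and apply a known multivariate Faà di Bruno formula — e.g. the tensorial form in \cite[\S\ref{sec:multivariate-faa-di-brunos-formula}]{}— then push forward the $(T{+}1)$-dimensional partition data onto the given $\beta$-versus-$\bm{x}$ split; I would mention this but carry out the direct induction, since it is self-contained and the bookkeeping is the same either way.)
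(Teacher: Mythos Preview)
Your inductive approach is sound, but the paper takes exactly the alternative route you flag at the end: it regards $\beta\mapsto(\beta,\bm{x}(\beta))$ as a path in $\bb{R}^{T+1}$ and applies Ma's multivariate Fa\`a di Bruno formula (Theorem~\ref{thm:mFDB-from-Ma}) directly to the composition with $F_i$. The key simplification is that in the resulting sum over $(T{+}1)$-decompositions, the factor $\big[\tfrac{1}{p_k!}\tfrac{d^{p_k}\beta}{d\beta^{p_k}}\big]^{m_{k,0}}$ vanishes unless $m_{k,0}=0$ or $p_k=1$, which forces all the $\beta$-multiplicity onto the smallest part and immediately produces the inner sum over $b$ together with the $\delta(p_1=1)$ restriction; a short summation-reordering lemma (Lemma~\ref{lem:summation-reoder-M-ball-on-grid-vs-all-M-tuples}) then converts the remaining sums over multi-indices $\bm{m}_{k+}\in\bb{N}_0^T$ with fixed $|\bm{m}_{k+}|$ into the symmetric multilinear-form notation. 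What this buys over your induction is that the coefficient identity is inherited wholesale from Ma's theorem rather than re-derived: the ledger you correctly identify as the main obstacle --- tracking the $b$ ``$\beta$-type'' versus the $m_1-b$ ``$\bm{x}$-type'' unit parts through all three differentiation moves, and handling part-merging when $p_j+1$ collides with an existing part size --- simply never arises. Your induction is more self-contained, but since the paper already imports Ma's formula in Section~\ref{sec:multivariate-faa-di-brunos-formula} as a prerequisite, the direct application is the more economical choice here. (One small slip in your write-up: the ``i.e.''\ clause after ``$\beta^b\to\beta^{b+1}$'' actually describes the \emph{other} tensor move; raising the $\beta$-order does not insert a $\tfrac{d\bm{x}}{d\beta}$ argument.)
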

The inner summation at \eqref{eq:formula-for-high-order-expansion-of-F} goes up to the multiplicity of 1 in a partition, or zero if there is no part of size 1.
This theorem is an application of a modern version of the multivariate Fa\`a di Bruno's formula, \cite{ma2009higher}, which is elaborated in Section \ref{part:details}.\ref{sec:multivariate-faa-di-brunos-formula}.
See Section \ref{part:proofs}.\ref{sub:proof-of-formula-for-high-order-expansion-of-F} for a proof.

To illustrate Theorem \ref{thm:formula-for-high-order-expansion-of-F}, we order the summands in the expansion of $\tfrac{d^3}{d\beta^3} F$ by integer partitions. Recall the third-order expansion \eqref{eq:third-order-expansion-implicit},
\begin{multline}		\label{eq:third-order-expansion-implicit-at-thm-example}
	\tfrac{d^3 F}{d\beta^3} = 
	D_{\bm{x}} F[\tfrac{d^3 \bm{x}}{d\beta^3}] +
	D^3_{\bm{x}, \bm{x}, \bm{x}}F[\tfrac{d\bm{x}}{d\beta}, \tfrac{d\bm{x}}{d\beta}, \tfrac{d\bm{x}}{d\beta}] + 
	3 D^2_{\bm{x}, \bm{x}}F[\tfrac{d^2\bm{x}}{d\beta^2}, \tfrac{d\bm{x}}{d\beta}] 
	\\ \quad \quad +
	3 D^3_{\beta, \bm{x}, \bm{x}} F[\tfrac{d\bm{x}}{d\beta}, \tfrac{d\bm{x}}{d\beta}] + 
	3 D^2_{\beta, \bm{x}}F[\tfrac{d^2 \bm{x}}{d\beta^2}] + 
	3 D^3_{\beta, \beta, \bm{x}}F[\tfrac{d\bm{x}}{d\beta}] + 
	D^3_{\beta, \beta, \beta}F \;.
\end{multline}
Each term above corresponds to a partition and to a number $b$ of differentiations with respect to $\beta$, indexing the summations at \eqref{eq:formula-for-high-order-expansion-of-F}.
The part sizes and their multiplicities can be read off the tensors' arguments at \eqref{eq:third-order-expansion-implicit-at-thm-example}. For example, $D_{\bm{x}} F[\tfrac{d^3 \bm{x}}{d\beta^3}]$ corresponds to a partition with one part of size 3, and $b=0$ derivations with respect to $\beta$. While, $D^2_{\beta, \bm{x}}F[\tfrac{d^2 \bm{x}}{d\beta^2}]$ corresponds to $(1)\cdot 1 + (1)\cdot 2$, and $b = 1$; a first-order argument $\tfrac{d \bm{x}}{d\beta}$ corresponding to the part of size 1 is left out because $b > 0$. 
Proceeding this way, we obtain Table \ref{tab:details-of-third-order-expansion-of-F}.

\begin{table}[h!]
	\begin{center}
		\vspace{-5pt}
		\setlength{\tabcolsep}{16pt}
		\renewcommand{\arraystretch}{1.6}
		\begin{tabular}{ cccc|c }
			$l=3$	\hspace{-15pt}	&		$(3) \cdot 1$		&		$(1)\cdot 1 + (1)\cdot 2$	&		$(1)\cdot 3$	& \\ \hline 
			&		$D^3_{\bm{x},\bm{x},\bm{x}}F[\tfrac{d\bm{x}}{d\beta}, \tfrac{d\bm{x}}{d\beta}, \tfrac{d\bm{x}}{d\beta}]$	&		$3D^2_{\bm{x},\bm{x}}F[\tfrac{d^2 \bm{x}}{d\beta^2}, \tfrac{d\bm{x}}{d\beta}]$	&		$D_{\bm{x}} F[\tfrac{d^3 \bm{x}}{d\beta^3}]$	&		$b=0$	\\ 
			&		$3D^3_{\beta,\bm{x},\bm{x}}F[\tfrac{d\bm{x}}{d\beta}, \tfrac{d\bm{x}}{d\beta}]$	&	$3D^2_{\beta,\bm{x}}F[\tfrac{d^2 \bm{x}}{d\beta^2}]$		&			&		$b=1$	\\
			&		$3D^3_{\beta,\beta,\bm{x}}F[\tfrac{d\bm{x}}{d\beta}]$	&			&			&		$b=2$	\\
			&		$D^3_{\beta,\beta,\beta}F$	&			&			&		$b=3$
		\end{tabular}
		\caption{The summands of Theorem \ref{thm:formula-for-high-order-expansion-of-F} for $l = 3$. The expansion terms of $\tfrac{d^3 F}{d\beta^3}$ \eqref{eq:third-order-expansion-implicit-at-thm-example} are ordered by the integer partitions of 3 and by the number $b$ of differentiations with respect to $\beta$.}
		\label{tab:details-of-third-order-expansion-of-F}
		\vspace{-10pt}
	\end{center}
\end{table}

No part in a partition of $l$ can be of size larger than $l$. Clearly, the trivial partition $(1)\cdot l$ is the only one with a part of size $l$. Thus, among the derivatives $\tfrac{d^k\bm{x}}{d\beta^k}$ for $k > 0$, $\tfrac{d^l\bm{x}}{d\beta^l}$ is that of highest-order which appears in $\dbetaK{}{l} F$ \eqref{eq:formula-for-high-order-expansion-of-F}. It appears there once, at the term indexed by the trivial partition $(1)\cdot l$ and $b = 0$.

\begin{cor}			\label{cor:highest-order-deriv-only-with-Jacobian}
	The derivative $\tfrac{d^l\bm{x}}{d\beta^l}$ appears once in the $l$-th order expansion \eqref{eq:formula-for-high-order-expansion-of-F} of $\dbetaK{}{l} F$, at the term $D_{\bm{x}} F[\tfrac{d^l\bm{x}}{d\beta^l}]$. 
	Any other term there contains only derivatives $\tfrac{d^k\bm{x}}{d\beta^k}$ of lower orders, $k < l$.
\end{cor}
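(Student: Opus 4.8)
The plan is to read the Corollary off directly from the explicit expansion \eqref{eq:formula-for-high-order-expansion-of-F} of Theorem \ref{thm:formula-for-high-order-expansion-of-F}, since the statement is really a bookkeeping observation about which path-derivatives $\tfrac{d^k\bm{x}}{d\beta^k}$ can enter each summand. First I would note that the derivative tensors $D^m_{\beta^b, \bm{x}^{m-b}} F$ are derivatives of $F$ in its own arguments $(\bm{x},\beta)$, evaluated at $(\bm{x}_0,\beta_0)$, so the \emph{only} place the $\beta$-derivatives of the path $\bm{x}(\beta)$ occur in a summand is as the bracketed arguments $\tfrac{d^{p_1}\bm{x}}{d\beta^{p_1}},\dots,\tfrac{d^{p_s}\bm{x}}{d\beta^{p_s}}$ (with $\tfrac{d^{p_1}\bm{x}}{d\beta^{p_1}}$ dropped when $b=m_1$, since its multiplicity is $m_1-b$). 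Hence in the summand indexed by a partition $(m_1)\cdot p_1+\dots+(m_s)\cdot p_s$ of $l$ (with $0<p_1<\dots<p_s$) and by a choice of $b$, every $\bm{x}$-derivative appearing has order in $\{p_1,\dots,p_s\}$, and all of these are $\geq 1$ and $\leq l$ because the parts of a partition of $l$ are positive and never exceed $l$.

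Next I would isolate when order $l$ itself can occur. If some part $p_i$ equals $l$, then the remaining parts contribute $0$ to $\sum_j m_j p_j = l$, so there are none, i.e. the partition is the trivial one $(1)\cdot l$. Thus $\tfrac{d^l\bm{x}}{d\beta^l}$ can appear only in the summand(s) coming from $(1)\cdot l$. For that partition $s=1$, $m_1=1$, $p_1=l$; when $l\geq 2$ we have $\delta(p_1=1)=0$, so the inner $b$-sum collapses to $b=0$, the multinomial prefactor is $\tfrac{l!}{0!\,1!\,(l!)^1}=1$, and the summand is exactly $D^1_{\beta^0,\bm{x}^1}F[\tfrac{d^l\bm{x}}{d\beta^l}]=D_{\bm{x}} F[\tfrac{d^l\bm{x}}{d\beta^l}]$, in which $\tfrac{d^l\bm{x}}{d\beta^l}$ appears once (multiplicity $m_1-b=1$). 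The degenerate case $l=1$ I would dispatch by inspection of \eqref{eq:first-order-expansion-of-operator-eq-implicit}: there $\tfrac{d\bm{x}}{d\beta}$ appears only in $D_{\bm{x}}F[\tfrac{d\bm{x}}{d\beta}]$, the remaining term $D_\beta F$ carrying no path-derivative. This establishes that $\tfrac{d^l\bm{x}}{d\beta^l}$ occurs exactly once in \eqref{eq:formula-for-high-order-expansion-of-F}, in the asserted term.

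Finally, for every other summand the indexing partition is non-trivial, so by the previous step all its parts are strictly less than $l$, and hence by the first step each $\bm{x}$-derivative occurring there is of some order $k$ with $0<k<l$ (with some summands, e.g. the pure $\beta$-derivative $D^l_{\beta^l}F$, carrying no path-derivative at all, which is consistent with the claim). Combining the three observations yields the Corollary. I do not expect a genuine obstacle: the statement is a straightforward consequence of Theorem \ref{thm:formula-for-high-order-expansion-of-F}, and the only points needing care are the edge case $p_1=1$ (equivalently $l=1$) in the inner $b$-summation, and verifying that the multinomial coefficient attached to the trivial partition is exactly $1$, so that the distinguished term is literally $D_{\bm{x}}F[\tfrac{d^l\bm{x}}{d\beta^l}]$ rather than a nonzero scalar multiple of it.
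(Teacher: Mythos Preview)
Your proposal is correct and follows essentially the same argument as the paper: the paper observes in the paragraph immediately preceding the Corollary that no part of a partition of $l$ exceeds $l$, that the trivial partition $(1)\cdot l$ is the only one containing a part of size $l$, and that the corresponding summand (with $b=0$) is $D_{\bm{x}}F[\tfrac{d^l\bm{x}}{d\beta^l}]$. Your write-up is more explicit---you verify the multinomial coefficient equals $1$ and treat the $l=1$ edge case separately---but the underlying reasoning is identical.
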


With that, the derivative of highest-order $\tfrac{d^l\bm{x}}{d\beta^l}$ can be isolated easily from $\dbetaK{}{l} F = \bm{0}$ \eqref{eq:high-order-deriv-of-F-implicit}.
As an immediate consequence of Theorem \ref{thm:formula-for-high-order-expansion-of-F}, we obtain

\begin{thm}			\label{thm:formula-for-high-order-expansion-of-F-in-main-result-sect}
	Let $l > 1$, $\bm{x} = \bm{x}(\beta)$ a root of $F = \bm{0}$ \eqref{eq:solution-as-root-of-functional-eq}, and assume that the derivatives $\tfrac{d^k \bm{x}}{d\beta^k}$ are known for all $0 < k < l$. Then, 
	\begin{multline}		\label{eq:formula-for-high-order-beta-derivatives}
		D_{\bm{x}} F\big[ \tfrac{d^l \bm{x}}{d\beta^l} \big] = 
		- \sum_{\substack{\text{non-trivial} \\ \text{partitions}}}
		\sum_{b=0}^{m_1 \cdot \delta(p_1 = 1)}
		\frac{l!}{b! (m_1 - b)! m_2! \cdots m_s! \cdot (p_1!)^{m_1} \cdots (p_s!)^{m_s}} \\ \cdot 
		D^m_{\beta^b, \bm{x}^{m-b}} F\left[ 
		\left( \tfrac{d^{p_1}\bm{x}}{d\beta^{p_1}} \right)_{\times (m_1 - b)}, 
		\left( \tfrac{d^{p_2}\bm{x}}{d\beta^{p_2}} \right)_{\times m_2}, 
		\dots, \left( \tfrac{d^{p_s}\bm{x}}{d\beta^{p_s}} \right)_{\times m_s}
		\right]
	\end{multline}
	where the $l$-th order derivative $\tfrac{d^l \bm{x}}{d\beta^l}$ appears only at the left-hand side;
	the outer sum is over the non-trivial integer partitions of $l$.
	For $l = 1$, use $D_{\bm{x}} F[\tfrac{d\bm{x}}{d\beta}] = -D_\beta F$ \eqref{eq:ODE-implicit-form} instead.
\end{thm}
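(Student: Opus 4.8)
The plan is to derive this as an immediate rearrangement of Theorem \ref{thm:formula-for-high-order-expansion-of-F}. First I would invoke Equation \eqref{eq:high-order-deriv-of-F-implicit}, which, under Assumptions \ref{assumption:operator-root-is-a-function-of-beta} and \ref{assumption:solution-is-smooth-in-beta}, guarantees that $\frac{d^l}{d\beta^l} F(\bm{x}(\beta), \beta)\big\rvert_{\beta_0} = \bm{0}$ for the root path $\bm{x}(\beta)$ through $(\bm{x}_0,\beta_0)$. So the right-hand side of \eqref{eq:formula-for-high-order-expansion-of-F} equals $\bm{0}$.

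Next I would isolate the single term in that sum that carries the top-order derivative $\tfrac{d^l\bm{x}}{d\beta^l}$. By Corollary \ref{cor:highest-order-deriv-only-with-Jacobian}, among the partitions of $l$ only the trivial partition $(1)\cdot l$ produces a part of size $l$, and the associated inner index is forced to $b=0$ (since $p_1 = l > 1$ means the $\delta(p_1=1)$ factor kills all $b>0$, and indeed the upper limit $m_1\cdot\delta(p_1=1)$ of the inner sum is $0$). The combinatorial prefactor for this term is $\frac{l!}{0!\,(1-0)!\cdot(l!)^1} = \frac{l!}{l!} = 1$, and the tensor is $D^1_{\beta^0,\bm{x}^1} F[\tfrac{d^l\bm{x}}{d\beta^l}] = D_{\bm{x}} F[\tfrac{d^l\bm{x}}{d\beta^l}]$. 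Moving this single term to the left-hand side and transferring the remaining sum — now over the non-trivial partitions of $l$ only — to the right with a global sign flip yields exactly \eqref{eq:formula-for-high-order-beta-derivatives}. That every remaining term involves only derivatives $\tfrac{d^k\bm{x}}{d\beta^k}$ with $k<l$, hence known by hypothesis, is again Corollary \ref{cor:highest-order-deriv-only-with-Jacobian}.

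For the base case $l=1$ there are no non-trivial partitions of $1$, so the statement degenerates; here one simply quotes \eqref{eq:ODE-implicit-form}, i.e. the first-order expansion \eqref{eq:first-order-expansion-of-operator-eq-implicit}, which reads $D_{\bm{x}} F[\tfrac{d\bm{x}}{d\beta}] + D_\beta F = \bm{0}$. This is really a bookkeeping remark rather than a proof step. There is essentially no obstacle: the only thing to be careful about is the correct evaluation of the multinomial coefficient and the inner-sum bounds on the trivial-partition term, so that the extracted coefficient is genuinely $1$ and no stray $b>0$ contributions are attributed to it; this is precisely what Corollary \ref{cor:highest-order-deriv-only-with-Jacobian} already records, so the argument is a one-line deduction once Theorem \ref{thm:formula-for-high-order-expansion-of-F} is in hand.
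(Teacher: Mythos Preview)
Your proposal is correct and matches the paper's approach exactly: the paper states this theorem as ``an immediate consequence of Theorem \ref{thm:formula-for-high-order-expansion-of-F}'' obtained by isolating the trivial-partition term via Corollary \ref{cor:highest-order-deriv-only-with-Jacobian} and using $\tfrac{d^l}{d\beta^l}F = \bm{0}$ \eqref{eq:high-order-deriv-of-F-implicit}, precisely as you describe.
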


While \cite{zemel2019combinatorics} provides a formula for implicit derivatives of arbitrary order when $\bm{x}$ is univariate, we are unaware of such a result in the literature for a multivariate $\bm{x}$. Unlike our Theorem \ref{thm:formula-for-high-order-expansion-of-F-in-main-result-sect} which is recursive and written as a sum over integer partitions, their formula (Theorem 15 therein) is explicit and written as a sum over vector partitions, while requiring that $D_{\bm{x}}F$ is non-singular.
These differences make Theorem \ref{thm:formula-for-high-order-expansion-of-F-in-main-result-sect} computationally more efficient for our purposes, as we compute all the derivatives up to a given order, $\tfrac{d^k \bm{x}}{d\beta^k}$ for $k = 1, \dots, l$.
Our first-order expansion \eqref{eq:first-order-expansion-of-operator-eq-implicit} coincides with that of \cite{zemel2019combinatorics} when $\bm{x}$ is univariate; a direct exercise shows that so do our second and third order expansions \eqref{eq:second-order-expansion-of-operator-eq-implicit}-\eqref{eq:third-order-expansion-implicit} (see Equations (2)-(4) there).
Verifying the equivalence directly for higher or is more challenging.

Now that we've established that $\dbetaK{}{l} F = \bm{0}$ \eqref{eq:high-order-deriv-of-F-implicit} can be solved for $\tfrac{d^l \bm{x}}{d\beta^l}$, we spell-out in Algorithm \ref{algo:high-order-derivs-of-operator-roots} the steps to solve it recursively from formula \eqref{eq:formula-for-high-order-beta-derivatives} in Theorem \ref{thm:formula-for-high-order-expansion-of-F-in-main-result-sect}.
The computational and memory complexities of Algorithm \ref{algo:high-order-derivs-of-operator-roots} are elaborated in Section \ref{part:details}.\ref{sec:computational-complexities}.
It requires an auxiliary method (\Call{Calc Deriv Tensor}{}) to compute the derivative tensors \eqref{eq:mixed-deriv-def-evaluated-applied-to-vectors} of $F$. For rate-distortion problems, these tensors are given in the next Section \ref{sub:high-order-deriv-tensors-of-BA}.
When the latter are used, we refer to it as the specialization of Algorithm \ref{algo:high-order-derivs-of-operator-roots} to rate-distortion, or simply Algorithm \ref{algo:high-order-derivs-of-operator-roots} for RD.
To keep things simple, Algorithm \ref{algo:high-order-derivs-of-operator-roots} is presented in a recursive form. Yet, implicit derivatives can be memorized (cached) at little memory cost $O(T\cdot l)$, thus avoiding the recursion. Similar comments go also to memorizing the derivative tensors of $F$ --- see Section \ref{sub:costs-and-error-to-cost-tradeoff-of-RD-root-tracking-in-main-results-section} for details.

When $F$ has multiple roots $\bm{x}(\beta)$ of through $(\bm{x}_0, \beta_0)$, then the Jacobian $D_{\bm{x}} F$ must be singular, by the Implicit Function Theorem.
In accordance with that, formula \eqref{eq:formula-for-high-order-beta-derivatives} then has multiple solutions for each derivative order $\tfrac{d^l \bm{x}}{d\beta^l}$, as they are determined up to an element in $\ker D_{\bm{x}} F$.
In the other way around, singularity of the Jacobian does \textit{not} necessarily imply that there are multiple roots $\bm{x}(\beta)$ through $(\bm{x}_0, \beta_0)$. 
For, the existence of multiple solutions $\tfrac{d^l \bm{x}}{d\beta^l}$ to \eqref{eq:formula-for-high-order-beta-derivatives} need not imply that for each there is a root $\bm{x}(\beta)$ through $(\bm{x}_0, \beta_0)$ with the given implicit derivative.
An understanding of the bifurcations of $F$ is then necessary to determine which of the possible outputs of Algorithm \ref{algo:high-order-derivs-of-operator-roots} is indeed the implicit derivative of a root $\bm{x}(\beta)$ of $F = \bm{0}$ \eqref{eq:solution-as-root-of-functional-eq}. 
We handle this in Algorithm \ref{algo:root-tracking-for-RD} (Section \ref{sub:RD-root-tracking-near-bifrcations}) for tracking an RD root by ensuring that the Jacobian is non-singular. 
Algorithm \ref{algo:high-order-derivs-of-operator-roots} then has a unique and well-defined result, which by the Implicit Function Theorem must pertain to a root. 
See also Section \ref{part:details}.\ref{sec:RD-bifurcations-and-root-tracking} on RD bifurcations.

The derivatives computed by Algorithm \ref{algo:high-order-derivs-of-operator-roots} lose their accuracy when a Jacobian eigenvalue approaches zero. e.g., when approaching a bifurcation. The effect is more pronounced as the derivative's order increases.
Figure \ref{fig:derivative-calculation-loses-accuracy-near-bifurcation} below demonstrates this nicely for rate-distortion. 
Its numerical error is negligible when far from bifurcation, and yet grows by the derivative's order when approaching it.
This loss of accuracy can be traced back to the formula of Theorem \ref{thm:formula-for-high-order-expansion-of-F-in-main-result-sect}. For, the Jacobian $D_{\bm{x}} F$ becomes ill-conditioned\footnote{ That is, its condition number is large. Where the \textit{condition number} is defined as the ratio between its largest and smallest eigenvalues, in absolute value.} when an eigenvalue gradually vanishes.
So long that $D_{\bm{x}} F$ is invertible, the $l$-th order derivative $\tfrac{d^l \bm{x}}{d\beta^l}$ essentially contains $\left(D_{\bm{x}} F\right)^{-1}$ to the $l$-th power, amplifying numerical errors as $l$ increases. 
This can be seen by multiplying both sides of \eqref{eq:formula-for-high-order-beta-derivatives} by $\left(D_{\bm{x}} F\right)^{-1}$, and then repeatedly substituting into $\tfrac{d^l \bm{x}}{d\beta^l}$ the explicit lower-order expressions for $\tfrac{d^k \bm{x}}{d\beta^k}$, for $k < l$.

\medskip
\begin{algorithm}[h]
	\caption{\hspace{-4pt}: High-order implicit derivatives of a root $\bm{x}(\beta)$ of an operator $F$}
	\label{algo:high-order-derivs-of-operator-roots}
	\begin{algorithmic}[1]
		\Function{Calculate Implicit Derivative}{$\bm{x}_0, \beta_0, l; params$}
		\Input
		\Statex Point of evaluation $(\bm{x}_0, \beta_0)$, derivative order $l > 0$, 
		\Statex additional parameters $params$ needed for computing the derivative tensors.
		\Output{$\dbetaK{\bm{x}}{l}\Big\rvert_{(\bm{x}_0, \beta_0)}$, determined uniquely if $D_{\bm{x}} F$ is non-singular (see main text).}
		\Require{}
		\Statex A method \Call{Calc Deriv Tensor}{$m, b; \bm{x_0}, \beta_0, params$} which computes $D^{m+b}_{\beta^b, \bm{x}^{m}} F\big\rvert_{(\bm{x_0}, \beta_0)}$ \eqref{eq:mixed-deriv-def-evaluated-applied-to-vectors}.
		\State Initialize $result \gets 0$.
		\For{$k = 1$ to $l-1$}
		\State $\tfrac{d^k\bm{x}}{d\beta^k} \gets$ \Call{Calculate Implicit Derivative}{$\bm{x}_0, \beta_0, k; params$}
		\begin{flushright}
			\Comment{Cacheable; see main text.}
		\end{flushright}
		\EndFor
		\For{partition $(m_1)\cdot p_1 + \dots + (m_s)\cdot p_s$ of $l$}
		\State $b_{max} \gets m_1$ if $p_1 = 1$ else 0.
		\For{$b = 0$ to $b_{max}$}
		\State $m \gets m_1 + \dots + m_s$
		\If{$m = 1$ and $b = 0$}
		\Comment{Corresponds to $D_{\bm{x}} F[\tfrac{d^l\bm{x}}{d\beta^l}]$, by Corollary \ref{cor:highest-order-deriv-only-with-Jacobian}.}
		\State Continue
		\Else
		\State $coef \gets \frac{l!}{b! (m_1 - b)! m_2! \cdots m_s! \cdot (p_1!)^{m_1} \cdots (p_s!)^{m_s}}$
		\State $tensor \gets \Call{Calc Deriv Tensor}{m-b, b; \bm{x_0}, \beta_0}$
		\Comment{Cacheable; see main text.}
		\State $result \gets result - coef \cdot tensor\left[ 
		\left( \tfrac{d^{p_1}\bm{x}}{d\beta^{p_1}} \right)_{\times (m_1 - b)}, 
		\left( \tfrac{d^{p_2}\bm{x}}{d\beta^{p_2}} \right)_{\times m_2}, 
		\dots, \left( \tfrac{d^{p_s}\bm{x}}{d\beta^{p_s}} \right)_{\times m_s}
		\right]$
		\EndIf
		\EndFor
		\EndFor
		\State $D_{\bm{x}} F \gets \Call{Calc Deriv Tensor}{1, 0; \bm{x_0}, \beta_0}$.
		\Comment{Cacheable; see main text.}
		\State $result \gets$ a linear pre-image of $result$ under $D_{\bm{x}} F$.
		\State \Return $result$
		\EndFunction
	\end{algorithmic}
\end{algorithm}

\begin{figure}[h!]
	\begin{center}
		\includegraphics[width=.9\textwidth]{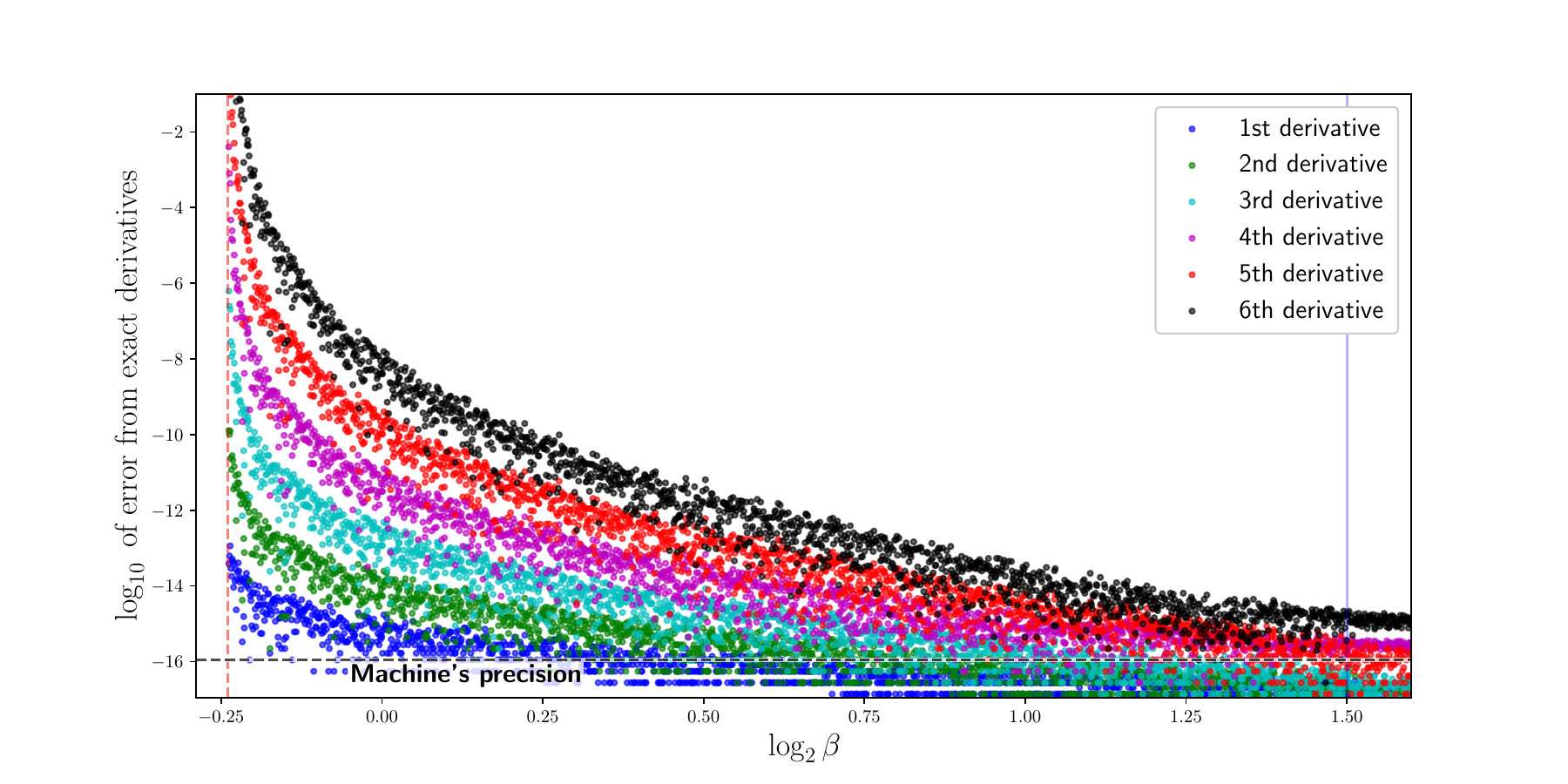}
		\includegraphics[width=.9\textwidth]{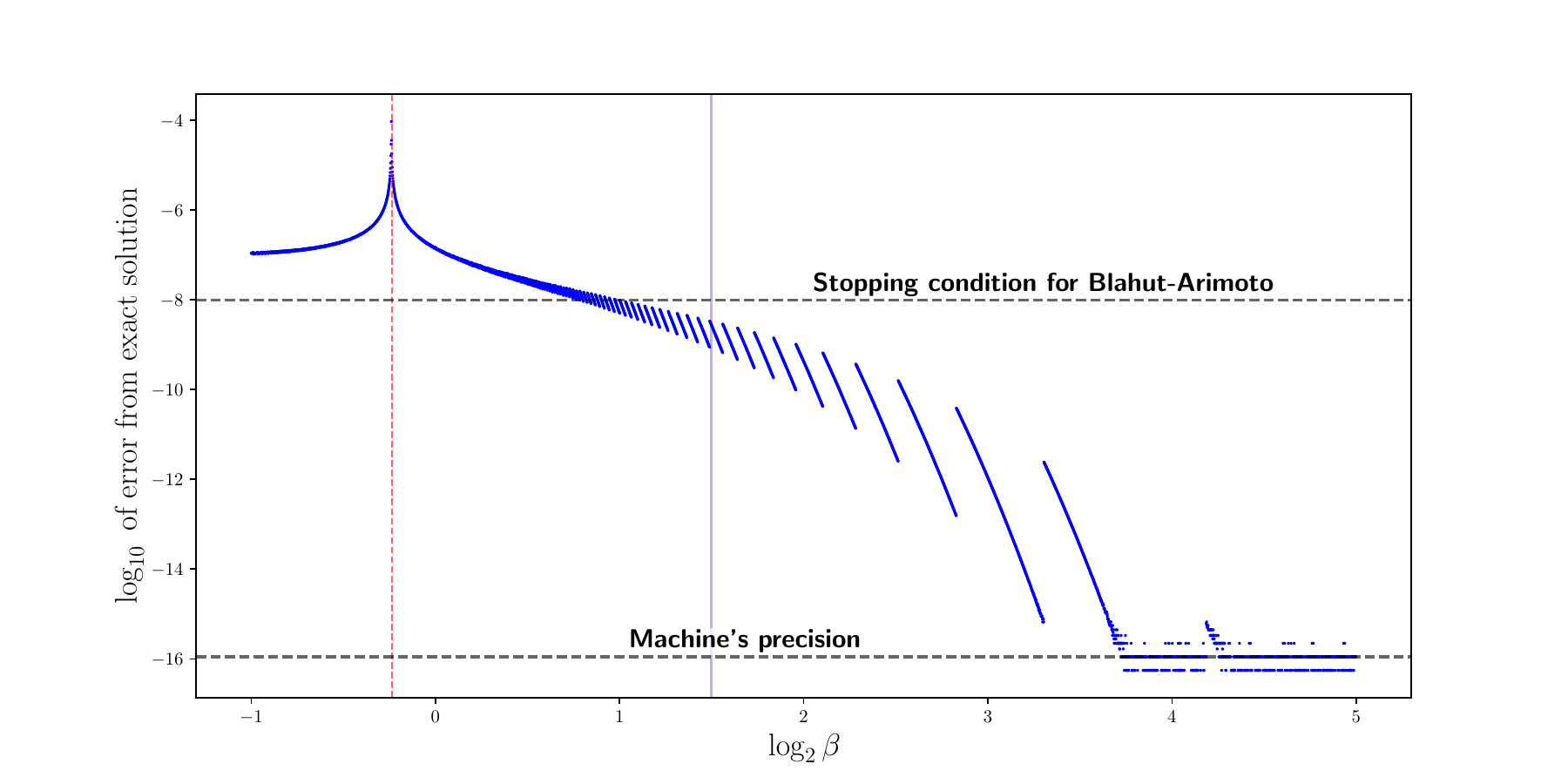}
		\caption{
			\textbf{Both Blahut-Arimoto and Algorithm \ref{algo:high-order-derivs-of-operator-roots} for RD can be remarkably accurate, yet lose their accuracy near a bifurcation} (dashed red verticals).
			To test the algorithms' accuracies, we compared their outputs to the analytical solutions of a binary source with Hamming distortion; see Section \ref{part:proofs}.\ref{sec:binary-source-with-hamming-dist-appendix} for details.
			\textbf{Top}:
			Numerical derivatives were evaluated at the exact solutions using Algorithm \ref{algo:high-order-derivs-of-operator-roots} for RD. Their $L^\infty$ distance to the analytical implicit derivatives is colored by order. 
			As noted after Theorem \ref{thm:formula-for-high-order-expansion-of-F-in-main-result-sect}, error due to numeric imprecision is amplified when approaching a bifurcation, by the derivatives' order.
			Only the range between the critical point $\beta_c$	\eqref{eq:bifurcation-beta-of-binary-source-with-Hamming-distortion} of this problem and $\log_2 \beta = 1.5$ is shown at the top (red and blue verticals).
			Outside this region, the numerical derivatives' accuracy for this example is near the machine's precision.
			\textbf{Bottom}: $L^\infty$-error of the solutions produced by Blahut-Arimoto, compared to the analytical solutions. 
			A stopping condition of $10^{-8}$ between consecutive iterates in $L^\infty$-norm was used, with uniform initial conditions for each $\beta$.
			This plot is explained by BA's critical slowing down, \citep{agmon2021critical} --- see main text, after Theorem \ref{thm:high-order-derivs-of-BA-in-main-text} in Section \ref{sub:high-order-deriv-tensors-of-BA}. 
			Note that all four axes are in a logarithmic scale.
		}
		\label{fig:derivative-calculation-loses-accuracy-near-bifurcation}
	\end{center}
\end{figure}

\clearpage

\medskip
\subsection{High-order derivative tensors of the Blahut-Arimoto operator}
\label{sub:high-order-deriv-tensors-of-BA}

We provide closed-form formulae for the derivative tensors of the Blahut-Arimoto operator of arbitrary order. 
These are necessary to compute implicit derivatives at a root with Algorithm \ref{algo:high-order-derivs-of-operator-roots} for RD (in Section \ref{sub:high-order-beta-derivatives-at-an-operator-root}). 
Our main result, in Theorem \ref{thm:high-order-derivs-of-BA-in-main-text}, requires the derivatives to be evaluated at a marginal distribution of full support. 
This shall be dealt with when reconstructing the solution curve in Section \ref{sub:taylor-method-for-RD-root-tracking}, by \textit{reducing} the RD problem to a smaller reproduction alphabet (see there).

\medskip
As noted in Section \ref{sec:introduction}, an RD problem is defined by a source distribution $p_X$, a reproduction alphabet $\hat{\mathcal{X}}$, and a distortion function $d(x, \hat{x})$.
In the sequel, we shall often assume that the distortion is finite, and also \textit{non-degenerate}: $d(\cdot, \hat{x}_1) \neq d(\cdot, \hat{x}_2)$ for all $\hat{x}_1 \neq \hat{x}_2$.
Write $N := |\mathcal{X}|$ and $M := |\hat{\mathcal{X}}|$ for the source and reproduction alphabet sizes, respectively.
When discussing rate-distortion problems, our notation shall differ from that used for root-tracking of an arbitrary operator $F$. 
Instead of $\bm{x}$, we write $\inputmarginalVect$ for our variable, which now stands for a marginal distribution on $\hat{\mathcal{X}}$. That is, $F(\cdot, \beta)$ shall be an operator on $\bb{R}^M$ (previously $\bb{R}^T$).
We further specialize $F$ to capture fixed points of the Blahut-Arimoto algorithm below.

Recall that the Blahut-Arimoto algorithm (BA) \cite{blahut1972, arimoto1972} is defined by two equations; e.g., \cite[10.8]{Cover2006}. Given a marginal distribution $\inputmarginalVect$ on the reproduction alphabet $\hat{\mathcal{X}}$, $\inputmarginal{}$, define a test channel or \textit{encoder} $\intermediateencoderVect$ by
\begin{equation}		\label{eq:encoder-eq}
	\intermediateencoder{}{} := \frac{\inputmarginal{} \; e^{-\beta d(x, \hat{x})}}{Z(x, \beta)} \;,
\end{equation}
where $Z(x, \beta) := \sum_{\hat{x}'} \inputmarginal{'} e^{-\beta d(x, \hat{x}')}$. Given an encoder $\intermediateencoderVect$, define a new marginal $\outputmarginalVect$,
\begin{equation}		\label{eq:marginal-eq}
	\outputmarginal{} := \sum_x p_X(x) \intermediateencoder{}{}
\end{equation}
A necessary (though \textit{not} sufficient) condition for $\inputmarginalVect$ to achieve the minimum $R(D)$ at \eqref{eq:RD-func-def} is that it is a fixed point of Equations \eqref{eq:encoder-eq} and \eqref{eq:marginal-eq}, $\outputmarginalVect = \inputmarginalVect$. 
e.g., \cite{berger71}. It is then called a \textit{curve achieving distribution}, or simply an \textit{achieving distribution}.

A curve-achieving marginal is determined by its test channel and vice-versa. So, a priori, it may not seem to matter which of the two is taken as the variable.
However, it turns out (Section \ref{part:details}.\ref{sub:obstructions-to-RT-assumptions-for-RD}) that only cluster-vanishing bifurcations can be detected in marginal coordinates.
This shall be dealt with later, when reconstructing the solution curve (Assumption \ref{assumption:only-cluster-vanishing-bifurcations} in Section \ref{sub:taylor-method-for-RD-root-tracking}).
For now, we define the $BA_\beta$ operator as the composition of \eqref{eq:encoder-eq} followed by \eqref{eq:marginal-eq}. 
It is a single iteration of the BA algorithm. Explicitly, 
\begin{equation}		\label{eq:BA-operator-def}
	BA_\beta\left[\inputmarginalVect\right](\hat{x}) := \inputmarginal{} \sum_x \frac{p_X(x) e^{-\beta d(x, \hat{x})}}{\sum_{\hat{x}'} \inputmarginal{'} e^{-\beta d(x, \hat{x}')}}
\end{equation}
is the operator's $\hat{x}$ coordinate when evaluated at $\inputmarginalVect$.
We could have taken instead the encoder $\intermediateencoderVect$ as our variable, defining $BA_\beta[\intermediateencoderVect]$ by the composition in reverse order, \eqref{eq:marginal-eq} followed by \eqref{eq:encoder-eq}. However, to reduce computational costs, we select the lower-dimensional cluster marginal $\inputmarginalVect$ as our variable.

\cite{csiszar1974computation} showed that the Blahut-Arimoto algorithm converges to an achieving distribution. 
An achieving distribution need not be unique, as \cite{berger71} notes for example.
Yet, it is a fixed point of $BA_\beta$ \eqref{eq:BA-operator-def}, or equivalently a root of the operator $F := Id - BA_\beta$ \eqref{eq:RD-operator-def}, as \cite{agmon2021critical} observed. Thus, to track fixed points of BA we calculate the derivative tensors of $Id - BA_\beta$ \eqref{eq:RD-operator-def}, below.
See Section \ref{part:details}.\ref{sec:RD-bifurcations-and-root-tracking} on its bifurcations. e.g., why should one expect RD problems to exhibit bifurcations, on types of RD bifurcations, and so forth. 
These shall be combined with implicit derivatives in RD when reconstructing the solution curve, in Section \ref{sub:taylor-method-for-RD-root-tracking}. 

\medskip 
A few definitions are needed to write down the derivative tensors of $F := Id - BA_\beta$ \eqref{eq:RD-operator-def}.
Define the real \emph{polynomial ring on countably many variables} as the following set of finite sums,
\begin{multline}		\label{eq:real-polynomials-in-countable-vars}
	\bb{R}[x_0, x_1, \dots] := 
	\Big\{ \sum a_i \; x_{i_1}^{d_{i_1}} x_{i_2}^{d_{i_2}} \cdots x_{i_k}^{d_{i_k}}: \Big. \\ \Big.
	a_i\in \bb{R}, \; k\in \bb{N}, \; 0 \leq i_1 < i_2 < \dots < i_k, \text{ and } d_{i_1}, d_{i_2}, \dots, d_{i_k} \in \bb{N}_0 \Big\}
\end{multline}
where $\bb{N}$ stands for the natural numbers, and $\bb{N}_0$ for the non-negative integers. 
It consists of finite sums of monomials $x_{i_1}^{d_{i_1}} x_{i_2}^{d_{i_2}} \cdots x_{i_k}^{d_{i_k}}$ with real coefficients, with each $d_{i_j} \geq 0$. 
The \emph{degree} of a polynomial in $\bb{R}[x_0, x_1, \dots]$ is the highest among its monomial degrees, which in turn is defined as $d_{i_1} + d_{i_2} + \dots + d_{i_k}$ for the above.
This slightly extends the usual definition of the (real) \emph{polynomial ring} $\bb{R}[x_0, x_1, \dots, x_n]$ in multiple variables $x_0, x_1, \dots, x_n$. e.g., \cite[9.1]{dummit2004abstract_alg}.

Next, define an $\bb{R}$-linear operator $\dbar$ on polynomials $\bb{R}[x_0, x_1, \dots]$ \eqref{eq:real-polynomials-in-countable-vars} by the Leibniz product rule $\dbar(f\cdot g) = g\cdot \dbar f + f\cdot \dbar g$. Except, that on the variables $x_0, x_1, \dots, $ it is defined by
\begin{equation}		\label{eq:variable-deriv-def-for-recursive-beta-formula}
	\dbar x_0 := 0, \quad \text{and} \quad
	\dbar x_k := x_1 \cdot x_k - x_{k+1} \quad \text{for } k > 0 \;,
\end{equation}
and $\dbar c := 0$ for constant polynomials $c \in \bb{R}$.
We note that the usual rules of polynomial differentiation such as $\dbar x_i^n = n x_i^{n-1} \dbar x_i$ follow from the Leibniz product rule. 
In algebraic context, a linear operator on a function space satisfying the Leibniz rule is known as a \emph{derivation}. e.g., \cite[C-2.6]{rotman2017advanced}.

With the operator $\dbar$ above, define as follows polynomials $P_k$ in the $k+1$ variables $x_0, x_1, \dots, x_k$, for $k = 0, 1, 2, \dots$. First, set
\begin{equation}		\label{eq:P_0-def}
	P_0(x_0) := 1 \;.
\end{equation}
Then, define $P_{k+1}$ inductively in terms of $P_k$,
\begin{equation}		\label{eq:P_k-inductive-def}
	P_{k+1}(x_0, x_1, \dots, x_{k+1}) := (x_1 - x_0) \cdot P_k + \dbar P_k
\end{equation}
With this, the first few polynomials are seen (in Section \ref{part:details}.\ref{sub:encoders-beta-derivatives}) to be
\begin{align}
	P_1(x_0, x_1) &= x_1 - x_0 	\label{eq:P_1_derived}	\\
	P_2(x_0, x_1, x_2) &= x_0^2 - 2x_0 x_1 + 2x_1^2 - x_2 	\label{eq:P_2_derived}	\\
	P_3(x_0, x_1, x_2, x_3) &= -x_0^3 + 3x_0^2 x_1 +3x_0 x_2 -6 x_0 x_1^2 + 6x_1^3 - 6x_1 x_2 + x_3 	\label{eq:P_3_derived}
\end{align}

For an encoder $\intermediateencoderVect$ and a distortion $d(x, \hat{x})$, denote
\begin{equation}			\label{eq:expected-k-th-power-distortion-def}
	\expectedDxWRTencoderK{k} := \sum_{\hat{x}'} \intermediateencoder{'}{} d(x, \hat{x}')^k
\end{equation}
for $k > 0$ and a particular coordinate $x\in \mathcal{X}$.
By abuse of notation, define for $k \geq 0$,
\begin{equation}		\label{eq:P_k-by-abuse-of-notation}
	P_k[\bm{q}; d](\hat{x}, x) := P_k\Big( d(x, \hat{x}), \expectedDxWRTencoder, \dots, \expectedDxWRTencoderK{k}\Big)
\end{equation}
That is, $P_k[\bm{q}; d]$ is a function of two variables $\hat{x}$ and $x$, defined by a pointwise evaluation of the polynomials $P_k$ \eqref{eq:P_0-def}-\eqref{eq:P_k-inductive-def} at $x_0 := d(x, \hat{x})$ and $x_k := \expectedDxWRTencoderK{k}$ for $k > 0$.
We shall write $P_k(\hat{x}, x)$ for short when the distortion $d(x, \hat{x})$ and the point $\intermediateencoderVect$ of evaluation are understood.

\medskip
We are now set to spell out the derivative tensors of $(Id - BA_\beta)[\inputmarginalVect]$ \eqref{eq:RD-operator-def}. While the entries of $(Id - BA_\beta)[\inputmarginalVect]$ are indexed by $\hat{x}$, we index the entries of its derivative tensors by a multi-index $\bm{\alpha} \in \bb{N}_0^{M+1}$, for convenience. 
See comments on indexation after definition \eqref{eq:mixed-deriv-def-evaluated-applied-to-vectors} of a derivatives tensor $D_{\beta^b, \bm{x}^m}^{b + m} F$ (in Section \ref{sub:high-order-beta-derivatives-at-an-operator-root}). 

\begin{thm}[High-order partial derivatives of $Id - BA_\beta$ \eqref{eq:RD-operator-def}]		\label{thm:high-order-derivs-of-BA-in-main-text}
	Let $p_X$ and $d(x, \hat{x})$ define an RD problem on the reproduction alphabet $\hat{\mathcal{X}}$. 
	Let $\inputmarginalVect \in \Delta[\hat{\mathcal{X}}]$, and let $\intermediateencoderVect$ be the encoder defined by it via Equation \eqref{eq:encoder-eq}. 
	Then, for any integer $\alpha_0 > 0$,
	\begin{equation}		\label{eq:repeated-beta-deriv-in-thm}
		\partialbetaK{}{\alpha_0} \big( Id - BA_\beta \big)[\inputmarginalVect](\hat{x}) = 
		-\sum_x p_X(x) \intermediateencoder{}{} P_{\alpha_0}(\hat{x}, x) \;,
	\end{equation}
	where $P_{\alpha_0}(\hat{x}, x)$ is defined by \eqref{eq:P_k-by-abuse-of-notation}.
	
	Assume further that $\inputmarginalVect$ is of full support, $\inputmarginal{} > 0$ for every $\hat{x}$. Let $\bm{\alpha} = (\alpha_0, \bm{\alpha_+}) \in \bb{N}_0^{M+1}$ be a multi-index with $\bm{\alpha_+} \neq \bm{0}$.
	Then,
	\begin{multline}		\label{eq:mixed-BA-deriv-in-thm}
		\frac{\partial^{|\bm{\alpha}|} }{\partial \beta^{\alpha_0} \partial \bm{\inputmarginalVect}^{\bm{\alpha_+}}} \left(Id - BA_\beta \right)\left[\inputmarginalVect\right](\hat{x}) \\ =
		\delta_{\bm{\alpha}, \bm{e}_{\hat{x}}} - 
		(-1)^{|\bm{\alpha_+}|-1} (|\bm{\alpha_+}|-1)! \; \bm{\alpha}! \sum_x p_X(x) \left(\frac{\intermediateencoder{'}{}}{\inputmarginal{'}}\right)^{\bm{\alpha_+} } 
		\sum_{\bm{k}\in \bb{N}_0^M: \; |\bm{k}| = \alpha_0} \left(\prod_{i\neq\hat{x}} G\big( k_i, \alpha_i; \intermediateencoderVect, d \big)_{(\hat{x}_i, x)} \right)
		\\ \cdot \left[
		\alpha_{\hat{x}} \cdot G\big( k_{\hat{x}}, \alpha_{\hat{x}}; \intermediateencoderVect, d \big)_{(\hat{x}, x)} 
		- |\bm{\alpha_+}| \cdot \left( 1 + \alpha_{\hat{x}} \right) \cdot \intermediateencoder{}{} \cdot 
		G\big( k_{\hat{x}}, 1 + \alpha_{\hat{x}}; \intermediateencoderVect, d \big)_{(\hat{x}, x)}
		\right] 
	\end{multline}
	where $G\big(k, a; \intermediateencoderVect, d \big)$ is defined on integers $k, a\geq 0$ by $G = 0$ if $a = 0 < k$, and otherwise
	\begin{equation}		\label{eq:combinatorial-G-in-terms-of-polynomials}
		G\big( k, a; \intermediateencoderVect, d \big)_{(\hat{x}, x)} := 
		\sum_{\substack{\bm{t} \in \bb{N}_0^k: \;|\bm{t}|\leq a, \\ \sum_{j} j\cdot t_{j} = k}} \frac{1}{ \bm{t}! \; (a - |\bm{t}|)!} 
		\prod_{j=1}^{k} \left( \frac{P_j(\hat{x}, x)}{j!} \right)^{t_{j}} \;.
	\end{equation}
\end{thm}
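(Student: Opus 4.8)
The plan is to compute the derivatives directly from the definition of $BA_\beta$ \eqref{eq:BA-operator-def}, organizing the computation around the two places $\beta$ and $\inputmarginalVect$ enter: through the Boltzmann factors $e^{-\beta d(x,\hat{x})}$ and through the explicit prefactor $\inputmarginal{}$ and the partition function $Z(x,\beta)=\sum_{\hat{x}'}\inputmarginal{'}e^{-\beta d(x,\hat{x}')}$ in the denominator. First I would treat the pure-$\beta$ case \eqref{eq:repeated-beta-deriv-in-thm}. Since $Id$ contributes nothing under $\partial_\beta^{\alpha_0}$ with $\alpha_0>0$, it suffices to differentiate $BA_\beta[\inputmarginalVect](\hat{x})=\sum_x p_X(x)\,\intermediateencoder{}{}$. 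The key observation is that the encoder $\intermediateencoder{}{}=\inputmarginal{}e^{-\beta d(x,\hat{x})}/Z(x,\beta)$ satisfies a clean logarithmic-derivative identity: $\partial_\beta\log \intermediateencoder{}{} = -d(x,\hat{x}) + \expectedDxWRTencoder$, i.e. $\partial_\beta \intermediateencoder{}{} = \intermediateencoder{}{}\big(\expectedDxWRTencoder - d(x,\hat{x})\big) = -\,q\cdot(x_0 - x_1)$ in the variable-names of \eqref{eq:variable-deriv-def-for-recursive-beta-formula}. I would then prove by induction on $k$ that $\partial_\beta^{k}\intermediateencoder{}{} = (-1)^k\,\intermediateencoder{}{}\,P_k(\hat{x},x)$, where the inductive step is exactly the recursion \eqref{eq:P_k-inductive-def}: differentiating $(-1)^k q P_k$ in $\beta$ produces $(-1)^k[(\partial_\beta q)P_k + q\,\partial_\beta P_k]$, and one checks that $\partial_\beta$ acting on $P_k$ evaluated at the moment-variables $x_j = \expectedDxWRTencoderK{j}$ reproduces the abstract derivation $\dbar$ — precisely because $\partial_\beta \expectedDxWRTencoderK{j} = \expectedDxWRTencoderK{1}\,\expectedDxWRTencoderK{j} - \expectedDxWRTencoderK{j+1}$, matching $\dbar x_j = x_1 x_j - x_{j+1}$, while $\partial_\beta x_0 = 0$ matches $\dbar x_0 = 0$. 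This gives \eqref{eq:repeated-beta-deriv-in-thm} after summing against $p_X(x)$ and inserting the sign $(-1)^{\alpha_0}$; care is needed to absorb the $(x_1-x_0)q$ term coming from $\partial_\beta q$ into the $(x_1-x_0)P_k$ term of the recursion.

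Second, for the mixed derivatives \eqref{eq:mixed-BA-deriv-in-thm} I would differentiate with respect to the $\inputmarginal{'}$ variables. The $Id$ term now contributes $\delta_{\bm\alpha,\bm e_{\hat x}}$ (a single first derivative in $\inputmarginal{}$ and no others, no $\beta$), accounting for the leading Kronecker term. For $BA_\beta$, the cleanest route is to note that $BA_\beta[\inputmarginalVect](\hat x) = \inputmarginal{}\sum_x p_X(x)\,\frac{e^{-\beta d(x,\hat x)}}{Z(x,\beta)}$, so the $\inputmarginalVect$-dependence splits into the explicit linear factor $\inputmarginal{}$ and the factor $1/Z(x,\beta)$. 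I would write $\partial_{\inputmarginal{i}}\log Z = e^{-\beta d(x,\hat x_i)}/Z = \intermediateencoder{_i}{}/\inputmarginal{_i}$, so that $\partial_{\inputmarginal{i}}^{\,n}(1/Z)$ is governed by Faà di Bruno applied to $Z^{-1}=\exp(-\log Z)$ with $\partial_{\inputmarginal{i}}\log Z$ independent of $\inputmarginal{i}$ to higher order only through powers of itself — in fact $\partial_{\inputmarginal{i}}^{\,m}\log Z = (-1)^{m-1}(m-1)!\,(\intermediateencoder{_i}{}/\inputmarginal{_i})^m$ for $m\ge1$, which is where the factor $(-1)^{|\bm\alpha_+|-1}(|\bm\alpha_+|-1)!$ and the product $(\intermediateencoder{'}{}/\inputmarginal{'})^{\bm\alpha_+}$ originate. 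The remaining combinatorics — distributing the $\alpha_0$ $\beta$-derivatives across the various factors $e^{-\beta d(x,\hat x_i)}$ appearing when one expands the multinomial $(\partial_{\inputmarginal{i}}\log Z)^{\text{powers}}$, each such $\beta$-differentiation again generating the polynomials $P_j(\hat x_i,x)$ via the first part — is exactly what the auxiliary function $G(k,a;\intermediateencoderVect,d)$ \eqref{eq:combinatorial-G-in-terms-of-polynomials} packages: the inner sum over $\bm t\in\bb N_0^k$ with $\sum_j j t_j = k$, $|\bm t|\le a$ is the standard Faà-di-Bruno/Bell-polynomial bookkeeping for applying $\partial_\beta^{\,k}$ to an $a$-fold product of copies of the same $\beta$-dependent quantity. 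The bracketed term separating $i=\hat x$ from $i\ne\hat x$, with its $\alpha_{\hat x}$ versus $|\bm\alpha_+|(1+\alpha_{\hat x})$ coefficients, comes from combining the derivative hitting the explicit $\inputmarginal{}$ prefactor (raising the $\hat x$-power by one, producing the $1+\alpha_{\hat x}$ and the extra factor of $\intermediateencoder{}{}$) with the derivatives hitting $1/Z$.

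The bulk of the work is organizational: one induction to establish the $\beta$-derivative identity $\partial_\beta^k q = (-1)^k q P_k$, a second induction (or a direct Faà di Bruno citation to \cite{ma2009higher}) for $\partial_{\inputmarginal{i}}^m \log Z$, and then a careful multinomial/Faà-di-Bruno expansion to merge the two and read off $G$. I would present the $\beta$-only statement \eqref{eq:repeated-beta-deriv-in-thm} in full, then prove \eqref{eq:mixed-BA-deriv-in-thm} by expanding $\partial^{\bm\alpha}(Id-BA_\beta) = \delta_{\bm\alpha,\bm e_{\hat x}} - \sum_x p_X(x)\,\partial_\beta^{\alpha_0}\partial_{\inputmarginalVect}^{\bm\alpha_+}\big(\inputmarginal{}\,e^{-\beta d(x,\hat x)}/Z\big)$ via the generalized Leibniz rule over the three groups of factors, verifying that the stated formula matches term-by-term; the full combinatorial verification is routine but lengthy and is deferred to Part \ref{part:proofs}. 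The main obstacle I anticipate is precisely the bookkeeping in this last merge — keeping the multi-index powers $\bm\alpha_+$, the constraint $|\bm k|=\alpha_0$, the two-variable structure of $G$, and the special role of the index $\hat x$ all consistent — rather than any conceptual difficulty; every individual ingredient is an elementary derivative identity, and the real content is that these identities assemble into the compact closed form claimed.
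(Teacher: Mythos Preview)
Your approach is essentially the paper's: exploit the linearity of the marginal step so that everything reduces to derivatives of the encoder $\intermediateencoder{}{}$; establish $\partial_\beta^k\intermediateencoder{}{}$ by induction via the recursion \eqref{eq:P_k-inductive-def}; establish the pure-$\inputmarginalVect$ derivative of the encoder separately; then merge via the generalized Leibniz rule and Fa\`a di Bruno to read off $G$. The paper's only organizational difference is that it computes $\partial_{\inputmarginalVect}^{\bm\alpha_+}\intermediateencoder{'}{}$ directly by induction on $|\bm\alpha_+|$ (Proposition~\ref{prop:repeated-encoder-deriv-wrt-coords}) rather than via $\log Z$, and then applies $\partial_\beta^{\alpha_0}$ to that closed form --- this saves one layer of Fa\`a di Bruno, but your $\log Z$ route would work equally well.

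Two bookkeeping slips to fix. First, there is no $(-1)^k$: the correct identity is $\partial_\beta^k\intermediateencoder{}{}=\intermediateencoder{}{}\,P_k(\hat{x},x)$. You already computed $\partial_\beta q = q(x_1-x_0)=q\,P_1$; the induction step is then $\partial_\beta(qP_k)=(qP_1)P_k+q\,\dbar P_k=qP_{k+1}$ with no sign appearing. The minus in \eqref{eq:repeated-beta-deriv-in-thm} comes solely from the $Id-BA_\beta$, so do not ``insert the sign $(-1)^{\alpha_0}$.'' Second, the factors over which you distribute the $\alpha_0$ $\beta$-derivatives are $\intermediateencoder{_i}{}/\inputmarginal{_i}=e^{-\beta d(x,\hat{x}_i)}/Z$, not bare $e^{-\beta d(x,\hat{x}_i)}$; only the former, which carry $Z$ in the denominator, yield the polynomials $P_j(\hat{x}_i,x)$ upon $\beta$-differentiation via the first part.
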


With the above derivative tensors of $Id - BA_\beta$ \eqref{eq:RD-operator-def}, Algorithm \ref{algo:high-order-derivs-of-operator-roots} for computing higher implicit derivatives (in Section \ref{sub:high-order-beta-derivatives-at-an-operator-root}) can now be specialized to RD. 
This allows us to compute the implicit derivatives $\dbetaK{\inputmarginalVect}{l}$ \eqref{eq:l-th-deriv-at-beta0} at an RD root $\inputmarginalVect $.
The Theorem's proof is outlined in Section \ref{part:details}.\ref{sec:high-order-derivs-of-BA-in-marginal-coords}.
In addition to the RD ODE (Theorem \ref{thm:beta-ODE-in-marginal-coords}), the main results there are the formulas for the encoder's repeated partial $\beta$-derivatives and its mixed ones (Propositions \ref{prop:repeated-beta-derivatives-of-encoder} and \ref{prop:mixed-high-order-enc-deriv}), which yield \eqref{eq:repeated-beta-deriv-in-thm} and \eqref{eq:mixed-BA-deriv-in-thm}. 
Section \ref{part:details}.\ref{sub:computing-high-order-derivatives-efficiently} comments how the latter can be computed efficiently.

The implicit derivatives computed by Algorithm \ref{algo:high-order-derivs-of-operator-roots} for RD lose their numerical stability when approaching a bifurcation, losing it faster if the order is higher (see after Theorem \ref{thm:formula-for-high-order-expansion-of-F-in-main-result-sect} in Section \ref{sub:high-order-beta-derivatives-at-an-operator-root}). 
Other than that, the first few implicit derivatives computed for RD are seen in Figure \ref{fig:derivative-calculation-loses-accuracy-near-bifurcation} (top) to be remarkably accurate. 
Further, note that the derivative tensors of $Id - BA_\beta$ might be numerically unstable when computed for high orders, due to the factorials at the denominators of \eqref{eq:combinatorial-G-in-terms-of-polynomials}. 
Interestingly, the bifurcation's presence also affects the classic Blahut-Arimoto algorithm, \cite{blahut1972, arimoto1972}. It too suffers from accuracy loss near bifurcations, as demonstrated by Figure \ref{fig:derivative-calculation-loses-accuracy-near-bifurcation} (bottom). This can be seen as a direct consequence of the critical slowing down near RD bifurcations, \cite{agmon2021critical}.

\medskip
\section{Reconstructing an RD solution curve from implicit derivatives}
\label{sec:RD-solution-curve-from-RD-beta-derivs}

In Section \ref{sec:beta-derivs-at-an-operator-root-and-for-RD}, we showed how to compute implicit derivatives at an RD root, using the specialization of Algorithm \ref{algo:high-order-derivs-of-operator-roots} to RD.
Now, we shall leverage these derivatives to reconstruct the entire solution curve of a given RD problem.

\medskip
In Section \ref{sub:taylor-method-for-RD-root-tracking}, we modify the vanilla fixed-step Taylor method to track an RD root numerically until a cluster-vanishing bifurcation is detected --- which is Algorithm \ref{algo:taylor-method-for-RD-root-tracking}.
We handle these bifurcations in Algorithm \ref{algo:root-tracking-for-RD} (in Section \ref{sub:RD-root-tracking-near-bifrcations}), thereby reconstructing the entire solution curve, as seen in Figure \ref{fig:reconstructing-an-RD-solution-curve}.
The algorithms' error from the true solution is of order $O(|\Delta\beta|^l)$ (Theorem \ref{thm:taylor-method-converges-for-RD-root-tracking-away-of-bifurcation}), for an $l$-th order Taylor method of step $\Delta\beta$.
Both Blahut-Arimoto and the implicit numerical derivatives we use suffer from degraded accuracy near RD bifurcations, as shown earlier in Figure \ref{fig:derivative-calculation-loses-accuracy-near-bifurcation}. 
However, while BA suffers from a hefty computational penalty near bifurcations due to its critical slowing down, \citep{agmon2021critical}, the added computational cost to our Algorithms \ref{algo:taylor-method-for-RD-root-tracking} and \ref{algo:root-tracking-for-RD} is negligible. 

\begin{figure}[h!]
	\begin{center}
		\hspace*{-55pt}
		\includegraphics[width=1.25\textwidth]{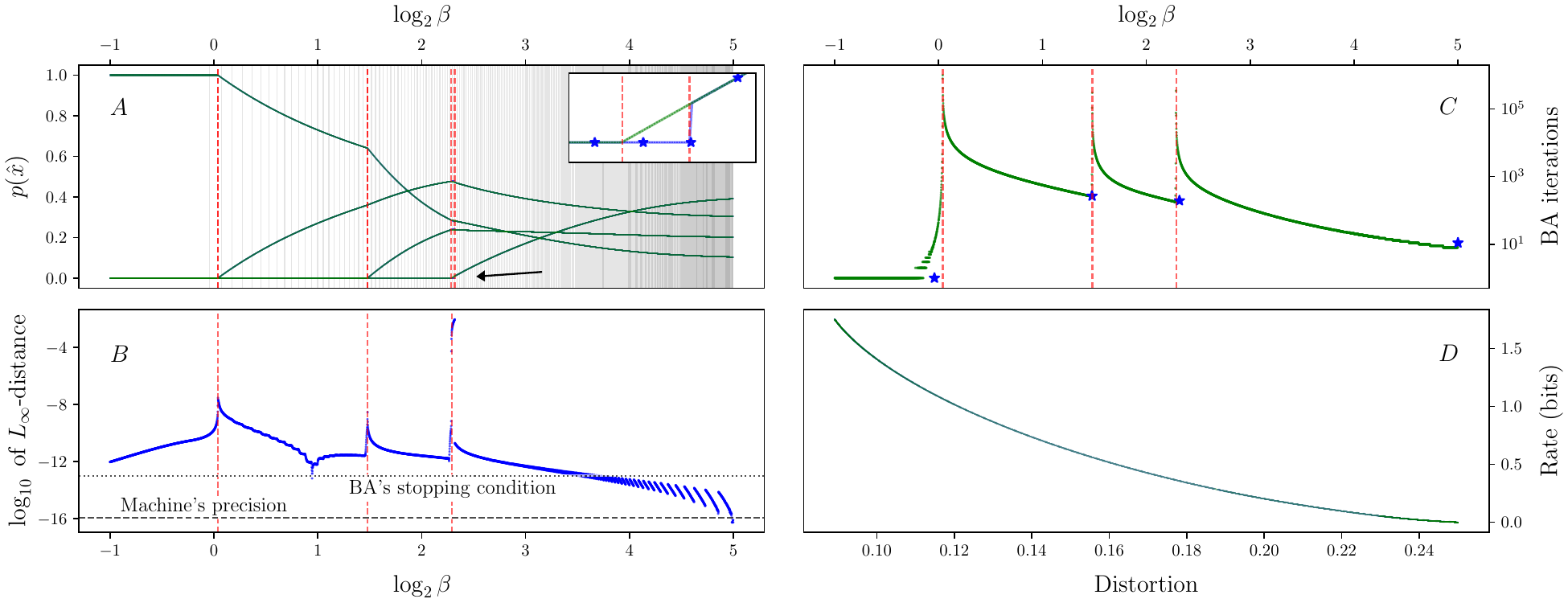}
		\caption{
			\textbf{Reconstructing a solution curve from implicit derivatives, with root-tracking for RD (Algorithm \ref{algo:root-tracking-for-RD}).}
			Reproducing the problem in \cite[Figure 1]{agmon2021critical}, defined by $d(x, \hat{x}) = \tfrac{1}{8} {\footnotesize \mat{0 & 1 & 1 & 2\\4 & 1 & 5 & 2\\4 & 5 & 1 & 2\\8 & 5 & 5 & 2}}$ and $p_X = \tfrac{1}{10} (4, 3, 2, 1)$. 
			Bifurcations are marked by dashed red verticals. 
			\newline
			\textbf{A.} Cluster marginal $p(\hat{x})$ as a function of $\beta$.
			A fixed-width grid of almost 400 points is selected along the $\beta$-axis (gray verticals), with implicit derivatives computed to the 7th order at each grid-point.
			Every point is extrapolated from the grid-point to its right using a modified Taylor method (Algorithm \ref{algo:taylor-method-for-RD-root-tracking} in Section \ref{sub:taylor-method-for-RD-root-tracking}), except for the rightmost one and near the bifurcations, where Blahut-Arimoto's algorithm is used.
			The cluster vanishing bifurcations of this problem are handled by Algorithm \ref{algo:root-tracking-for-RD} (with $\delta = 10^{-2}, \Delta \beta = \nicefrac{-31.5}{400}$), which can either overshoot or undershoot a bifurcation (Section \ref{sub:RD-root-tracking-near-bifrcations}). 
			On undershooting, the bifurcation is detected too early (redundant red vertical to the right), and the algorithm switches temporarily to a sub-optimal solution branch as a result. 
			This is magnified $\times 20$ in the top-right inset, depicting an undershooting of the rightmost bifurcation, with four nearby grid-points marked by blue stars. 
			The entire solution curve (in blue) is extrapolated from the grid.
			For comparison, Blahut-Arimoto's is in green (a $10^{-13}$ stopping condition, 5000 grid-points). The two are visually indistinguishable almost everywhere.
			\textbf{B.} The $L_\infty$-distance between the solutions produced by BA and Algorithm \ref{algo:root-tracking-for-RD} (this problem has no analytic solution). Note the localized error in Algorithm \ref{algo:root-tracking-for-RD} due to switching to the sub-optimal branch near the rightmost bifurcation.
			\textbf{C.} Number of Blahut-Arimoto iterations. BA in reverse annealing is in green; Algorithm \ref{algo:root-tracking-for-RD} invokes BA only four times (blue stars): once for the initial condition, and once per bifurcation.
			Note that Algorithm \ref{algo:root-tracking-for-RD} avoids the critical slowing down exhibited by BA, \cite{agmon2021critical}; see Section \ref{sub:RD-root-tracking-near-bifrcations}.
			\textbf{D.} The rate-distortion curve of this problem. The results of BA and Algorithm \ref{algo:root-tracking-for-RD} are indistinguishable.
		}
		\label{fig:reconstructing-an-RD-solution-curve}
	\end{center}
\end{figure}

Bounds on the algorithms' computational and memory complexities are provided in Section \ref{sub:costs-and-error-to-cost-tradeoff-of-RD-root-tracking-in-main-results-section}. 
One might expect the computational cost to increase if a smaller error is required. 
Indeed, we estimate of this tradeoff for our algorithms. 
Unlike the Blahut-Arimoto algorithm, which computes solutions only at specified grid points, when our root-tracking Algorithm \ref{algo:root-tracking-for-RD} terminates, only a fixed computational cost is needed to extrapolate any off-grid point.
In principle, the entire solution curve can be approximated from few grid points by using high-order expansions, at least in well-behaved examples. 
Although high orders may not be practical due to their computational costs, this is demonstrated in Figure \ref{fig:7-points-example}. 
Suggestions on improving our algorithms' efficiency are discussed in Section \ref{sub:efficient-RD-root-tracking}.

This section is complemented by several Sections in Part \ref{part:details}, corresponding to the subsections of Section \ref{sec:RD-solution-curve-from-RD-beta-derivs}.
In Section \ref{part:details}.\ref{sec:RD-bifurcations-and-root-tracking}, we provide several basic results on RD bifurcations, to our knowledge for the first time. This allows us to tell when does our Algorithm \ref{algo:root-tracking-for-RD} follow the optimal branch, explain why it may fail, and consider improvements.
Section \ref{part:details}.\ref{sec:error-analysis} follows standard error analysis results in the literature to provide the guarantees and error estimates for our case. In the context of numerical solutions of ODEs, a computational difficulty stems from the existence of bifurcations.
Thus, in a sense, our algorithm trades the bifurcations' hefty computational cost due to Blahut-Arimoto's critical slowing down with reduced accuracy near bifurcations.
Finally, Section \ref{part:details}.\ref{sec:computational-complexities} bounds the computational and memory costs, both of root-tracking in general, and of its specialization to rate-distortion.

\subsection{A modified Taylor method for RD root-tracking}
\label{sub:taylor-method-for-RD-root-tracking}

We parameterize a root of $Id - BA_\beta$ \eqref{eq:RD-operator-def} by its cluster marginal $\inputmarginal{}$ rather than by the encoder $\intermediateencoder{}{}$ to reduce computational costs, as noted in Section \ref{sub:high-order-deriv-tensors-of-BA}. 
Write $\tilde{\inputmarginalVect}_{\beta_n}$ or simply $\tilde{\inputmarginalVect}_{n}$ for a numerical approximation of the true solution $\inputmarginalVect_{\beta_n} \in \Delta[\hat{\mathcal{X}}]$ at $\beta_n$.
Starting at $\tilde{\inputmarginalVect}_{0} := \inputmarginalVect_{\beta_0}$, set\footnote{ We note that the approximations $\tilde{\inputmarginalVect}_{n}$ \eqref{eq:taylor-method-def-in-cluster-marginal} need \textit{not} be normalized distributions. However, Theorem \ref{thm:taylor-method-converges-for-RD-root-tracking-away-of-bifurcation} below guarantees that $\tilde{\inputmarginalVect}_{n}$ does not deviate much from the true solution $\inputmarginalVect_{\beta_n}$ when the step size $|\Delta \beta|$ is small enough. Thus, $\tilde{\inputmarginalVect}_{n}$ does not deviate much from being normalized. Note also the normalization on step \ref{algo:RD-root-tracking:normalization} of Algorithm \ref{algo:root-tracking-for-RD}. }
\begin{equation}		\label{eq:taylor-method-def-in-cluster-marginal}
	\tilde{\inputmarginalVect}_{n+1} := 
	\tilde{\inputmarginalVect}_{n} + 
	\Delta \beta \cdot T_l\big( \tilde{\inputmarginalVect}_{n}, \beta_n, \Delta \beta \big) \;.
\end{equation}
Where, $T_l(\bm{x}, \beta, \Delta \beta) := \dbeta{\bm{x}} \big \rvert_{(\bm{x}, \beta)} + \tfrac{1}{2!} \dbetaK{\bm{x}}{2} \big \rvert_{(\bm{x}, \beta)} \Delta \beta + \dots + \tfrac{1}{l!} \dbetaK{\bm{x}}{l} \big \rvert_{(\bm{x}, \beta)} \Delta \beta^{l-1}$ is the $l$-th order Taylor remainder as in expansion \eqref{eq:solution-by-beta-as-taylor-approx}, with the derivatives evaluated by Algorithm \ref{algo:high-order-derivs-of-operator-roots} for RD (Section \ref{sub:high-order-beta-derivatives-at-an-operator-root}) at $(\tilde{\inputmarginalVect}_{n}, \beta_n)$.
We extrapolate off-grid points $\tilde{\inputmarginalVect}_\beta$ from \eqref{eq:taylor-method-def-in-cluster-marginal} using the last grid point, except in some cases near a bifurcation (see Section \ref{sub:RD-root-tracking-near-bifrcations}).
This numerical method is known as the \textit{Taylor method}; e.g., \cite{atkinson2011numerical, butcher2016numerical}.
See Section \ref{part:details}.\ref{sec:error-analysis} for a recap and its error analysis.
While there are other numerical methods that use derivatives to approximate $\inputmarginalVect_\beta$, we chose the Taylor method since it is simple and well-studied. cf., Section \ref{sub:efficient-RD-root-tracking} on improvements.

Computing implicit derivatives with Algorithm \ref{algo:high-order-derivs-of-operator-roots} for RD requires the derivative tensors of $BA_\beta$, which were provided by Theorem \ref{thm:high-order-derivs-of-BA-in-main-text} (Section \ref{sub:high-order-deriv-tensors-of-BA}). 
The formulae there require a marginal $\inputmarginalVect$ to be of full support. 
To handle this and to further reduce computational costs, define the following.
For a proper nonempty subset $\hat{\cal{X}}'$ of the reproduction alphabet $\hat{\cal{X}}$, define \textit{the RD problem restricted to $\hat{\cal{X}}'$} by deleting letters $\hat{x}\in \hat{\cal{X}}$ outside $\hat{\cal{X}}'$ and the respective columns $d(\cdot, \hat{x})$ from the distortion matrix. 
For practical purposes, deleting a letter $\hat{x}$ is equivalent to allowing initializations $\inputmarginalVect_0$ of BA only if their $\hat{x}$ coordinate is zero, $\inputmarginalSymbol_0(\hat{x}) = 0$, as can be seen by the explicit form \eqref{eq:BA-operator-def} of $BA_\beta$. 
When the marginal $\inputmarginalVect$ is understood, we call the RD problem restricted to $\supp \inputmarginalVect$ \textit{the reduced problem}. 
Reducing a problem does not affect the solution, \cite[Lemma 1 in Chapter 2]{berger71}.

We require henceforth that $d(x, \hat{x})$ is finite and non-degenerate\footnote{ Namely, the ($\hat{x}$ indexed) columns of the distortion matrix $d$ are distinct --- see Section \ref{sub:high-order-deriv-tensors-of-BA}.}. 
By definition, $\inputmarginalVect$ is of full support in the reduced problem. 
Therefore, the Jacobian $D_{\inputmarginalVect}(Id - BA_\beta)\rvert_{\inputmarginalVect}$ is non-singular in the reduced problem, \citep[Theorem 1 ff.]{agmon2021critical}, when $\inputmarginalVect$ is also a fixed point of $BA_\beta$. 
In particular, the numerical derivatives $\tfrac{d^l \inputmarginalVect}{d\beta^l}$ produced by Algorithm \ref{algo:high-order-derivs-of-operator-roots} for RD are then defined uniquely, as explained after Theorem \ref{thm:formula-for-high-order-expansion-of-F-in-main-result-sect} (in Section \ref{sub:high-order-beta-derivatives-at-an-operator-root}). 
These are vectors of a lower dimension $|\supp \inputmarginalVect| \leq M$, tracing the root's path in the reduced problem; optimality of this path is discussed below. 
The explicit form of $D_{\inputmarginalVect}(Id - BA_\beta)\rvert_{\inputmarginalVect}$ at an arbitrary distribution $\inputmarginalVect$ is given later, by Corollary \ref{cor:BA-jacobian} (Section \ref{part:details}.\ref{sub:encoders-marginal-derivatives}). 
Its formula \eqref{eq:BA-jacobian} shows that the Jacobian is non-singular even when straying slightly off a stable fixed point of $BA_\beta$ (e.g., due to accumulated approximation error).

The argument at \cite{agmon2021critical} further shows that an eigenvalue of $D_{\inputmarginalVect}(Id - BA_\beta)\rvert_{\inputmarginalVect}$ vanishes if and only if $\inputmarginal{}$ vanishes for some $\hat{x}$.
This gives a simple way to detect cluster-vanishing bifurcations. Namely, an RD bifurcation where $\inputmarginal{}$ gradually vanishes. 
We take only negative steps $\Delta \beta < 0$ when considering how fixed points evolve --- see Section \ref{part:details}.\ref{sub:cluster-vanishing-bifs-are-bifs} ff. 
As shown there, cluster-vanishing bifurcations are indeed bifurcations, where two roots collide and merge.	
To detect these, one can set a cluster-mass threshold $\delta > 0$ below which $\inputmarginal{}$ is considered to have vanished.
Thus, iterating over a Taylor method step \eqref{eq:taylor-method-def-in-cluster-marginal} until a cluster vanishes allows to reconstruct the root's path till the next bifurcation. This is summarized by Algorithm \ref{algo:taylor-method-for-RD-root-tracking}, with convergence guarantees in Theorem \ref{thm:taylor-method-converges-for-RD-root-tracking-away-of-bifurcation} below. 
We handle the bifurcation later, in Section \ref{sub:RD-root-tracking-near-bifrcations}. 
A significant part of the computational difficulty stems from approaching a bifurcation (Section \ref{part:details}.\ref{sub:computational-difficulty-of-RTRD}). Thus, setting a threshold $\delta > 0$ on the cluster mass effectively restricts the problem's difficulty.

\begin{algorithm}
	\caption{\hspace{-4pt}: A modified $l$-th order Taylor method, tracking an RD root to a cluster-vanishing bifurcation}
	\label{algo:taylor-method-for-RD-root-tracking}
	\begin{algorithmic}[1]
		\Function{Track RD Root To Bifurcation}{$\inputmarginalVect_{\beta_0}, \beta_0; \Delta \beta, \delta, l, d, p_X$}
		\Input
		\Statex $\beta_0$, a root $\inputmarginalVect_{\beta_0}$ of $Id - BA_{\beta_0}$ \eqref{eq:RD-operator-def} of full support, step-size $\Delta \beta < 0$, 
		\Statex cluster mass threshold $\delta > 0$, an order $l > 0$, and an RD problem $(d, p_X)$.
		\Output
		\Statex Approximations $\tilde{\inputmarginalVect}_{i}$ of the true solution $\inputmarginalVect_{\beta_i}$ at $\beta_i$, 
		\Statex and the $l$-th order Taylor expansions $p_i$ around $(\tilde{\inputmarginalVect}_i, \beta_i)$.
		\State Initialize $\tilde{\inputmarginalVect}_0 \gets \inputmarginalVect_{\beta_0}, n \gets 0$.
		\While{$\forall \hat{x}' \; \tilde{\inputmarginalSymbol}_n(\hat{x}') > \delta$ \textbf{and} $\beta_n > 0$}	
		\Comment{Stop if too close to a bifurcation.}		\label{algo:root-tracking-on-interval:stopping-cond}
		\For{$i = 1, \dots, l$}
		\State $\dbetaK{\inputmarginalVect}{i}\Big\rvert_{(\tilde{\inputmarginalVect}_n, \beta_n)} \gets 
		\Call{Calculate Implicit Derivative}{\tilde{\inputmarginalVect}_n, \beta_n, i; d, p_X}$ 
		\begin{flushright}
			\Comment{Algorithm \ref{algo:high-order-derivs-of-operator-roots} for RD.}
		\end{flushright}
		\EndFor
		\State $p_n(\Delta ) \gets \tilde{\inputmarginalVect}_{n} +
		\frac{\Delta }{1!} \cdot \dbeta{\inputmarginalVect}\Big\rvert_{(\tilde{\inputmarginalVect}_{n}, \beta_n)} + \dots +
		\frac{\Delta^{l}}{l!} \cdot \dbetaK{\inputmarginalVect}{l}\Big\rvert_{(\tilde{\inputmarginalVect}_n, \beta_n)} $
		\State $\tilde{\inputmarginalVect}_{n+1} \gets p_n(\Delta \beta)$
		\State $\beta_{n+1} \gets \beta_n + \Delta \beta$
		\State $n \gets n + 1$
		\EndWhile
		\State \textbf{return} $\left\{ (\tilde{\inputmarginalVect}_i, \beta_i, p_i ) \right\}_{i=1}^n$
		\Comment{Approximations of the true solutions $\inputmarginalVect(\beta_n)$.}
		\EndFunction
	\end{algorithmic}
\end{algorithm}

Roots $\inputmarginalVect_\beta$ of an RD problem are very well-behaved --- namely, piecewise analytic in $\beta$.
Reduction of an RD problem mods out the kernel of the Jacobian in cluster-marginal coordinates $\inputmarginalVect$, leaving it non-singular, and yet is simple and straightforward to implement. 
By the implicit function theorem \cite[(I.1.7)]{kielhofer2011bifurcation}, there not only exists a unique root $\inputmarginalVect_\beta$ of the reduced problem (through a given $\inputmarginalVect_{\beta_0}$), but it is also (real) analytic. For, the $BA_\beta$ \eqref{eq:BA-operator-def} operator is a composition of analytic functions. 
Recall, e.g., \cite[IX.3]{dieudonne1969foundations}, that $\inputmarginalVect_\beta$ is \textit{analytic} (in some open set) if its infinite-order Taylor expansion \eqref{eq:solution-by-beta-as-taylor-approx} about any $\beta_0$ not only exists ($\inputmarginalVect_\beta$ is smooth), but also converges pointwise to the function itself, within some radius of convergence.
Every coordinate $\inputmarginalSymbol_\beta(\hat{x})$ of $\inputmarginalVect_\beta$ is nothing but an (infinite) power series in $\Delta \beta := \beta - \beta_0$.
The partial sums of a power series converge uniformly on closed intervals inside their radius of convergence. 
That is, only a finite number of summands is needed to extrapolate the solution on an interval, at a given precision. 
In principle, this allows us to extrapolate the entire solution curve by using Taylor expansions at just a few points if the convergence radii are large enough, as demonstrated by Figure \ref{fig:7-points-example}.

\begin{figure}[h!]
	\begin{center}
		\hspace*{-55pt}
		\includegraphics[width=1.25\textwidth]{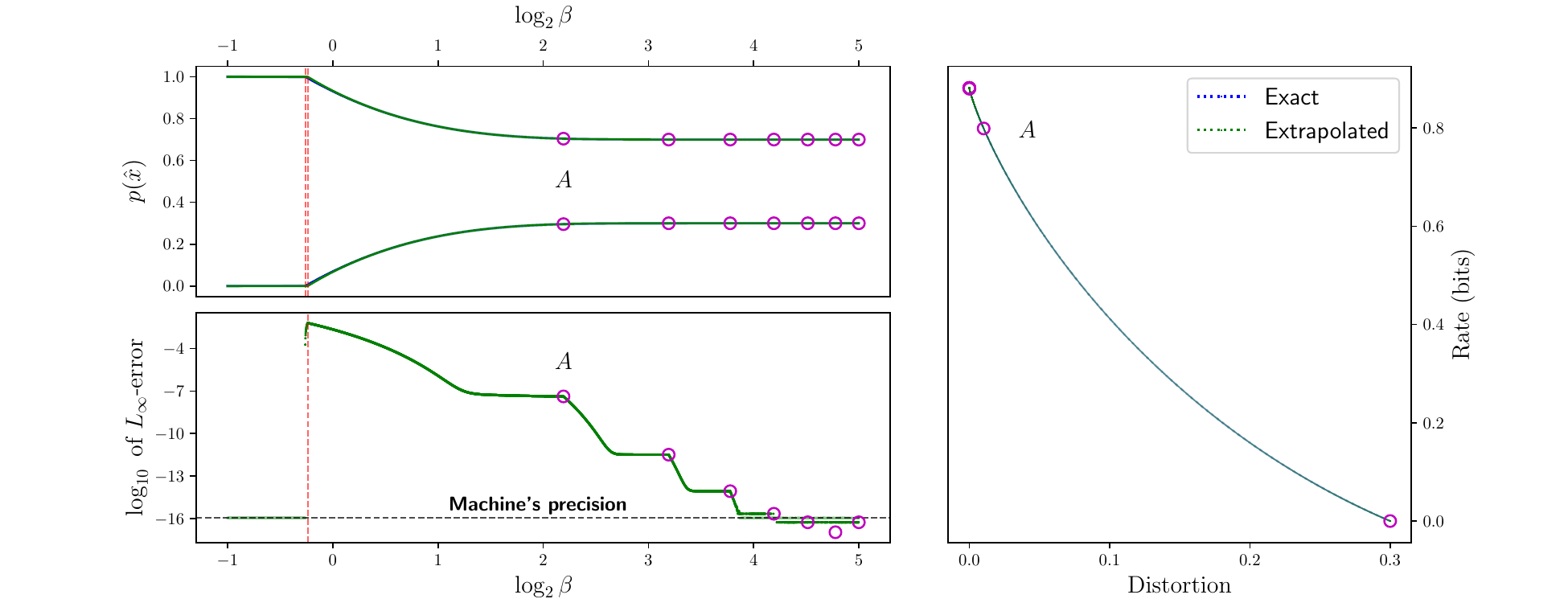}
		\caption{
			\textbf{Extrapolating the entire solution curve from 7 grid points} (magenta circles), \textbf{using Algorithm \ref{algo:root-tracking-for-RD} of order 20}, for Binary source with Hamming distortion \ref{part:proofs}.\ref{sec:binary-source-with-hamming-dist-appendix}, with $\delta = 10^{-2}$. 
			Although the error guarantees of Theorem \ref{thm:taylor-method-converges-for-RD-root-tracking-away-of-bifurcation} aim at small step-sizes, and high-order derivatives are computationally expensive, this example demonstrates the predictive power achievable by exploiting real-analyticity in well-behaved examples; see comments in the main text. 
			\newline
			\textbf{Top left}: The extrapolations (in green) from the grid points overlap the exact solutions (in blue) almost precisely, deviating visibly only near the bifurcation at $\log_2 \beta_c \approx -0.24$. 
			The algorithm overshoots the bifurcation slightly (Section \ref{sub:RD-root-tracking-near-bifrcations}), marked by an extra dashed red vertical to the left. 
			Two Blahut-Arimoto iterations were needed to compute the rightmost grid point, with the next ones extrapolated using a 20th-order Taylor polynomial with a fixed step size of $\Delta \beta \approx -4.57$. 
			\textbf{Bottom left}: $L_\infty$-error between the extrapolation generated from Algorithm \ref{algo:root-tracking-for-RD} and the analytical solution of this problem, in Section \ref{part:proofs}.\ref{sec:binary-source-with-hamming-dist-appendix}. 
			For the most part, the achieved error is near the machine's precision (note the logarithmic scale of the horizontal axis), increasing notably near the bifurcation. 
			While the point of bifurcation is an essential difficulty, Section \ref{part:details}.\ref{sec:error-analysis}, the decreased accuracy to the left of point $A$ could be improved upon, e.g., by varying the step-size; see Section \ref{sub:efficient-RD-root-tracking}. 
			\textbf{Right}: The problem's exact RD curve is visually indistinguishable from the extrapolated one. While the first few grid points cluster to the top-left, most of the curve happens to be extrapolated from the single point $A$.
		}
		\label{fig:7-points-example}
	\end{center}
\end{figure}

While the Jacobian $D_{\inputmarginalVect}(Id - BA_\beta)\rvert_{\inputmarginalVect}$ in marginal coordinates $\inputmarginalVect$ provides an appealing picture of piecewise analytic roots, which alter their course only when colliding at the simplex boundary, it does not tell the whole story. For, the Jacobian in these coordinates can detect only cluster-vanishing bifurcations, as discussed later in Section \ref{part:details}.\ref{sub:obstructions-to-RT-assumptions-for-RD}.
Indeed, RD problems have a plethora of sub-optimal solutions (Section \ref{part:details}.\ref{sub:suboptimal-RD-curves}). These may, for example, exchange optimality in another kind of bifurcation which we call \textit{support-switching}, which is demonstrated by Figure \ref{fig:Berger_example_2.7.3} (in Section \ref{part:details}.\ref{sec:RD-bifurcations-and-root-tracking}).
We show later (Section \ref{part:details}.\ref{sub:support-switching-bifurcations}) that these can explain the linear segments seen in RD curves \citep{berger71}, at least in some cases, as well as some instances of critical slowing down which are not explained by \cite{agmon2021critical} (see Section \ref{part:details}.\ref{sub:obstructions-to-RT-assumptions-for-RD} and Figure \ref{fig:Berger_example_2.7.3}). 

For our purposes, we would like to ensure that the root tracked by our algorithms remains optimal (an achieving distribution), rather than a sub-optimal one. This is done in two steps. First, ensure that Algorithm \ref{algo:taylor-method-for-RD-root-tracking} starts at an achieving distribution $\inputmarginalVect_{\beta_0}$, by invoking Blahut-Arimoto until convergence to an initial condition of full support, \cite{csiszar1974computation} (brought as Theorem \ref{thm:BA-converges-to-RD-curve-for-an-initial-cond-of-full-support} in Section \ref{part:details}.\ref{sub:suboptimal-RD-curves}).
Second, so long that Algorithm \ref{algo:taylor-method-for-RD-root-tracking} starts at an achieving distribution, then the root it tracks remains optimal. 
Similarly, for the below Algorithm \ref{algo:root-tracking-for-RD} (Section \ref{sub:RD-root-tracking-near-bifrcations}).
This follows from Assumption \ref{assumption:only-cluster-vanishing-bifurcations} below, as we show later (in Section \ref{part:details}.\ref{sub:when-does-RTRD-follow-the-optimal-path}). 

At a fixed point of $BA_\beta$ \eqref{eq:BA-operator-def}, the cluster marginal $\inputmarginalVect_\beta$ is determined by an encoder $\intermediateencoderVect_\beta$ via Equation \eqref{eq:marginal-eq}, and vice versa \eqref{eq:encoder-eq}. 
However, the $MN$-by-$MN$ Jacobian matrix $D_{\intermediateencoderVect} (Id - BA_\beta) \big\rvert_{\intermediateencoderVect_\beta}$ with respect to the encoder $\intermediateencoderVect$ contains more information than the $M$-by-$M$ Jacobian $D_{\inputmarginalVect}(Id - BA_\beta)\rvert_{\inputmarginalVect_\beta}$ with respect to the marginal $\inputmarginalVect$ --- see Section \ref{part:details}.\ref{sub:obstructions-to-RT-assumptions-for-RD}.
Where, recall that $N := |\mathcal{X}|$ and $M := |\hat{\mathcal{X}}|$. 
Indeed, the $\intermediateencoderVect$-Jacobian can detect RD bifurcations of any kind, unlike the $\inputmarginalVect$-Jacobian; see Corollary \ref{cor:multiple-RD-sols-are-detectable-by-Jacobian} (in Section \ref{part:details}.\ref{sub:obstructions-to-RT-assumptions-for-RD}), and Proposition \ref{prop:Jacobian-of-BA-in-direct-encoder-coordinates} for its explicit form. 
Therefore, we require 

\begin{assumption}
	\label{assumption:only-cluster-vanishing-bifurcations}
	The $\intermediateencoderVect$-Jacobian $D_{\intermediateencoderVect} (Id - BA_\beta) \big\rvert_{\intermediateencoderVect_\beta}$ is non-singular when evaluated at achieving distributions $\intermediateencoderVect_\beta$, except at cluster-vanishing bifurcations. 
\end{assumption}

This assumption guarantees that the RD problem is well-behaved (e.g., non-degenerate), and that it has a unique optimal solution whose path undergoes only cluster vanishing bifurcations\footnote{ This does not rule out the possibility that sub-optimal roots undergo other kinds of bifurcations.}.
It holds for most of the examples in this paper\footnote{ Except for the right bifurcation in Figure \ref{fig:Berger_example_2.7.3}, at Section \ref{part:details}.\ref{sec:RD-bifurcations-and-root-tracking}.}. 
In particular, the earlier Assumptions \ref{assumption:operator-root-is-a-function-of-beta} and \ref{assumption:solution-is-smooth-in-beta} (of Section \ref{sub:beta-derivs-at-an-operator-root}) necessary for calculating implicit derivatives follow from Assumption \ref{assumption:only-cluster-vanishing-bifurcations} (see Section \ref{part:details}.\ref{sub:when-does-RTRD-follow-the-optimal-path}).
The $\intermediateencoderVect$-Jacobian $D_{\intermediateencoderVect} (Id - BA_\beta) \big\rvert_{\intermediateencoderVect_\beta}$ is singular if the distortion matrix is degenerate (Section \ref{part:details}.\ref{sub:obstructions-to-RT-assumptions-for-RD}).
However, other than that, we find it reasonable to require that the $\intermediateencoderVect$-Jacobian is non-singular outside of cluster-vanishing bifurcations (Assumption \ref{assumption:only-cluster-vanishing-bifurcations}).
Indeed, \cite[Chapter 2]{berger71} says that ``usually, each point on the rate-distortion curve... is achieved by a unique conditional probability assignment. However, if the distortion matrix exhibits certain form of symmetry and degeneracy, there can be many choices of [a minimizer].''

One can test Assumption \ref{assumption:only-cluster-vanishing-bifurcations} directly, by calculating the eigenvalues of the $\intermediateencoderVect$-Jacobian periodically. 
These are expected to vanish only if some cluster vanishes simultaneously --- see Equation \eqref{eq:flowchart-for-different-kinds-of-RD-bifurcations} for details (Section \ref{part:details}.\ref{sub:obstructions-to-RT-assumptions-for-RD}). 
While our algorithms can be extended to handle RD bifurcations also of other kinds, using the bifurcations Section \ref{part:details}.\ref{sec:RD-bifurcations-and-root-tracking}, it is beyond the scope of this work. 

\medskip 
We conclude this subsection with convergence guarantees for our modified Taylor method at Algorithm \ref{algo:taylor-method-for-RD-root-tracking} --- see its error analysis in Section \ref{part:details}.\ref{sec:error-analysis}.
For $\delta > 0$, the \emph{closed $\delta$-interior} of the simplex $\Delta[ \hat{\mathcal{X}} ]$ consists of the distributions $\inputmarginalVect \in \Delta[ \hat{\mathcal{X}} ]$ with $\inputmarginal{'} \geq \delta$ for all $\hat{x}' \in \hat{\mathcal{X}}$. 
Note that this is not the same as the interior of the simplex, which is an open set. 

\begin{thm}[Taylor method converges uniformly between RD bifurcations, on a full support]		\label{thm:taylor-method-converges-for-RD-root-tracking-away-of-bifurcation}
	Let $\inputmarginalVect_{\beta}$ be a root of the RD problem defined by $p_X$ and a finite non-degenerate distortion measure $d$. Suppose that $\inputmarginalVect_{\beta}$ is of full support at $\beta_0 > 0$. Let $\delta > 0$ be a \emph{cluster mass threshold}, such that $\delta < \min_{\hat{x}} \inputmarginalSymbol_{\beta_0}(\hat{x})$.
	Then there exists $0 \leq \beta_f(\delta) < \beta_0$ such that,
	\begin{enumerate}
		\item $\inputmarginalVect_\beta$ is in the closed $\delta$-interior of $\Delta[\hat{\mathcal{X}}]$ for $\beta\in [\beta_f(\delta), \beta_0]$; and
		\item For $l>0$, the error of an $l$-th order Taylor method satisfies
		\begin{equation}			\label{eq:taylor-method-convergence-guarantee-for-RDRT-in-thm}
			\max_{\beta \in [\beta_f(\delta), \beta_0]} \left\| \tilde{\inputmarginalVect}_\beta - \inputmarginalVect_\beta \right\|_\infty = 
			O(|\Delta\beta|^l)
		\end{equation}
		for $|\Delta\beta| > 0$ small enough, 
		and $\tilde{\inputmarginalVect}_\beta$ the Taylor method approximations defined by \eqref{eq:taylor-method-def-in-cluster-marginal}.
	\end{enumerate}
\end{thm}

Since Algorithm \ref{algo:taylor-method-for-RD-root-tracking} tries to stop $\delta$-away from a bifurcation, its error converges uniformly for small enough step sizes, at a rate \eqref{eq:taylor-method-convergence-guarantee-for-RDRT-in-thm} proportional to its order $l$. 
All the RD roots we compute are generated by Algorithm \ref{algo:taylor-method-for-RD-root-tracking}, except for the initial point and those too close to a bifurcation. 
Bifurcations are handled by Algorithm \ref{algo:root-tracking-for-RD} below (Section \ref{sub:RD-root-tracking-near-bifrcations}). 
It will reduce the problem at hand, guaranteeing the full support required by Theorem \ref{thm:taylor-method-converges-for-RD-root-tracking-away-of-bifurcation}.
Thus, we obtain an error guarantee for nearly all the generated grid points.
The proof of Theorem \ref{thm:taylor-method-converges-for-RD-root-tracking-away-of-bifurcation} (in Section \ref{part:proofs}.\ref{sub:proof-of-thm:taylor-method-converges-for-RD-root-tracking-away-of-bifurcation}) is based on standard Taylor method error analysis, brought at Theorem \ref{thm:error-analysis-for-euler-method} (in Section \ref{part:details}.\ref{sub:error-analysis-of-Taylor-method-background}). Its crux is that implicit derivatives and the local Lipschitz constants of Taylor's method are bounded uniformly, on suitable compact subsets in the simplex interior; see Sections \ref{part:details}.\ref{sub:computational-difficulty-of-RTRD}-\ref{sub:local-error-estimates-for-beta-derivs}.

The convergence guarantees of Theorem \ref{thm:taylor-method-converges-for-RD-root-tracking-away-of-bifurcation} suggest considerations for selecting the parameters of Algorithm \ref{algo:taylor-method-for-RD-root-tracking}. 
The cluster mass threshold $\delta > 0$ obviously should not be too large, so that a bifurcation is not accidentally detected when there is none. 
On the other hand, increasing $\delta$ restricts the algorithm's computational difficulty, as discussed in Section \ref{part:details}.\ref{sec:error-analysis}. 
One would then like to select $|\Delta \beta|$ small enough and the order $l$ large enough such that Equation \eqref{eq:taylor-method-convergence-guarantee-for-RDRT-in-thm} (in Theorem \ref{thm:taylor-method-converges-for-RD-root-tracking-away-of-bifurcation}) guarantees that $\tilde{\inputmarginalVect}_\beta$ does not deviate too much from the true solution $\inputmarginalVect_\beta$. 
This guarantees the solution's accuracy and ensures that comparing a cluster mass $\tilde{\inputmarginalVect}_\beta(\hat{x})$ against the threshold $\delta$ is meaningful (on line \ref{algo:root-tracking-on-interval:stopping-cond} of Algorithm \ref{algo:taylor-method-for-RD-root-tracking}). 
On the other hand, decreasing $|\Delta \beta|$ or increasing $l$ impacts the algorithms' complexities, as elaborated in Section \ref{sub:costs-and-error-to-cost-tradeoff-of-RD-root-tracking-in-main-results-section}. 
See the Figures throughout and our implementation for sample values. 
With that, the considerations implied by Theorem \ref{thm:taylor-method-converges-for-RD-root-tracking-away-of-bifurcation} are not the only way to choose the algorithms' parameters. 
For example, motivated by the earlier discussion on analyticity, Figure \ref{fig:7-points-example} depicts high-quality approximations even though its step size $\Delta \beta$ is large.

\subsection{RD root-tracking near bifurcations}
\label{sub:RD-root-tracking-near-bifrcations}

We proceed with the solution's handling where Algorithm \ref{algo:taylor-method-for-RD-root-tracking} left off, once a cluster-vanishing bifurcation has been detected. 
The following heuristic re-gains accuracy while avoiding the computational cost due to Blahut-Arimoto's critical slowing down near the bifurcation, \citep{agmon2021critical}. 
Assumption \ref{assumption:only-cluster-vanishing-bifurcations} is required below. 

\begin{figure}[H]
	\centering
	\includegraphics[trim={0 0 0 1.5cm}, clip, width=.55\textwidth]{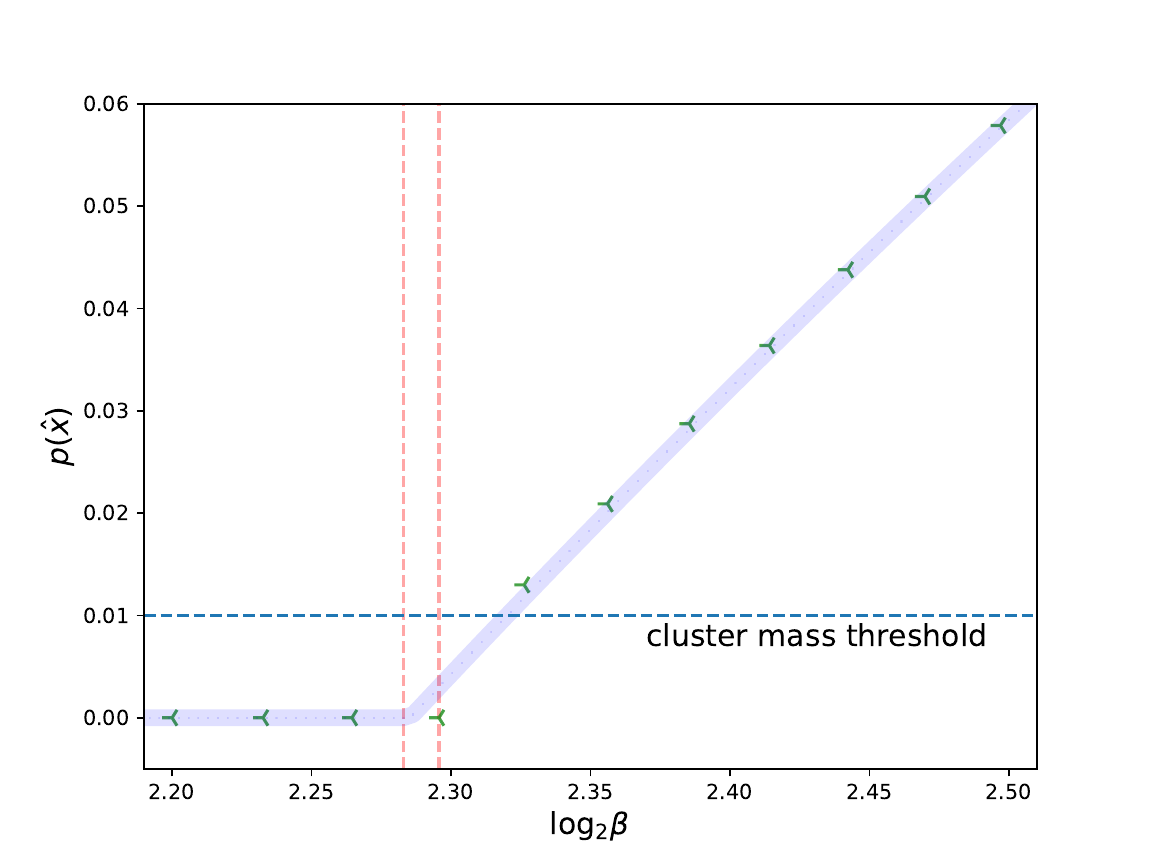}
	\caption{
		\textbf{Undershooting a bifurcation}. 
		The bifurcation is detected early (right dashed red vertical), slightly to the right of its true position (left vertical). 
		Once the threshold is crossed, our heuristic zeros the nearly-vanished cluster (at right vertical, not shown). 
		The distribution it yields (at right vertical, marked) achieves a sub-optimal branch (not shown), in which this symbol is zero. 
		However, optimality is re-gained shortly afterwards, when reaching the bifurcation; see main text. 
		The markers denote the output of Algorithm \ref{algo:root-tracking-for-RD} of second-order on a 300-point grid, with BA for comparison in blue, near the rightmost bifurcation of Figure \ref{fig:reconstructing-an-RD-solution-curve}.
	}
	\label{fig:heuristic-at-cluster-vanishing}
	\vspace*{-5pt}
\end{figure}

\medskip
Consider the first approximation $(\tilde{\inputmarginalVect}_n, \beta_n)$ after the cluster mass threshold $\delta$ has been crossed.
We zero any cluster $\hat{x}$ for which $\tilde{\inputmarginalSymbol}_n(\hat{x}) < \delta$, normalize, and then use the resulting distribution $\tilde{\inputmarginalVect}'$ as the initial condition for Blahut-Arimoto's algorithm. 
$\tilde{\inputmarginalVect}'$ has precisely those nearly-vanished clusters set to zero. 
Thus, Blahut-Arimoto converges to an achieving distribution $\tilde{\inputmarginalVect}_n''$ of the problem reduced to $\supp \tilde{\inputmarginalVect}'$, \citep{csiszar1974computation} (see Section \ref{part:details}.\ref{sub:suboptimal-RD-curves}). 
We then invoke Algorithm \ref{algo:taylor-method-for-RD-root-tracking} anew on the reduced problem, starting at the achieving distribution $(\tilde{\inputmarginalVect}_n'', \beta_n)$. 
This process repeats so long that $\beta_n > 0$ and the initial condition $\tilde{\inputmarginalVect}_n''$ is non-trivial.
We summarize this heuristic in Algorithm \ref{algo:root-tracking-for-RD}.

The first approximation $(\tilde{\inputmarginalVect}_n, \beta_n)$ after crossing the threshold may either be earlier than\footnote{ Note that the step size $\Delta \beta$ is negative. } the true point of bifurcation $\beta_c$, $\beta_c < \beta_n$, called an \textit{undershooting}. 
Or the bifurcation may be detected too late, $\beta_n \leq \beta_c$, an \textit{overshooting}. 
Either of these must happen when a bifurcation is detected, usually with strict inequality. 
A sub-optimal root exists at a right vicinity of a bifurcation, $\beta > \beta_c$, as shown in Section \ref{part:details}.\ref{sub:cluster-vanishing-bifs-are-bifs}. 
It is a solution of the reduced problem to $\supp \tilde{\inputmarginalVect}'$, with $\tilde{\inputmarginalVect}'$ the normalized zeroization as above. 
The optimal and sub-optimal roots intersect at $\beta_c$, merging there into one.
On an overshoot, $\beta_n \leq \beta_c$, the initial condition $\tilde{\inputmarginalVect}_n''$ produced by the heuristic will remain on the optimal branch. 
On an undershoot, $\beta_c < \beta_n$, $\tilde{\inputmarginalVect}_n''$ will be on the sub-optimal branch of smaller support $\supp \tilde{\inputmarginalVect}'$ while re-gaining optimality shortly thereafter, as demonstrated by Figure \ref{fig:heuristic-at-cluster-vanishing}. 

Invoking Blahut-Arimoto before starting anew with Algorithm \ref{algo:taylor-method-for-RD-root-tracking} re-gains accuracy, compensating for the approximation errors accumulated so far, while also ensuring that the root being tracked is optimal on the non-vanished clusters.
Further, this allows us to \emph{avoid} Blahut-Arimoto's critical slowing down.
For, as shown by \cite{agmon2021critical}, the latter is exhibited due to a Jacobian eigenvalue that gradually vanishes when approaching a bifurcation. 
This eigenvalue corresponds to a cluster of vanishing marginal and so is removed once that cluster is removed, before invoking Blahut-Arimoto on line \ref{algo:RD-root-tracking:BA-after-bif} of Algorithm \ref{algo:root-tracking-for-RD}.
The lack of critical slowing down is consistent with our numerical results.

\begin{algorithm}
	\caption{\hspace{-4pt}: Root-tracking for RD, with $l$-th order Taylor method.}
	\label{algo:root-tracking-for-RD}
	\begin{algorithmic}[1]
		\Input
		\Statex $\beta_0$, a root $\inputmarginalVect_{\beta_0}$ of $Id - BA_{\beta_0}$ \eqref{eq:RD-operator-def}, a step-size $\Delta \beta < 0$, a cluster mass threshold $0 < \delta < 1$, 
		\Statex an order $l > 0$, and an RD problem $(d, p_X)$.
		\Output
		\Statex Approximations $\tilde{\inputmarginalVect}_{i}$ of the true solution $\inputmarginalVect_{\beta_i}$ at $\beta_i$, 
		\Statex and the $l$-th order Taylor expansions $p_i$ around $(\tilde{\inputmarginalVect}_i, \beta_i)$.
		\State Initialize $\tilde{\inputmarginalVect} \gets \inputmarginalVect_{\beta_0}, \beta \gets \beta_0, results \gets \{\}$.
		\While{$|\supp \tilde{\inputmarginalVect}| > 1$ \textbf{and} $\beta > 0$}
		\Comment{Stop if solution has a trivial support.}
		\State $\tilde{\inputmarginalVect}, \bar{d} \gets \text{the reduction of } \tilde{\inputmarginalVect}, d \text{ to } \supp \tilde{\inputmarginalVect}$.			\label{algo:RD-root-tracking:reduction}
		\State $solution \; path \gets \Call{Track RD Root To Bifurcation}{\tilde{\inputmarginalVect}, \beta; \Delta \beta, \delta, l, \bar{d}, p_X}$.
		\begin{flushright}
			\vspace{-9pt}
			\LineComment{Algorithm \ref{algo:taylor-method-for-RD-root-tracking}.}
		\end{flushright}
		\State Append items in $solution \; path$ to $results$, except for the last.
		\State $(\tilde{\inputmarginalVect}, \beta) \gets \text{last item in } solution \; path$.
		\State $\tilde{\inputmarginalSymbol}(\hat{x}) \gets 0$ for any $\hat{x}$ with $\tilde{\inputmarginalSymbol}(\hat{x}) < \delta$.
		\State $\tilde{\inputmarginalVect} \gets $ normalize $\tilde{\inputmarginalVect}$.		\label{algo:RD-root-tracking:normalization}
		\State $\tilde{\inputmarginalVect} \gets \Call{Blahut-Arimoto}{\tilde{\inputmarginalVect}, \beta_n}$.
		\Comment{Iterate $BA_\beta$ \eqref{eq:BA-operator-def} until convergence.}		\label{algo:RD-root-tracking:BA-after-bif}
		\EndWhile
		\State \textbf{return} $results$
		\Comment{Approximations of the true solutions $\inputmarginalVect(\beta_n)$.}
	\end{algorithmic}
\end{algorithm}

In practice, this heuristic works nicely so long that the bifurcation is not missed altogether due to large approximation errors in Algorithm \ref{algo:taylor-method-for-RD-root-tracking}.
Off-grid points are later extrapolated using a Taylor step \eqref{eq:taylor-method-def-in-cluster-marginal} at the last grid point, unless the extrapolation has a negative coordinate due to an over-shot bifurcation. In this case, the point is extrapolated using the expansion at the next grid point.
As an alternative heuristic, the bifurcation could be handled by extrapolating it linearly from the last point before the threshold is crossed and then using Blahut-Arimoto, exploiting the accuracy of first-order derivatives; cf., Figure \ref{fig:derivative-calculation-loses-accuracy-near-bifurcation}. This also works nicely in practice, on grids dense enough.

\subsection{Computational costs and cost-to-error tradeoff}
\label{sub:costs-and-error-to-cost-tradeoff-of-RD-root-tracking-in-main-results-section}

We provide bounds on the computational and memory costs of root-tracking for RD (Algorithm \ref{algo:root-tracking-for-RD}), and estimate the tradeoff between its error and computational costs. 
While the cost of implicit derivatives grows rapidly with the order, counter-intuitively, higher orders usually make much better use of the computational cost.
Blahut-Arimoto makes little to no use of the computational effort invested at previous grid points and does not yield off-grid information.
In contrast, our algorithm leverages the derivatives at a point to improve the accuracy at subsequent ones, allowing also cheap extrapolation of off-grid points. 
Despite the many improvements possible to root-tracking RD (in Section \ref{sub:efficient-RD-root-tracking} below), the cost of an entire solution curve is already roughly comparable to BA as is, as seen in Figure \ref{fig:err-to-computational-cost-tradeoff} (Section \ref{sec:introduction}). 
See also Section \ref{part:details}.\ref{sec:computational-complexities}, on the complexities of root-tracking in general, with and without tensor memorization, and the complexities of RD derivative tensors.

\medskip
The computational cost of RD root-tracking (Algorithm \ref{algo:root-tracking-for-RD}) is comprised of the cost of implicit derivatives and of Blahut-Arimoto near bifurcations. However, the cost due to BA is typically negligible since our heuristic avoids its critical slowing down near bifurcations (see Section \ref{sub:RD-root-tracking-near-bifrcations}).
Memorizing (caching) each computed tensor $D^{b+m}_{\beta^b, \inputmarginalVect^m} (Id - BA_\beta)[\inputmarginalVect]$ (at Theorem \ref{thm:high-order-derivs-of-BA-in-main-text} in Section \ref{sub:high-order-deriv-tensors-of-BA}) upon its first appearance reduces the computational costs drastically, as shown in Section \ref{part:details}.\ref{sec:computational-complexities}. 
Similarly, also for memorizing implicit derivatives when calculating them recursively in Algorithm \ref{algo:high-order-derivs-of-operator-roots} (Section \ref{sub:high-order-beta-derivatives-at-an-operator-root}). 
Our implementation also memorizes some of the quantities en route, further reducing costs.
For the most part, these can be discarded once done computing at a point, so the memory costs do \textit{not} scale with the grid size.
Thus, the computational cost of Algorithm \ref{algo:root-tracking-for-RD} is essentially determined by the implicit derivatives' cost at a point (times the number of grid points). 
We provide the following (loose) bounds.

\begin{thm}[Complexity bounds for RD root tracking]			\label{thm:complexity-of-RD-root-tracking}
	For $l \geq 2$, the computational complexity of Algorithm \ref{algo:root-tracking-for-RD} of order $l$ with tensor memorization is bounded by
	\begin{equation}		\label{eq:computational-complexity-of-RD-root-tracking-for-L-geq-2-in-thm}
		O\left( N \cdot (M + l)^{(3l + \nicefrac{5}{2})} e^{(2M+l) H_e(\rho) + \pi \sqrt{\nicefrac{2l}{3}} } \right) 
	\end{equation}
	times the number of grid-points. Where, $N$ and $M$ are the source and reproduction alphabet sizes (respectively), $\rho := \tfrac{2M}{2M + l}$, and $H_e(\cdot)$ is the binary entropy in nats. For $l = 1$, it is instead
	\begin{equation}		\label{eq:computational-complexity-of-RD-root-tracking-for-L-=1-in-thm}
		O\left( M^3 N \right) \;.
	\end{equation}
	
	The algorithm's memory complexity is bounded by
	\begin{equation}		\label{eq:memory-complexity-of-RD-root-tracking-in-thm}
		O\left( l^{(l + \nicefrac{5}{2}) } \ln l \right) + O\left(MNl^2\right) +
		O\left( M^l l \right) 
	\end{equation}
\end{thm}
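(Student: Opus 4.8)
The plan is to reduce the cost of Algorithm~\ref{algo:root-tracking-for-RD} to \emph{(number of grid points)} $\times$ \emph{(cost of one order-$l$ invocation of Algorithm~\ref{algo:high-order-derivs-of-operator-roots} for RD)}, and then to bound the latter by counting the combinatorial objects appearing in Theorems~\ref{thm:formula-for-high-order-expansion-of-F-in-main-result-sect} and~\ref{thm:high-order-derivs-of-BA-in-main-text}. Two asymptotic tools do the heavy lifting: the Hardy--Ramanujan estimate $p(n)=O\!\big(e^{\pi\sqrt{2n/3}}\big)$ for the partition function, and the entropy bound $\binom{n}{k}\le e^{\,n H_e(k/n)}$ for binomial coefficients. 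First I would decompose Algorithm~\ref{algo:root-tracking-for-RD}: its outer loop performs $O(1/|\Delta\beta|)$ Taylor steps via Algorithm~\ref{algo:taylor-method-for-RD-root-tracking}, interleaved with at most $M-1$ problem reductions and Blahut--Arimoto restarts. Each BA restart runs on a \emph{reduced} problem from which the vanishing cluster has been removed, so by Section~\ref{sub:RD-root-tracking-near-bifrcations} it does not incur critical slowing down; its iteration count is bounded in terms of the reduced problem only (not the grid), and each iteration costs $O(MN)$ via \eqref{eq:BA-operator-def}, so the BA contribution is dominated. The body of Algorithm~\ref{algo:taylor-method-for-RD-root-tracking} at one grid point calls Algorithm~\ref{algo:high-order-derivs-of-operator-roots} for orders $1,\dots,l$ and evaluates a degree-$l$ polynomial ($O(Ml)$); with caching of lower-order implicit derivatives, the order-$l$ call subsumes the lower ones, so one grid point costs (up to constants) a single order-$l$ invocation of Algorithm~\ref{algo:high-order-derivs-of-operator-roots} for RD.

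Next I would bound that invocation with tensor memoization. Unrolling the recursion in Theorem~\ref{thm:formula-for-high-order-expansion-of-F-in-main-result-sect}, at order $l$ one (i) computes, once each, the derivative tensors $D^{m}_{\beta^b,\inputmarginalVect^{m-b}}(Id-BA_\beta)$ indexed by the integer partitions of $1,\dots,l$; the number of distinct tensors is $O\!\big(\sum_{k\le l}p(k)\big)=O\!\big(l\,p(l)\big)$, which is where $e^{\pi\sqrt{2l/3}}$ enters. (ii) For each tensor, Theorem~\ref{thm:high-order-derivs-of-BA-in-main-text} costs a sum over the $\le M$ output entries $\hat x$, a sum over $x\in\mathcal X$ ($N$ terms), a sum over $\bm k\in\mathbb N_0^M$ with $|\bm k|=\alpha_0\le l$ of size $\binom{\alpha_0+M-1}{M-1}$, a product of $M$ factors $G(k_i,\alpha_i;\cdot)$, each of which by \eqref{eq:combinatorial-G-in-terms-of-polynomials} is a sum over compositions $\bm t$ with $|\bm t|\le a\le l$ and $\sum_j j\,t_j=k\le l$ — again a partition-type count, $O(p(l))$ — of products of the evaluated polynomials $P_j(\hat x,x)$, which are themselves obtained once by the recursion \eqref{eq:P_0-def}--\eqref{eq:P_k-inductive-def} in time $\mathrm{poly}(M+l)$. (iii) For each partition one contracts the corresponding tensor (at most $M^{m-b+1}\le M^{l+1}$ entries) with the cached lower-order derivatives, and solves one linear system of size $|\supp\inputmarginalVect|\le M$ per order ($O(M^3)$). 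The multinomial coefficient in Theorem~\ref{thm:formula-for-high-order-expansion-of-F-in-main-result-sect}, the factor $\bm{\alpha}!\,(|\bm{\alpha}_+|-1)!$ in \eqref{eq:mixed-BA-deriv-in-thm}, and the count $\binom{\alpha_0+M-1}{M-1}$ combine — via standard counting inequalities such as $\binom{a}{i}\binom{b}{j}\le\binom{a+b}{i+j}$ — into a single binomial $\binom{2M+l}{2M}\le e^{(2M+l)H_e(\rho)}$ with $\rho=\tfrac{2M}{2M+l}$, supplying that factor. Collecting the remaining polynomial-in-$(M,l)$ overheads (tensor contraction $M^{l+1}$, polynomial bookkeeping, the $N$ from the source sum, the $O(M^3)$ solve, and the $O(l)$ factors from iterating over orders and partitions) into one factor $(M+l)^{3l+\nicefrac52}$ yields \eqref{eq:computational-complexity-of-RD-root-tracking-for-L-geq-2-in-thm}. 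For $l=1$ one needs only $D_\beta F$ (cost $O(MN)$ since $P_1$ is linear, \eqref{eq:repeated-beta-deriv-in-thm}), the Jacobian (cost $O(M^2N)$ by Corollary~\ref{cor:BA-jacobian}), and one $O(M^3)$ solve, giving the loose bound \eqref{eq:computational-complexity-of-RD-root-tracking-for-L-=1-in-thm}.

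For the memory bound I would track what must be held simultaneously: the evaluated polynomials $P_j(\hat x,x)$ and the intermediate $G$-arrays ($O(MN)$ entries each, $O(l)$ of them, with an extra $O(l)$ from their internal recursions) give the $O(MNl^2)$ term; the cached implicit derivatives together with the partition/coefficient data structures give the $O\!\big(l^{\,l+\nicefrac52}\ln l\big)$ term (again governed by $p(l)$, each partition carrying $O(l)$ integers); and the largest derivative tensors cached at once have $O(M^l)$ entries, $O(l)$ of them, giving $O(M^l l)$. Discarding these once the computation at a grid point finishes makes the memory cost independent of the grid size.

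\textbf{Main obstacle.} The crux is the nested combinatorial accounting: Fa\`a di Bruno over integer partitions, inside it the $\bm k$-sum and the $\bm t$-sums of the $G$-factors, inside that the monomial expansions of the $P_j$. One must bound each layer, verify that memoization genuinely prevents recomputation so that the layer counts multiply rather than compound super-multiplicatively, and then combine the partition counts (Hardy--Ramanujan) with the binomial counts (entropy bound) to land exactly on \eqref{eq:computational-complexity-of-RD-root-tracking-for-L-geq-2-in-thm}; pinning down the exponent $3l+\nicefrac52$ and identifying the dominant binomial $\binom{2M+l}{2M}$ is the routine but laborious part, carried out in full in Section~\ref{part:details}.\ref{sec:computational-complexities}.
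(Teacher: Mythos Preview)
Your approach matches the paper's: reduce to the per-grid-point cost of one order-$l$ call to Algorithm~\ref{algo:high-order-derivs-of-operator-roots} for RD, then count using Hardy--Ramanujan for partitions and the entropy bound for binomials, combining the two binomial factors arising from \eqref{eq:mixed-BA-deriv-in-thm} via a Vandermonde-type identity into $\binom{2M+l}{2M}\le e^{(2M+l)H_e(\rho)}$. The BA-restart argument and the three-term memory decomposition are also as in the paper.

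There is, however, one concrete misattribution that would cause the memory argument to fail as written. The $O\big(l^{\,l+\nicefrac52}\ln l\big)$ term does \emph{not} come from cached implicit derivatives or partition data structures: the former cost only $O(Ml)$, and the latter only $O\big(l\cdot p(l)\big)=O\big(l\,e^{\pi\sqrt{2l/3}}\big)$, both far below $l^{\,l+\nicefrac52}$. In the paper this hyper-exponential term is the cost of storing the \emph{symbolic} polynomials $P_k$ of \eqref{eq:P_0-def}--\eqref{eq:P_k-inductive-def}: by Lemma~\ref{lem:bound-on-the-complexity-of-P_K} each $P_k$ has up to $2^k k!$ monomials of degree $\le k$, and encoding each monomial in $O(k\log k)$ bits together with Stirling's bound gives $O(2^l l!\cdot l^2\ln l)=O(l^{\,l+\nicefrac52}\ln l)$. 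Your plan cannot produce this bound from the sources you name, so this step needs to be corrected. A smaller slip in the computational accounting: there are only $O(l^2)$ \emph{distinct} tensors $D^m_{\beta^b,\bm x^{m-b}}F$ with $0\le b\le m\le l$; the partition count $\sum_{k\le l}p(k)$ is the number of \emph{summands} in \eqref{eq:formula-for-high-order-beta-derivatives} (i.e.\ tensor contractions, Lemma~\ref{lem:number-of-summands-at-formula-for-high-order-beta-derivs}), and that is where $e^{\pi\sqrt{2l/3}}$ actually enters.
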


The bounds provided by Theorem \ref{thm:complexity-of-RD-root-tracking} are rather loose in the derivative's order $l$.
The term $(M + l)^{(3l + \nicefrac{5}{2})}$ in the computational complexity \eqref{eq:computational-complexity-of-RD-root-tracking-for-L-geq-2-in-thm} is hyper-exponential in $l$ primarily due to inefficient copy operations used to simplify our implementation, and due to the loose bounds used in the theorem's proof (in Section \ref{part:proofs}.\ref{sub:complexity-of-RD-root-tracking}). 
Similarly, the first term in the memory complexity \eqref{eq:memory-complexity-of-RD-root-tracking-in-thm} is in practice only of order $O(l \cdot 1.73^l)$, at least for $l\leq 25$; see Table \ref{tab:complexities-of-RD-deriv-tensors} in Section \ref{part:details}.\ref{sec:computational-complexities}. 
Nevertheless, the complexity of higher implicit multivariate derivatives is high even when the cost of RD derivative tensors is set aside (Proposition \ref{prop:total-computational-complexity-with-memorization} in Section \ref{part:details}.\ref{sec:computational-complexities}). 
This arguably stems from\footnote{ cf., Theorem \ref{thm:mFDB-from-Ma} (Section \ref{part:details}.\ref{sec:multivariate-faa-di-brunos-formula}), which is used to prove the formula of Theorem \ref{thm:formula-for-high-order-expansion-of-F-in-main-result-sect} (Section \ref{sub:high-order-beta-derivatives-at-an-operator-root}) for higher implicit multivariate derivatives. } the combinatorics of partial derivatives, \citep{hardy2006combinatorics}.
With that, we note that the complexities of Algorithm \ref{algo:root-tracking-for-RD} (both in practice and the bounds \eqref{eq:computational-complexity-of-RD-root-tracking-for-L-geq-2-in-thm}-\eqref{eq:memory-complexity-of-RD-root-tracking-in-thm}) do \textit{not} depend on the problem's details\footnote{ e.g., examine the dependence of Equations \eqref{eq:expected-k-th-power-distortion-def}-\eqref{eq:combinatorial-G-in-terms-of-polynomials} (in Section \ref{sub:high-order-deriv-tensors-of-BA}) on the problem definition $(d, p_X)$.}, but only on its dimensions $N$ and $M$, and on the order $l$; denote by $\text{cost-per-grid-point}(N, M, l)$ its computational cost at a grid-point. 
On the other hand, the algorithm's accuracy does depend on the problem's details. 

Although the algorithm's complexities grow rapidly with the order $l$, higher orders generally provide a much better cost-to-error tradeoff, as shown below. 
Thanks to our choice of cluster-marginal coordinates (in Section \ref{sub:high-order-deriv-tensors-of-BA}), the complexities are only linear in the source alphabet size $N$. 
This is useful when computing with large source alphabets, $|M| \ll |N| $. 
For any fixed order $l$, $e^{(2M+l) H_e(\rho)}$ is asymptotically linear in $M$, and so the costs are polynomial in $M$. 
Further, the reproduction alphabet size $M$ in these bounds can be replaced by the solution's support size, which varies along the grid. For, Algorithm \ref{algo:root-tracking-for-RD} reduces the problem after each bifurcation (in step \ref{algo:RD-root-tracking:reduction}).

\begin{figure}[h!]
	\centering
	\vspace{-5pt}
	\includegraphics[trim={0 0 0 1cm}, clip, width=0.55\textwidth]{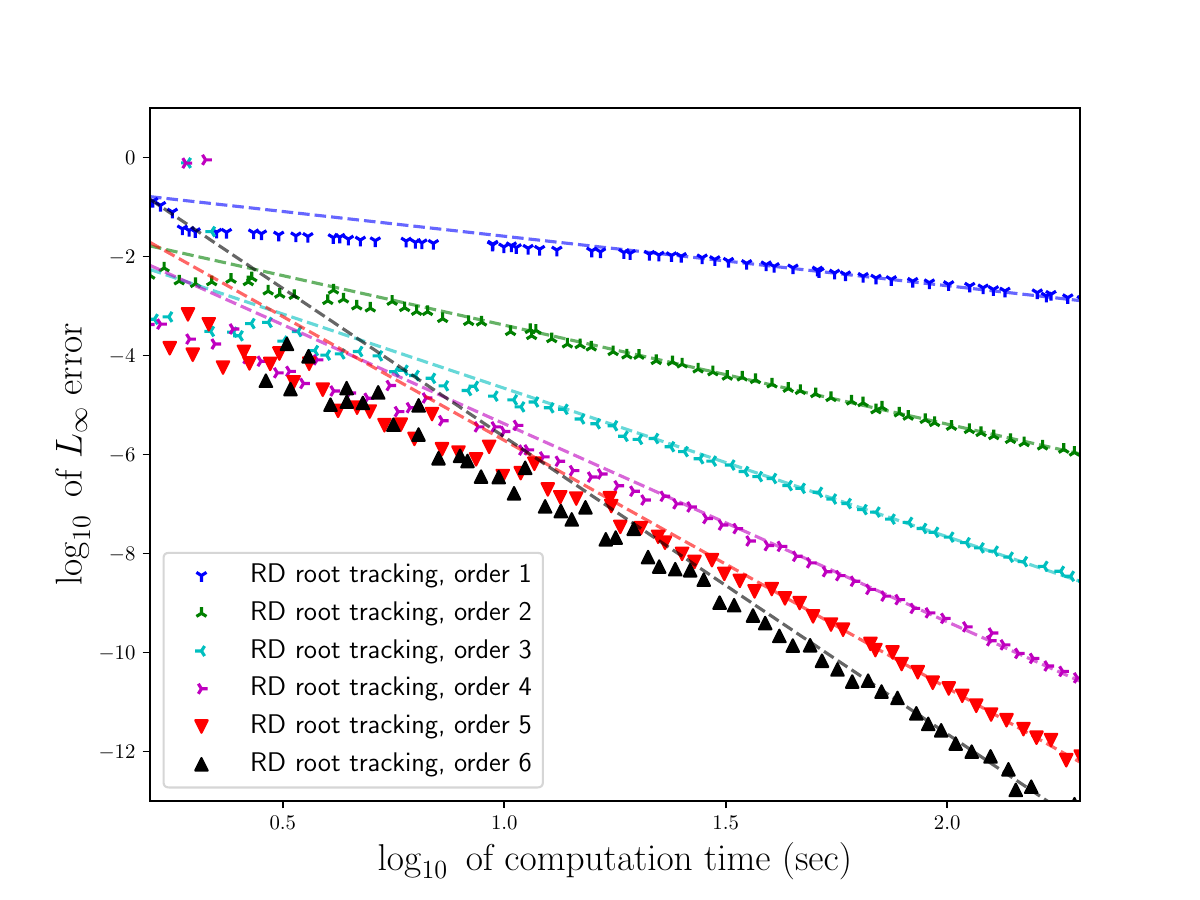}
	\caption{
		\textbf{Higher order methods make better use of computational costs when small error is required.}
		To demonstrate Equation \eqref{eq:error-goes-like-cost-to-minus-order}, 
		a linear regression of slope $(-l)$ (dashed line) is matched to the rightmost markers per method order $l$.
		Plot details are as in Figure \ref{fig:err-to-computational-cost-tradeoff}.
	}
	\label{fig:error-is-power-of-computational-cost-for-RDRT}
	\vspace{-5pt}
\end{figure}

Ideally, it would be possible to tell the tradeoff between accuracy and computational cost in advance, with better results expected as the computational effort is increased. 
Indeed, except for $\delta$-close to a bifurcation, an $l$-th order RD root-tracking with step-size $|\Delta\beta|$ converges to the true solution at a rate of $O(|\Delta\beta|^l)$, by Theorem \ref{thm:taylor-method-converges-for-RD-root-tracking-away-of-bifurcation} (in Section \ref{sub:taylor-method-for-RD-root-tracking}). 
So, write $error \propto |\Delta\beta|^l$, when the step-size $|\Delta\beta|$ is small. 
As the number of grid points is inversely proportional to the step size, the total computational cost is of order $|\Delta \beta|^{-1} \cdot \text{cost-per-grid-point}(N, M, l)$. This yields
\begin{equation}		\label{eq:error-goes-like-cost-to-minus-order}
	\log error \propto l \cdot \log \left(\text{cost-per-grid-point}(N, M, l)\right) - l \cdot \log (\text{total-cost}) \;.
\end{equation}
Carefully note that, with other parameters fixed, increasing the computational effort reduces the error at a rate of $(\text{total-cost})^{-l}$. 
Thus, despite their costs, derivatives of higher orders $l$ generally provide a much better tradeoff between the error and computational costs. 
This off-hand tradeoff is demonstrated by Figure \ref{fig:error-is-power-of-computational-cost-for-RDRT}; it is increasingly accurate at higher costs (smaller $|\Delta\beta|$), as expected. 
Similar plots are typical to the error analysis of the Taylor method; e.g., \cite[Figure 223(i)]{butcher2016numerical}.

Unlike Algorithm \ref{algo:root-tracking-for-RD}, Blahut-Arimoto makes little use (if any) of the computational cost invested in adjacent grid points.
For, BA is usually computed independently along a grid (computing each point anew) or in reverse annealing \cite{rose1990deterministic}, where the solution at one point is taken as the initialization for the next. 
While reverse annealing reduces the computational costs noticeably, it may follow sub-optimal solution branches. 
In a sense, reverse annealing is similar to root tracking of zeroth order.

Since Algorithm \ref{algo:root-tracking-for-RD} leverages each point to calculate the next, executing it on denser grids or at higher orders will usually improve the overall accuracy (though not always -- see Figure \ref{fig:err-to-computational-cost-tradeoff}).
In contrast, adding grid points to BA usually degrades its overall accuracy due to critical slowing down, \citep{agmon2021critical}; cf., Figure \ref{fig:derivative-calculation-loses-accuracy-near-bifurcation} (bottom). 
This is demonstrated by Figure \ref{fig:err-to-computational-cost-tradeoff}.
Judging by our implementations of both, it is easy to find parameters where either algorithm outperforms the other.
With that, various improvements can be made to Algorithm \ref{algo:root-tracking-for-RD}, as we discuss next.

\subsection{Possible improvements to RD root-tracking}
\label{sub:efficient-RD-root-tracking}

We discuss several approaches which we believe could significantly improve the cost-to-error tradeoff of RD root tracking (Algorithm \ref{algo:root-tracking-for-RD}). Most notably, by making the derivative order and step size adaptive.
cf., the related Sections \ref{sub:costs-and-error-to-cost-tradeoff-of-RD-root-tracking-in-main-results-section} and \ref{part:details}.\ref{sec:error-analysis} on the cost-to-error tradeoff and error analysis.

\medskip
Tracking an operator's root belong in general to a family of hard-to-solve numerical problems called \textit{stiff} (see Section \ref{part:details}.\ref{sub:computational-difficulty-of-RTRD}). 
While we acknowledge that much literature exists on stiff differential equations, we have chosen to ignore it to avoid straying off our main line of discussion.
Instead, we focus on conspicuous improvements, which this work has not attended for the sake of simplicity.

Algorithm \ref{algo:root-tracking-for-RD} uses both the classic Taylor method for solving ordinary differential equations and Blahut-Arimoto's algorithm.
Numerical methods for solving $\tfrac{d\bm{x}}{d\beta} = f(\bm{x}, \beta)$ usually exploit the values of $f$ or its derivatives to approximate $\bm{x}(\beta)$, but often cannot directly reduce the error accumulated from the true solution $\bm{x}(\beta_0)$ at a point $\beta_0$. cf., Section 22 in \cite{butcher2016numerical} for example.
On the other hand, Blahut-Arimoto's algorithm reduces the error from the true solution, but does not follow its path $\bm{x}(\beta)$ as $\beta$ varies.
We note that these building blocks could have been combined in ways other than those we have chosen. 
However, to our understanding, it is necessary to use a component that directly reduces the error accumulated at a point, whether BA or another algorithm. 
Otherwise, either the error accumulated until a bifurcation would get out of control, or the computational costs. For, following an operator's root is inherently stiff; see Section \ref{part:details}.\ref{sub:computational-difficulty-of-RTRD}.

Rather than using implicit derivatives to construct the Taylor expansion around a point, they could be used to find the rational fraction which best matches the expansion, or \textit{Pad\'{e} approximant}; e.g., the classic \cite{baker1975essentials}.
These are often superior to Taylor series, \cite{wolframPadeApprox}.

However, even with the Taylor series method, neither fixing its step size nor its order are optimal choices.
For, there may be large spans along the $\beta$-axis where the solution changes very little. For example, in Figure \ref{fig:7-points-example}, most of the computational power is spent where the solution barely changes (till point $A$ there). Instead, big step sizes or low orders could have been more cost-effective there, while using small step sizes or high orders where the solution changes rapidly (to the left of $A$ there).
On the other hand, numerical derivatives of higher orders lose their accuracy faster when approaching a bifurcation (Figure \ref{fig:derivative-calculation-loses-accuracy-near-bifurcation} top).
An estimate of the local error is needed to make the method adaptive, \cite[226]{butcher2016numerical}.
This can be achieved by estimating the local Lipschitz constants of the Taylor expansion \eqref{eq:taylor-method-def-in-cluster-marginal}, by computing $D_{\inputmarginalVect} T_l$. 
These can be computed easily using only the results presented so far and Proposition \ref{prop:Jacobian-of-high-order-beta-derivative} (in Section \ref{part:details}.\ref{sub:local-error-estimates-for-beta-derivs}). 

Last, our implementation for computing the derivative tensors could be optimized.
In addition to many non-algorithmic optimizations, note that it copies out each partial derivative to multiple memory locations, which correspond to permutations of the derivatives' order, as elaborated in Section \ref{part:proofs}.\ref{subsub:complexity-of-RD-deriv-tensor}.

\newpage
\part{The ideas underlying root-tracking for rate-distortion}
\label{part:details}

This part elaborates on the main ideas supporting the results of Part \ref{part:how-and-what}. 

Section \ref{sec:multivariate-faa-di-brunos-formula} elaborates on the mathematical prerequisites of arbitrary-order derivative calculations. 
These are necessary for the formula of Theorem \ref{thm:formula-for-high-order-expansion-of-F-in-main-result-sect} for implicit derivatives of arbitrary-order (Section \ref{part:how-and-what}.\ref{sub:high-order-beta-derivatives-at-an-operator-root}), and for Blahut-Arimoto's derivative tensors (Section \ref{part:how-and-what}.\ref{sub:high-order-deriv-tensors-of-BA}). 
Section \ref{sec:high-order-derivs-of-BA-in-marginal-coords} outlines the proof of the latter, provides related results such as the RD ODE, and comments how to compute RD derivative tensors efficiently. 

To reconstruct an RD solution curve from its implicit derivatives, Section \ref{sec:RD-bifurcations-and-root-tracking} provides some understanding of RD bifurcations. 
While its goal to show that Algorithms \ref{algo:taylor-method-for-RD-root-tracking} and \ref{algo:root-tracking-for-RD} follow the optimal solution path subject to Assumption \ref{assumption:only-cluster-vanishing-bifurcations}, we provide several basic results on RD bifurcations, and the tools to detect and distinguish between two types of bifurcations.

In Section \ref{sec:error-analysis} we analyze the error accumulated by Algorithm \ref{algo:taylor-method-for-RD-root-tracking} (in Section \ref{part:how-and-what}.\ref{sub:taylor-method-for-RD-root-tracking}). The analysis not only reveals the source of the computational difficulty, but also provides a tool that could be used to mitigate it (see Section \ref{part:how-and-what}.\ref{sub:efficient-RD-root-tracking} on improvements).
Finally, Section \ref{sec:computational-complexities} provides results supporting the computational and memory complexity bounds presented at Section \ref{part:how-and-what}.\ref{sub:costs-and-error-to-cost-tradeoff-of-RD-root-tracking-in-main-results-section}: for root tracking in general, for RD root tracking (Algorithm \ref{algo:root-tracking-for-RD}), and for computing RD derivative tensors.

\medskip
\section{Preliminaries: the multivariate Fa\`a di Bruno's formula}
\label{sec:multivariate-faa-di-brunos-formula}

We review the preliminaries needed to calculate the higher derivative tensors of $Id - BA_\beta$ \eqref{eq:RD-operator-def} (in Section \ref{sec:high-order-derivs-of-BA-in-marginal-coords}), and for the proof of the arbitrary-order expansion of $\tfrac{d^l}{d\beta^l} F$ of Theorem \ref{thm:formula-for-high-order-expansion-of-F} (in Section \ref{part:proofs}.\ref{sub:proof-of-formula-for-high-order-expansion-of-F}).
These are based chiefly on the multivariate Fa\`a di Bruno's formula from \cite{ma2009higher}, in Theorem \ref{thm:mFDB-from-Ma} below.

\medskip 
The $n$-th derivative of a product $g(t) \cdot f(t)$ of (scalar) functions $f$ and $g$ is given by the well-known Leibniz rule,
\begin{equation}		\label{eq:Leibniz-rule-for-product}
	\frac{d^n}{dt^n} \big(g f\big) = \sum_{k=0}^n \binom{n}{k} \frac{d^k g}{dt^k} \cdot \frac{d^{n-k} f}{dt^{n-k}} \;.
\end{equation}
In comparison, the $n$-th derivative of the composite $g\left(f(t)\right)$ is given by the less familiar Fa\`a di Bruno's formula, \cite{faa1855sullo, faadi1857note}. If all the necessary derivatives of $g(t)$ and $f(t)$ are defined, it can be written \cite[Theorem 2]{roman1980formula} as
\begin{equation}		\label{eq:univariate-faa-di-bruno-formula}
	\frac{d^n}{dt^n} g\big(f(t)\big) = 
	\sum \frac{n!}{k_1! \; 1!^{k_1} \cdots k_n! \; n!^{k_n}} 
	\cdot g^{(k_1 + \dots + k_n)}\left(f(t)\right) \cdot
	\left( f^{(1)}(t) \right)^{k_1} \cdots \left( f^{(n)}(t) \right)^{k_n} ,
\end{equation}
where the sum is over all the non-negatives integers $k_1, \dots, k_n$ satisfying $k_1 + 2k_2 + \dots + n k_n = n$. 
These represent a \emph{partition} (as in Section \ref{part:how-and-what}.\ref{sub:high-order-beta-derivatives-at-an-operator-root}) of an $n$-sized set to $k_1$ subsets of size 1, $k_2$ subsets of size 2, etc., hence $k_1 + k_2 + \dots + k_n$ subsets in total.
cf., \cite{hardy2006combinatorics} for a lucid combinatorial interpretation of the formula's coefficients, \cite{johnson2002curious} for further details and historical notes.

For our purposes, we shall also need a multivariate version of Fa\`a di Bruno's formula \eqref{eq:univariate-faa-di-bruno-formula}.
While its multivariate form is often attributed to\footnote{ cf., literature survey at the introduction of \cite{ma2009higher}.} \citet[Theorem 2.1]{constantine1996multivariate}, we present a more modern form by \cite{ma2009higher}, which generalizes the combinatorial arguments of \cite{hardy2006combinatorics} from the univariate case. 

Following \cite{ma2009higher}, we first recall some multivariate notation. 
Denote by $\bb{N}_0$ the non-negative integers, and let $\bm{x}\in \bb{R}^\nu$.
For a \emph{multi-index} $\bm{\alpha} = \left( \alpha_1, \dots, \alpha_\nu\right)\in \bb{N}_0^\nu$ of \emph{length} $\nu$, define
\begin{equation}		\label{eq:multivariate-notation-defs}
	|\bm{\alpha}| := \sum_{j=1}^\nu \alpha_j, \quad
	\bm{\alpha}! := \prod_{j=1}^\nu \alpha_j!, \quad
	\bm{x}^{\bm{\alpha}} := \prod_{j=1}^\nu x_j^{\alpha_j}, \quad
	\frac{\partial^{|\bm{\alpha}|} z}{\partial \bm{x}^{\bm{\alpha}}} := \prod_{j=1}^\nu \left(\frac{\partial}{\partial x_j}\right)^{\alpha_j} z \;,
\end{equation}
with $\theta^0$ defined to be 1. A multi-index $\bm{\alpha} \in \bb{N}_0^\nu$ is said to be \emph{decomposed} into $s$ \emph{parts} $\bm{p}_1, \dots, \bm{p}_s \in \bb{N}_0^\nu$ with (respective) \emph{multiplicities} $\bm{m}_1, \dots, \bm{m}_s\in \bb{N}_0^\mu$ if the \emph{decomposition equation}
\begin{equation}		\label{eq:mFDB-decomposition-equation}
	\bm{\alpha} = |\bm{m}_1| \bm{p}_1 + |\bm{m}_2| \bm{p}_2 + \dots + |\bm{m}_s| \bm{p}_s 
\end{equation}
holds, and the parts are distinct. Note that the parts $\bm{p}$'s and the multiplicities $\bm{m}$'s are multi-indices of lengths $\nu$ and $\mu$, respectively. The \emph{total multiplicity} is defined to be
\begin{equation}		\label{eq:mFDB-total-multiplicity-is-sum-of-multiplicities}
	\bm{m} := \bm{m}_1 + \bm{m}_2 + \dots + \bm{m}_s \;.
\end{equation}
The list $\left( s, \bm{p}, \bm{m} \right)$ is a \emph{$\mu$-decomposition} of $\alpha$, or simply a decomposition. One way to ensure that its parts are distinct is by requiring $0 \ll \bm{p}_1 \ll \bm{p}_2 \ll \dots \ll \bm{p}_s$, \cite{ma2009higher}, where $\ll$ is the total order defined by $\bm{\alpha} \ll \bm{\beta}$ if there is $j \leq \nu$ such that $\alpha_1 = \beta_1, \dots, \alpha_{j-1} = \beta_{j-1}$, but $\alpha_j < \beta_j$.

\begin{thm}[Multivariate Fa\`a di Bruno's, \cite{ma2009higher}]		\label{thm:mFDB-from-Ma}
	Let $\bm{x}\in \bb{R}^\nu \overset{f}{\longrightarrow} \bm{y} \in \bb{R}^\mu \overset{g}{\longrightarrow} z\in \bb{R}$, with $f, g$ sufficiently smooth functions. Write $\bm{x} = \left( x_1, \dots, x_\nu \right)$ and $\bm{y} = \left(y_1, \dots, y_\mu\right)$ for the variables' coordinates. Then,
	\begin{equation}		\label{eq:mFDB-Ma-concise-form}
		\frac{\partial^{|\bm{\alpha}|} z}{\partial \bm{x}^{\bm{\alpha}}} = 
		\bm{\alpha}! \sum_{(s, \bm{p}, \bm{m}) \in \cal{D}} \frac{\partial^{|\bm{m}|}z}{\partial \bm{y}^{\bm{m}}} \;
		\prod_{k=1}^s \frac{1}{\bm{m}_k!} \left[ \frac{1}{\bm{p}_k!} \frac{\partial^{|\bm{p}_k|} \bm{y}}{\partial \bm{x}^{\bm{p}_k}} \right]^{\bm{m}_k}
	\end{equation}
	where $\cal{D}$ is the set of all $\mu$-decompositions of $\bm{\alpha}$. 
	Explicitly, the summands to the right are given by
	\begin{equation}		\label{eq:mFDB-Ma-explicit-expansion}
		\bm{\alpha}! \; \frac{\partial^{r_1 + \dots + r_\mu} z }{\partial y_1^{r_1} \cdots \partial y_\mu^{r_\mu}}
		\prod_{k=1}^{s} \prod_{i=1}^{\mu} \frac{1}{m_{k,i}!} \left[
		\frac{1}{p_{k,1}! \cdots p_{k,\nu}!}
		\frac{\partial^{p_{k,1} + \cdots + p_{k,\nu}} y_i }{ \partial x_1^{p_{k,1}} \cdots \partial x_\nu^{p_{k,\nu}} }
		\right]^{m_{k,i}}
	\end{equation}
	where $\bm{p}_k = \left( p_{k,1}, \cdots, p_{k,\nu} \right), \bm{m}_k = \left( m_{k,1}, \cdots, m_{k,\mu} \right), r_i := m_{1,i} + \cdots + m_{s,i}$, and hence $\bm{m} = \left( r_1, \cdots, r_\mu \right)$.
\end{thm}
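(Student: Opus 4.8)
The plan is to prove Theorem \ref{thm:mFDB-from-Ma} by the combinatorial strategy of \cite{hardy2006combinatorics, ma2009higher}: expand the iterated derivative by brute force with the chain and Leibniz rules, recognize the resulting terms as indexed by \emph{colored set partitions}, and then collect the terms that produce the same analytic expression. First I would fix a ground set $S$ of $|\bm{\alpha}|$ ``differentiation slots,'' carrying exactly $\alpha_j$ slots of type $j$ for $1 \le j \le \nu$; computing $\tfrac{\partial^{|\bm{\alpha}|}}{\partial \bm{x}^{\bm{\alpha}}}$ of $z = g(f(\bm{x}))$ amounts to differentiating successively along the slots of $S$. I claim, by induction on $|S|$, that
\[
\frac{\partial^{|\bm{\alpha}|} z}{\partial \bm{x}^{\bm{\alpha}}} = \sum_{(\mathcal{B}, c)} \frac{\partial^{|\mathcal{B}|} z}{\partial y_{c(B_1)} \cdots \partial y_{c(B_{|\mathcal{B}|})}}(\bm{y}) \cdot \prod_{B \in \mathcal{B}} \frac{\partial^{|B|} y_{c(B)}}{\partial \bm{x}^{\bm{q}(B)}}(\bm{x}),
\]
where the sum is over all pairs of an (unordered) set partition $\mathcal{B} = \{B_1, \dots\}$ of $S$ together with a coloring $c : \mathcal{B} \to \{1, \dots, \mu\}$, and $\bm{q}(B) \in \bb{N}_0^{\nu}$ is the type-vector of the block $B$. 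The base case $|S| = 1$ is the ordinary chain rule. For the inductive step I differentiate once more along a fresh slot of type $j$: by the Leibniz rule the new element is either appended to an existing block $B$ (from differentiating the factor $\partial^{|B|} y_{c(B)} / \partial \bm{x}^{\bm{q}(B)}$), or — applying the chain rule to the leading $z$-derivative factor — it forms a new singleton block carrying an arbitrary color $i \in \{1, \dots, \mu\}$. This is precisely the recursion generating colored set partitions, and commutativity of partial derivatives shows every term appears with weight $1$.

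Next I would collect terms. Two colored partitions $(\mathcal{B}, c)$ contribute the \emph{same} analytic expression precisely when they have the same \emph{type}, i.e.\ the same decomposition $(s, \bm{p}, \bm{m})$ in the sense of \eqref{eq:mFDB-decomposition-equation}: the distinct block type-vectors are $\bm{p}_1 \ll \dots \ll \bm{p}_s$, and $m_{k,i}$ counts the blocks of type $\bm{p}_k$ colored $i$. For a fixed type, the common analytic value is $\tfrac{\partial^{|\bm{m}|} z}{\partial \bm{y}^{\bm{m}}} \prod_{k=1}^{s} \prod_{i=1}^{\mu} \big( \partial^{|\bm{p}_k|} y_i / \partial \bm{x}^{\bm{p}_k} \big)^{m_{k,i}}$, using that $\partial^{|\bm{m}|} z / \partial \bm{y}^{\bm{m}} = \partial^{r_1 + \dots + r_\mu} z / \partial y_1^{r_1} \cdots \partial y_\mu^{r_\mu}$ with $r_i = \sum_k m_{k,i}$, as in \eqref{eq:mFDB-total-multiplicity-is-sum-of-multiplicities}. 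It then remains to count the colored partitions of $S$ of a prescribed type. Lining up the $|\bm{m}|$ blocks in a fixed order consistent with the type, the number of ways to distribute the $\alpha_j$ type-$j$ slots so that each block receives its prescribed number of them is a multinomial coefficient; taking the product over $j$ gives $\bm{\alpha}! / \prod_k (\bm{p}_k!)^{|\bm{m}_k|}$, and dividing by $\prod_k \bm{m}_k!$ to undo the ordering among blocks sharing both a type and a color yields $\bm{\alpha}! \prod_{k=1}^{s} \tfrac{1}{\bm{m}_k! \, (\bm{p}_k!)^{|\bm{m}_k|}}$. Multiplying the common analytic value by this multiplicity and summing over all $\mu$-decompositions of $\bm{\alpha}$ gives exactly \eqref{eq:mFDB-Ma-concise-form}; writing the vector powers out coordinatewise produces \eqref{eq:mFDB-Ma-explicit-expansion}, and the specialization $\nu = \mu = 1$ (with $\bm{m}_k$ recording the number of size-$|\bm{p}_k|$ blocks) recovers the univariate formula \eqref{eq:univariate-faa-di-bruno-formula}.

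The main obstacle is bookkeeping rather than any deep idea: making the inductive expansion into colored set partitions airtight — in particular verifying that differentiating the leading $z$-factor contributes a fresh singleton block of \emph{each} color exactly once while differentiating a $y_i$-factor only ever enlarges that one block, with no term duplicated or lost — and then checking that the multinomial count of colored partitions of a given type equals the claimed coefficient, with the total order $\bm{p}_1 \ll \dots \ll \bm{p}_s$ ensuring each decomposition is enumerated once. Once these are settled the identity follows immediately; alternatively, one could quote this verbatim from \cite[Theorem in \S3]{ma2009higher} and merely recall the correspondence above for completeness.
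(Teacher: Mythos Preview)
Your proposal is correct and follows precisely the combinatorial approach of \cite{hardy2006combinatorics, ma2009higher} that the paper itself attributes this theorem to. Note, however, that the paper does \emph{not} prove Theorem~\ref{thm:mFDB-from-Ma}: it is stated as a quoted result from \cite{ma2009higher} (see Section~\ref{sec:multivariate-faa-di-brunos-formula}, where it is introduced with ``we present a more modern form by \cite{ma2009higher}, which generalizes the combinatorial arguments of \cite{hardy2006combinatorics}''), and is then used as a black box to derive Corollary~\ref{cor:mFDB-from-Ma-when-outer-func-is-linear} and, ultimately, Theorem~\ref{thm:formula-for-high-order-expansion-of-F}. Your sketch therefore goes beyond what the paper provides, and your closing remark that one could simply quote the result from \cite{ma2009higher} is exactly what the paper does.
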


The multivariate Fa\`a di Bruno's formula \cite{ma2009higher} at \eqref{eq:mFDB-Ma-concise-form} simplifies when the outer composite $g$ in $g\circ f$ is linear. The first term $\tfrac{\partial^{|\bm{m}|}z}{\partial \bm{y}^{\bm{m}}}$ at \eqref{eq:mFDB-Ma-concise-form} then vanishes, except for decompositions of a total multiplicity $|\bm{m}| = 1$. From definition \eqref{eq:mFDB-total-multiplicity-is-sum-of-multiplicities}, there is only one part of non-zero multiplicity in such decompositions. They may be taken with $s = 1$, $\bm{p}_1 = \bm{\alpha}$ and total multiplicity $\bm{m} = \bm{m}_1$ (compare to \eqref{eq:mFDB-Ma-concise-form} and the decomposition equation \eqref{eq:mFDB-decomposition-equation}). That is, such $\mu$-decompositions are determined by picking a unit vector $\bm{m}_1 = \bm{e}_j\in \bb{N}_0^\mu$, for $j = 1, \dots, \mu$. We obtain,

\begin{cor}		\label{cor:mFDB-from-Ma-when-outer-func-is-linear}
	Under the conditions of Theorem \ref{thm:mFDB-from-Ma}, suppose that $g:\bb{R}^\mu\to \bb{R}$ is linear.
	Then,
	\begin{equation}		\label{eq:mFDB-Ma-concise-form-with-outer-func-is-linear}
		\frac{\partial^{|\bm{\alpha}|} z}{\partial \bm{x}^{\bm{\alpha}}} = 
		\sum_{j = 1}^\mu \frac{\partial z}{\partial y_j} \cdot \frac{\partial^{|\bm{\alpha}|} y_j}{\partial \bm{x}^{\bm{\alpha}}}
	\end{equation}
\end{cor}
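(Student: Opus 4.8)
The plan is to read \eqref{eq:mFDB-Ma-concise-form-with-outer-func-is-linear} off the general multivariate Fa\`a di Bruno formula \eqref{eq:mFDB-Ma-concise-form} of Theorem \ref{thm:mFDB-from-Ma}, by throwing away every $\mu$-decomposition whose summand vanishes once $g$ is linear. First I would dispose of the degenerate case $\bm{\alpha} = \bm{0}$, where $\tfrac{\partial^{|\bm{\alpha}|}z}{\partial\bm{x}^{\bm{\alpha}}} = z = g(\bm{y}) = \sum_j \tfrac{\partial z}{\partial y_j}\,y_j = \sum_j \tfrac{\partial z}{\partial y_j}\,\tfrac{\partial^{0} y_j}{\partial\bm{x}^{\bm{0}}}$ by linearity of $g$, so the claim holds; assume $|\bm{\alpha}| \geq 1$ henceforth.

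The key step is to pin down which decompositions survive. Since $g$ is linear, every partial derivative of $g$ (equivalently of $z$ with respect to the $\bm{y}$-variables) of order at least $2$ vanishes identically, so the factor $\tfrac{\partial^{|\bm{m}|}z}{\partial\bm{y}^{\bm{m}}}$ in \eqref{eq:mFDB-Ma-concise-form} is zero unless $|\bm{m}|\leq 1$. The possibility $|\bm{m}| = 0$ forces all multiplicities to vanish, and then the decomposition equation \eqref{eq:mFDB-decomposition-equation} gives $\bm{\alpha} = \bm{0}$, which is excluded. Hence only $\mu$-decompositions of total multiplicity $|\bm{m}| = 1$ contribute. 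Because each part in a $\mu$-decomposition carries a nonzero multiplicity, $|\bm{m}_k| \geq 1$ for all $k$, so by \eqref{eq:mFDB-total-multiplicity-is-sum-of-multiplicities} we have $1 = |\bm{m}| = \sum_{k} |\bm{m}_k| \geq s$; therefore $s = 1$ and $|\bm{m}_1| = 1$, i.e.\ $\bm{m}_1 = \bm{e}_j$ for some $1 \leq j \leq \mu$. The decomposition equation then reduces to $\bm{\alpha} = |\bm{m}_1|\,\bm{p}_1 = \bm{p}_1$, so $\bm{p}_1 = \bm{\alpha}$. Thus the surviving decompositions are exactly $(s,\bm{p},\bm{m}) = (1,\bm{\alpha},\bm{e}_j)$, one for each $j = 1,\dots,\mu$.

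Finally I would substitute each such decomposition into \eqref{eq:mFDB-Ma-concise-form}: for the one indexed by $j$ the outer factor is $\tfrac{\partial^{|\bm{m}|}z}{\partial\bm{y}^{\bm{m}}} = \tfrac{\partial z}{\partial y_j}$, the product over $k$ collapses to the single factor $\tfrac{1}{\bm{m}_1!}\big[\tfrac{1}{\bm{p}_1!}\tfrac{\partial^{|\bm{p}_1|}\bm{y}}{\partial\bm{x}^{\bm{p}_1}}\big]^{\bm{m}_1}$, and using $\bm{e}_j! = 1$, $\bm{p}_1! = \bm{\alpha}!$, together with the multi-index power convention $\bm{v}^{\bm{e}_j} = v_j$ (so that $\big[\tfrac{1}{\bm{\alpha}!}\tfrac{\partial^{|\bm{\alpha}|}\bm{y}}{\partial\bm{x}^{\bm{\alpha}}}\big]^{\bm{e}_j} = \tfrac{1}{\bm{\alpha}!}\tfrac{\partial^{|\bm{\alpha}|}y_j}{\partial\bm{x}^{\bm{\alpha}}}$, in agreement with the expanded form \eqref{eq:mFDB-Ma-explicit-expansion} with $r_i = \delta_{ij}$ and all other parts trivial), the summand becomes $\bm{\alpha}! \cdot \tfrac{\partial z}{\partial y_j} \cdot \tfrac{1}{\bm{\alpha}!}\tfrac{\partial^{|\bm{\alpha}|}y_j}{\partial\bm{x}^{\bm{\alpha}}} = \tfrac{\partial z}{\partial y_j}\,\tfrac{\partial^{|\bm{\alpha}|}y_j}{\partial\bm{x}^{\bm{\alpha}}}$, the global prefactor $\bm{\alpha}!$ cancelling. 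Summing over $j = 1,\dots,\mu$ yields \eqref{eq:mFDB-Ma-concise-form-with-outer-func-is-linear}. I expect no real obstacle here; the only mild care needed is the bookkeeping of the vector-valued multi-index powers in \eqref{eq:mFDB-Ma-concise-form}, which is settled by comparison with the fully expanded statement \eqref{eq:mFDB-Ma-explicit-expansion}.
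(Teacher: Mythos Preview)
Your proposal is correct and follows essentially the same approach as the paper: both argue that linearity of $g$ kills all terms in \eqref{eq:mFDB-Ma-concise-form} except those with $|\bm{m}|=1$, identify the surviving decompositions as $(s,\bm{p},\bm{m})=(1,\bm{\alpha},\bm{e}_j)$ for $j=1,\dots,\mu$, and substitute. Your write-up is slightly more detailed (explicitly handling $\bm{\alpha}=\bm{0}$ and spelling out the cancellation of $\bm{\alpha}!$), but the argument is the same.
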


\medskip 
\section{High order derivatives of the Blahut-Arimoto operator in cluster-marginal coordinates}
\label{sec:high-order-derivs-of-BA-in-marginal-coords}

In this section we outline the proof of the closed-form formulae for the higher derivative tensors of $Id - BA_\beta$ \eqref{eq:RD-operator-def} at Theorem \ref{thm:high-order-derivs-of-BA-in-main-text} (in Section \ref{part:how-and-what}.\ref{sub:high-order-deriv-tensors-of-BA}). 
See Section \ref{part:proofs}.\ref{sec:appendix-derivations-of-BA-high-order-deriv-in-marginal-coords} for details. 
On top of intermediate results, Theorem \ref{thm:beta-ODE-in-marginal-coords} below (Section \ref{sub:encoders-beta-derivatives}) provides the specialization to RD of the implicit differential equation $D_{\bm{x}} F \; \tfrac{d\bm{x}}{d\beta} = -D_\beta F$ \eqref{eq:ODE-implicit-form}.
The efficient computations of the derivatives tensors of $Id - BA_\beta$ is discussed in Section \ref{sub:computing-high-order-derivatives-efficiently}. 
The results here are built on top of Section \ref{sec:multivariate-faa-di-brunos-formula}, primarily on the multivariate Fa\`a di Bruno's formula at \cite{ma2009higher} (Theorem \ref{thm:mFDB-from-Ma}).

\medskip
Recall the notations of Section \ref{part:how-and-what}.\ref{sub:high-order-deriv-tensors-of-BA}.
Let $p_X$ and $d(x, \hat{x})$ define an RD problem. Write $N := |\mathcal{X}|$ and $M := |\hat{\mathcal{X}}|$ for the source and reproduction alphabet sizes, respectively. The $BA_\beta$ operator in marginal coordinates \eqref{eq:BA-operator-def} is defined there as the composition of
\begin{align}
	\intermediateencoder{}{} &:= \frac{\inputmarginal{} \; e^{-\beta d(x, \hat{x})}}{Z(x, \beta)}
	\quad \text{and}	\tag{\ref{eq:encoder-eq}}	\\
	\outputmarginal{} &:= \sum_x p_X(x) \intermediateencoder{}{} \;,		\tag{\ref{eq:marginal-eq}}
\end{align}
in that order, where $Z(x, \beta) := \sum_{\hat{x}'} \inputmarginal{'} e^{-\beta d(x, \hat{x}')}$. 
We write $\outputmarginalVect$ rather than $\inputmarginalVect$ at the cluster-marginal equation \eqref{eq:marginal-eq} to better distinguish input from output marginals.
While our eventual goal is to track RD solutions, calculations in this section do \textit{not} assume $\inputmarginalVect$ to be a fixed point of $BA_\beta$, $\outputmarginalVect := BA_\beta[\inputmarginalVect] = \inputmarginalVect$, unless stated otherwise.

The encoder equation \eqref{eq:encoder-eq} can be considered as $M\times N$ real functions $\intermediateencoder{}{}$ in the $1 + M$ variables $\beta$ and $\inputmarginal{}$, while the cluster marginal equation \eqref{eq:marginal-eq} are $M$ functions $\outputmarginal{}$ in the $M\times N$ variables $\intermediateencoder{}{}$. That is, we view a Blahut-Arimoto iteration \eqref{eq:BA-operator-def} as the composition
\begin{equation}		\label{eq:BA-as-func-composition}
	\xymatrix@C=3pc{
		\left(\beta, \inputmarginalVect \right) \in \bb{R}^{1+M} \ar[r]^(.51){\eqref{eq:encoder-eq}}	&
		\intermediateencoderVect \in \bb{R}^{M\times N} \ar[r]^(.58){\eqref{eq:marginal-eq}}		&
		\outputmarginalVect \in \bb{R}^M
	} .
\end{equation}
Of these two steps, the marginal equation \eqref{eq:marginal-eq} is linear. That renders the derivatives of $\outputmarginalVect$ with respect to $\intermediateencoderVect$ rather straightforward, in Section \ref{sub:marginals-derivative}. As the encoder equation \eqref{eq:encoder-eq} is more complicated, we untangle its high-order derivatives gradually. In Section \ref{sub:encoders-marginal-derivatives} we tackle the repeated derivatives of $\intermediateencoderVect$ with respect to $\inputmarginalVect$ alone, and in Section \ref{sub:encoders-beta-derivatives} with respect to $\beta$ alone. Only in \ref{sub:encoders-mixed-derivatives} do we combine derivatives of both types, yielding formula \eqref{eq:mixed-BA-deriv-in-thm} for mixed derivatives of arbitrary order (Theorem \ref{thm:high-order-derivs-of-BA-in-main-text} in Section \ref{part:how-and-what}.\ref{sub:high-order-deriv-tensors-of-BA}).

\subsection{Deriving the marginal equation \eqref{eq:marginal-eq}}

\label{sub:marginals-derivative}

Denote high-order derivatives using multi-index notation, as in the multivariate Fa\`a di Bruno's formula (in Section \ref{sec:multivariate-faa-di-brunos-formula}). 
A derivative of $BA_\beta$ \eqref{eq:BA-operator-def} with respect to its $1+M$ variables $(\beta, \inputmarginalVect)$ is denoted by $\bm{\alpha} \in \bb{N}_0^{1+M}$; cf., the dependencies depicted in Equation \eqref{eq:BA-as-func-composition}. 
Considering $\beta$ as the zeroth coordinate, $\alpha_0$ stands for the number of differentiations with respect to $\beta$, and $\alpha_j$ for that with respect to $\inputmarginal{_j}$, where $1 \leq j \leq M$.
We write $\bm{\alpha}_+$ for $\bm{\alpha}$ with its zeroth coordinate removed, $\left(\alpha_0, \bm{\alpha}_+\right) := \left(\alpha_0, \alpha_1, \dots, \alpha_M\right)$. 

Next, exploiting the linearity of the marginal's equation \eqref{eq:marginal-eq}, we apply Corollary \ref{cor:mFDB-from-Ma-when-outer-func-is-linear} to the multivariate Fa\`a di Bruno's formula (of Section \ref{sec:multivariate-faa-di-brunos-formula}). 
As the $\hat{x}'$ output coordinate of $BA_\beta$ is a real function, we have
\begin{equation}		\label{eq:high-order-BA-deriv-in-mFDB-terms-implicit}
	\frac{\partial^{|\bm{\alpha}|} BA_\beta\left[\inputmarginalVect\right](\hat{x}')}{\partial \beta^{\alpha_0} \partial \inputmarginalVect^{\bm{\alpha}+}} \overset{\eqref{eq:mFDB-Ma-concise-form-with-outer-func-is-linear}}{=}
	\sum_{x, \hat{x}} \frac{\partial \outputmarginal{'}}{\partial \intermediateencoder{}{}} 
	\frac{\partial^{|\bm{\alpha}|} \intermediateencoder{}{}}{\partial \beta^{\alpha_0} \partial \inputmarginalVect^{\bm{\alpha}_+}} \;.
\end{equation}
For the marginal equation's \eqref{eq:marginal-eq} first order derivative,
\begin{equation}		\label{eq:first-order-deriv-of-cluster-marginal-eq}
	\frac{\partial}{\partial \intermediateencoder{}{}} \outputmarginal{'} = 
	\sum_{x'} p_X(x') \frac{\partial}{\partial \intermediateencoder{}{}} \intermediateencoder{'}{'} =
	\delta_{\hat{x}, \hat{x}'} \cdot p_X(x)
\end{equation}
As expected, this is constant in the variables $\intermediateencoder{}{}$, and so all the higher derivatives of the output marginal in \eqref{eq:marginal-eq} vanish. Thus,
\begin{multline}		\label{eq:high-order-BA-deriv-only-enc-deriv-implicit}
	\frac{\partial^{|\bm{\alpha}|} BA_\beta\left[\inputmarginalVect\right](\hat{x}')}{\partial \beta^{\alpha_0} \partial \inputmarginalVect^{\bm{\alpha}+}}  \overset{\eqref{eq:high-order-BA-deriv-in-mFDB-terms-implicit}}{=}
	\sum_{x, \hat{x}} \frac{\partial \outputmarginal{'}}{\partial \intermediateencoder{}{}} 
	\frac{\partial^{|\bm{\alpha}|} \intermediateencoder{}{}}{\partial \beta^{\alpha_0} \partial \inputmarginalVect^{\bm{\alpha}_+}}
	\\ \overset{\eqref{eq:first-order-deriv-of-cluster-marginal-eq}}{=} 
	\sum_{x, \hat{x}} \delta_{\hat{x}, \hat{x}'} \cdot p_X(x) \;
	\frac{\partial^{|\bm{\alpha}|} \intermediateencoder{}{}}{\partial \beta^{\alpha_0} \partial \inputmarginalVect^{\bm{\alpha}_+} } =
	\sum_{x} p_X(x) \; \frac{\partial^{|\bm{\alpha}|} \intermediateencoder{'}{}}{\partial \beta^{\alpha_0} \partial \inputmarginalVect^{\bm{\alpha}_+} }
\end{multline}
To complete the calculation, we calculate the encoder's derivatives in the following subsections.

\subsection{Encoder's \eqref{eq:encoder-eq} derivatives with respect to the marginal}

\label{sub:encoders-marginal-derivatives}

Repeatedly deriving the encoder \eqref{eq:encoder-eq} with respect to the coordinates of the input marginal $\inputmarginalVect$ yields the following; it is proved in Section \ref{part:proofs}.\ref{sub:proof-of-prop:repeated-encoder-deriv-wrt-coords} by induction on the order $|\bm{\alpha}_+|$ of differentiation.

\begin{prop}			\label{prop:repeated-encoder-deriv-wrt-coords}
	Let $\beta > 0$, $\inputmarginalVect\in \bb{R}^M$ a distribution, and $\bm{\alpha}$ a multi-index as above. 
	Let $\intermediateencoderVect$ be defined in terms of $\inputmarginalVect$ by the encoder Equation \eqref{eq:encoder-eq}.
	Its derivative of order $\bm{0} \neq \bm{\alpha}_+ \in \bb{N}_0^{M}$ with respect to $\inputmarginalVect$ is,
	\begin{equation}		\label{eq:repeated-encoder-deriv-wrt-coords-in-prop}
		\frac{\partial^{|\bm{\alpha}_+|}}{\partial \inputmarginalVect^{\bm{\alpha}_+}} \intermediateencoder{'}{}\Big\rvert_{\inputmarginalVect} =
		\frac{(-1)^{|\bm{\alpha}_+|-1} (|\bm{\alpha}_+|-1)! \; e^{-\beta \left<\bm{\alpha}_+, d(x, \hat{x})\right> }  }{Z^{|\bm{\alpha}_+|}(x, \beta)}\cdot \Big[
		\left< \bm{\alpha}_+, \bm{e}_{\hat{x}'} \right> 
		- |\bm{\alpha}_+| \cdot\intermediateencoder{'}{}
		\Big]
	\end{equation}
	where $\left<\cdot, \cdot\right>$ is the usual scalar product on $\bb{R}^M$, 
	$\bm{e}_{\hat{x}'}$ is the standard basis vector in $\bb{R}^M$ at the $\hat{x}'$ entry,
	and $d(x, \hat{x})$ is considered as an $\hat{x}$-indexed vector for $x$ fixed.
\end{prop}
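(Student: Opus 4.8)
The plan is to prove the formula by induction on the weight $|\bm{\alpha}_+|$, peeling off one differentiation at a time. The encoder \eqref{eq:encoder-eq} is a smooth (indeed real-analytic) function of $\bm{r}$ wherever the partition function $Z(x,\beta) = \sum_{\hat{x}''} r(\hat{x}'')e^{-\beta d(x,\hat{x}'')}$ is positive, which holds here since $\beta > 0$ and $\bm{r}$ is a distribution; in particular mixed partials commute, so given any multi-index $\bm{\alpha}_+\neq\bm{0}$ we may freely single out one coordinate $k$ with $(\alpha_+)_k\ge 1$ and compute $\partial^{\bm{\alpha}_+}_{\bm r}$ as $\partial_{r(k)}$ applied to $\partial^{\bm{\alpha}_+-\bm{e}_k}_{\bm r}$. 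Throughout I will use the abbreviations $Z = Z(x,\beta)$ and $\langle\bm{\alpha}_+, d(x,\hat{x})\rangle = \sum_{\hat{x}} (\alpha_+)_{\hat{x}}\, d(x,\hat{x})$, as in the statement.

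\textbf{Base case} $|\bm{\alpha}_+| = 1$, say $\bm{\alpha}_+ = \bm{e}_k$. Differentiating \eqref{eq:encoder-eq} by the quotient rule, using $\partial_{r(k)} Z = e^{-\beta d(x,k)}$, $\partial_{r(k)}\bigl(r(\hat{x}')e^{-\beta d(x,\hat{x}')}\bigr) = \delta_{k,\hat{x}'} e^{-\beta d(x,\hat{x}')}$, and the elementary identity $\delta_{k,\hat{x}'} e^{-\beta d(x,\hat{x}')} = \delta_{k,\hat{x}'} e^{-\beta d(x,k)}$, one factors out $e^{-\beta d(x,k)}/Z^2$ and obtains the first-order identity
\begin{equation}		\label{eq:first-order-encoder-deriv-in-plan}
	\frac{\partial}{\partial r(k)}\, q(\hat{x}'|x) \;=\; \frac{e^{-\beta d(x,k)}}{Z}\Bigl[\delta_{k,\hat{x}'} - q(\hat{x}'|x)\Bigr],
\end{equation}
which is exactly \eqref{eq:repeated-encoder-deriv-wrt-coords-in-prop} for $\bm{\alpha}_+=\bm{e}_k$ (note $\langle \bm{e}_k,\bm{e}_{\hat{x}'}\rangle = \delta_{k,\hat{x}'}$, $\langle \bm{e}_k, d(x,\hat{x})\rangle = d(x,k)$, and the coefficient is $(-1)^0 0! = 1$). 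Identity \eqref{eq:first-order-encoder-deriv-in-plan} will also drive the inductive step.

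\textbf{Inductive step.} Assume \eqref{eq:repeated-encoder-deriv-wrt-coords-in-prop} holds for all multi-indices of weight $n\ge 1$. Take $\bm{\alpha}_+$ with $|\bm{\alpha}_+| = n+1$, pick $k$ with $(\alpha_+)_k\ge 1$, and set $\bm{\beta} := \bm{\alpha}_+-\bm{e}_k$, $|\bm{\beta}| = n$. Then $\partial^{\bm{\alpha}_+}_{\bm r} q(\hat{x}'|x) = \partial_{r(k)}\bigl(\partial^{\bm{\beta}}_{\bm r} q(\hat{x}'|x)\bigr)$. Apply $\partial_{r(k)}$ to the inductive formula; the factors $e^{-\beta\langle\bm{\beta}, d(x,\hat{x})\rangle}$ and $\langle\bm{\beta},\bm{e}_{\hat{x}'}\rangle$ are independent of $r(k)$, while $\partial_{r(k)} Z^{-n} = -n Z^{-n-1}e^{-\beta d(x,k)}$ and $\partial_{r(k)} q(\hat{x}'|x)$ is given by \eqref{eq:first-order-encoder-deriv-in-plan}. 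Collecting the common factor $-n\, e^{-\beta d(x,k)} Z^{-(n+1)}$ yields
\begin{equation}		\label{eq:inductive-step-in-plan}
	\partial^{\bm{\alpha}_+}_{\bm r} q(\hat{x}'|x) \;=\; \frac{(-1)^{n-1}(n-1)!\,(-n)\,e^{-\beta\langle\bm{\beta}, d(x,\hat{x})\rangle}\,e^{-\beta d(x,k)}}{Z^{n+1}}\Bigl[\langle\bm{\beta},\bm{e}_{\hat{x}'}\rangle + \delta_{k,\hat{x}'} - (n+1)\,q(\hat{x}'|x)\Bigr].
\end{equation}
Three bookkeeping identities then close the induction: $\langle\bm{\beta},\bm{e}_{\hat{x}'}\rangle + \delta_{k,\hat{x}'} = \langle\bm{\alpha}_+,\bm{e}_{\hat{x}'}\rangle$; $e^{-\beta\langle\bm{\beta}, d(x,\hat{x})\rangle}e^{-\beta d(x,k)} = e^{-\beta\langle\bm{\alpha}_+, d(x,\hat{x})\rangle}$; and $(-1)^{n-1}(n-1)!\cdot(-n) = (-1)^{n} n! = (-1)^{|\bm{\alpha}_+|-1}(|\bm{\alpha}_+|-1)!$. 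Substituting these into \eqref{eq:inductive-step-in-plan} gives \eqref{eq:repeated-encoder-deriv-wrt-coords-in-prop} at weight $n+1$.

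I do not anticipate a genuine obstacle; the only points requiring care are the sign-and-factorial bookkeeping in the coefficient and the justification that one may always extract the last differentiation with respect to a coordinate $k$ in the support of $\bm{\alpha}_+$ — which is legitimate by Clairaut's theorem, given the smoothness of $q(\hat{x}'|x)$ in $\bm{r}$ on the region $Z(x,\beta) > 0$. The fully spelled-out computation is what Section \ref{part:proofs}.\ref{sub:proof-of-prop:repeated-encoder-deriv-wrt-coords} is expected to contain.
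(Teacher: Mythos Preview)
Your proof is correct and follows essentially the same approach as the paper's own proof: induction on the order of differentiation, with the base case being the first-order derivative \eqref{eq:first-order-encoder-deriv-in-plan} and the inductive step peeling off one $\partial_{r(k)}$ and reusing that first-order identity together with $\partial_{r(k)} Z^{-n} = -n Z^{-n-1} e^{-\beta d(x,k)}$. The only cosmetic difference is that the paper first rephrases the statement in sequential-differentiation notation $\partial^k/\partial r(\hat{x}_{i_1})\cdots\partial r(\hat{x}_{i_k})$ (its Proposition in Section~\ref{part:proofs}.\ref{sub:proof-of-prop:repeated-encoder-deriv-wrt-coords}) before inducting, whereas you work directly with the multi-index and invoke Clairaut to justify singling out a coordinate $k$ in the support of $\bm{\alpha}_+$; the computations themselves are identical.
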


To clarify \eqref{eq:repeated-encoder-deriv-wrt-coords-in-prop}, when $x\in \cal{X}$ is fixed, then $d(x, \hat{x})$ is merely a vector in $\bb{R}^M$. As $\bm{\alpha}_+$ is in $\bb{N}_0^M \subset \bb{R}_0^M$, there is sense in taking the scalar product $\left<\bm{\alpha}_+, d(x, \hat{x})\right>$. For example, when $\bm{\alpha}_+$ is the $j$-th standard basis vector $\bm{e}_j$, then
\begin{equation}		\label{eq:scalar-product-example-with-multi-index}
	\left<\bm{e}_j, d(x, \hat{x})\right> =
	d(x, \hat{x}_j) \;.
\end{equation}
The first-order derivatives $|\bm{\alpha}_+| = 1$ are an important special case, 

\begin{cor}		\label{cor:BA-jacobian}
	Outside the simplex boundary, $\forall \hat{x}\; \inputmarginal{} > 0$, the Jacobian of $Id - BA_\beta$ \eqref{eq:RD-operator-def} in marginal coordinates when evaluated at $\inputmarginalVect$ is given by
	\begin{equation}		\label{eq:BA-jacobian}
		\Big(D_{\inputmarginalVect} \left( Id - BA_\beta \right)\big\rvert_{\inputmarginalVect}\Big)_{\hat{x}_i, \hat{x}_j } =
		\sum_{x} p_X(x) \frac{\intermediateencoder{_j}{} \intermediateencoder{_i}{} }{\inputmarginal{_j}} 
		+ \Delta[\inputmarginalVect]_{\hat{x}_i, \hat{x}_j }
	\end{equation}
	where $\intermediateencoderVect$ is defined in terms of $\inputmarginalVect$ by the encoder Equation \eqref{eq:encoder-eq},
	and $\Delta[\inputmarginalVect] := \diag \left(\frac{\inputmarginal{_j} - BA_\beta[\inputmarginalVect](\hat{x}_j)}{\inputmarginal{_j}}\right)_j$.
\end{cor}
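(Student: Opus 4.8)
The plan is to obtain Corollary \ref{cor:BA-jacobian} as the first-order specialization of Proposition \ref{prop:repeated-encoder-deriv-wrt-coords}, fed into the chain-rule identity \eqref{eq:high-order-BA-deriv-only-enc-deriv-implicit}. First I would fix a reproduction letter $\hat{x}_j$ and take the multi-index $\bm{\alpha} = (\alpha_0, \bm{\alpha}_+)$ with $\alpha_0 = 0$ and $\bm{\alpha}_+ = \bm{e}_{\hat{x}_j}$, so that $|\bm{\alpha}_+| = 1$; then $(-1)^{|\bm{\alpha}_+|-1}(|\bm{\alpha}_+|-1)! = 1$, and by \eqref{eq:scalar-product-example-with-multi-index} one has $\langle \bm{\alpha}_+, d(x,\hat{x})\rangle = d(x,\hat{x}_j)$ and $\langle \bm{\alpha}_+, \bm{e}_{\hat{x}'}\rangle = \delta_{\hat{x}_j,\hat{x}'}$. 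Substituting into \eqref{eq:repeated-encoder-deriv-wrt-coords-in-prop} gives $\frac{\partial}{\partial \inputmarginal{_j}} \intermediateencoder{'}{} = \frac{e^{-\beta d(x,\hat{x}_j)}}{Z(x,\beta)}\bigl(\delta_{\hat{x}_j,\hat{x}'} - \intermediateencoder{'}{}\bigr)$.

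Second, I would use the encoder equation \eqref{eq:encoder-eq} itself to rewrite $\frac{e^{-\beta d(x,\hat{x}_j)}}{Z(x,\beta)} = \frac{\intermediateencoder{_j}{}}{\inputmarginal{_j}}$; this is precisely where the hypothesis $\inputmarginal{} > 0$ for all $\hat{x}$ is used. Feeding the resulting expression into \eqref{eq:high-order-BA-deriv-only-enc-deriv-implicit} with output coordinate $\hat{x}' = \hat{x}_i$ yields $\frac{\partial}{\partial \inputmarginal{_j}} BA_\beta[\inputmarginalVect](\hat{x}_i) = \sum_x p_X(x)\,\frac{\intermediateencoder{_j}{}}{\inputmarginal{_j}}\bigl(\delta_{\hat{x}_j,\hat{x}_i} - \intermediateencoder{_i}{}\bigr)$. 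Splitting the sum into two pieces and recognising $\sum_x p_X(x)\intermediateencoder{_j}{} = BA_\beta[\inputmarginalVect](\hat{x}_j)$ from \eqref{eq:marginal-eq}--\eqref{eq:BA-operator-def}, the first piece becomes $\delta_{\hat{x}_i,\hat{x}_j}\,\frac{BA_\beta[\inputmarginalVect](\hat{x}_j)}{\inputmarginal{_j}}$ and the second is $-\sum_x p_X(x)\,\frac{\intermediateencoder{_j}{}\intermediateencoder{_i}{}}{\inputmarginal{_j}}$.

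Third, I would assemble the Jacobian of $F = Id - BA_\beta$ \eqref{eq:RD-operator-def} entrywise via $\bigl(D_{\inputmarginalVect}(Id - BA_\beta)\big\rvert_{\inputmarginalVect}\bigr)_{\hat{x}_i,\hat{x}_j} = \delta_{\hat{x}_i,\hat{x}_j} - \frac{\partial}{\partial \inputmarginal{_j}} BA_\beta[\inputmarginalVect](\hat{x}_i)$. The quadratic term $\sum_x p_X(x)\,\frac{\intermediateencoder{_j}{}\intermediateencoder{_i}{}}{\inputmarginal{_j}}$ reappears with a plus sign, while the two Kronecker-delta contributions combine into $\delta_{\hat{x}_i,\hat{x}_j}\bigl(1 - \frac{BA_\beta[\inputmarginalVect](\hat{x}_j)}{\inputmarginal{_j}}\bigr) = \delta_{\hat{x}_i,\hat{x}_j}\,\frac{\inputmarginal{_j} - BA_\beta[\inputmarginalVect](\hat{x}_j)}{\inputmarginal{_j}}$, which is exactly the $(\hat{x}_i,\hat{x}_j)$ entry of $\Delta[\inputmarginalVect] = \diag\bigl(\frac{\inputmarginal{_j} - BA_\beta[\inputmarginalVect](\hat{x}_j)}{\inputmarginal{_j}}\bigr)_j$. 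Comparing with \eqref{eq:BA-jacobian} closes the argument.

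The computation is short, so the only point requiring care — the mild ``main obstacle'' — is the bookkeeping of the identity operator $Id$: one must check that the Kronecker delta $\langle \bm{\alpha}_+, \bm{e}_{\hat{x}'}\rangle = \delta_{\hat{x}_j,\hat{x}_i}$ produced by differentiating $BA_\beta$ does not cancel the bare $\delta_{\hat{x}_i,\hat{x}_j}$ coming from $Id$, but rather combines with it to produce the diagonal matrix $\Delta[\inputmarginalVect]$, and in particular that $Id$ contributes nothing off the diagonal. Everything else is a direct substitution into Proposition \ref{prop:repeated-encoder-deriv-wrt-coords}.
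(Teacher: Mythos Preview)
Your proposal is correct and follows essentially the same route as the paper: specialize Proposition~\ref{prop:repeated-encoder-deriv-wrt-coords} to $\bm{\alpha}_+=\bm{e}_{\hat{x}_j}$, rewrite $e^{-\beta d(x,\hat{x}_j)}/Z(x,\beta)$ as $\intermediateencoder{_j}{}/\inputmarginal{_j}$ via \eqref{eq:encoder-eq}, feed into \eqref{eq:high-order-BA-deriv-only-enc-deriv-implicit}, and then subtract from the identity. You make the final assembly of the $\Delta[\inputmarginalVect]$ term more explicit than the paper does, but the argument is the same.
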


The term $\Delta[\inputmarginalVect]$ vanishes precisely at fixed points of $BA_\beta$, $\inputmarginalVect = BA_\beta[\inputmarginalVect]$. It can be considered as a perturbation due to evaluating the Jacobian outside of fixed points. 
We shall need the general form \eqref{eq:BA-jacobian} for the proof of the convergence guarantees Theorem \ref{thm:taylor-method-converges-for-RD-root-tracking-away-of-bifurcation} of Taylor method for RD.
Otherwise, when evaluated at fixed points, this result agrees with \cite{agmon2021critical} ($A^\intercal$ at Equation (7) there). 
While the Jacobian can also be calculated at the simplex boundary, that shall not be useful to us. 
Rather than calculating \eqref{eq:BA-jacobian} directly, we prove it using Proposition \ref{prop:repeated-encoder-deriv-wrt-coords} to illustrate the multivariate notation.

\begin{proof}[Proof of Corollary \ref{cor:BA-jacobian}]
	A first order derivative $\nicefrac{\partial}{\partial\inputmarginal{_j}}$ is represented by the $j$-th standard basis vector, $\bm{\alpha}_+ = \bm{e}_j$. Differentiating its $(\hat{x}_i, x)$ entry,
	\begin{equation}		\label{eq:first-order-encoder-deriv-wrt-marginal-explicit}
		\frac{\partial^{|\bm{\alpha}_+|}}{\partial \inputmarginalVect^{\bm{\alpha}_+}} \intermediateencoder{_i}{} =
		\frac{\partial}{\partial\inputmarginal{_j}} \intermediateencoder{_i}{} 
		\underset{\eqref{eq:scalar-product-example-with-multi-index}}{\overset{\eqref{eq:repeated-encoder-deriv-wrt-coords-in-prop}}{=}} 
		\frac{ e^{-\beta d(x, \hat{x}_j) }  }{Z(x, \beta)}\cdot \Big[
		\left< \bm{e}_j, \bm{e}_i \right> 
		- \intermediateencoder{_i}{}
		\Big] 
		\overset{\eqref{eq:encoder-eq}}{=}
		\frac{\intermediateencoder{_j}{}}{\inputmarginal{_j}} \left[ \delta_{i, j} - \intermediateencoder{_i}{} \right]
	\end{equation}
	Setting $\alpha_0 = 0$ and plugging this back into formula \eqref{eq:high-order-BA-deriv-only-enc-deriv-implicit} for the derivative of $BA_\beta$ yields \eqref{eq:BA-jacobian},
	\begin{multline}
		\frac{\partial BA_\beta\left[\inputmarginalVect\right](\hat{x}_i)}{\partial r(\hat{x}_j)}
		\overset{\eqref{eq:high-order-BA-deriv-only-enc-deriv-implicit}}{=}
		\sum_{x} p_X(x) \; \frac{\partial \intermediateencoder{_i}{}}{\partial \inputmarginal{_j} }
		\overset{\eqref{eq:first-order-encoder-deriv-wrt-marginal-explicit}}{=}
		\sum_{x} p_X(x) \frac{\intermediateencoder{_j}{}}{\inputmarginal{_j}} \left[ \delta_{i, j} - \intermediateencoder{_i}{} \right] \\ \overset{\eqref{eq:marginal-eq}}{=}
		\delta_{i, j} \cdot \frac{BA_\beta[\inputmarginalVect](\hat{x}_j)}{\inputmarginal{_j}} - \sum_{x} p_X(x) \frac{\intermediateencoder{_j}{} \intermediateencoder{_i}{} }{\inputmarginal{_j}} 
	\end{multline}
	Where at the last equality, $BA_\beta[\inputmarginalVect](\hat{x}_j)$ is the $j$-th output coordinate $\outputmarginal{_j}$ \eqref{eq:marginal-eq} of a Blahut-Arimoto iteration calculated at $\inputmarginalVect$.
\end{proof}

At a fixed point of $BA_\beta$, the Jacobian's properties relevant to us are the following, due to Benger.

\begin{thm}[\cite{agmon2021critical}]		\label{thm:properties-of-BA-jacobian-from-ISIT-CSD-paper}
	Let an RD problem be defined by $p_X$ and a finite non-degenerate distortion $d(x, \hat{x})$ (defined in Section \ref{part:how-and-what}.\ref{sub:high-order-deriv-tensors-of-BA}). Let $\inputmarginalVect$ be a fixed point of $BA_\beta$ outside the simplex boundary, $\forall \hat{x} \; \inputmarginal{} > 0$. 
	Then, $D_{\inputmarginalVect} \left( Id - BA_\beta \right)\rvert_{\inputmarginalVect}$ \eqref{eq:BA-jacobian} is non-singular, diagonalizable, and with real non-negative eigenvalues. 
\end{thm}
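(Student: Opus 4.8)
The plan is to read off the Jacobian at a fixed point from Corollary \ref{cor:BA-jacobian}, and then reduce the three assertions to elementary facts about symmetric positive (semi-)definite matrices. At a fixed point $\inputmarginalVect$ of $BA_\beta$ the perturbation term $\Delta[\inputmarginalVect]$ in \eqref{eq:BA-jacobian} vanishes, so the Jacobian $J := D_{\inputmarginalVect}(Id - BA_\beta)\big\rvert_{\inputmarginalVect}$ has entries
\[
	J_{\hat{x}_i, \hat{x}_j} = \sum_x p_X(x)\, \frac{\intermediateencoder{_i}{}\,\intermediateencoder{_j}{}}{\inputmarginal{_j}} \;,
\]
with $\intermediateencoderVect$ the encoder \eqref{eq:encoder-eq} associated to $\inputmarginalVect$. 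Writing $B$ for the $M\times M$ matrix $B_{\hat{x}_i,\hat{x}_j} := \sum_x p_X(x)\, \intermediateencoder{_i}{}\,\intermediateencoder{_j}{}$ and $D := \diag\big(\inputmarginal{}\big)_{\hat{x}}$ --- invertible, since $\inputmarginalVect$ has full support --- this reads $J = B D^{-1}$.

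First I would record that $B$ is the $p_X$-weighted Gram matrix of the conditional distributions $\bm{v}_x := \big(\intermediateencoder{}{}\big)_{\hat{x}}$, i.e.\ $B = \sum_x p_X(x)\,\bm{v}_x \bm{v}_x^\intercal$; in particular $B$ is symmetric and positive semi-definite, since $\bm{w}^\intercal B \bm{w} = \sum_x p_X(x)\,\langle \bm{w}, \bm{v}_x\rangle^2 \geq 0$. Because $D^{-1/2}$ is a real invertible matrix, the conjugate $D^{-1/2} J D^{1/2} = D^{-1/2} B D^{-1/2}$ is again symmetric and positive semi-definite. Being similar to such a matrix, $J$ is diagonalizable and all of its eigenvalues are real and non-negative --- this already settles two of the three claims.

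For non-singularity, the similarity above reduces matters to showing $B \succ 0$, which by the Gram form holds exactly when $\{\bm{v}_x : x\in\mathcal{X}\}$ spans $\bb{R}^M$, equivalently when the test-channel matrix $\big(\intermediateencoder{}{}\big)_{x,\hat{x}}$ has rank $M$. Using the explicit encoder \eqref{eq:encoder-eq}, $\intermediateencoder{}{} = \inputmarginal{}\, e^{-\beta d(x,\hat{x})}/Z(x,\beta)$ with $\inputmarginal{}>0$ and $Z(x,\beta)>0$, so factoring out the positive diagonal matrices $\diag\big(1/Z(x,\beta)\big)_x$ on the left and $D$ on the right shows this rank equals $\rank W$, where $W := \big(e^{-\beta d(x,\hat{x})}\big)_{x,\hat{x}}$. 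Thus the claim becomes: $W$ has full column rank $M$.

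I expect this last rank statement to be the main obstacle, and it is the only place where the hypotheses on $d$ (finite and non-degenerate, i.e.\ the columns $d(\cdot,\hat{x})$ are pairwise distinct) are used, together with $\beta>0$: the essential point is that for $\beta>0$ distinct distortion columns cannot make the rows of the test channel lie in a proper subspace of $\bb{R}^M$, so no nontrivial relation $\sum_{\hat{x}} b_{\hat{x}}\, e^{-\beta d(x,\hat{x})} = 0$ can hold for all $x$. This is precisely the content carried over from \cite{agmon2021critical}, which may be invoked directly. Granting $\rank W = M$, we obtain $B \succ 0$, hence $J$ non-singular, which together with the previous paragraph completes the proof.
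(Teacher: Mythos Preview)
The paper does not supply its own proof of this theorem; it is quoted verbatim from \cite{agmon2021critical}, as the attribution in the theorem header and the remark after Corollary~\ref{cor:BA-jacobian} (``when evaluated at fixed points, this result agrees with \cite{agmon2021critical}'') make clear. So there is no in-paper argument to compare against.

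That said, your plan is sound. Writing $J = BD^{-1}$ and passing to the similar matrix $D^{-1/2}BD^{-1/2}$ is the natural route: it is symmetric positive semi-definite, which immediately gives diagonalizability and real non-negative eigenvalues. Your reduction of non-singularity to $\rank W = M$, via the Gram structure of $B$ and the factorization $\intermediateencoderVect = \diag(Z^{-1})\, W\, D$, is correct. You are also right that this rank statement is the only place where finiteness, non-degeneracy of $d$, and $\beta>0$ enter, and that it is the substantive content imported from \cite{agmon2021critical}. One minor caution: your sentence about distinct columns not allowing a nontrivial relation $\sum_{\hat{x}} b_{\hat{x}} e^{-\beta d(x,\hat{x})}=0$ is a restatement of the claim rather than an argument for it --- but since you explicitly defer that step to the cited reference, the plan is complete as stated.
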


\subsection{Encoder's \eqref{eq:encoder-eq} partial derivative with respect to $\beta$}

\label{sub:encoders-beta-derivatives}

For derivatives with respect to $\beta$ we need the polynomials $P_k$ \eqref{eq:P_0-def}-\eqref{eq:P_k-inductive-def} defined in Section \ref{part:how-and-what}.\ref{sub:high-order-deriv-tensors-of-BA}. 
Recall the derivation $\dbar$ defined there on the infinite polynomial ring $\bb{R}\left[x_0, x_1, \dots\right]$,
\begin{equation}		\tag{\ref{eq:variable-deriv-def-for-recursive-beta-formula}}
	\dbar x_0 := 0, \quad \text{and} \quad
	\dbar x_k := x_1 \cdot x_k - x_{k+1} \quad \text{for } k > 0 \;.
\end{equation}
To calculate $\dbar$ on a monomial, apply the Leibniz rule $\dbar x_i^j = jx_i^{j-1} \dbar x_i$, for $j > 0$, 
\begin{align}		\label{eq:differentiating-a-monomial-in-P_k}
	\dbar &\left(x_0^{i_0} x_1^{i_1} x_2^{i_2} \cdots x_k^{i_k} \right) \nonumber \\ &\;\;=	
	\left(i_0 x_0^{i_0 - 1} \dbar x_0\right) x_1^{i_1} x_2^{i_2} \cdots x_k^{i_k} +
	x_0^{i_0} \left(i_1 x_1^{i_1 - 1} \dbar x_1\right) x_2^{i_2} \cdots x_k^{i_k} +
	\cdots +
	x_0^{i_0} x_1^{i_1} x_2^{i_2} \cdots \left( i_k x_k^{i_k - 1} \dbar x_k \right) \nonumber
	\\ &\overset{\eqref{eq:variable-deriv-def-for-recursive-beta-formula}}{=}
	0 + i_1 x_0^{i_0} x_1^{i_1 - 1} \left( x_1^2 - x_2 \right) x_2^{i_2} \cdots x_k^{i_k} +
	\cdots +
	i_k x_0^{i_0} x_1^{i_1} x_2^{i_2} \cdots x_k^{i_k - 1} \left( x_1 x_k - x_{k+1} \right)
\end{align}
where $i_0, i_1, \dots, i_k \geq 0$, and the $j$-th summand is understood to vanish if $i_j = 0$. The first equality follows from the usual rules of differentiation, the second from the definition \eqref{eq:variable-deriv-def-for-recursive-beta-formula} of $\dbar$.
With this, one can calculate $\dbar P_k$ for a polynomial $P_k$, by linearity of $\dbar$. Starting at $P_0(x_0) = 1$ \eqref{eq:P_0-def}, use the inductive definition \eqref{eq:P_k-inductive-def} to calculate $P_{k+1}$ from $P_k$.
e.g., $P_1, P_2$ and $P_3$ are given at \eqref{eq:P_1_derived}-\eqref{eq:P_3_derived}.

The polynomials $P_k$ are defined by simple algebraic formulae \eqref{eq:P_0-def}-\eqref{eq:P_k-inductive-def} which can be calculated easily, as in \eqref{eq:differentiating-a-monomial-in-P_k}. They encapsulate the algebra involved in the encoder's high-order partial derivatives with respect to $\beta$.
Plugging the distortion $d(x, \hat{x})$ and the expectations $\expectedDxWRTencoderK{k}$ \eqref{eq:expected-k-th-power-distortion-def} in place of the variables $x_0, x_1, \dots, x_k$ of $P_k$, one can calculate the $M$-by-$N$ matrices $P_k[\bm{q}; d](\hat{x}, x)$ \eqref{eq:P_k-by-abuse-of-notation}, also denoted $P_k(\hat{x}, x)$ for short. 
This yields the following formula for the encoder's repeated partial $\beta$-derivatives --- see Section \ref{part:proofs}.\ref{sub:proof-of-prop:repeated-beta-derivatives-of-encoder} for proof.

\begin{prop}		\label{prop:repeated-beta-derivatives-of-encoder}
	Let $\beta > 0$, $\inputmarginalVect\in \bb{R}^M$ a distribution, and $\intermediateencoderVect$ defined in terms of $\inputmarginalVect$ by the encoder Equation \eqref{eq:encoder-eq}.
	Then for $k > 0$,
	\begin{equation}		\label{eq:formula-for-repeated-beta-deriv-in-prop-recursive-form}
		\partialbetaK{\intermediateencoder{}{}}{k}\Big\rvert_{\inputmarginalVect} = 
		\intermediateencoder{}{} \cdot P_k(\hat{x}, x)
	\end{equation}
\end{prop}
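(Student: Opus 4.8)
The plan is to prove \eqref{eq:formula-for-repeated-beta-deriv-in-prop-recursive-form} by induction on $k$, where the engine of the induction is the observation that the algebraic derivation $\dbar$ of \eqref{eq:variable-deriv-def-for-recursive-beta-formula} is precisely designed to model $\beta$-differentiation once one substitutes $x_0 := d(x, \hat{x})$ and $x_j := \expectedDxWRTencoderK{j}$ for $j > 0$, as in \eqref{eq:P_k-by-abuse-of-notation}.

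First I would establish the base case $k = 1$ by a direct logarithmic computation. Taking logarithms in the encoder equation \eqref{eq:encoder-eq} gives $\log \intermediateencoder{}{} = \log \inputmarginal{} - \beta\, d(x, \hat{x}) - \log Z(x, \beta)$, so $\partialbeta{\log \intermediateencoder{}{}} = -d(x, \hat{x}) - \partialbeta{\log Z(x,\beta)}$. Differentiating $Z(x, \beta) = \sum_{\hat{x}'} \inputmarginal{'} e^{-\beta d(x,\hat{x}')}$ shows $\partialbeta{\log Z(x,\beta)} = -\sum_{\hat{x}'} \intermediateencoder{'}{}\, d(x, \hat{x}') = -\expectedDxWRTencoder$. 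Hence
\begin{equation*}
	\partialbeta{\intermediateencoder{}{}} = \intermediateencoder{}{}\cdot \partialbeta{\log \intermediateencoder{}{}}
	= \intermediateencoder{}{}\bigl( \expectedDxWRTencoder - d(x, \hat{x}) \bigr) = \intermediateencoder{}{}\cdot P_1(\hat{x}, x),
\end{equation*}
using $P_1(x_0, x_1) = x_1 - x_0$ from \eqref{eq:P_1_derived}. The same identity with $\hat{x}$ replaced by any $\hat{x}'$ will be reused below.

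The heart of the argument is the following intertwining lemma, which I would prove next: for every polynomial $Q \in \bb{R}[x_0, x_1, \dots]$, forming $Q(\hat{x}, x)$ by the substitution \eqref{eq:P_k-by-abuse-of-notation} one has $\partialbeta{\bigl(Q(\hat{x}, x)\bigr)} = (\dbar Q)(\hat{x}, x)$. Both $P \mapsto \partialbeta{\bigl(P(\hat{x}, x)\bigr)}$ and $P \mapsto (\dbar P)(\hat{x}, x)$ are $\bb{R}$-linear and satisfy the Leibniz rule over the evaluation homomorphism $x_j \mapsto x_j(\beta)$; that is, they are derivations into the module of $\beta$-functions, so by the standard fact that two derivations agreeing on a generating set coincide (e.g., \cite[C-2.6]{rotman2017advanced}), it suffices to check equality on the variables $x_0, x_1, \dots$. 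For $x_0 = d(x, \hat{x})$, which is independent of $\beta$, both sides vanish, matching $\dbar x_0 = 0$. For $j > 0$, differentiating $\expectedDxWRTencoderK{j} = \sum_{\hat{x}'} \intermediateencoder{'}{}\, d(x, \hat{x}')^j$ and invoking the base-case identity $\partialbeta{\intermediateencoder{'}{}} = \intermediateencoder{'}{}\bigl(\expectedDxWRTencoder - d(x, \hat{x}')\bigr)$ yields $\partialbeta{\expectedDxWRTencoderK{j}} = \expectedDxWRTencoder\cdot \expectedDxWRTencoderK{j} - \expectedDxWRTencoderK{j+1}$, which is exactly $\dbar x_j = x_1 x_j - x_{j+1}$ under the substitution. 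This proves the lemma.

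Finally, the inductive step: assuming $\partialbetaK{\intermediateencoder{}{}}{k} = \intermediateencoder{}{}\cdot P_k(\hat{x}, x)$, the product rule together with the base case and the lemma give
\begin{equation*}
	\partialbetaK{\intermediateencoder{}{}}{k+1}
	= \partialbeta{\bigl(\intermediateencoder{}{} P_k(\hat{x}, x)\bigr)}
	= \bigl(\intermediateencoder{}{} P_1(\hat{x}, x)\bigr) P_k(\hat{x}, x) + \intermediateencoder{}{} (\dbar P_k)(\hat{x}, x)
	= \intermediateencoder{}{}\bigl[(x_1 - x_0) P_k + \dbar P_k\bigr](\hat{x}, x),
\end{equation*}
which equals $\intermediateencoder{}{}\cdot P_{k+1}(\hat{x}, x)$ by the inductive definition \eqref{eq:P_k-inductive-def}, closing the induction. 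The only genuinely non-routine point is the intertwining lemma: one must recognize that $\dbar$ is not arbitrary but is tuned so that $\partialbeta{}$ of the moment $\expectedDxWRTencoderK{j}$ reproduces $x_1 x_j - x_{j+1}$, and this computation itself feeds on the base case, so the ordering (base case first, then lemma, then induction) matters; everything else is bookkeeping with linearity and the Leibniz rule.
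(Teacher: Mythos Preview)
Your proof is correct and follows essentially the same route as the paper's: establish the base case $k=1$ by direct differentiation of the encoder, verify that $\partial_\beta$ acting on the substituted moments $\expectedDxWRTencoderK{j}$ reproduces $\dbar x_j = x_1 x_j - x_{j+1}$, and then close the induction via the product rule and the recursive definition \eqref{eq:P_k-inductive-def}. The only cosmetic differences are that you use logarithmic differentiation for the base case (the paper differentiates $e^{-\beta d}/Z$ directly) and that you package the compatibility of $\partial_\beta$ with $\dbar$ as a standalone intertwining lemma valid for all polynomials, whereas the paper simply invokes it inline during the induction step; neither change affects the substance.
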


Plugging this into the expansion \eqref{eq:high-order-BA-deriv-only-enc-deriv-implicit} of $\tfrac{\partial^{|\bm{\alpha}|} }{\partial \beta^{\alpha_0} \partial \inputmarginalVect^{\bm{\alpha}+}} BA_\beta\left[\inputmarginalVect\right](\hat{x}')$ (in Section \ref{sub:marginals-derivative}) immediately yields formula \eqref{eq:repeated-beta-deriv-in-thm} of Theorem \ref{thm:high-order-derivs-of-BA-in-main-text} (Section \ref{part:how-and-what}.\ref{sub:high-order-deriv-tensors-of-BA}) for the repeated $\beta$-derivative of $Id - BA_\beta$ \eqref{eq:RD-operator-def}.
With $P_1$ written explicitly, the first order can be written explicitly as follows (see Section \ref{part:proofs}.\ref{sub:proof-of-cor:beta-deriv-of-BA-op} for proof),

\begin{cor}		\label{cor:BA-beta-deriv}
	With the notation of Proposition \ref{prop:repeated-beta-derivatives-of-encoder},
	\begin{equation}
		\partialbeta{} \left( Id - BA_\beta \right)\Big\rvert_{\inputmarginalVect}(\hat{x}) =
		\bb{E}_{p_X(x)} \left[\intermediateencoder{}{} d(x, \hat{x}) \right]
		- \bb{E}_{\intermediateencoder{'}{}p_X(x)} \left[\intermediateencoder{}{} d(x, \hat{x}') \right]
	\end{equation}
\end{cor}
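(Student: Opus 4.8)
The plan is to specialize the two results that immediately precede the statement. First I would observe that the $Id$ part of $Id - BA_\beta$ is independent of $\beta$, so
\[
\partialbeta{}\left(Id - BA_\beta\right)\big\rvert_{\inputmarginalVect}(\hat{x}) = -\,\partialbeta{} BA_\beta[\inputmarginalVect](\hat{x}),
\]
and then apply the expansion \eqref{eq:high-order-BA-deriv-only-enc-deriv-implicit} with the multi-index $\bm{\alpha} = (1, \bm{0})$ (one $\beta$-derivative, no marginal derivatives), which gives $\partialbeta{} BA_\beta[\inputmarginalVect](\hat{x}) = \sum_x p_X(x)\,\partialbeta{}\intermediateencoder{}{}$. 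This reduces the corollary to knowing the first $\beta$-derivative of the encoder.

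Next I would invoke Proposition \ref{prop:repeated-beta-derivatives-of-encoder} at $k = 1$, namely $\partialbeta{}\intermediateencoder{}{} = \intermediateencoder{}{}\cdot P_1(\hat{x}, x)$, and substitute the explicit polynomial $P_1(x_0, x_1) = x_1 - x_0$ from \eqref{eq:P_1_derived}. By the pointwise-evaluation convention \eqref{eq:P_k-by-abuse-of-notation}, the formal variable $x_0$ is replaced by $d(x,\hat{x})$ and $x_1$ by $\expectedDxWRTencoder = \sum_{\hat{x}'}\intermediateencoder{'}{}d(x,\hat{x}')$, so $P_1(\hat{x}, x) = \sum_{\hat{x}'}\intermediateencoder{'}{}d(x,\hat{x}') - d(x,\hat{x})$. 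Collecting terms and distributing the overall minus sign, $\sum_x p_X(x)\,\intermediateencoder{}{}\,d(x,\hat{x})$ survives with a plus sign and $\sum_x p_X(x)\,\intermediateencoder{}{}\sum_{\hat{x}'}\intermediateencoder{'}{}d(x,\hat{x}')$ with a minus sign, which are precisely $\bb{E}_{p_X(x)}\!\left[\intermediateencoder{}{}d(x,\hat{x})\right]$ and $\bb{E}_{\intermediateencoder{'}{}p_X(x)}\!\left[\intermediateencoder{}{}d(x,\hat{x}')\right]$. Equivalently, one may just set $\alpha_0 = 1$ in formula \eqref{eq:repeated-beta-deriv-in-thm} of Theorem \ref{thm:high-order-derivs-of-BA-in-main-text} and expand $P_1$; the two routes agree.

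I do not expect any real obstacle here: this is a direct corollary of Proposition \ref{prop:repeated-beta-derivatives-of-encoder}. The only point requiring care is bookkeeping the abuse of notation in \eqref{eq:P_k-by-abuse-of-notation} — tracking that the formal variables $x_0, x_1$ of $P_1$ become $d(x,\hat{x})$ and its encoder-average, so that the sign of the $d(x,\hat{x})$ term comes out positive after pulling the minus sign through from $Id - BA_\beta$.
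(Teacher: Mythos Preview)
Your proposal is correct and follows essentially the same approach as the paper's proof: both use \eqref{eq:high-order-BA-deriv-only-enc-deriv-implicit} to reduce the question to $\partialbeta{}\intermediateencoder{}{}$, then invoke Proposition~\ref{prop:repeated-beta-derivatives-of-encoder} at $k=1$ together with the explicit $P_1$ from \eqref{eq:P_1_derived}, and finally note that $Id$ is $\beta$-independent. The only cosmetic difference is ordering---you dispose of the $Id$ term first, the paper last---and your aside about setting $\alpha_0=1$ in \eqref{eq:repeated-beta-deriv-in-thm} is a valid equivalent shortcut.
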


Now that Corollaries \ref{cor:BA-jacobian} and \ref{cor:BA-beta-deriv} provide all the first-order derivative tensors, we can specialize the implicit differential equation $D_{\bm{x}} F \; \tfrac{d\bm{x}}{d\beta} = -D_\beta F$ \eqref{eq:ODE-implicit-form} to RD.

\begin{thm}[RD ODE]		\label{thm:beta-ODE-in-marginal-coords}
	Let $\inputmarginalVect$ be a fixed point of $BA_\beta$ outside the simplex boundary, $\forall \hat{x}\; \inputmarginal{} > 0$, for which Assumptions \ref{assumption:operator-root-is-a-function-of-beta} and \ref{assumption:solution-is-smooth-in-beta} in Section \ref{part:how-and-what}.\ref{sub:beta-derivs-at-an-operator-root} hold. Then
	\begin{equation}		\label{eq:beta-ODE-in-marginal-coord}
		\sum_{\hat{x}'} A_{\hat{x}, \hat{x}'} \dbeta{\inputmarginal{'}} = 
		\bb{E}_{\intermediateencoder{'}{} p_X(x)} \left[\intermediateencoder{}{} d(x, \hat{x}') \right]
		- \bb{E}_{p_X(x)} \left[\intermediateencoder{}{} d(x, \hat{x}) \right]
	\end{equation}
	where $\intermediateencoderVect$ is defined in terms of $\inputmarginalVect$ by the encoder Equation \eqref{eq:encoder-eq}, and
	\begin{equation}		\label{eq:A-matrix-def}
		A_{\hat{x}, \hat{x}'} := \sum_x p_X(x) \frac{\intermediateencoder{'}{} \intermediateencoder{}{} }{\inputmarginal{'}} \;.
	\end{equation}
\end{thm}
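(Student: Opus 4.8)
The plan is to read off \eqref{eq:beta-ODE-in-marginal-coord} directly from the implicit first-order ODE \eqref{eq:ODE-implicit-form} of Section \ref{sub:beta-derivs-at-an-operator-root}, specialized to the operator $F := Id - BA_\beta$ \eqref{eq:RD-operator-def} with the cluster marginal $\inputmarginalVect$ playing the role of the variable $\bm{x}$, and then to substitute the first-order derivative tensors of $Id - BA_\beta$ computed above. Since $\inputmarginalVect$ is assumed to satisfy Assumptions \ref{assumption:operator-root-is-a-function-of-beta} and \ref{assumption:solution-is-smooth-in-beta}, there is a differentiable path $\inputmarginalVect(\beta)$ of fixed points of $BA_\beta$ through the given point, so the chain-rule computation \eqref{eq:first-order-expansion-of-operator-eq-implicit-written-full}--\eqref{eq:first-order-expansion-of-operator-eq-implicit} applies verbatim and yields $D_{\inputmarginalVect} F\,[\tfrac{d\inputmarginalVect}{d\beta}] + D_\beta F = \bm{0}$; note this needs only differentiability of the path, not invertibility of the Jacobian. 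It then remains to identify both terms coordinate-wise.

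For the left-hand side, the $\hat{x}$-coordinate of $D_{\inputmarginalVect} F\,[\tfrac{d\inputmarginalVect}{d\beta}]$ is $\sum_{\hat{x}'} \bigl(D_{\inputmarginalVect}(Id - BA_\beta)\big\rvert_{\inputmarginalVect}\bigr)_{\hat{x}, \hat{x}'}\, \dbeta{\inputmarginal{'}}$. By Corollary \ref{cor:BA-jacobian}, this Jacobian entry equals $\sum_x p_X(x) \tfrac{\intermediateencoder{'}{}\intermediateencoder{}{}}{\inputmarginal{'}} + \Delta[\inputmarginalVect]_{\hat{x}, \hat{x}'}$; since $\inputmarginalVect$ is a fixed point of $BA_\beta$, the perturbation term $\Delta[\inputmarginalVect]$ vanishes, leaving exactly $A_{\hat{x}, \hat{x}'}$ of \eqref{eq:A-matrix-def}, where $\intermediateencoderVect$ is the encoder attached to $\inputmarginalVect$ by \eqref{eq:encoder-eq}. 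Hence the $\hat{x}$-coordinate of $D_{\inputmarginalVect} F\,[\tfrac{d\inputmarginalVect}{d\beta}]$ is $\sum_{\hat{x}'} A_{\hat{x}, \hat{x}'}\, \dbeta{\inputmarginal{'}}$, the left-hand side of \eqref{eq:beta-ODE-in-marginal-coord}. For the right-hand side, the $\hat{x}$-coordinate of $-D_\beta F$ is $-\partialbeta{}\,(Id - BA_\beta)\big\rvert_{\inputmarginalVect}(\hat{x})$, which by Corollary \ref{cor:BA-beta-deriv} equals $\bb{E}_{\intermediateencoder{'}{} p_X(x)}\!\left[\intermediateencoder{}{} d(x, \hat{x}')\right] - \bb{E}_{p_X(x)}\!\left[\intermediateencoder{}{} d(x, \hat{x})\right]$. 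Equating the two coordinate expressions gives \eqref{eq:beta-ODE-in-marginal-coord}.

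There is essentially no analytic obstacle in this argument: the content was already established in Corollaries \ref{cor:BA-jacobian} and \ref{cor:BA-beta-deriv}, and what remains is bookkeeping --- keeping the Jacobian index convention $\left(D_{\bm{x}}F\right)_{i,j} = \partial F_i/\partial x_j$ straight, and observing that the off-fixed-point correction $\Delta[\inputmarginalVect]$ drops out precisely because $\inputmarginalVect = BA_\beta[\inputmarginalVect]$. One may optionally add the remark, using Theorem \ref{thm:properties-of-BA-jacobian-from-ISIT-CSD-paper} (or the hypothesis $\forall\hat{x}\; \inputmarginal{} > 0$), that on the full-support locus the matrix $A = D_{\inputmarginalVect}(Id - BA_\beta)\rvert_{\inputmarginalVect}$ is non-singular, so \eqref{eq:beta-ODE-in-marginal-coord} in fact determines $\tfrac{d\inputmarginalVect}{d\beta}$ uniquely; but this refinement is not required by the statement.
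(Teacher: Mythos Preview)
Your proposal is correct and follows essentially the same approach as the paper: the theorem is stated immediately after Corollaries \ref{cor:BA-jacobian} and \ref{cor:BA-beta-deriv} precisely because it is obtained by plugging those first-order derivative tensors into the implicit ODE \eqref{eq:ODE-implicit-form}, exactly as you do. Your observation that $\Delta[\inputmarginalVect]$ vanishes at a fixed point and your optional remark on non-singularity via Theorem \ref{thm:properties-of-BA-jacobian-from-ISIT-CSD-paper} are both consistent with the paper's treatment.
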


\subsection{Encoder's \eqref{eq:encoder-eq} mixed derivatives}

\label{sub:encoders-mixed-derivatives}

The previous Sections \ref{sub:encoders-marginal-derivatives} and \ref{sub:encoders-beta-derivatives} enabled us to calculate derivatives with respect to the coordinates $\inputmarginalVect$ alone or with respect to $\beta$ alone. 
cf., the dependencies Equation \eqref{eq:BA-as-func-composition}. 
To derive simultaneously with respect to both, recall $G\big( k, a; \intermediateencoderVect, d \big)$ \eqref{eq:combinatorial-G-in-terms-of-polynomials} from Section \ref{part:how-and-what}.\ref{sub:high-order-deriv-tensors-of-BA}. It is defined on non-negative integers $k, a$, and its values are $M\times N$ matrices. We set $G = 0$ if $a = 0 < k$, and otherwise
\begin{equation}		\tag{\ref{eq:combinatorial-G-in-terms-of-polynomials}}
	G\big( k, a; \intermediateencoderVect, d \big)_{(\hat{x}, x)} := 
	\sum_{\substack{\bm{t}: \;|\bm{t}|\leq a, \\ \sum_{j} j\cdot t_{j} = k}} \frac{1}{ \bm{t}! \; (a - |\bm{t}|)!} 
	\prod_{j=1}^{k} \left( \frac{P_j(\hat{x}, x)}{j!} \right)^{t_{j}} \;,
\end{equation}
where $\bm{t} \in \bb{N}_0^k$, and $P_j(\hat{x}, x)$ are the matrices defined above (Equation \eqref{eq:P_k-by-abuse-of-notation} in Section \ref{part:how-and-what}.\ref{sub:high-order-deriv-tensors-of-BA}). 
Since $\bm{t}$ represents the multiplicities of an integer partition of $k$, we may take $\bm{t} \in \bb{N}_0^l$ for any convenient $l\geq k$.
With that, 

\begin{prop}		\label{prop:mixed-high-order-enc-deriv}
	Let $d(x, \hat{x})$ and $p_X$ define an RD problem, $\inputmarginalVect \in \Delta[\hat{\mathcal{X}}]$ be an input marginal outside the simplex boundary, $\forall \hat{x} \; \inputmarginal{} > 0$, and $\intermediateencoderVect$ the encoder its defines by the encoder Equation \eqref{eq:encoder-eq}.
	For $\bm{\alpha}\in \bb{N}_0^{M + 1}$ a multi-index with $\bm{\alpha}_+ \neq \bm{0}$,
	\begin{multline}			\label{eq:mixed-high-order-enc-deriv-in-prop}
		\frac{\partial^{|\bm{\alpha}|}}{\partial \beta^{\alpha_0} \; \partial \inputmarginalVect^{\bm{\alpha}_+}} \intermediateencoder{'}{}\Big\rvert_{\inputmarginalVect} =
		(-1)^{|\bm{\alpha}_+|-1} (|\bm{\alpha}_+|-1)! \left(\frac{\intermediateencoder{}{}}{\inputmarginal{}}\right)^{\bm{\alpha}_+ } \bm{\alpha}! 
		\sum_{\bm{k}: \; |\bm{k}| = \alpha_0} \left(\prod_{i\neq\hat{x}'} G\big( k_i, \alpha_i; \intermediateencoderVect, d \big)_{(\hat{x}_i, x)} \right)
		\\ \cdot \left[
		\alpha_{\hat{x}'} \cdot G\big( k_{\hat{x}'}, \alpha_{\hat{x}'}; \intermediateencoderVect, d \big)_{(\hat{x}', x)} 
		- |\bm{\alpha}_+| \cdot \left( 1 + \alpha_{\hat{x}'} \right) \cdot \intermediateencoder{'}{} \cdot 
		G\big( k_{\hat{x}'}, 1 + \alpha_{\hat{x}'}; \intermediateencoderVect, d \big)_{(\hat{x}', x)}
		\right] 
	\end{multline}
	where $\bm{k} \in \bb{N}_0^M$, $i = 1, \dots, M$, and $G\big(k, a; \intermediateencoderVect, d \big)$ is defined by \eqref{eq:combinatorial-G-in-terms-of-polynomials}. \newline
\end{prop}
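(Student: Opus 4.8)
The plan is to obtain the mixed derivative in two stages, using that partial derivatives of a smooth map commute. Since $\intermediateencoder{'}{}$ is real-analytic in $(\beta, \inputmarginalVect)$ wherever $\inputmarginalVect$ has full support --- its defining formula \eqref{eq:encoder-eq} is a ratio of finite sums of exponentials with the positive denominator $Z(x,\beta)$ --- we may write $\frac{\partial^{|\bm{\alpha}|}}{\partial\beta^{\alpha_0}\partial\inputmarginalVect^{\bm{\alpha}_+}} = \partialbetaK{}{\alpha_0}\circ \frac{\partial^{|\bm{\alpha}_+|}}{\partial\inputmarginalVect^{\bm{\alpha}_+}}$. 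First I take all the $\inputmarginalVect$-derivatives with Proposition~\ref{prop:repeated-encoder-deriv-wrt-coords}; then I differentiate the resulting closed form $\alpha_0$ times in $\beta$ using Proposition~\ref{prop:repeated-beta-derivatives-of-encoder}.

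\emph{Rewriting the marginal derivative.} Put $u(\hat{x}) := \intermediateencoder{}{}/\inputmarginal{} = e^{-\beta d(x,\hat{x})}/Z(x,\beta)$, so that the prefactor in \eqref{eq:repeated-encoder-deriv-wrt-coords-in-prop} becomes $e^{-\beta\langle\bm{\alpha}_+, d(x,\hat{x})\rangle}/Z^{|\bm{\alpha}_+|}(x,\beta) = \prod_{\hat{x}} u(\hat{x})^{\alpha_{\hat{x}}} =: \bm{u}^{\bm{\alpha}_+}$, while $\langle\bm{\alpha}_+,\bm{e}_{\hat{x}'}\rangle = \alpha_{\hat{x}'}$ and $\intermediateencoder{'}{} = \inputmarginal{'}\,u(\hat{x}')$. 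Proposition~\ref{prop:repeated-encoder-deriv-wrt-coords} then reads
\[
\frac{\partial^{|\bm{\alpha}_+|}}{\partial\inputmarginalVect^{\bm{\alpha}_+}}\intermediateencoder{'}{} = (-1)^{|\bm{\alpha}_+|-1}(|\bm{\alpha}_+|-1)!\left[\alpha_{\hat{x}'}\,\bm{u}^{\bm{\alpha}_+} - |\bm{\alpha}_+|\,\inputmarginal{'}\,\bm{u}^{\bm{\alpha}_+ + \bm{e}_{\hat{x}'}}\right],
\]
in which only the monomials $\bm{u}^{\bm{\gamma}}$ carry $\beta$-dependence ($\inputmarginal{'}$ is held fixed while differentiating in $\beta$).

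\emph{The key lemma.} I claim that for integers $a,k\ge 0$,
\[
\partialbetaK{}{k}\big(\intermediateencoder{}{}^{\,a}\big) = a!\,k!\;\intermediateencoder{}{}^{\,a}\;G\big(k,a;\intermediateencoderVect, d\big)_{(\hat{x},x)}.
\]
By Proposition~\ref{prop:repeated-beta-derivatives-of-encoder}, $\partialbetaK{\intermediateencoder{}{}}{j} = \intermediateencoder{}{}\,P_j(\hat{x},x)$ for $j\ge 0$ (with $P_0\equiv 1$), so the Taylor expansion of $\beta\mapsto\intermediateencoder{}{}$ about a point $\beta_0$ factors as $\intermediateencoder{}{}\big|_{\beta_0+z} = \intermediateencoder{}{}\big|_{\beta_0}\cdot h(z)$ with $h(z) := \sum_{j\ge 0}\tfrac{P_j(\hat{x},x)}{j!}z^j$, an identity of convergent series by analyticity. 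Raising to the $a$-th power and comparing $z^k$-coefficients, $\partialbetaK{}{k}(\intermediateencoder{}{}^{\,a})/k! = \intermediateencoder{}{}^{\,a}\,[z^k]h(z)^a$; expanding $h(z)^a$ by the multinomial theorem and matching with definition \eqref{eq:combinatorial-G-in-terms-of-polynomials} --- where $\bm{t}$ records the multiplicities of the factors $\tfrac{P_j}{j!}z^j$ with $j\ge 1$ contributing to $z^k$, and $a-|\bm{t}|$ counts the remaining $j=0$ factors --- gives $[z^k]h(z)^a = a!\,G(k,a)$. The degenerate cases are covered by the conventions $G(k,0)=0$ for $k>0$ and $G(0,a)=1/a!$, consistent with $\partialbetaK{}{k}(1) = \delta_{k,0}$.

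\emph{Assembling.} The general Leibniz rule applied to $\bm{u}^{\bm{\gamma}} = \prod_i u(\hat{x}_i)^{\gamma_i}$ together with the lemma (valid verbatim with $u$ for $q$, since $\inputmarginal{}$ is $\beta$-constant) yields
\[
\partialbetaK{}{\alpha_0}\big(\bm{u}^{\bm{\gamma}}\big) = \sum_{\bm{k}:\,|\bm{k}|=\alpha_0}\frac{\alpha_0!}{\prod_i k_i!}\prod_i \gamma_i!\,k_i!\,u(\hat{x}_i)^{\gamma_i}\,G(k_i,\gamma_i)_{(\hat{x}_i,x)} = \alpha_0!\,\bm{\gamma}!\,\bm{u}^{\bm{\gamma}}\sum_{\bm{k}:\,|\bm{k}|=\alpha_0}\prod_i G(k_i,\gamma_i)_{(\hat{x}_i,x)},
\]
the factorials $k_i!$ cancelling the multinomial coefficient. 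Substituting $\bm{\gamma}=\bm{\alpha}_+$ and $\bm{\gamma}=\bm{\alpha}_++\bm{e}_{\hat{x}'}$ (for which $\bm{\gamma}! = (1+\alpha_{\hat{x}'})\bm{\alpha}_+!$ and $\bm{u}^{\bm{\gamma}} = u(\hat{x}')\,\bm{u}^{\bm{\alpha}_+}$) into the displayed form of Proposition~\ref{prop:repeated-encoder-deriv-wrt-coords}, factoring $\prod_{i\ne \hat{x}'}G(k_i,\alpha_i)$ out of the resulting bracket, and simplifying via $\inputmarginal{'}\,u(\hat{x}') = \intermediateencoder{'}{}$, $\alpha_0!\,\bm{\alpha}_+! = \bm{\alpha}!$, and $\bm{u}^{\bm{\alpha}_+} = (\intermediateencoder{}{}/\inputmarginal{})^{\bm{\alpha}_+}$ reproduces \eqref{eq:mixed-high-order-enc-deriv-in-prop} exactly. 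The main obstacle is the key lemma: recognizing $G$ of \eqref{eq:combinatorial-G-in-terms-of-polynomials} as the $z^k$-coefficient of the $a$-th power of the Taylor generating polynomial $\sum_j\tfrac{P_j}{j!}z^j$ of $\intermediateencoder{}{}$. The remaining work is bookkeeping of multi-indices and factorials, the one care point being that the coordinate $\hat{x}'$ is treated asymmetrically both in Proposition~\ref{prop:repeated-encoder-deriv-wrt-coords} (through $\bm{e}_{\hat{x}'}$) and in the Leibniz expansion, which is precisely the origin of the bracketed expression in \eqref{eq:mixed-high-order-enc-deriv-in-prop}.
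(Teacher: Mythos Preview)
Your proof is correct and follows the same architecture as the paper's: differentiate first in $\inputmarginalVect$ via Proposition~\ref{prop:repeated-encoder-deriv-wrt-coords}, rewrite the prefactor as $\bm{u}^{\bm{\alpha}_+}$, then apply $\partial_\beta^{\alpha_0}$ to the two monomials $\bm{u}^{\bm{\alpha}_+}$ and $\bm{u}^{\bm{\alpha}_++\bm{e}_{\hat{x}'}}$ by the multi-factor Leibniz rule, and finally factor out $\prod_{i\neq\hat{x}'}G(k_i,\alpha_i)$.

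The one substantive difference is in how you establish the key identity $\partial_\beta^{k}\big(\intermediateencoder{}{}^{\,a}\big)=a!\,k!\,\intermediateencoder{}{}^{\,a}\,G(k,a)$. The paper applies the univariate Fa\`a di Bruno formula \eqref{eq:univariate-faa-di-bruno-formula} to the composition $\beta\mapsto\intermediateencoder{}{}\mapsto\intermediateencoder{}{}^{\,a}$, with the derivative $\tfrac{d^{|\bm t|}}{dx^{|\bm t|}}x^{a}$ providing the factor $a!/(a-|\bm t|)!$. You instead observe that $\partial_\beta^{j}\intermediateencoder{}{}=\intermediateencoder{}{}\,P_j$ implies the formal factorization $\intermediateencoder{}{}(\beta_0+z)=\intermediateencoder{}{}(\beta_0)\cdot h(z)$ with $h(z)=\sum_{j\ge 0}\tfrac{P_j}{j!}z^j$, so that $\partial_\beta^k(\intermediateencoder{}{}^{\,a})$ is $k!$ times the $z^k$-coefficient of $\intermediateencoder{}{}^{\,a}\,h(z)^a$, which the multinomial theorem unwinds directly to $a!\,G(k,a)$. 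This generating-function route is a clean alternative that bypasses Fa\`a di Bruno entirely; both arguments yield the same combinatorial sum over partition vectors $\bm t$ and the same degenerate cases $G(k,0)=0$ for $k>0$ and $G(0,a)=1/a!$.
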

This Proposition is proven in Section \ref{part:proofs}.\ref{sub:proof-of-prop:mixed-high-order-enc-deriv} by differentiating the encoder's $\inputmarginalVect$-derivative \eqref{eq:repeated-encoder-deriv-wrt-coords-in-prop} with respect to $\beta$, followed by an application of formula \eqref{eq:formula-for-repeated-beta-deriv-in-prop-recursive-form} for the encoder's repeated partial $\beta$-derivative, using the tools of Section \ref{sec:multivariate-faa-di-brunos-formula}.
Plugging \eqref{eq:mixed-high-order-enc-deriv-in-prop} back into the expansion \eqref{eq:high-order-BA-deriv-only-enc-deriv-implicit} of $\tfrac{\partial^{|\bm{\alpha}|}}{\partial \beta^{\alpha_0} \partial \inputmarginalVect^{\bm{\alpha}+}} BA_\beta$, we finally obtain formula \eqref{eq:mixed-BA-deriv-in-thm} of Theorem \ref{thm:high-order-derivs-of-BA-in-main-text} for the mixed derivative of $Id - BA_\beta$ \eqref{eq:RD-operator-def}, stated as Corollary \ref{cor:mixed-deriv-for-RD-operator} below.
Together with Proposition \ref{prop:repeated-beta-derivatives-of-encoder} from the previous Subsection, this concludes the proof of Theorem \ref{thm:high-order-derivs-of-BA-in-main-text} in Section \ref{part:how-and-what}.\ref{sub:high-order-deriv-tensors-of-BA}.

\begin{cor}			\label{cor:mixed-deriv-for-RD-operator}
	Under the conditions of Proposition \ref{prop:mixed-high-order-enc-deriv},
	\begin{multline}
		\frac{\partial^{|\bm{\alpha}|} }{\partial \beta^{\alpha_0} \partial \inputmarginalVect^{\bm{\alpha}+}} \left(Id - BA_\beta \right)\left[\inputmarginalVect\right](\hat{x}') \\ =
		\delta_{\bm{\alpha}, \bm{e}_{\hat{x}'}} - 
		(-1)^{|\bm{\alpha}_+|-1} (|\bm{\alpha}_+|-1)! \; \bm{\alpha}! \sum_x p_X(x) \left(\frac{\intermediateencoder{}{}}{\inputmarginal{}}\right)^{\bm{\alpha}_+ } 
		\sum_{\bm{k}: \; |\bm{k}| = \alpha_0} \left(\prod_{i\neq\hat{x}'} G\big( k_i, \alpha_i; \intermediateencoderVect, d \big)_{(\hat{x}_i, x)} \right)
		\\ \cdot \left[
		\alpha_{\hat{x}'} \cdot G\big( k_{\hat{x}'}, \alpha_{\hat{x}'}; \intermediateencoderVect, d \big)_{(\hat{x}', x)} 
		- |\bm{\alpha}_+| \cdot \left( 1 + \alpha_{\hat{x}'} \right) \cdot \intermediateencoder{'}{} \cdot 
		G\big( k_{\hat{x}'}, 1 + \alpha_{\hat{x}'}; \intermediateencoderVect, d \big)_{(\hat{x}', x)}
		\right] 
	\end{multline}
\end{cor}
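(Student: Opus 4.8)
The plan is to obtain this Corollary by a short substitution, since all the genuine combinatorial work already resides in Propositions \ref{prop:repeated-encoder-deriv-wrt-coords}, \ref{prop:repeated-beta-derivatives-of-encoder} and \ref{prop:mixed-high-order-enc-deriv} together with identity \eqref{eq:high-order-BA-deriv-only-enc-deriv-implicit}. First I would split $(Id - BA_\beta)[\inputmarginalVect](\hat{x}') = \inputmarginal{'} - BA_\beta[\inputmarginalVect](\hat{x}')$ and differentiate each summand. The $Id$-part is $\frac{\partial^{|\bm{\alpha}|}}{\partial \beta^{\alpha_0}\partial \inputmarginalVect^{\bm{\alpha}_+}}\inputmarginal{'}$; since $\inputmarginal{'}$ is the linear coordinate functional $\inputmarginalVect \mapsto \inputmarginal{'}$ and is constant in $\beta$, this vanishes unless $\bm{\alpha}$ is the first-order multi-index $\bm{e}_{\hat{x}'}$ (one differentiation with respect to $\inputmarginal{'}$ and nothing else), where it equals $1$; that is, it is $\delta_{\bm{\alpha}, \bm{e}_{\hat{x}'}}$, the first term on the right-hand side.

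For the $-BA_\beta$ part I would invoke \eqref{eq:high-order-BA-deriv-only-enc-deriv-implicit}, which — via the linearity of the cluster-marginal equation \eqref{eq:marginal-eq}, Corollary \ref{cor:mFDB-from-Ma-when-outer-func-is-linear}, and $\partial \outputmarginal{'}/\partial \intermediateencoder{}{} = \delta_{\hat{x},\hat{x}'}\,p_X(x)$ — collapses the double sum $\sum_{x,\hat{x}}$ to
\[
\frac{\partial^{|\bm{\alpha}|} BA_\beta[\inputmarginalVect](\hat{x}')}{\partial \beta^{\alpha_0}\partial \inputmarginalVect^{\bm{\alpha}_+}} = \sum_x p_X(x)\, \frac{\partial^{|\bm{\alpha}|} \intermediateencoder{'}{}}{\partial \beta^{\alpha_0}\partial \inputmarginalVect^{\bm{\alpha}_+}} .
\]
Because the Corollary's hypothesis $\bm{\alpha}_+ \neq \bm{0}$ is exactly the hypothesis of Proposition \ref{prop:mixed-high-order-enc-deriv}, I substitute \eqref{eq:mixed-high-order-enc-deriv-in-prop} for the encoder's mixed derivative on the right. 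The $x$-dependent factors — the power $\big(\intermediateencoder{}{}/\inputmarginal{}\big)^{\bm{\alpha}_+}$ together with the sum over $\bm{k}$ of products $\prod_{i\neq\hat{x}'}G(k_i,\alpha_i;\intermediateencoderVect,d)_{(\hat{x}_i,x)}$ and the bracketed term — stay inside the $x$-sum, while the $x$-independent prefactor $(-1)^{|\bm{\alpha}_+|-1}(|\bm{\alpha}_+|-1)!\,\bm{\alpha}!$ pulls out.

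Finally I would reassemble the two pieces: differentiating $Id - BA_\beta$ gives $\delta_{\bm{\alpha},\bm{e}_{\hat{x}'}}$ minus the $BA_\beta$-derivative just computed, and the leading minus sign turns the $+(-1)^{|\bm{\alpha}_+|-1}(\cdots)$ coming from Proposition \ref{prop:mixed-high-order-enc-deriv} into the $-(-1)^{|\bm{\alpha}_+|-1}(\cdots)$ displayed here, which is precisely the claimed formula. I do not expect a real obstacle; the only care needed is bookkeeping — confirming that the reduction $\sum_{x,\hat{x}}\to\sum_x$ in \eqref{eq:high-order-BA-deriv-only-enc-deriv-implicit} is exactly what places $p_X(x)$ outside, and that the multi-index conventions match between the two statements (the role of $\bm{\alpha}_+$ versus $\bm{\alpha}$, and reading $(\intermediateencoder{}{}/\inputmarginal{})^{\bm{\alpha}_+}$ as $\prod_i (q(\hat{x}_i|x)/r(\hat{x}_i))^{\alpha_i}$).
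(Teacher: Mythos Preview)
Your proposal is correct and matches the paper's own argument exactly: the paper simply states that one plugs \eqref{eq:mixed-high-order-enc-deriv-in-prop} into the expansion \eqref{eq:high-order-BA-deriv-only-enc-deriv-implicit} to obtain the corollary, with the $\delta_{\bm{\alpha},\bm{e}_{\hat{x}'}}$ arising from differentiating the identity part. Your handling of the Kronecker delta from the $Id$ piece and the sign from the $-BA_\beta$ piece is precisely the bookkeeping the paper leaves implicit.
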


As a sanity check, when differentiating only with respect to the marginal $\inputmarginalVect$, $\alpha_0 = 0$, then Proposition \ref{prop:mixed-high-order-enc-deriv} reduces to formula \eqref{eq:repeated-encoder-deriv-wrt-coords-in-prop} for $\tfrac{\partial^{|\bm{\alpha}_+|}}{\partial \inputmarginalVect^{\bm{\alpha}_+}} \intermediateencoder{}{}$, as one might expect. Indeed, $\bm{k} = \bm{0}$ is then the only vector in $\bb{N}_0^M$ satisfying $|\bm{k}| = \alpha_0 = 0$. 
The sum $G(0, a)$ at \eqref{eq:combinatorial-G-in-terms-of-polynomials} then reduces to $\nicefrac{1}{a!}$, and so the summation at \eqref{eq:mixed-high-order-enc-deriv-in-prop} over $\bm{k}$ simplifies to $\frac{1}{\bm{\alpha}!} \cdot \big[	\alpha_{\hat{x}'} - |\bm{\alpha}_+| \cdot \intermediateencoder{'}{} \big]$.
Plugging this back into \eqref{eq:mixed-high-order-enc-deriv-in-prop} yields formula \eqref{eq:repeated-encoder-deriv-wrt-coords-in-prop}, as expected.
The other special case when differentiating with respect to $\beta$ alone cannot be verified in a similar manner, since Proposition \ref{prop:mixed-high-order-enc-deriv} requires $\bm{\alpha}_+ \neq 0$.

\subsection{A note on how to compute high-order derivatives of $Id - BA_\beta$ efficiently}

\label{sub:computing-high-order-derivatives-efficiently}

Some quantities in the formulas for the derivative tensors $D^l_{\beta^b, \bm{x}^{l-b}} (Id - BA_\beta)[\inputmarginalVect]$ (Theorem \ref{thm:high-order-derivs-of-BA-in-main-text} in Section \ref{part:how-and-what}.\ref{sub:high-order-deriv-tensors-of-BA}) are shared, especially when partial derivatives with respect to $\beta$ are present. 
While this was purposefully implied by our presentation, we now discuss how these derivative tensors can computed in an algorithmically efficient manner.
cf., Section \ref{part:proofs}.\ref{sub:complexity-of-deriv-tensors-for-RD} on the complexities of the derivative tensors.

\medskip
The form of $G$ \eqref{eq:combinatorial-G-in-terms-of-polynomials} suggests a multi-step approach to computing the derivatives tensors at Theorem \ref{thm:high-order-derivs-of-BA-in-main-text}.
First, compute the $M$-by-$N$ matrices $P_k(\hat{x}, x)$ \eqref{eq:P_k-by-abuse-of-notation}. While the matrices $P_k(\hat{x}, x)$ depend both on $\intermediateencoderVect$ and $d(x, \hat{x})$, their polynomial form $P_k$ depends on $k$ alone, as it is defined algebraically by  \eqref{eq:P_0-def}-\eqref{eq:P_k-inductive-def}. Thus, the $P_k$ polynomials can be computed once an for all, while the $P_k(\hat{x}, x)$ matrices need to be computed anew for every encoder $\intermediateencoderVect$. Yet, the expectations $\expectedDxWRTencoderK{k}$ \eqref{eq:expected-k-th-power-distortion-def} are shared among $P_k(\hat{x}, x)$'s for distinct $k$ values.
With these computed, the derivative tensors \eqref{eq:repeated-beta-deriv-in-thm} with respect to $\beta$ alone can be computed readily.

Second, compute the $M$-by-$N$ matrices $G\big( k, a\big)$ according to \eqref{eq:combinatorial-G-in-terms-of-polynomials}, when the $P_k(\hat{x}, x)$ matrices are given. When the range of admissible $k$ and $a$ values is known in advance, then it is possible to iterate over the partition vectors $\bm{t}$ at \eqref{eq:combinatorial-G-in-terms-of-polynomials} only once. For, the summand computed there for a particular partition $\bm{t}$ can be added to the matrices $G(k, a)$ at $k = \sum_j j\cdot t_j$ and at all the $a$ values with $a \geq |\bm{t}|$. 

Third, once the $G\big( k, a\big)$ matrices are computed, then one can compute the mixed derivatives of $(Id - BA_\beta)[\inputmarginalVect]$ by \eqref{eq:mixed-BA-deriv-in-thm}.
Each $\bm{\alpha}$-indexed partial derivative there may correspond to multiple tensor entries. 
cf., the comments on indexation, after definition \eqref{eq:mixed-deriv-def-evaluated-applied-to-vectors} in Section \ref{part:how-and-what}.\ref{sub:beta-derivs-at-an-operator-root}.
Thus, partial derivatives can be computed only once per $\bm{\alpha}$ and then distributed to the various tensor entries. Alternatively, each $\bm{\alpha}$-indexed derivative can be re-used at tensor evaluation. 
See also Section \ref{part:proofs}.\ref{subsub:complexity-of-RD-deriv-tensor} on the tensors' complexities.

If one wishes to compute implicit derivatives $\dbetaK{\inputmarginalVect}{l}$ \eqref{eq:l-th-deriv-at-beta0} of all orders $l$ up to $L > 0$, then inspecting \eqref{eq:mixed-BA-deriv-in-thm} shows that it is enough to compute the $P_k(\hat{x}, x)$ matrices for all $k \leq L$, and  $G\big( k, a\big)$ on the integral grid $0 \leq k \leq L, \; 0 \leq a \leq 1 + L$. Once these are computed, then any $\bm{\alpha}$-indexed partial derivative with $|\bm{\alpha}|\leq L$ can be computed.

\medskip
\section{On RD bifurcations and root tracking for RD}
\label{sec:RD-bifurcations-and-root-tracking}

As described in Section \ref{sec:introduction}, the goal of this work is to track the path of an \textit{optimal} solution. In RD context, this means tracking an \textit{achieving distribution}, rather than any fixed point of the Blahut-Arimoto algorithm. 
As shown below, there are typically many fixed points of $BA_\beta$ \eqref{eq:BA-operator-def} which do \textit{not} achieve the rate-distortion curve, and so are \textit{sub-optimal}.
Ensuring that the root being tracked is indeed an optimal one requires an understanding of the solutions' structure or equivalently of RD bifurcations.

\medskip
We start by showing in Section \ref{sub:suboptimal-RD-curves} that an RD problem typically has a plethora of suboptimal solutions, stemming from the various \textit{restrictions} of a given problem (defined in Section \ref{part:how-and-what}.\ref{sub:taylor-method-for-RD-root-tracking}).
Our main case of interest is that of a \textit{cluster vanishing}. Namely, when the marginal probability of a cluster $\hat{x}$ vanishes gradually as $\beta$ varies, resulting in an optimal solution on a smaller support. 
In Section \ref{sub:cluster-vanishing-bifs-are-bifs} we show that two solution branches must then collide and merge into one, so these are indeed bifurcations. 
This type of bifurcations is handled by Algorithms \ref{algo:taylor-method-for-RD-root-tracking} and \ref{algo:root-tracking-for-RD} in Section \ref{part:how-and-what}.\ref{sec:RD-solution-curve-from-RD-beta-derivs}.
Besides bifurcations in which the support shrinks, we are also aware of bifurcations where the optimal solution switches support. An explanation of these is deferred to Section \ref{sub:support-switching-bifurcations}.
Both of the latter are \textit{local} bifurcations, as they can be detected by the relevant Jacobian. 
``There are also bifurcations that cannot be detected by looking at small neighborhoods of fixed points'', \citep{kuznetsov2004elements}, known as \emph{global bifurcations}. 
Such bifurcations could break the continuity of $\bm{x}(\beta)$, violating Assumption \ref{assumption:solution-is-smooth-in-beta}.
It turns out that there are no global bifurcations in rate-distortion (Section \ref{sub:obstructions-to-RT-assumptions-for-RD}). However, considering only the cluster-marginal of a root might cause a support-switching bifurcation to appear as if it is a global bifurcation.
With that, while we do \textit{not} argue to classify all RD bifurcations, we do believe that the tools brought here are a significant step in that direction.

To track an operator's root $\bm{x}$ it is necessary that it can be written as a function $\bm{x}(\beta)$ of $\beta$ (Assumption \ref{assumption:operator-root-is-a-function-of-beta} in \ref{part:how-and-what}.\ref{sub:beta-derivs-at-an-operator-root}) which is smooth (Assumption \ref{assumption:solution-is-smooth-in-beta}), except at bifurcations.
In Section \ref{sub:obstructions-to-RT-assumptions-for-RD} we consider the obstructions to these assumptions in RD, building on the classic results of \cite{berger71}. Our main result is Theorem \ref{thm:achieving-distributions-of-beta-are-convex}, which allows easy detection of non-uniqueness of the achieving distribution, in terms of Blahut-Arimoto's Jacobian with respect to the encoder $\intermediateencoderVect$. 
The subtle differences between the Jacobian in encoder $\intermediateencoderVect$ and in marginal coordinates $\inputmarginalVect$ allows to detect the bifurcation's type in some cases (Equation \eqref{eq:flowchart-for-different-kinds-of-RD-bifurcations}), and extends the argument of \citeauthor{agmon2021critical} on critical slowing down.
This not only allows us to argue in Section \ref{sub:when-does-RTRD-follow-the-optimal-path} that root tracking for RD does indeed track the optimal solution, subject to Assumption \ref{assumption:only-cluster-vanishing-bifurcations}, but also gives us a straightforward tool to detect failures, and so to consider how they might be corrected.


We note that \cite{rose1994mapping} considers RD bifurcations of continuous source alphabets, usually assuming a squared-error distortion measure.
However, it is assumed there \cite[IV.C]{rose1994mapping} that the distortion varies continuously with $\beta$ and that the solution's support grows monotonically with $\beta$. 
As a result, bifurcations of continuous sources are classified there as either ``split'' or ``mass growing'' bifurcations.
However, both of these assumptions need \textit{not} hold for finite RD problems, as seen by the rightmost bifurcation of Figure \ref{fig:Berger_example_2.7.3}. 
cf., Sections \ref{sub:obstructions-to-RT-assumptions-for-RD} and \ref{sub:support-switching-bifurcations}.
In related contexts, \cite{parker2010symmetry} consider bifurcations using the Lyapunov-Schmidt reduction, a general-purpose tool for handling singularities. In contrast, we exploit the structure of RD problems by using the reductions defined in Section \ref{part:how-and-what}.\ref{sub:taylor-method-for-RD-root-tracking}. 
This simplifies the work with RD roots, facilitating the results below.

\subsection{Suboptimal RD curves}
\label{sub:suboptimal-RD-curves}

We proceed with the discussion in Section \ref{part:how-and-what}.\ref{sub:high-order-deriv-tensors-of-BA} around the definition \eqref{eq:BA-operator-def} of $BA_\beta$. The Blahut-Arimoto algorithm converges to a curve-achieving distribution, \cite[Theorem 1]{csiszar1974computation}, yet in a manner which depends on the choice of initial condition $\inputmarginalVect_0$, as hinted by \citeauthor{csiszar1974computation}. When $\inputmarginalVect_0$ is of full support, we have the following.

\begin{thm}[\cite{csiszar1974computation}]		\label{thm:BA-converges-to-RD-curve-for-an-initial-cond-of-full-support}
	Let $p_X$ and $d(x, \hat{x})$ define an RD problem on a finite reproduction alphabet $\hat{\cal{X}}$, and let $\beta > 0$. Let $\inputmarginalVect_0\in \Delta[\hat{\mathcal{X}}]$ be an initial condition of full support, $\inputmarginalSymbol_0(\hat{x}) > 0$ for all $\hat{x}\in \hat{\cal{X}}$. 
	Then, there exists a curve-achieving distribution $\inputmarginalVect^*$ such that $BA_\beta^n[\inputmarginalVect_0] \overset{n \to \infty}{\longrightarrow} \inputmarginalVect^*$.
\end{thm}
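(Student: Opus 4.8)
The statement is essentially Theorem~1 of \cite{csiszar1974computation}, so the plan is to reproduce that argument. First I would realize a Blahut--Arimoto iteration \eqref{eq:BA-operator-def} as one round of alternating minimization of the free-energy functional
\[
	\mathcal{F}(\bm{q}, \bm{r}) := \sum_{x, \hat{x}} p_X(x)\, q(\hat{x}\mid x) \log \frac{q(\hat{x}\mid x)}{r(\hat{x})} + \beta \sum_{x, \hat{x}} p_X(x)\, q(\hat{x}\mid x)\, d(x, \hat{x}),
\]
defined over test channels $\bm{q}$ and marginals $\bm{r} \in \Delta[\hat{\mathcal{X}}]$ (with $0 \log \tfrac{0}{0} := 0$). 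A short computation with Gibbs' inequality (equivalently, the log-sum inequality) shows that $\arg\min_{\bm{r}} \mathcal{F}(\bm{q}, \cdot)$ is exactly the marginal update \eqref{eq:marginal-eq} and $\arg\min_{\bm{q}} \mathcal{F}(\cdot, \bm{r})$ the encoder update \eqref{eq:encoder-eq}, so the sequence $\bm{r}_n \mapsto \bm{q}_n \mapsto \bm{r}_{n+1}$ generated by $BA_\beta$ is precisely the Csisz\'ar--Tusn\'ady alternating-minimization scheme applied to $\mathcal{F}$.

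Next I would record two soft facts. Each half-step weakly decreases $\mathcal{F}$, so $\mathcal{F}(\bm{q}_n, \bm{r}_n)$ is non-increasing; since $I_{\bm{q}}(X;\hat{X}) \geq 0$ and $d \geq 0$ give $\mathcal{F} \geq 0$, the values converge. The iterates $\bm{r}_n$ live in the compact simplex $\Delta[\hat{\mathcal{X}}]$ (finite alphabet), on which $BA_\beta$ is continuous --- the partition function $Z(x, \beta) = \sum_{\hat{x}'} r(\hat{x}')\, e^{-\beta d(x, \hat{x}')}$ is bounded away from $0$ there because $d$ is finite --- so every subsequential limit of $\bm{r}_n$ is a fixed point of $BA_\beta$.

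The crux, and the only place the hypothesis $\inputmarginalSymbol_0(\hat{x}) > 0$ enters, is the optimality estimate. For an arbitrary competitor $(\bm{q}^*, \bm{r}^*)$ I would expand the definitions and use the variational characterization of the two half-steps to derive a telescoping ``five-point'' inequality of the form
\[
	\mathcal{F}(\bm{q}_n, \bm{r}_n) - \mathcal{F}(\bm{q}^*, \bm{r}^*) \;\leq\; D(\bm{r}^* \,\|\, \bm{r}_{n-1}) - D(\bm{r}^* \,\|\, \bm{r}_n),
\]
with $D(\cdot \,\|\, \cdot)$ an appropriate relative entropy (of marginals, or of channels). Summing over $n$ and using $D(\bm{r}^* \,\|\, \bm{r}_0) < \infty$ --- which holds precisely because $\bm{r}_0$ has full support, forcing $\bm{r}^* \ll \bm{r}_0$ for every competitor --- yields $\sum_n [\mathcal{F}(\bm{q}_n, \bm{r}_n) - \mathcal{F}(\bm{q}^*, \bm{r}^*)] < \infty$, hence $\mathcal{F}(\bm{q}_n, \bm{r}_n) \downarrow \inf_{\bm{q}, \bm{r}} \mathcal{F}$. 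Thus every subsequential limit $\bm{r}^*$ of $\bm{r}_n$ is a global minimizer of $\mathcal{F}$; re-running the telescoping inequality with this minimizer in the role of the competitor makes $D(\bm{r}^* \,\|\, \bm{r}_n)$ eventually non-increasing, which upgrades subsequential convergence to $\bm{r}_n \to \bm{r}^*$. Finally I would invoke the Lagrangian duality theory of rate distortion (e.g.\ \cite[Ch.~2]{berger71}, \cite[Ch.~10]{Cover2006}): a pair globally minimizing $\mathcal{F}$ at the multiplier $\beta > 0$ is optimal for \eqref{eq:RD-func-def} at the distortion level $D := \mathbb{E}_{\bm{q}^* p_X}[d(x, \hat{x})]$, so $\bm{r}^*$ lies on the rate-distortion curve and is a curve-achieving distribution.

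I expect the main obstacle to be the telescoping estimate itself, together with the upgrade from subsequential to full convergence of the iterates; the finiteness of $D(\bm{r}^* \,\|\, \bm{r}_0)$ --- equivalently the full-support hypothesis --- is exactly the ingredient that makes the summation converge, and its failure is what permits a support-restricted initialization to converge to a sub-optimal fixed point, which is the phenomenon the rest of the paper exploits.
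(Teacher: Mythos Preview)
The paper does not prove this theorem at all --- it is stated with attribution to \cite{csiszar1974computation} and used as a black box. Your outline is the standard Csisz\'ar alternating-minimization argument, and it is correct in spirit; since the paper offers no proof to compare against, there is nothing further to say beyond noting that your sketch is faithful to the original source.
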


Let $\inputmarginalVect_0$ be an initial condition of full support, and denote by $\inputmarginalVect^*$ its limit under $BA_\beta$, an achieving distribution. 
If $\supp \inputmarginalVect^* := \{\hat{x}\in \hat{\mathcal{X}}: \inputmarginalSymbol^*(\hat{x}) > 0\}$ is of size 2 at least, then we can choose a non-empty proper subset $\hat{\cal{X}}'$ of $\supp \inputmarginalVect^*$. 
Next, invoke Theorem \ref{thm:BA-converges-to-RD-curve-for-an-initial-cond-of-full-support} on the RD problem restricted to $\hat{\cal{X}}'$, starting at some initial condition $\inputmarginalVect'_0$ of full-support on $\hat{\cal{X}}'$, $\inputmarginalVect'_0 \in \Delta[\hat{\mathcal{X}}']$. This yields an achieving distribution $\inputmarginalVect^{*\prime}$ which obviously differs from $\inputmarginalVect^*$, as their supports differ.
While $\inputmarginalVect^*$ achieves the rate-distortion curve of the problem we have started with, $\inputmarginalVect^{*\prime}$ achieves that of the restricted problem. 
These two RD curves may differ, as demonstrated by Figure \ref{fig:optimal-and-suboptimal-RD-solution-for-CSD-problem}.
cf., Lemma 1 at \cite[Section 2.5]{berger71}. 

In light of the above, we have the following refinement of \citeauthor{csiszar1974computation}'s Theorem \ref{thm:BA-converges-to-RD-curve-for-an-initial-cond-of-full-support},

\begin{thm}					\label{thm:BA-converges-to-RD-curve-for-an-initial-cond-of-arbitrary-support}
	Under the conditions of Theorem \ref{thm:BA-converges-to-RD-curve-for-an-initial-cond-of-full-support},
	let $\inputmarginalVect_0\in \Delta[\hat{\mathcal{X}}]$ be an initial condition, not necessarily of full support. 
	Then, there exists a distribution $\inputmarginalVect^*$ which achieves the rate-distortion curve of reduced RD problem to $\supp \inputmarginalVect_0$, 
	such that $BA_\beta^n[\inputmarginalVect_0] \overset{n \to \infty}{\longrightarrow} \inputmarginalVect^*$.
\end{thm}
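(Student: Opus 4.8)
The plan is to reduce everything to Theorem~\ref{thm:BA-converges-to-RD-curve-for-an-initial-cond-of-full-support} by observing that $BA_\beta$ leaves invariant the face of the simplex supported on $\hat{\mathcal{X}}' := \supp \inputmarginalVect_0$, and that on this face it coincides with the $BA_\beta$ operator of the RD problem restricted to $\hat{\mathcal{X}}'$.

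First I would record the support behaviour of $BA_\beta$. From the explicit form \eqref{eq:BA-operator-def}, the $\hat{x}$-coordinate $BA_\beta[\inputmarginalVect](\hat{x}) = \inputmarginal{}\sum_x p_X(x) e^{-\beta d(x,\hat{x})}/Z(x,\beta)$ carries an overall factor $\inputmarginal{}$; hence $\inputmarginal{} = 0 \Rightarrow BA_\beta[\inputmarginalVect](\hat{x}) = 0$, so $\supp BA_\beta[\inputmarginalVect] \subseteq \supp \inputmarginalVect$. (Equality in fact holds, since the remaining sum is strictly positive whenever $\supp\inputmarginalVect \neq \emptyset$, which also guarantees $Z(x,\beta) > 0$ so that the iteration stays well defined.) A one-line computation gives $\sum_{\hat{x}} BA_\beta[\inputmarginalVect](\hat{x}) = \sum_x p_X(x) = 1$, so outputs are again distributions. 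By induction, every iterate $BA_\beta^n[\inputmarginalVect_0]$ lies in $\Delta[\hat{\mathcal{X}}']$, the set of distributions supported on $\hat{\mathcal{X}}'$.

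Second, I would identify $\Delta[\hat{\mathcal{X}}']$ with the simplex of the RD problem restricted to $\hat{\mathcal{X}}'$ and check that the two BA operators agree there: for $\hat{x} \in \hat{\mathcal{X}}'$ and $\inputmarginalVect$ supported on $\hat{\mathcal{X}}'$, the partition function $Z(x,\beta) = \sum_{\hat{x}'} \inputmarginal{'} e^{-\beta d(x,\hat{x}')}$ only involves the letters of $\hat{\mathcal{X}}'$, so \eqref{eq:BA-operator-def} for the original problem reduces verbatim to \eqref{eq:BA-operator-def} for the restricted problem, while for $\hat{x} \notin \hat{\mathcal{X}}'$ both sides vanish. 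Consequently the orbit $\{BA_\beta^n[\inputmarginalVect_0]\}_{n \geq 0}$ is exactly the orbit of $\inputmarginalVect_0$ under the restricted problem's BA operator. Since $\hat{\mathcal{X}}' = \supp \inputmarginalVect_0$, the initial condition $\inputmarginalVect_0$ has full support on $\hat{\mathcal{X}}'$, so Theorem~\ref{thm:BA-converges-to-RD-curve-for-an-initial-cond-of-full-support} applied to the restricted problem produces a curve-achieving distribution $\inputmarginalVect^*$ of that problem with $BA_\beta^n[\inputmarginalVect_0] \to \inputmarginalVect^*$, which is the assertion. (When $\supp \inputmarginalVect_0 = \hat{\mathcal{X}}$ the restricted problem is the original one and this is just Theorem~\ref{thm:BA-converges-to-RD-curve-for-an-initial-cond-of-full-support}.) There is no genuine obstacle here; the only points that need a little care are the support-invariance of $BA_\beta$ together with the positivity of $Z(x,\beta)$ along the orbit, and the routine bookkeeping that identifies the restricted problem's simplex with the corresponding face of $\Delta[\hat{\mathcal{X}}]$.
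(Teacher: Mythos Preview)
Your proposal is correct and is essentially the argument the paper has in mind. The paper does not give a formal proof but presents the theorem as an immediate refinement of Csisz\'ar's Theorem~\ref{thm:BA-converges-to-RD-curve-for-an-initial-cond-of-full-support}, having already noted (in Section~\ref{sub:taylor-method-for-RD-root-tracking}) that deleting a letter $\hat{x}$ is equivalent to initializing BA with $\inputmarginalSymbol_0(\hat{x}) = 0$, which is exactly your support-invariance and identification-with-the-restricted-problem observation made explicit.
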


This shows that RD problems have sub-optimal solution branches. Namely, distributions that obtain the RD curve of a restricted problem, but not that of the unrestricted problem. 
Proceeding with the argument above Theorem \ref{thm:BA-converges-to-RD-curve-for-an-initial-cond-of-arbitrary-support}, one might expect that many sub-optimal branches typically exist at a given $\beta$ value.
Indeed, this is often the case, as demonstrated by Figure \ref{fig:optimal-and-suboptimal-RD-solution-for-CSD-problem}.

\begin{figure}[h!]
	\begin{center}
		\vspace*{15pt}
		\hspace*{-12pt}
		\includegraphics[width=1.05\textwidth]{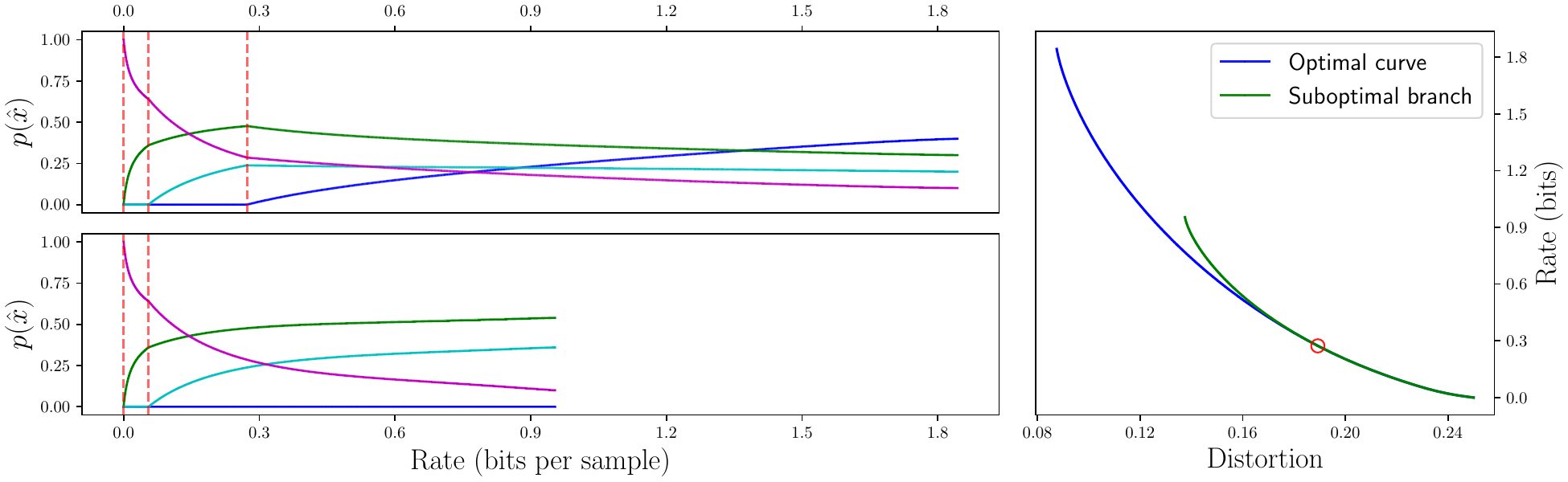}
		\caption{
			\textbf{An optimal and a sub-optimal branch merge at a cluster-vanishing bifurcation.} 
			\textbf{Top left}: The optimal solution by rate for the problem in Figure \ref{fig:reconstructing-an-RD-solution-curve}; BA with uniform initial conditions.
			\textbf{Bottom left}: The solution of the restricted problem, with the blue cluster deleted (equivalently, initialized to zero).
			\textbf{Left}: The solutions of the restricted and unrestricted problems differ above the bifurcation at $R_c \approx 0.273$ (rightmost dashed red vertical), merging into one at the bifurcation point and to its left.
			\textbf{Right}: The rate-distortion curves of both branches intersect at this point, marked by a red circle.
		}
		\vspace*{-10pt}
		\label{fig:optimal-and-suboptimal-RD-solution-for-CSD-problem}
	\end{center}
\end{figure}

\subsection{Cluster-vanishing bifurcations (support shrinking)}
\label{sub:cluster-vanishing-bifs-are-bifs}

The support of an optimal RD solution typically changes with $\beta$. 
A priori, it may not be clear that if the support of an optimal solution changes then there must be a bifurcation. 
Namely, that the number of fixed points of $BA_\beta$ must change.
We prove that if the support of an optimal solution shrinks, then two roots must intersect and merge into one. 
The argument for support-switching bifurcations is subtler and deferred to Section \ref{sub:support-switching-bifurcations}.

\medskip 
When the smallest distortion possible is desired, then the channel $x\mapsto \argmin_{\hat{x}} d(x, \hat{x})$ is clearly optimal; it is often of full support. 
While for zero rate, the constant encoding to a letter in $\argmin_{\hat{x}} \bb{E}[d(X, \hat{x})]$ is optimal.
In between, the support of achieving distributions $\inputmarginalVect^*_\beta$ usually shrinks gradually as a larger distortion $D$ is allowed ($\beta$ decreased). 
Suppose that the support of $\inputmarginalVect^*_\beta$ shrinks at $\beta_c$, $\supp \inputmarginalVect^*_{\beta^-} \subsetneqq \supp \inputmarginalVect^*_{\beta^+}$. 
At $\beta_c$ itself, $\inputmarginalVect^*_{\beta_c}$ may be considered as a solution of the reduced problem on $\supp \inputmarginalVect^*_{\beta^-}$ by deleting letters $\hat{x}$ outside its support, \cite[Lemma 1 in Section 2.5]{berger71}. 
By Theorem \ref{thm:properties-of-BA-jacobian-from-ISIT-CSD-paper} in Section \ref{sub:encoders-marginal-derivatives}, the Jacobian $D_{\inputmarginalVect} \left( Id - BA_\beta \right)\rvert_{\inputmarginalVect^*_{\beta_c}}$ at the reduced solution is non-singular. 
Thus, by the Implicit Function Theorem, $Id - BA_\beta$ on the reduced problem has a unique root $\inputmarginalVect'_\beta$ through $(\inputmarginalVect^*_{\beta_c}, \beta_c)$. 
While below $\beta_c$ it coincides with $\inputmarginalVect^*_{\beta^-}$, above $\beta_c$ it must differ from $\inputmarginalVect^*_{\beta^+}$ as their supports differ, although both are fixed points of $BA_\beta$ \eqref{eq:BA-operator-def}. This shows that two roots of $Id - BA_\beta$ must intersect at $\beta_c$, as demonstrated to the left of Figure \ref{fig:optimal-and-suboptimal-RD-solution-for-CSD-problem}.

As a side note, we comment that if an achieving distribution $\inputmarginalVect^*_\beta$ is known to be unique, then $\inputmarginalVect_\beta'$ is unstable above $\beta_c$. For, adding a small non-zero perturbation to the coordinates of $\inputmarginalVect_\beta'$ will then result in an initial condition which converges to $\inputmarginalVect^*_\beta$ under $BA_\beta$, by \citeauthor{csiszar1974computation}'s Theorem \ref{thm:BA-converges-to-RD-curve-for-an-initial-cond-of-full-support}.

\subsection{Obstructions to the root-tracking assumptions for RD}
\label{sub:obstructions-to-RT-assumptions-for-RD}

We discuss the obstructions to Assumptions \ref{assumption:operator-root-is-a-function-of-beta} and \ref{assumption:solution-is-smooth-in-beta} (from Section \ref{part:how-and-what}.\ref{sub:beta-derivs-at-an-operator-root}) in RD context. Namely, that the achieving distributions $\inputmarginalVect$ can be written as a smooth function $\inputmarginalVect_\beta$ of $\beta$. 
We also discuss a simple method to detect such obstructions.

\medskip
To that end, we dive into the subtle differences between three possible choices: of fixing a multiplier value $\beta > 0$, a point $(D, R)$ on the rate-distortion curve, or a distribution $\intermediateencoder{}{}$ which achieves the RD curve.
By \cite[Theorem 2.5.2]{berger71}, for every point $(D, R)$ on the RD curve there is some $\beta$ value such that $(D, R)$ can be generated parametrically\footnote{ By Equations (2.5.15) and (2.5.16) there; see also the discussion following Equation (2.5.19) there.} from $\beta$. This justifies talking about ``the curve points $(D, R)$ of $\beta$'', or ``the achieving distributions $\intermediateencoderVect$ of $\beta$''. 
However, to a value of $\beta$ there may correspond more than a single point on the rate-distortion curve; and a point on the rate-distortion curve may be achieved by more than a single distribution. 
We described these two kinds of non-uniquenesses schematically by,
\begin{equation}		\label{eq:schema-for-non-uniquness-in-RD}
	\xymatrix{
		\beta \quad \ar@{~>}[r]	&
		\quad \text{curve points } (D, R) \quad  \ar@{~>}[r]		&
		\quad \text{achieving distributions } \intermediateencoder{}{}
	}
\end{equation}

For non-uniqueness to the left of \eqref{eq:schema-for-non-uniquness-in-RD}, the multiplier $-\beta$ is the slope of the RD curve (Theorem 2.5.1 in \cite{berger71}), and is a continuous function $\beta(D)$ of the distortion $D$, on the open interval $(D_{min}, D_{max})$ (Theorem 2.5.5). 
As $R(D)$ is convex, its slope $-\beta$ is monotonically non-decreasing in $D$.
Thus, $D \in (D_{min}, D_{max})$ can be written as a function of $\beta$ if and only if $-\beta(D)$ is strictly increasing, in which case $D(\beta)$ is continuous. When $\beta(D)$ is constant in $D$, the RD curve has an entire linear segment which corresponds to a single $\beta$ value. 
In particular, the achieving distributions cannot be written then as a function of $\beta$, breaking Assumption \ref{assumption:operator-root-is-a-function-of-beta}. e.g., the right bifurcation in Figure \ref{fig:Berger_example_2.7.3}.
See \cite[Section 2.7]{berger71} for further examples.

However, even if the RD curve has no linear segments, a distribution $\intermediateencoderVect$ to the right of \eqref{eq:schema-for-non-uniquness-in-RD} which achieves a particular curve point $(D, R)$ need not be unique.
For example, suppose that a column in the distortion matrix is duplicate ($d$ is degenerate): there are $\hat{x}_1 \neq \hat{x}_2$ such that $d(\cdot, \hat{x}_1) = d(\cdot, \hat{x}_2)$. The clusters $\hat{x}_1$ and $\hat{x}_2$ are then indistinguishable for any practical purpose. e.g., if $\inputmarginalVect$ is an achieving distribution with $\inputmarginalSymbol_{1 2} := \inputmarginal{_1} + \inputmarginal{_2} > 0$, then it is not difficult to see that dividing $r_{12}$ arbitrarily between $\inputmarginal{_1}$ and $\inputmarginal{_2}$ also yields an achieving distribution $\inputmarginalVect'$: for $\lambda \in [0, 1]$, set $\inputmarginalSymbol'(\hat{x}_1) := \lambda \cdot \inputmarginalSymbol_{1 2}, \; \inputmarginalSymbol'(\hat{x}_2) := (1 - \lambda) \cdot \inputmarginalSymbol_{1 2}$ and $\inputmarginalSymbol'(\hat{x}) := \inputmarginalSymbol(\hat{x})$ otherwise.
Indeed, \citeauthor{berger71} notes that ``if the distortion matrix exhibits certain form of symmetry and degeneracy, there can be many choices of [a minimizer]''. 
While this breaks neither of our assumptions, the solution may then have multiple parameterizations in terms of $\beta$.

Non-uniqueness to the right of \eqref{eq:schema-for-non-uniquness-in-RD} (multiple distributions achieving a single curve point) implies that the simplex contains an entire line of achieving distributions,

\begin{thm}[Theorem 2.4.2 at \cite{berger71}]		\label{thm:achieving-distributions-at-curve-point-are-convex-berger}
	If the conditional probability distributions $\intermediateencoderVect'$ and $\intermediateencoderVect''$ both achieve a point $(D, R(D))$ on the rate-distortion curve, then so do their convex combinations $\lambda \intermediateencoderVect' + (1 - \lambda) \intermediateencoderVect''$, for any $\lambda \in [0, 1]$. 
\end{thm}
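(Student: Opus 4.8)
\medskip
\noindent\emph{Proof strategy.}
The plan is to reduce the statement to two structural facts about problem~\eqref{eq:RD-func-def}, both with the source $p_X$ held fixed: (i) the expected distortion $\intermediateencoderVect \mapsto \bb{E}_{\intermediateencoderVect\, p_X}\!\left[ d(x,\hat{x}) \right]$ is an \emph{affine} functional of the test channel, and (ii) the mutual information, written $I_{\intermediateencoderVect}(X;\hat{X})$ to emphasize its dependence on the channel, is a \emph{convex} functional of $\intermediateencoderVect$. Granting these, the argument is short: a convex combination $\intermediateencoderVect_\lambda := \lambda \intermediateencoderVect' + (1-\lambda)\intermediateencoderVect''$ of test channels is again a test channel; call $\intermediateencoderVect$ a distribution \emph{achieving} $(D,R(D))$ if $\bb{E}_{\intermediateencoderVect\, p_X}[d] = D$ and $I_{\intermediateencoderVect}(X;\hat{X}) = R(D)$; fix $\lambda \in (0,1)$ (the cases $\lambda \in \{0,1\}$ being trivial). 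By (i), $\bb{E}_{\intermediateencoderVect_\lambda\, p_X}[d] = \lambda D + (1-\lambda) D = D$, so $\intermediateencoderVect_\lambda$ is feasible for the distortion level $D$, whence $I_{\intermediateencoderVect_\lambda}(X;\hat{X}) \geq R(D)$ by the definition of $R(D)$ as a minimum. By (ii), $I_{\intermediateencoderVect_\lambda}(X;\hat{X}) \leq \lambda I_{\intermediateencoderVect'}(X;\hat{X}) + (1-\lambda) I_{\intermediateencoderVect''}(X;\hat{X}) = \lambda R(D) + (1-\lambda) R(D) = R(D)$. Together these force $I_{\intermediateencoderVect_\lambda}(X;\hat{X}) = R(D)$, so $\intermediateencoderVect_\lambda$ achieves $(D,R(D))$ as well.

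First I would dispose of (i): it is immediate, since $\bb{E}_{\intermediateencoderVect\, p_X}[d] = \sum_{x,\hat{x}} p_X(x)\, \intermediateencoder{}{}\, d(x,\hat{x})$ is manifestly linear in the entries of $\intermediateencoderVect$, so the feasible set $\{\intermediateencoderVect : \bb{E}_{\intermediateencoderVect\, p_X}[d] \leq D\}$ is convex. The hard part will be (ii), the channel-convexity of the rate, which I would obtain from the joint convexity of the Kullback--Leibler divergence. Writing $\outputmarginal{} := \sum_{x} p_X(x)\, \intermediateencoder{}{}$ for the output marginal induced by $\intermediateencoderVect$ (the map \eqref{eq:marginal-eq}, which is linear in $\intermediateencoderVect$), one has the identity $I_{\intermediateencoderVect}(X;\hat{X}) = \sum_{x} p_X(x)\, D_{\mathrm{KL}}\bigl( \intermediateencoderSymbol(\cdot\,|\,x) \,\big\|\, \outputmarginalVect \bigr)$. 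Since $(\mu,\nu) \mapsto D_{\mathrm{KL}}(\mu\|\nu)$ is jointly convex (the log-sum inequality; e.g.\ \cite[\S 2.7]{Cover2006}) and $\intermediateencoderVect \mapsto (\intermediateencoderSymbol(\cdot\,|\,x), \outputmarginalVect)$ is affine, each summand is convex in $\intermediateencoderVect$, being the composition of a jointly convex function with an affine map; a non-negative combination of convex functions is convex, which is exactly the channel-convexity of mutual information, \cite[\S 2.7]{Cover2006}.

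The main obstacle is thus concentrated in fact (ii): once the convexity of $I_{\intermediateencoderVect}(X;\hat{X})$ in the channel is in hand, what remains is the standard observation that a convex objective minimized over a convex feasible set attains its minimum along the entire segment joining any two minimizers. One technical point I would flag explicitly is the precise reading of ``achieves the curve point $(D,R(D))$'': if it is taken to mean only $\bb{E}_{\intermediateencoderVect\, p_X}[d] \leq D$ together with rate $R(D)$, the same computation still goes through, since then $\bb{E}_{\intermediateencoderVect_\lambda\, p_X}[d] \leq D$ and, $R$ being non-increasing, $I_{\intermediateencoderVect_\lambda}(X;\hat{X}) \geq R\!\left(\bb{E}_{\intermediateencoderVect_\lambda\, p_X}[d]\right) \geq R(D)$ still holds, while the convexity bound $I_{\intermediateencoderVect_\lambda}(X;\hat{X}) \leq R(D)$ is unaffected.
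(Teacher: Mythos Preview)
Your proof is correct. Note, however, that the paper does not itself prove this statement --- it is quoted from \cite{berger71} without proof. What the paper \emph{does} prove is the generalization immediately following it (Theorem~\ref{thm:achieving-distributions-of-beta-are-convex}: the set of achieving distributions at a fixed $\beta$ is convex), and it is instructive to compare your route to that one.

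Both arguments rest on the same two structural facts you isolate: linearity of the expected distortion and convexity of $I_{\intermediateencoderVect}(X;\hat X)$ in the channel. You work on the primal side: affinity keeps $\intermediateencoderVect_\lambda$ feasible at distortion $D$, optimality gives $I_{\intermediateencoderVect_\lambda}\geq R(D)$, and convexity gives $I_{\intermediateencoderVect_\lambda}\leq R(D)$. The paper's proof of the generalization works instead with the Lagrangian $\mathcal L_\beta[\intermediateencoderVect]=I_{\intermediateencoderVect}(X;\hat X)+\beta\,\bb E_{\intermediateencoderVect\,p_X}[d]$: since $\mathcal L_\beta$ is convex in $\intermediateencoderVect$, any two minimizers have the entire segment between them in the minimizing set, and the envelope characterization of $R(D)$ then places $\intermediateencoderVect_\lambda$ on the curve. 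Your approach is slightly more elementary and self-contained (you even derive channel-convexity of $I$ from joint convexity of $D_{\mathrm{KL}}$ rather than citing it), and it pins down the exact curve point $(D,R(D))$; the Lagrangian route is what allows the paper to handle the ``fixed $\beta$'' case where the achieving distributions may sit at different distortions along a linear segment of the curve.
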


We generalize this as follows, to account for any non-uniqueness at \eqref{eq:schema-for-non-uniquness-in-RD}.

\begin{thm}		\label{thm:achieving-distributions-of-beta-are-convex}
	For any $\beta > 0$ value, the set of achieving distributions which corresponds to $\beta$ is convex.
\end{thm}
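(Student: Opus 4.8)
The plan is to identify the set of $\beta$-achieving distributions with the set of minimizers of the rate-distortion Lagrangian functional $\mathcal{L}_\beta[\intermediateencoderVect] := I(X;\hat{X}) + \beta\, \bb{E}_{\intermediateencoderVect p_X}[d(x,\hat{x})]$, regarded as a functional on the convex set of test channels $\prod_{x}\Delta[\hat{\mathcal{X}}]$, and then to invoke convexity of $\mathcal{L}_\beta$. The first step is to prove that $\intermediateencoderVect$ achieves a curve point corresponding to $\beta$ (in the sense of \cite[Theorem 2.5.2]{berger71}; cf.\ \eqref{eq:schema-for-non-uniquness-in-RD}) if and only if $\intermediateencoderVect \in \argmin \mathcal{L}_\beta$. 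For the forward direction, suppose $\intermediateencoderVect$ achieves $(D, R(D))$ and that $-\beta$ is the slope of a supporting line of $R$ at $D$ (existence, and the link to this particular $\beta$, follow from \cite[Theorems 2.5.1, 2.5.5]{berger71}, using convexity of $R$ and monotonicity of its slope). Then for any test channel $\intermediateencoderVect'$ with distortion $D' := \bb{E}_{\intermediateencoderVect' p_X}[d]$ and rate $I' := I_{\intermediateencoderVect'}(X;\hat{X})$ one has $I' \geq R(D') \geq R(D) - \beta(D'-D)$, hence $\mathcal{L}_\beta[\intermediateencoderVect'] = I' + \beta D' \geq R(D) + \beta D = \mathcal{L}_\beta[\intermediateencoderVect]$, so $\intermediateencoderVect$ minimizes $\mathcal{L}_\beta$. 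Conversely, if $\intermediateencoderVect$ minimizes $\mathcal{L}_\beta$ and $D := \bb{E}_{\intermediateencoderVect p_X}[d]$, then testing against channels of distortion $\leq D$ and using $\beta>0$ gives $R(D) = I_{\intermediateencoderVect}(X;\hat X)$, and testing against channels achieving arbitrary distortions shows $-\beta \in \partial R(D)$, so $(D,R(D))$ is a curve point of $\beta$.

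Granting this characterization, the theorem is immediate: $\bb{E}_{\intermediateencoderVect p_X}[d]$ is linear in $\intermediateencoderVect$, while $I(X;\hat{X})$ is a convex function of the test channel $\intermediateencoderVect$ for fixed $p_X$ (e.g.\ \cite[Theorem 2.7.4]{Cover2006}); hence $\mathcal{L}_\beta$ is convex on the convex domain $\prod_x\Delta[\hat{\mathcal{X}}]$, and the set of its minimizers --- which by the first step is exactly the set of $\beta$-achieving distributions --- is convex.

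An alternative, more elementary route stays closer to Berger's Theorem~\ref{thm:achieving-distributions-at-curve-point-are-convex-berger}: the curve points indexed by a fixed $\beta$ form a (possibly degenerate) line segment of slope $-\beta$ on the graph of $R$, again by monotonicity of the slope, so if $\intermediateencoderVect'$ and $\intermediateencoderVect''$ achieve $(D',R(D'))$ and $(D'',R(D''))$ then $\intermediateencoderVect_\lambda := \lambda\intermediateencoderVect' + (1-\lambda)\intermediateencoderVect''$ has distortion $D_\lambda := \lambda D' + (1-\lambda)D''$ by linearity and rate $I_{\intermediateencoderVect_\lambda}(X;\hat X) \leq \lambda R(D') + (1-\lambda) R(D'') = R(D_\lambda)$ by convexity of $I$ followed by collinearity along the linear segment; since always $I_{\intermediateencoderVect_\lambda}(X;\hat X)\geq R(D_\lambda)$, equality holds and $\intermediateencoderVect_\lambda$ is $\beta$-achieving. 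I expect the only genuinely delicate point --- the one needing the most care in the write-up --- is the bookkeeping around the phrase ``corresponds to $\beta$'': using the results of \cite[Chapter 2]{berger71} to confirm that a single $\beta$ indexes such a linear segment of $R$ and is the associated subgradient, so that the supporting-line inequalities and the collinearity step above are actually justified. The convexity facts themselves are entirely standard.
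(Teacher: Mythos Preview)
Your proposal is correct and your first route is essentially the paper's argument: define the Lagrangian $\mathcal{L}_\beta[\intermediateencoderVect] = I + \beta\,\bb{E}[d]$, observe that achieving distributions for $\beta$ attain its minimum $\mathcal{L}_\beta^*$, use convexity of $I$ in $\intermediateencoderVect$ and linearity of the distortion term to get $\mathcal{L}_\beta[\intermediateencoderVect_\lambda]\le \mathcal{L}_\beta^*$, hence equality, and conclude that $\intermediateencoderVect_\lambda$ achieves the curve. The only difference is emphasis: you spend more effort making the equivalence ``achieves a curve point of $\beta$'' $\Leftrightarrow$ ``minimizes $\mathcal{L}_\beta$'' explicit via supporting lines, whereas the paper dispatches this in one line by invoking the envelope characterization of the rate-distortion curve (lines of slope $-\beta$ and intercept $\mathcal{L}_\beta^*$) from \cite{berger71}; your alternative route via collinearity along the linear segment is a harmless variant of the same idea.
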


\begin{proof}[Proof of Theorem \ref{thm:achieving-distributions-of-beta-are-convex}]
	For a conditional probability distribution $\intermediateencoderVect$ and $\beta > 0$, denote by $\mathcal{L}_\beta[\intermediateencoderVect]$ the value of the RD Lagrangian there, 
	$I(\intermediateencoderVect; p_X) + \beta \; \bb{E}_{\intermediateencoder{}{} p_X(x)} \left[ d(x, \hat{x}) \right]$.
	
	Suppose that $\intermediateencoderVect' \neq \intermediateencoderVect''$ both achieve the RD curve \eqref{eq:RD-func-def} at the same $\beta$ value. 
	Therefore, $\mathcal{L}_\beta[\intermediateencoderVect']$ and $\mathcal{L}_\beta[\intermediateencoderVect'']$ must both equal the minimal Lagrangian value $\mathcal{L}_\beta^*$.
	Let $\intermediateencoderVect_\lambda := \lambda \intermediateencoderVect' + (1 - \lambda) \intermediateencoderVect''$ for $\lambda \in [0, 1]$. We need to prove that $\intermediateencoderVect_\lambda$ also achieves a point on the RD curve.
	Since mutual information is convex in $\intermediateencoderVect$ and the expectation term in $\mathcal{L}$ linear, we have that
	\begin{equation}
		\mathcal{L}_\beta[\intermediateencoderVect_\lambda] \leq
		\lambda \mathcal{L}_\beta[\intermediateencoderVect'] + (1 - \lambda) \mathcal{L}_\beta[\intermediateencoderVect''] =
		\mathcal{L}_\beta^* \;.
	\end{equation}
	But $\mathcal{L}_\beta^*$ is the minimal value of the Lagrangian, and so equality follows.
	
	Finally, the RD curve is the envelope of lines of slope $-\beta$ and intercept $\mathcal{L}_\beta^*$ along the $R$-axis, \cite{berger71}. Thus, $\intermediateencoderVect_\lambda$ indeed achieves the curve \eqref{eq:RD-func-def}.
\end{proof}

Theorem \ref{thm:achieving-distributions-of-beta-are-convex} yields two important corollaries. 
Recall the definition \eqref{eq:BA-operator-def} of $BA_\beta[\inputmarginalVect]$ (Section \ref{part:how-and-what}.\ref{sub:high-order-deriv-tensors-of-BA}), where it is considered as an operator in the marginals $\inputmarginalVect$. 
By abuse of notation, write $BA_\beta[\intermediateencoderVect]$ for the evaluation of $BA_\beta$ at $\intermediateencoderVect$, now considered as an operator in the encoders $\intermediateencoderVect$ (see comments there).
If for a particular $\beta$ value there is more than one achieving distribution $\intermediateencoderVect' \neq \intermediateencoderVect''$, then by Theorem \ref{thm:achieving-distributions-of-beta-are-convex}, the entire line section connecting $\intermediateencoderVect'$ to $\intermediateencoderVect''$ is comprised of fixed points of $BA_\beta$, or equivalently roots of $Id - BA_\beta$. Therefore, its Jacobian with respect to these coordinates must vanish along the vector pointing from $\intermediateencoderVect'$ to $\intermediateencoderVect''$.

\begin{cor}[Non-uniqueness of RD solutions is detectable by $\intermediateencoderVect$-Jacobian]
	\label{cor:multiple-RD-sols-are-detectable-by-Jacobian}
	If at $\beta > 0$ there is more than one achieving distribution $\intermediateencoderVect$, then $\ker D_{\intermediateencoderVect}(Id - BA_\beta)[\intermediateencoderVect] \neq \{\bm{0}\}$.
\end{cor}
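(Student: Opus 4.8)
The plan is to read this off directly from the convexity of the achieving set (Theorem \ref{thm:achieving-distributions-of-beta-are-convex}) together with the fact that an achieving distribution is a root of $Id - BA_\beta$. First I would fix $\beta > 0$, pick two distinct achieving distributions $\intermediateencoderVect' \neq \intermediateencoderVect''$ corresponding to it, and set $\bm{v} := \intermediateencoderVect'' - \intermediateencoderVect' \neq \bm{0}$. By Theorem \ref{thm:achieving-distributions-of-beta-are-convex} the entire segment $\intermediateencoderVect_\lambda := \intermediateencoderVect' + \lambda \bm{v}$, $\lambda \in [0,1]$, consists of achieving distributions; in particular each $\intermediateencoderVect_\lambda$ is a normalized conditional distribution. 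Since an achieving distribution is necessarily a fixed point of the BA equations \eqref{eq:encoder-eq}--\eqref{eq:marginal-eq} (the optimality condition recalled after \eqref{eq:marginal-eq}), read in the encoder coordinates this says $(Id - BA_\beta)[\intermediateencoderVect_\lambda] = \bm{0}$ for every $\lambda \in [0,1]$, where $BA_\beta[\intermediateencoderVect]$ is the reverse-order composition \eqref{eq:marginal-eq} then \eqref{eq:encoder-eq}, as in the proof of Theorem \ref{thm:achieving-distributions-of-beta-are-convex}.

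Next I would differentiate the vector-valued map $\lambda \mapsto (Id - BA_\beta)[\intermediateencoderVect_\lambda]$, which is identically $\bm{0}$ on $[0,1]$, at $\lambda = 0$. It is the composition of the affine path $\lambda \mapsto \intermediateencoderVect_\lambda$ with the smooth operator $Id - BA_\beta$ (smoothness of $BA_\beta$ is the standing assumption of Section \ref{sub:beta-derivs-at-an-operator-root}), so the chain rule gives
\[
	\bm{0} = \frac{d}{d\lambda}\Big\rvert_{\lambda = 0} (Id - BA_\beta)[\intermediateencoderVect_\lambda]
	= D_{\intermediateencoderVect}(Id - BA_\beta)[\intermediateencoderVect']\,\bm{v} \;.
\]
Hence $\bm{v} \in \ker D_{\intermediateencoderVect}(Id - BA_\beta)[\intermediateencoderVect']$, and as $\bm{v} \neq \bm{0}$ this kernel is nontrivial. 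Running the same argument at an arbitrary $\lambda_0 \in [0,1]$ shows $\ker D_{\intermediateencoderVect}(Id - BA_\beta)[\intermediateencoderVect_{\lambda_0}] \neq \{\bm{0}\}$ along the whole segment, which in particular covers the statement.

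I do not expect a genuine obstacle: all the substance is already packaged in Theorem \ref{thm:achieving-distributions-of-beta-are-convex}. The only point needing a moment's care is the coordinate bookkeeping --- one must work with the encoder-coordinate operator $BA_\beta[\intermediateencoderVect]$ rather than the marginal-coordinate one, so that ``fixed point of $BA_\beta$'' is literally ``root of $Id - BA_\beta$ in the coordinates appearing in the statement'', and one should note (as is already built into Theorem \ref{thm:achieving-distributions-of-beta-are-convex}) that convex combinations of achieving encoders are again admissible conditional distributions, so the path $\intermediateencoderVect_\lambda$ stays in the domain where $BA_\beta$ and its Jacobian are defined.
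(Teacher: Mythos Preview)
Your proposal is correct and follows essentially the same argument as the paper: the paper notes, just before stating the corollary, that by Theorem~\ref{thm:achieving-distributions-of-beta-are-convex} the segment from $\intermediateencoderVect'$ to $\intermediateencoderVect''$ consists entirely of roots of $Id - BA_\beta$, so the Jacobian must vanish along the direction $\intermediateencoderVect'' - \intermediateencoderVect'$. Your write-up simply makes the chain-rule step and the coordinate bookkeeping explicit, and your remark that the kernel is nontrivial at every point of the segment matches the paper's subsequent observation that the test can be performed at any achieving distribution.
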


The Jacobian in these coordinates is given explicitly by Proposition \ref{prop:Jacobian-of-BA-in-direct-encoder-coordinates} below.
An important practical aspect of this Corollary is that the kernel may be calculated at \textit{any} of the problem's achieving distributions $\intermediateencoderVect$ which correspond to $\beta$; it does not matter when testing for uniqueness.
If $\inputmarginalVect^*$ is a fixed point of BA when initialized at an arbitrary marginal $\inputmarginalVect_0$ (\textit{not} necessarily of full support) then, by Theorem \ref{thm:BA-converges-to-RD-curve-for-an-initial-cond-of-arbitrary-support}, it achieves the curve of the reduced problem to $\inputmarginalVect_0$. 
Thus, $\ker D_{\intermediateencoderVect}(Id - BA_\beta)[\intermediateencoderVect^*]$ gives a simple tool to test whether there might be additional distributions achieving a reduced problem at the corresponding encoder $\intermediateencoderVect^*$. 

We note that Corollary \ref{cor:multiple-RD-sols-are-detectable-by-Jacobian} is not merely a logical negation of the Implicit Function Theorem, as the latter is local in nature. Compare for example to the lines intersecting parabola Example in Section \ref{part:how-and-what}.\ref{subsub:lines-intersecting-parabola-example}. 
So long that the base $\beta_0$ of the expansion there is above the critical point, the problem has two distinct solutions, yet the Jacobian there is non-singular at each. 
Unlike Corollary \ref{cor:multiple-RD-sols-are-detectable-by-Jacobian} for RD, knowledge of one solution in that example does not allow us to detect that the other exits.
Instead, this discussion boils down to the following. cf., \cite[Section 2.3]{kuznetsov2004elements}.

\begin{cor}		\label{cor:there-are-no-global-bifs-in-RD}
	There are no global bifurcations in rate-distortion problems.
\end{cor}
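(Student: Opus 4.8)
The plan is to show that every bifurcation of $Id - BA_\beta$ \eqref{eq:RD-operator-def} --- every $\beta$ at which the roots change qualitatively --- is \emph{local}, i.e.\ accompanied by a singular Jacobian at a fixed point, so that none is global. The hinge is Corollary \ref{cor:multiple-RD-sols-are-detectable-by-Jacobian}: whenever a value $\beta > 0$ admits two distinct achieving distributions, the encoder-Jacobian $D_{\intermediateencoderVect}(Id - BA_\beta)$ has a non-trivial kernel at \emph{each} of them, because Theorem \ref{thm:achieving-distributions-of-beta-are-convex} forces a whole segment of fixed points to lie between the two. So non-uniqueness of the optimal solution, a priori a global statement about the simplex, is already visible in the linearization at any one of those fixed points.

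First I would prove the converse half: if the encoder-Jacobian is non-singular at an achieving distribution $\intermediateencoderVect_{\beta_0}$, then $\intermediateencoderVect_{\beta_0}$ is the unique achieving distribution at $\beta_0$ (Corollary \ref{cor:multiple-RD-sols-are-detectable-by-Jacobian}), and, $BA_\beta$ \eqref{eq:BA-operator-def} being analytic, the Implicit Function Theorem \cite[I.1.7]{kielhofer2011bifurcation} yields a unique analytic branch $\intermediateencoderVect_\beta$ of roots through it for $\beta$ near $\beta_0$. The mileage over the bare Implicit Function Theorem --- which the remark preceding Corollary \ref{cor:multiple-RD-sols-are-detectable-by-Jacobian} stresses is the whole point --- is to upgrade \emph{locally unique root} to \emph{locally unique achieving distribution}: were there achieving distributions $\intermediateencoderVect'_{\beta_n} \neq \intermediateencoderVect_{\beta_n}$ for some sequence $\beta_n \to \beta_0$, compactness of the product of probability simplices, joint continuity of the RD Lagrangian $\mathcal{L}_\beta[\intermediateencoderVect]$ in $(\intermediateencoderVect, \beta)$, and continuity of the optimal value $\mathcal{L}_\beta^*$ would produce a subsequential limit $\intermediateencoderVect_\infty$ still achieving the curve at $\beta_0$, hence $\intermediateencoderVect_\infty = \intermediateencoderVect_{\beta_0}$; but then $\intermediateencoderVect'_{\beta_n}$ and $\intermediateencoderVect_{\beta_n}$ are two roots of $Id - BA_{\beta_n}$ inside the neighbourhood on which the Implicit Function Theorem asserts a \emph{unique} root, impossible for $n$ large.

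Next I would assemble the two halves. A bifurcation point $\beta_c$ --- any $\beta_c$ at which the number of roots, the support of the optimal solution, or the continuity of the optimal branch changes --- must be one where the encoder-Jacobian is singular at some achieving distribution: otherwise the previous paragraph shows the optimal branch crosses $\beta_c$ as a single analytic curve with no nearby competitors, so nothing qualitative happens there. Singularity of a Jacobian at a fixed point is, by definition, read off from the linearization there, i.e.\ from an arbitrarily small neighbourhood --- precisely what makes a bifurcation local rather than global, \cite[Section 2.3]{kuznetsov2004elements}. To cover sub-optimal roots, I would invoke Theorem \ref{thm:BA-converges-to-RD-curve-for-an-initial-cond-of-arbitrary-support}: a sub-optimal root is an achieving distribution of the RD problem restricted to its support, so the same dichotomy applies verbatim to each restriction. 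I would also flag that the whole argument must be carried out in the full encoder coordinates $\intermediateencoderVect$ rather than in $\inputmarginalVect$, since projecting to cluster-marginals can make a support-switching bifurcation \emph{appear} global.

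The step I expect to be the real obstacle is the compactness/semicontinuity argument above: making precise that the minimizer set of $\mathcal{L}_\beta$ varies upper-semicontinuously with $\beta$, so that a subsequential limit of achieving distributions at $\beta_n \to \beta_0$ still achieves the curve at $\beta_0$. This needs joint continuity of $I(X; \hat{X}) + \beta\,\bb{E}_{\intermediateencoder{}{} p_X(x)}\left[ d(x, \hat{x}) \right]$ on the compact domain and continuity of $\mathcal{L}_\beta^*$ in $\beta$, both of which can be extracted from \cite[Section 2.5]{berger71}; with those in place, the Implicit Function Theorem step and the local/global dichotomy are routine.
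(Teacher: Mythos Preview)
Your proposal is correct and rests on the same hinge as the paper --- Theorem \ref{thm:achieving-distributions-of-beta-are-convex} forcing a segment of achieving distributions, hence a singular encoder-Jacobian at each (Corollary \ref{cor:multiple-RD-sols-are-detectable-by-Jacobian}) --- but the paper does not actually supply a proof beyond that. It simply records the corollary as what ``this discussion boils down to'', citing \cite[Section 2.3]{kuznetsov2004elements} for the definition of global bifurcations: since any multiplicity of achieving distributions is visible in the linearization at \emph{any one} of them, no bifurcation can be of the ``undetectable from small neighbourhoods of fixed points'' type. Your additional machinery --- the upper-semicontinuity/compactness argument to upgrade local uniqueness of roots to local uniqueness of achieving distributions, and the extension to sub-optimal branches via Theorem \ref{thm:BA-converges-to-RD-curve-for-an-initial-cond-of-arbitrary-support} --- is sound and genuinely sharpens the claim, but goes well beyond what the paper records; the paper is content with the one-line observation that local detectability (Corollary \ref{cor:multiple-RD-sols-are-detectable-by-Jacobian}) rules out global bifurcations by definition.
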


We note, however, that bifurcations may not be detectable unless encoder coordinates $\intermediateencoderVect$ are used, as explained below.
The Jacobian's explicit form in these coordinates is given below (proof in Section \ref{part:proofs}.\ref{sub:proof-of-prop:Jacobian-of-BA-in-direct-encoder-coordinates}). 
cf., the Jacobian $D_{\inputmarginalVect} \left( Id - BA_\beta \right)[\inputmarginalVect]$ in marginal coordinates (Corollary \ref{cor:BA-jacobian} in Section \ref{sub:encoders-marginal-derivatives}). 

\begin{prop}			\label{prop:Jacobian-of-BA-in-direct-encoder-coordinates}
	Let $\intermediateencoderVect$ be a conditional distribution, $\outputmarginalVect$ the marginal defined by it via the marginal Equation \eqref{eq:marginal-eq}.
	Then, the Jacobian of $BA_\beta$ in encoder coordinates is given by the matrix 
	\begin{equation}			\label{eq:BA-jacobian-wrt-encoder}
		(D_{\intermediateencoderVect} BA_\beta[\intermediateencoderVect])_{(\hat{x}, x), (\hat{x}', x')} =
		\frac{e^{-\beta d(x, \hat{x}')}}{\sum_{\hat{x}''} \outputmarginal{''} e^{-\beta d(x, \hat{x}'')}} 
		\left[ \delta_{\hat{x}, \hat{x}'} - BA_\beta[\intermediateencoderVect](\hat{x}|x) \right] p_X(x')
	\end{equation}
	whose rows and columns are indexed by $(\hat{x}, x)$ and $(\hat{x}', x') \in \hat{\mathcal{X}}\times \mathcal{X}$, respectively. When $\forall \hat{x} \; \outputmarginal{} \neq 0$, this simplifies to
	\begin{equation}
		\frac{ BA_\beta[\intermediateencoderVect](\hat{x}'|x) }{ \outputmarginal{'} } 
		\left[ \delta_{\hat{x}, \hat{x}'} - BA_\beta[\intermediateencoderVect](\hat{x}|x) \right] p_X(x') \;.
		\label{eq:RD-jacobian-wrt-direct-enc-coords}
	\end{equation}
\end{prop}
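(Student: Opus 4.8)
The plan is to differentiate the composition \eqref{eq:BA-as-func-composition} directly, using that its first stage --- the marginal equation \eqref{eq:marginal-eq} --- is linear. In encoder coordinates $BA_\beta$ first sends $\intermediateencoderVect$ to the output marginal $\outputmarginalVect$ via $\outputmarginal{} = \sum_x p_X(x)\,\intermediateencoder{}{}$, and then applies the encoder equation \eqref{eq:encoder-eq} to $\outputmarginalVect$, so that $BA_\beta[\intermediateencoderVect](\hat{x}|x) = \outputmarginal{}\, e^{-\beta d(x,\hat{x})}/Z(x)$ with $Z(x) := \sum_{\hat{x}''}\outputmarginal{''}\, e^{-\beta d(x,\hat{x}'')}$. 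Since $BA_\beta[\intermediateencoderVect]$ depends on $\intermediateencoderVect$ only through $\outputmarginalVect$, and $\partial \outputmarginal{''}/\partial \intermediateencoder{'}{'} = \delta_{\hat{x}'',\hat{x}'}\,p_X(x')$, the ordinary first-order chain rule (no Fa\`a di Bruno machinery is needed for a Jacobian) collapses the sum over $\hat{x}''$ and gives
\[
	(D_{\intermediateencoderVect} BA_\beta[\intermediateencoderVect])_{(\hat{x},x),(\hat{x}',x')} = p_X(x')\cdot\frac{\partial\, BA_\beta[\intermediateencoderVect](\hat{x}|x)}{\partial \outputmarginal{'}}\,.
\]

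First I would evaluate the remaining partial derivative by the quotient rule, using $\partial Z(x)/\partial \outputmarginal{'} = e^{-\beta d(x,\hat{x}')}$:
\[
	\frac{\partial}{\partial \outputmarginal{'}}\,\frac{\outputmarginal{}\,e^{-\beta d(x,\hat{x})}}{Z(x)}
	= \frac{\delta_{\hat{x},\hat{x}'}\,e^{-\beta d(x,\hat{x})}}{Z(x)} - \frac{\outputmarginal{}\,e^{-\beta d(x,\hat{x})}}{Z(x)}\cdot\frac{e^{-\beta d(x,\hat{x}')}}{Z(x)}\,.
\]
The simplification to \eqref{eq:BA-jacobian-wrt-encoder} then rests on two observations: the Kronecker delta forces $\hat{x}=\hat{x}'$, so $\delta_{\hat{x},\hat{x}'}\,e^{-\beta d(x,\hat{x})} = \delta_{\hat{x},\hat{x}'}\,e^{-\beta d(x,\hat{x}')}$; and the second term already contains $BA_\beta[\intermediateencoderVect](\hat{x}|x) = \outputmarginal{}\,e^{-\beta d(x,\hat{x})}/Z(x)$. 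Factoring out $e^{-\beta d(x,\hat{x}')}/Z(x)$ and multiplying by $p_X(x')$ yields the stated formula. For the simplified form \eqref{eq:RD-jacobian-wrt-direct-enc-coords} valid when $\outputmarginal{}\neq 0$ for all $\hat{x}$, I would note that $BA_\beta[\intermediateencoderVect](\hat{x}'|x) = \outputmarginal{'}\,e^{-\beta d(x,\hat{x}')}/Z(x)$, hence $e^{-\beta d(x,\hat{x}')}/Z(x) = BA_\beta[\intermediateencoderVect](\hat{x}'|x)/\outputmarginal{'}$, and substituting finishes the argument.

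The computation itself is elementary; the only steps requiring care are tracking that the marginal in play is the \emph{output} marginal $\outputmarginalVect$ assembled from $\intermediateencoderVect$ by \eqref{eq:marginal-eq} (rather than a free variable), so that the chain rule through the linear stage is exactly what produces the $p_X(x')$ factor and kills the $\hat{x}''$-sum, and the delta-swap $\delta_{\hat{x},\hat{x}'}\,e^{-\beta d(x,\hat{x})} = \delta_{\hat{x},\hat{x}'}\,e^{-\beta d(x,\hat{x}')}$ that brings the expression into the symmetric-looking shape of \eqref{eq:BA-jacobian-wrt-encoder}. I expect no genuine obstacle beyond this index bookkeeping.
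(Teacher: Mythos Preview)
Your proposal is correct and follows essentially the same approach as the paper's proof: apply the chain rule through the two-stage composition (marginal equation followed by encoder equation), compute the derivative of the linear marginal stage to get the $\delta_{\hat{x}'',\hat{x}'}\,p_X(x')$ factor, compute the derivative of the encoder stage via the quotient rule, and combine. The paper's version spells out the sum over $\hat{x}''$ explicitly before collapsing it, but the logic and all the key steps (including the delta-swap and the final substitution for the simplified form) are the same as yours.
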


There is a subtle difference between the Jacobians of $Id - BA_\beta$ with respect to the cluster marginal $\inputmarginalVect$ and encoder $\intermediateencoderVect$ coordinates.
For, the $\inputmarginalVect$-Jacobian can only detect cluster-vanishing bifurcations, as explained below, while the $\intermediateencoderVect$-Jacobian can be used to detect \textit{any} RD bifurcation, by Corollary \ref{cor:multiple-RD-sols-are-detectable-by-Jacobian}.
As a result, it is easy to mis-detect bifurcations other than cluster-vanishing when using marginal coordinates $\inputmarginalVect$. e.g., the support-switching bifurcation to the right of Figure \ref{fig:Berger_example_2.7.3} appears as a discontinuity in marginal coordinates $\inputmarginalVect$ (panel A), with no Jacobian eigenvalue vanishing to indicate its appearance (panel C). 
Nevertheless, its encoder coordinates $\intermediateencoderVect$-Jacobian has an eigenvalue vanishing precisely there (panel D); cf., Section \ref{sub:support-switching-bifurcations} on support-switching bifurcations. 

By the Implicit Function Theorem, if $D_{\intermediateencoderVect} (Id - BA_{\beta})\rvert_{\intermediateencoderVect}$ has no kernel at its root $\intermediateencoderVect$ then there is a unique function $\intermediateencoderVect_\beta$ through it. 
As discussed in Section \ref{part:how-and-what}.\ref{sub:beta-derivs-at-an-operator-root}, this implies that Assumptions \ref{assumption:operator-root-is-a-function-of-beta} and \ref{assumption:solution-is-smooth-in-beta} hold.
In particular, if $\intermediateencoderVect$ is an achieving distribution, then by Corollary \ref{cor:multiple-RD-sols-are-detectable-by-Jacobian} there is no other curve-achieving distribution at the same $\beta$ value.
For, that would be detectable by a non-trivial kernel. 
cf., the discussion around Equation \eqref{eq:schema-for-non-uniquness-in-RD}.

\begin{cor}		\label{cor:RT-assumptions-hold-for-RD}
	Let $\beta > 0$, and let $\intermediateencoderVect$ be a fixed point of $BA_\beta$ such that 
	\begin{equation}		\label{eq:trivial-kernel-as-sufficient-cond}
		\ker D_{\intermediateencoderVect} (Id - BA_{\beta})\rvert_{\intermediateencoderVect} = \{\bm{0}\} \;.
	\end{equation}
	Then, Assumptions \ref{assumption:operator-root-is-a-function-of-beta} and \ref{assumption:solution-is-smooth-in-beta} hold there for RD.
	
	Further, if $\intermediateencoderVect$ is also an achieving distribution, then \eqref{eq:trivial-kernel-as-sufficient-cond} implies that there is no other curve-achieving distribution at that $\beta$ value.
\end{cor}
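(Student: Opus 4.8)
The plan is to obtain both assertions essentially for free: the first from the Implicit Function Theorem applied in encoder coordinates, and the second as the contrapositive of Corollary~\ref{cor:multiple-RD-sols-are-detectable-by-Jacobian} (equivalently, a one-line consequence of Theorem~\ref{thm:achieving-distributions-of-beta-are-convex}). Since all the substantive work is already in place, the ``proof'' is a short assembly.

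For the first assertion I would proceed as follows. The matrix $D_{\intermediateencoderVect}(Id - BA_\beta)\rvert_{\intermediateencoderVect}$ is square (of size $|\hat{\mathcal{X}}||\mathcal{X}|$, cf.\ Proposition~\ref{prop:Jacobian-of-BA-in-direct-encoder-coordinates}), so the hypothesis $\ker D_{\intermediateencoderVect}(Id - BA_\beta)\rvert_{\intermediateencoderVect} = \{\bm{0}\}$ in \eqref{eq:trivial-kernel-as-sufficient-cond} is exactly non-singularity. Moreover $BA_\beta$ in encoder coordinates is the composition of the marginal map \eqref{eq:marginal-eq} with the encoder map \eqref{eq:encoder-eq}, both real-analytic in $(\intermediateencoderVect, \beta)$ in a neighborhood of the fixed point (the denominators in \eqref{eq:encoder-eq}–\eqref{eq:marginal-eq} do not vanish there), so $Id - BA_\beta$ is real-analytic near $(\intermediateencoderVect, \beta)$. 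The Implicit Function Theorem (e.g.\ \cite[Theorems I.1.1 and I.1.7]{kielhofer2011bifurcation}, as invoked in Section~\ref{part:how-and-what}.\ref{sub:beta-derivs-at-an-operator-root}) then gives a unique function $\beta' \mapsto \intermediateencoderVect_{\beta'}$ on a neighborhood of $\beta$, with $\intermediateencoderVect_\beta = \intermediateencoderVect$ and $(Id - BA_{\beta'})[\intermediateencoderVect_{\beta'}] = \bm{0}$ throughout, which inherits the real-analytic (in particular $C^\infty$) regularity of $Id - BA_\beta$; that is precisely Assumption~\ref{assumption:operator-root-is-a-function-of-beta} together with Assumption~\ref{assumption:solution-is-smooth-in-beta}. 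The only extra remark needed is that $\intermediateencoderVect_{\beta'}$ remains a genuine conditional distribution, which is automatic: every point on the curve is a fixed point of $BA_\beta$ and hence the output of \eqref{eq:encoder-eq}, normalized in $\hat{x}$ for each $x$.

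For the second assertion I would argue by contradiction. Suppose $\intermediateencoderVect$ is an achieving distribution but some $\intermediateencoderVect'' \neq \intermediateencoderVect$ also achieves the RD curve at the same value of $\beta$ (it is immaterial whether $\intermediateencoderVect''$ achieves the same curve point $(D,R)$ or a different curve point associated with that $\beta$, cf.\ \eqref{eq:schema-for-non-uniquness-in-RD}; this is exactly why Theorem~\ref{thm:achieving-distributions-of-beta-are-convex} is stated for the full set of achieving distributions corresponding to $\beta$). By Theorem~\ref{thm:achieving-distributions-of-beta-are-convex} the whole segment $\lambda \mapsto (1-\lambda)\intermediateencoderVect + \lambda\intermediateencoderVect''$, $\lambda \in [0,1]$, consists of achieving distributions, hence of fixed points of $BA_\beta$, i.e.\ of roots of $Id - BA_\beta$. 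Differentiating the identically-zero composition $\lambda \mapsto (Id - BA_\beta)\big[(1-\lambda)\intermediateencoderVect + \lambda\intermediateencoderVect''\big]$ at $\lambda = 0$ yields $D_{\intermediateencoderVect}(Id - BA_\beta)\rvert_{\intermediateencoderVect}\big[\intermediateencoderVect'' - \intermediateencoderVect\big] = \bm{0}$, so $\intermediateencoderVect'' - \intermediateencoderVect \neq \bm{0}$ lies in $\ker D_{\intermediateencoderVect}(Id - BA_\beta)\rvert_{\intermediateencoderVect}$, contradicting \eqref{eq:trivial-kernel-as-sufficient-cond}. (This is just Corollary~\ref{cor:multiple-RD-sols-are-detectable-by-Jacobian} read contrapositively, specialized so that the kernel is evaluated at the endpoint $\intermediateencoderVect$ of the segment.)

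I do not expect a genuine obstacle here; the two things to spell out carefully are the analyticity of the composition \eqref{eq:encoder-eq}–\eqref{eq:marginal-eq} (so the analytic version of the IFT applies and yields piecewise analyticity, as used throughout Part~\ref{part:how-and-what}) and the bookkeeping that the implicitly defined curve stays a normalized conditional. The one subtlety worth flagging explicitly in the write-up is the case distinction in the second assertion, which dissolves precisely because the convexity in Theorem~\ref{thm:achieving-distributions-of-beta-are-convex} is over all achieving distributions attached to $\beta$, not merely over those achieving a single point of the curve.
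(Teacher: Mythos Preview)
Your proposal is correct and follows essentially the same approach as the paper: the paper's justification (given in the paragraph immediately preceding the Corollary) invokes the Implicit Function Theorem for the first assertion and the contrapositive of Corollary~\ref{cor:multiple-RD-sols-are-detectable-by-Jacobian} for the second. Your write-up is somewhat more explicit---spelling out real-analyticity of the composition, the normalization bookkeeping, and the differentiation of the segment that produces the kernel direction---but these are elaborations of the same argument rather than a different route.
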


In contrast, the Jacobian $D_{\inputmarginalVect} (Id - BA_{\beta})\rvert_{\inputmarginalVect}$ in marginal coordinates can only detect bifurcations where the support shrinks to a proper subset, at least in non-degenerate problems. For, by \cite[Lemma 2 ff.]{agmon2021critical}, its kernel is determined only by $\supp \inputmarginalVect$ (corresponding precisely to clusters outside the support), regardless of how many other achieving distributions there may be.
By reducing the problem to $\supp \inputmarginalVect$ (as in Section \ref{sub:cluster-vanishing-bifs-are-bifs}), its eigenvalues can be seen to be continuous in $\beta$, vanishing if and only if a cluster vanishes.
Therefore, the $\inputmarginalVect$-Jacobian cannot detect a bifurcation that switches between two distinct supports, even if they are of the same size. e.g., the right bifurcation of Figure \ref{fig:Berger_example_2.7.3}. 

The above gives a way to distinguish numerically between bifurcations of different types when tracking RD roots, summarized at \eqref{eq:flowchart-for-different-kinds-of-RD-bifurcations} below. 
Since eigenvalues are continuous in the choice of matrix, and we assume that $D_{\intermediateencoderVect} (Id - BA_{\beta})\rvert_{\intermediateencoderVect}$ is usually of full rank (Assumption \ref{assumption:only-cluster-vanishing-bifurcations}), then one can put a small threshold on eigenvalues, below which the Jacobian is considered singular. This is a necessary condition for bifurcation, of any kind. 
For simplicity, suppose that the total algebraic multiplicity of the vanishing eigenvalues is 1. 
If any cluster vanishes simultaneously, then the fixed point is approaching a cluster-vanishing bifurcation. 
Otherwise, the fixed point may be approaching a bifurcation of some other kind, or the distortion matrix may misbehave, as noted after \eqref{eq:schema-for-non-uniquness-in-RD}. 
While several approaches may come to mind on how bifurcations other than cluster-vanishing can be handled, that is beyond the scope of this work. 
\begin{equation}			\label{eq:flowchart-for-different-kinds-of-RD-bifurcations}
	\xymatrix@C=.4pc{
		\text{$|\lambda|$ approaches 0, for any }\lambda \in \eig D_{\intermediateencoderVect} (Id - BA_{\beta})\rvert_{\intermediateencoderVect} \text{?} \ar[r]^(.65){\text{No}} \ar[d]_{\text{Yes}}	&	\text{No bifurcation}	\\
		\inputmarginal{} \text{ approaches 0, for any $\hat{x}$?}\ar[d]_{\text{Yes}} \ar[dr]^{\text{No}}	&& \\
		\text{Cluster-vanishing bifurcation}	&	\text{Possibly a support-switching bifurcation}	&
	}
\end{equation}

\medskip 
The fact that there are RD bifurcations that are captured by the kernel of the $\intermediateencoderVect$-Jacobian but \textit{not} by that of the $\inputmarginalVect$-Jacobian can be understood intuitively by the relation
\begin{eqnarray}
	(D_{\inputmarginalVect}BA_\beta[\inputmarginalVect])_{\hat{x}, \hat{x}'} =
	\sum_{x = x'} (D_{\intermediateencoderVect}BA_\beta[\intermediateencoderVect])_{(\hat{x}, x), (\hat{x}', x')}
\end{eqnarray}
at a fixed point of full-support, which can be verified directly. That is, $D_{\inputmarginalVect}BA_\beta[\inputmarginalVect]$ is the blockwise trace of $D_{\intermediateencoderVect}BA_\beta[\intermediateencoderVect]$, and so it contains only a ``summary'' of the information in the latter. In particular, it may be that $D_{\intermediateencoderVect}BA_\beta[\intermediateencoderVect]$ has an eigenvector of eigenvalue 1 while $D_{\inputmarginalVect}BA_\beta[\inputmarginalVect]$ does not.

As a side note, the argument of \citeauthor{agmon2021critical} for critical slowing down of Blahut-Arimoto was based on the Jacobian $D_{\inputmarginalVect} (Id - BA_{\beta})\rvert_{\inputmarginalVect}$ with respect to cluster-marginal coordinates. However, their Theorem 5 makes no use of the choice of coordinates system, and so implies critical slowing down also when an eigenvalue of the $\intermediateencoderVect$-Jacobian vanishes gradually, even if no eigenvalue of the $\inputmarginalVect$-Jacobian is vanishing. 
This is demonstrated by Figure \ref{fig:Berger_example_2.7.3} (panels B and D, right bifurcation).

\begin{figure}[h]
	\hspace{-110pt}
	\ifdefined\compilefigs
	\includegraphics[width=1.5\textwidth]{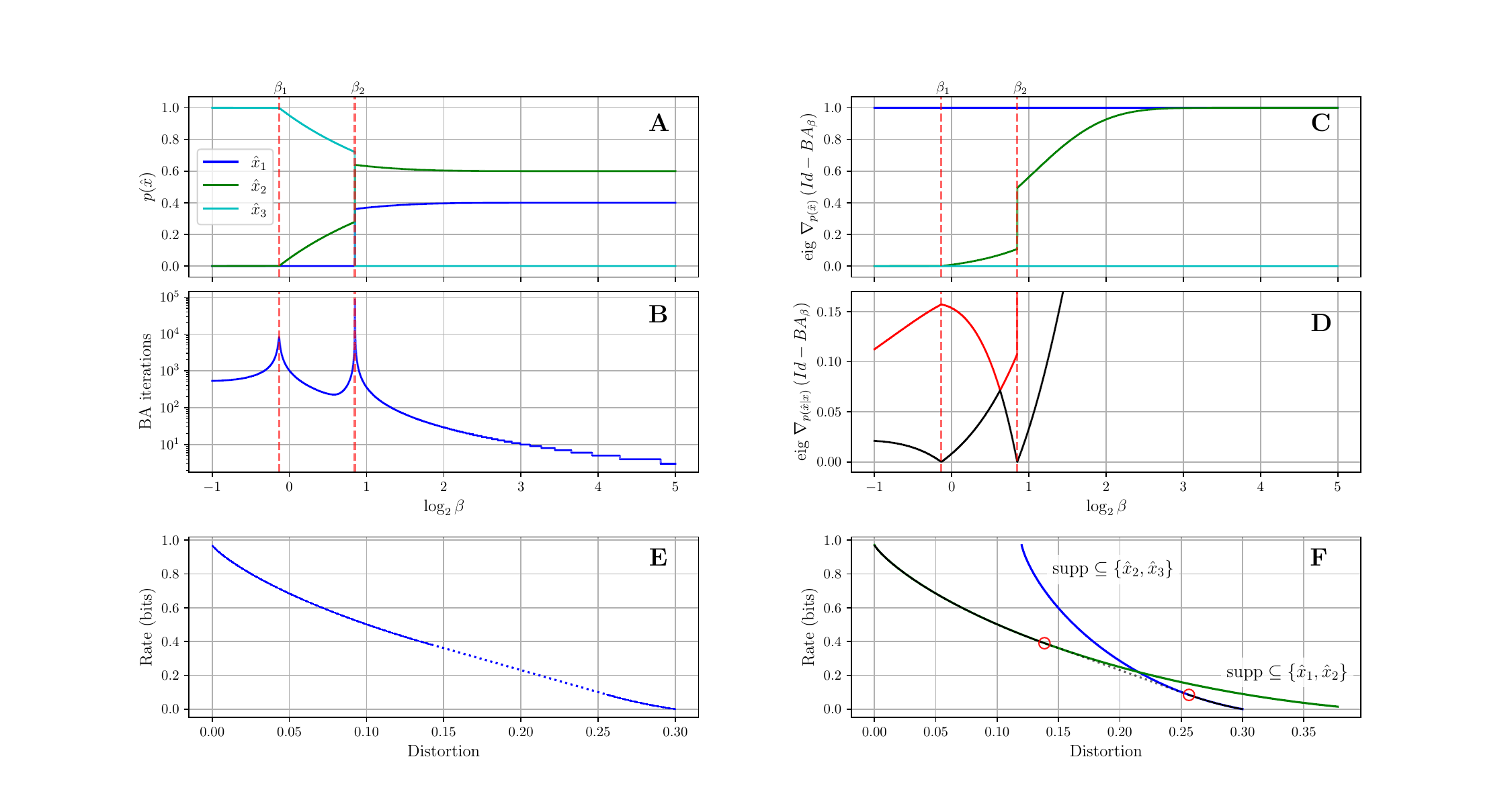}
	\else
	\includegraphics[width=1.5\textwidth]{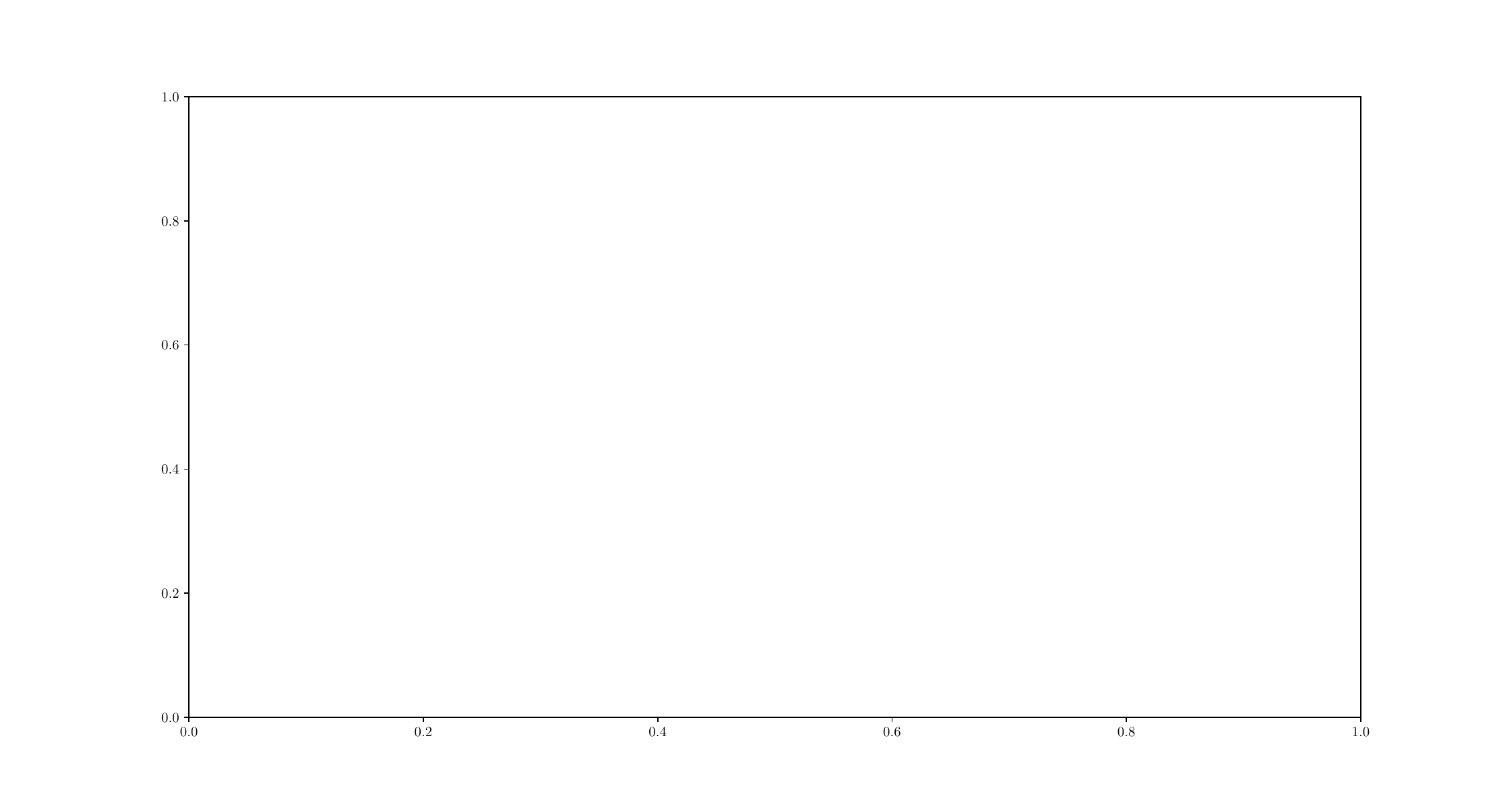}
	\fi
	\caption{
		\label{fig:Berger_example_2.7.3}
		\textbf{A support-switching and a cluster-vanishing bifurcation.}
		Reproducing \cite[Figure 2.7.6]{berger71}, defined by $d(x, \hat{x}) = \left(\protect\begin{matrix} 1 & 0 & 0.3 \\ 0 & 1 & 0.3 \protect\end{matrix}\right)$ and $p_X = (0.4, 0.6)$. This problem exhibits two bifurcations, at $\beta_1$ and $\beta_2$ (dashed red verticals). 
		\textbf{A.} Cluster marginal $p_\beta(\hat{x})$ as a function of $\beta$. While at $\beta_1$ a cluster vanishes, at $\beta_2$ the support switches between two distinct subsets of size 2. At $\beta_2$, an entire line section of distributions is optimal. This results in an apparent discontinuity at panels A through D, as they are plotted by $\beta$ value. 
		\textbf{B.} Blahut-Arimoto iterations until convergence (stopping condition is $10^{-7}$), initialized with uniform initial conditions at each $\beta$ value. Critical slowing down is clearly noticed at both sides of the two bifurcations, even though panel C has a vanishing eigenvalue only to the right of $\beta_1$. This is explained by the vanishing eigenvalues of the $\intermediateencoderVect$-Jacobian with respect to encoder coordinates, at panel D. See Section \ref{sub:obstructions-to-RT-assumptions-for-RD}.
		\textbf{C.} Eigenvalues of the $\inputmarginalVect$-Jacobian with respect to cluster-marginal coordinates $\inputmarginal{}$. An eigenvalue approaches zero at the cluster-vanishing bifurcation to the left. In contrast, the solution's support size does not change in the vicinity of $\beta_2$. Therefore, the Jacobian's rank there remains unchanged \cite[Theorem 1]{agmon2021critical}.
		\textbf{D.} Eigenvalues of the $\intermediateencoderVect$-Jacobian with respect to the encoder's coordinates $\intermediateencoder{}{}$. 
		Multiple achieving distributions at a single value $\beta_2$ are detectable by a non-trivial kernel direction, in accordance with Corollary \ref{cor:multiple-RD-sols-are-detectable-by-Jacobian}.
		\textbf{E.} The rate-distortion curve has a linear segment (dotted), corresponding to $\beta = \beta_2$.
		\textbf{F.} The linear curve segment can be explained in terms of a support-switching bifurcation between two suboptimal RD curves, of the problems restricted to $\{\hat{x}_1, \hat{x}_2\}$ and $\{\hat{x}_2, \hat{x}_3\}$. 
		The optimal RD curve in black (dashed) alternately coincides with the two suboptimal curves (blue and green), as explained in Section \ref{sub:support-switching-bifurcations}.
		Both our Algorithm \ref{algo:root-tracking-for-RD} and BA with reverse annealing miss the right bifurcation, and so follow the sub-optimal root in green; see Section \ref{sub:when-does-RTRD-follow-the-optimal-path}.
	}
\end{figure}

\clearpage

\subsection{Why does root-tracking for RD follow the optimal solution path?}

\label{sub:when-does-RTRD-follow-the-optimal-path}

As seen in previous subsections, RD problems typically have a plethora of sub-optimal solutions, which do \textit{not} achieve the problem's rate-distortion curve. 
Under Assumption \ref{assumption:only-cluster-vanishing-bifurcations} (of Section \ref{part:how-and-what}.\ref{sub:taylor-method-for-RD-root-tracking}), the convexity of achieving distributions (Theorem \ref{thm:achieving-distributions-of-beta-are-convex}) implies that an optimal root tracked by our algorithms does indeed remain optimal (namely, achieving). We elaborate on this below.

\medskip 
First, consider Algorithm \ref{algo:taylor-method-for-RD-root-tracking} for tracking a root between cluster-vanishing bifurcations.
An achieving distribution tracked by it remains achieving so long that the $\intermediateencoderVect$-Jacobian is non-singular.
Indeed, by Corollary \ref{cor:RT-assumptions-hold-for-RD} in Section \ref{sub:obstructions-to-RT-assumptions-for-RD}, Assumptions \ref{assumption:operator-root-is-a-function-of-beta} and \ref{assumption:solution-is-smooth-in-beta} which are necessary for root-tracking \ref{part:how-and-what}.\ref{sub:high-order-beta-derivatives-at-an-operator-root} hold.
Further, as the root being tracked is an achieving distribution, then that Corollary also implies that there is no other achieving distribution. 
We note that two distinct roots cannot exchange the property of being curve achieving without both being achieving simultaneously, because the rate and distortion functionals are continuous in the encoder $\intermediateencoderVect$, \footnote{ Denote $I(\intermediateencoderVect) := I(\intermediateencoderVect; p_X)$, $D(\intermediateencoderVect) := \bb{E}_{\intermediateencoderVect \; p_X}[d(x, \hat{x})]$, $F_\beta(\intermediateencoderVect) := I(\intermediateencoderVect) + \beta D(\intermediateencoderVect)$, and write $F_\beta := \min_{\intermediateencoderVect} F_\beta(\intermediateencoderVect)$ for its minimal value. Suppose that $\intermediateencoderVect_\beta'$ and $\intermediateencoderVect_\beta''$ are (continuous) paths such that $F_{\beta_1}(\intermediateencoderVect_{\beta_1}') < F_{\beta_1}(\intermediateencoderVect_{\beta_1}'')$ but $F_{\beta_2}(\intermediateencoderVect_{\beta_2}') > F_{\beta_2}(\intermediateencoderVect_{\beta_2}'')$, for some $\beta_1 < \beta_2$. Then by continuity, there must exist $\beta_3 \in (\beta_1, \beta_2)$ where $F_\beta(\cdot)$ obtains the same value on both paths, $F_{\beta_3}(\intermediateencoderVect_{\beta_3}') = F_{\beta_3}(\intermediateencoderVect_{\beta_3}'')$. The argument follows by setting $F_{\beta_1} = F_{\beta_1}(\intermediateencoderVect_{\beta_1}')$ and $F_{\beta_2} = F_{\beta_2}(\intermediateencoderVect_{\beta_2}'')$. }.
The above arguments hold so long that $D_{\intermediateencoderVect}(Id - BA_\beta)[\intermediateencoderVect_{\beta}]$ remains non-singular, which by Assumption \ref{assumption:only-cluster-vanishing-bifurcations} is true until the next cluster-vanishing bifurcation is reached.

Second, when an achieving distribution $\intermediateencoderVect_\beta$ approaches a cluster-vanishing bifurcation, then by continuity of the RD curve, \cite[Theorems 2.4.1, 2.5.4]{berger71}, it achieves the curve also at the point of bifurcation.
As shown in Section \ref{sub:cluster-vanishing-bifs-are-bifs}, a sub-optimal root must exist to the right of the bifurcation, merging with the optimal one at the point of bifurcation. Thus, while the heuristic used by Algorithm \ref{algo:root-tracking-for-RD} to handle the bifurcation may temporarily follow a sub-optimal branch, it achieves the curve once the point of bifurcation is reached, re-gaining optimality.

Third, we need to guarantee that Algorithm \ref{algo:taylor-method-for-RD-root-tracking} is indeed initialized at an achieving distribution.
When starting at $\beta_0 \gg 0$ with an initial condition of full support, iterate with Blahut-Arimoto until convergence. By \citeauthor{csiszar1974computation}'s Theorem \ref{thm:BA-converges-to-RD-curve-for-an-initial-cond-of-full-support}, the distribution $\intermediateencoderVect_{\beta_0}$ obtained this way is an achieving distribution. 
As shown above, the heuristic used by Algorithm \ref{algo:root-tracking-for-RD} always ends at an achieving distribution.
Thus, Algorithm \ref{algo:root-tracking-for-RD} will initialize Algorithm \ref{algo:taylor-method-for-RD-root-tracking} at achieving distribution the next time it is invoked.

\medskip 
The assumption that the $\intermediateencoderVect$-Jacobian is non-singular outside cluster-vanishing bifurcations is necessary for Algorithm \ref{algo:taylor-method-for-RD-root-tracking} to keep track of the optimal root. For, this algorithm tracks the root of the reduced problem, essentially using only the $\inputmarginalVect$-Jacobian as its stopping condition.
Therefore, it would only detect cluster-vanishing bifurcations, as explained in Section \ref{sub:obstructions-to-RT-assumptions-for-RD}.
e.g., at the support-switching bifurcation to the right of Figure \ref{fig:Berger_example_2.7.3}, Algorithm \ref{algo:taylor-method-for-RD-root-tracking} would continue to track the root past the bifurcation, even after it had lost optimality. 
So will BA with reverse annealing, as the root continues to exist beyond the bifurcation.
However, while missing bifurcations other than cluster vanishing, Algorithm \ref{algo:taylor-method-for-RD-root-tracking} would suffer no computational penalty or accuracy loss near them as Blahut-Arimoto does, since no eigenvalue of the $\inputmarginalVect$-Jacobian vanishes there (cf., panels B and C of Figure \ref{fig:Berger_example_2.7.3}).

Nevertheless, the discussion around \eqref{eq:flowchart-for-different-kinds-of-RD-bifurcations} provides a simple method to detect bifurcations other than cluster-vanishing ones, up to pathologies of the distortion matrix. Once detected, we expect that they could be handled using tools similar to those developed here. 
As an alternative approach, RD derivatives in Algorithm \ref{algo:high-order-derivs-of-operator-roots} can be calculated with respect to $\intermediateencoderVect$ rather than $\inputmarginalVect$ coordinates, changing the stopping condition of Algorithm \ref{algo:taylor-method-for-RD-root-tracking} accordingly. 
We have not chosen this approach due the computational costs of having a higher-dimensional variable.

\subsection{Linear curve segments as support-switching bifurcations}

\label{sub:support-switching-bifurcations}

As a side note to this Section's main line of discussion, we offer an explanation of linear curve segments in terms of support-switching bifurcations between sub-optimal RD curves. This explains the right bifurcation in Figure \ref{fig:Berger_example_2.7.3}.

\medskip
Suppose that at $D_1$ the curve is achieved by a solution of support $A \subseteq \hat{\mathcal{X}}$, and at $D_2 > D_1$ by a solution of support $B \subseteq \hat{\mathcal{X}}$, where neither of these subsets contains the other, $A \nsubseteq B$ and $B \nsubseteq A$. The case of a shrinking support at bifurcation was already handled in Section \ref{sub:cluster-vanishing-bifs-are-bifs}.
Without loss of generality, suppose that $\hat{\cal{X}} = A \cup B$.
For simplicity, suppose that outside bifurcations, the curve is achieved by a unique distribution. 
Denote by $R^A(D)$ and $R^B(D)$ the RD curves of the restricted problems to $A$ and $B$, respectively, and suppose that their respective ranges $(D_{min}^A, D_{max}^A)$ and $(D_{min}^B, D_{max}^B)$ of distortion values has a large enough intersection for the below to be meaningful. 

Since both $R^A(D)$ and $R^B(D)$ are continuous, and only $R^A$ is optimal at $D_1$ (only $R^B$ at $D_2$), then the two curves must intersect somewhere in $(D_1, D_2)$. 
Recall that a differentiable function is convex if and only if all its tangents lie below its graph.
Thus, the tangent to $R^A$ at a point $D$ is below the curve of $R^A$. Pick the smallest $D_1'$ such that the tangent to $R^A$ at $D_1'$ also intersects the curve $R^B$, say at $D_2'$. There must be such a point $D_1'$, as the two curves $R^A$ and $R^B$ intersect and are continuous. 
$D_1' > D_1$ because $R^A$ is below $R^B$ at $D_1$, by assumption.
By convexity, $D_1 < D_1' < D_2' < D_2$.
If $R^A$ still coincides with the problem's RD curve at $D_1'$ and $R^B$ with that at $D_2'$, then by Theorem \ref{thm:achieving-distributions-of-beta-are-convex}, the entire line section connecting $(D_1', R^A(D_1'))$ to $(D_2', R^A(D_2'))$ is achievable, at the $\beta$ value corresponding to its slope. 
Other than its vertices at $D_1'$ and $D_2'$, which are obtained by distributions supported on $A$ or $B$ alone, any other point along this section is obtained by distributions supported on the entire $\hat{\mathcal{X}}$. Points in the section's interior lie on the RD curve of the original problem, but on neither of the suboptimal curves $R^A$ or $R^B$.
See panel F in Figure \ref{fig:Berger_example_2.7.3} for example.

\medskip
\section{Error analysis for root-tracking for RD}
\label{sec:error-analysis}

We analyze the error of Algorithm \ref{algo:taylor-method-for-RD-root-tracking}, our specialization of Taylor's method to RD. While the highlight of this section is its convergence guarantees (Theorem \ref{thm:taylor-method-converges-for-RD-root-tracking-away-of-bifurcation} in Section \ref{part:how-and-what}.\ref{sub:taylor-method-for-RD-root-tracking}; see Section \ref{part:proofs}.\ref{sub:proof-of-thm:taylor-method-converges-for-RD-root-tracking-away-of-bifurcation} for proof), the results comprising it are of interest on their own right.
In Section \ref{sub:error-analysis-of-Taylor-method-background} we briefly recap standard error analysis, highlighting relevant subtleties.
In Section \ref{sub:computational-difficulty-of-RTRD} we show that tracking an operator's root in the presence of bifurcations generally belongs to a family of ``stiff'' problems, which are harder to solve with standard numerical methods such as Taylor's.
Stopping towards a bifurcation effectively restricts the problem's difficulty, allowing RD derivative tensors to be bounded uniformly.
To mitigate this difficulty, we suggest in Section \ref{sub:local-error-estimates-for-beta-derivs} a tool for local error estimation, which could be used to improve the cost-to-error tradeoff of Algorithm \ref{algo:taylor-method-for-RD-root-tracking}.

\subsection{Preliminaries: error analysis of Taylor methods}
\label{sub:error-analysis-of-Taylor-method-background}

We succinctly recap the error analysis of the Euler and Taylor methods for solving ordinary differential equations. While these are necessary for the sequel, the reader well versed in this material is advised to skip to the next subsection.

\medskip
We follow standard definitions of numerical approximations, as in \cite{atkinson2011numerical} or \cite{butcher2016numerical}. A first-order initial value problem is defined by
\begin{equation}		\label{eq:IVP-definition}
	\dbeta{\bm{x}} = f\left( \bm{x}, \beta \right),		\quad
	\bm{x}(\beta_0) = \bm{x}_0 \;.
\end{equation}
By the implicit ODE \eqref{eq:ODE-implicit-form}, tracking an operator root $(\bm{x}_0, \beta_0)$ is of this form. 
Where, RD roots satisfy the ODE of Theorem \ref{thm:beta-ODE-in-marginal-coords} in Section \ref{sub:encoders-beta-derivatives}.

Write $\bm{x}_n$ for a numerical approximation of the true solution $\bm{x}(\beta_n)$ at $\beta := \beta_n$, and $\bm{e}_n := \bm{x}(\beta_n) - \bm{x}_n$ for the approximation's error, known as the \emph{global truncation error} in the context of numerical approximation. A numerical approximation can be obtained by setting
\begin{equation}		\label{eq:Euler-method-def}
	\bm{x}_{n+1} := \bm{x}_n + \Delta \beta \cdot f\left( \bm{x}_n, \beta_n \right)
\end{equation}
where $\Delta \beta$ is the \emph{step size}, often fixed, and $\beta_{n+1} := \beta_n + \Delta \beta$. This approximation method \eqref{eq:Euler-method-def} is known as the \emph{Euler method}. When its right-hand side is replaced by a Taylor polynomial of degree $l > 0$, this is known as the \emph{Taylor method},
\begin{equation}		\label{eq:Taylor-method-def}
	\bm{x}_{n+1} := \bm{x}_n + \Delta \beta \cdot T_l\big( \bm{x}_n, \beta_n, \Delta \beta \big) \;.
\end{equation}
Euler's method is a first-order Taylor method. In root-tracking context, we write
\begin{equation}		\label{eq:Taylor-poly-def-for-Taylor-method}
	T_l\big( \bm{x}_n, \beta_n, \Delta \beta \big) := 
	\frac{1}{1!} \cdot \dbeta{\bm{x}}\Big\rvert_{(\bm{x}_n, \beta_n)} +
	\frac{\Delta \beta}{2!} \cdot \dbetaK{\bm{x}}{2}\Big\rvert_{(\bm{x}_n, \beta_n)} + \dots +
	\frac{\Delta \beta^{l-1}}{l!} \cdot \dbetaK{\bm{x}}{l}\Big\rvert_{(\bm{x}_n, \beta_n)} 
\end{equation}
Where, $\dbetaK{\bm{x}}{k}$ are the implicit derivatives computed by Algorithm \ref{algo:high-order-derivs-of-operator-roots} (Theorem \ref{thm:formula-for-high-order-expansion-of-F-in-main-result-sect}). This gives a numerical algorithm for tracking operator roots \eqref{eq:solution-as-root-of-functional-eq}, under Assumptions \ref{assumption:operator-root-is-a-function-of-beta} and \ref{assumption:solution-is-smooth-in-beta} (in Section \ref{part:how-and-what}.\ref{sub:beta-derivs-at-an-operator-root}).
Where, in RD context the step $\Delta \beta$ is negative (see Section \ref{part:how-and-what}.\ref{sub:taylor-method-for-RD-root-tracking}).

Error analysis of Euler's method \eqref{eq:Euler-method-def} is a standard result in numerical solution of ODEs. e.g., \cite[Theorem 2.4]{atkinson2011numerical} or \cite[Theorem 212A]{butcher2016numerical}. Its generalization to an $l$-th order Taylor method is straightforward, though usually not given explicitly in textbooks. Following the notes of \cite{gottlieb2006}, its crux is subtracting the numerical approximation $\bm{x}_{n+1}$ \eqref{eq:Taylor-method-def} from the true solution 
\begin{equation}
	\bm{x}(\beta_{n+1}) = 
	\bm{x}(\beta_{n}) + \Delta \beta \cdot T_l\big(\bm{x}(\beta_n), \beta_n, \Delta\beta\big) + 
	\Delta \beta \cdot \underset{LTE}{\underbrace{\frac{\Delta \beta^{l}}{(l+1)!} \dbetaK{\bm{x}}{l+1}}} \;,
\end{equation}
where the last term is the \emph{local truncation error}, which is simply the Taylor remainder at $(\bm{x}(\beta'), \beta')$, for some intermediate $\beta' \in [\beta_n, \beta_{n+1}]$. This yields
\begin{equation}			\label{eq:local-to-global-truncation-error-for-Taylor-method}
	\| \bm{e}_{n+1} \|_\infty \leq \| \bm{e}_n \|_\infty + 
	|\Delta \beta| \cdot \| T_l\big(\bm{x}(\beta_n), \beta_n, \Delta\beta\big) - T_l\big(\bm{x}_n, \beta_n, \Delta\beta\big) \|_\infty +
	|\Delta \beta| \cdot \| LTE \|_\infty \;,
\end{equation}
where $\|\cdot \|_\infty$ denotes the supremum norm. 
If $T_l$ satisfies the \emph{Lipschitz condition with a constant $L_l$} with respect to $\bm{x}$, $\| T_l(\bm{x}, \beta) - T_l(\bm{x}', \beta) \|_\infty \leq L_l \; \| \bm{x} - \bm{x}'\|_\infty$ for every $\bm{x}, \bm{x}'$ and $\beta\in \left[ \beta_0, \beta_f \right]$, then 
\begin{equation}		\label{eq:single-step-error-increase-in-Taylor-method}
	\| \bm{e}_{n+1} \|_\infty \leq 
	\left(1 + |\Delta \beta| L_l\right) \cdot \| \bm{e}_n \|_\infty +
	|\Delta \beta| \cdot \| LTE \|_\infty \;,
\end{equation}
and so the bound grows exponentially.
This allows one to show \cite[Equation (5.11)]{atkinson2011numerical} that the global truncation error of an $l$-th order Taylor method is of order $O(|\Delta \beta|^l)$:

\begin{thm}[Error analysis of Taylor method]		\label{thm:error-analysis-for-euler-method}
	For an initial-value problem \eqref{eq:IVP-definition} on $[\beta_0, \beta_f]$, the \textit{global truncation error} obtained by a Taylor method of order $l$ with a step size of $|\Delta \beta|$ at most satisfies
	\begin{equation}			\label{eq:GTE-of-Taylor-method}
		\max_{\beta_0 \leq \beta_n \leq \beta_f} \| \bm{x}(\beta_n) - \bm{x}_n \|_\infty \leq
		e^{(\beta_f - \beta_0)L_l} \|\bm{e}_0\|_\infty + \frac{e^{(\beta_f - \beta_0)L_l} - 1}{L_l} \cdot \tfrac{1}{(l+1)!} |\Delta \beta|^l \max_{\beta_0 \leq \beta \leq \beta_f} \left\| \tfrac{d^{l+1} \bm{x}(\beta)}{d\beta^{l+1}} \right\|_\infty
	\end{equation}
	where $L_l$ is the Lipschitz constant of $T_l$ \eqref{eq:Taylor-poly-def-for-Taylor-method}, and $\bm{e}_0 := \bm{x}(\beta_0) - \bm{x}_0$ is the initial error.
\end{thm}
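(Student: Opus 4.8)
The plan is to run the standard ``local-to-global'' telescoping argument for one-step methods --- following \cite[Theorem 2.4 and (5.11)]{atkinson2011numerical} together with the notes \cite{gottlieb2006} for the higher-order case --- by iterating the single-step estimate \eqref{eq:single-step-error-increase-in-Taylor-method} derived in the preamble and collapsing the resulting geometric sum. All the substantive ingredients are already in place: the single-step bound, the Lipschitz property of $T_l$, and the Lagrange form of the local truncation error. So what remains is essentially bookkeeping.

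First I would abbreviate $L := L_l$ and $M_{l+1} := \max_{\beta_0 \le \beta \le \beta_f} \left\| \tfrac{d^{l+1}\bm{x}(\beta)}{d\beta^{l+1}} \right\|_\infty$, which is finite because $\bm{x}(\beta)$ is $C^{l+1}$ on the compact interval $[\beta_0,\beta_f]$ (Assumption \ref{assumption:solution-is-smooth-in-beta}). Expanding each coordinate of $\bm{x}$ to degree $l+1$ about $\beta_n$ with Lagrange remainder gives $\bm{x}(\beta_{n+1}) = \bm{x}(\beta_n) + \Delta\beta\, T_l\big(\bm{x}(\beta_n),\beta_n,\Delta\beta\big) + \Delta\beta\cdot LTE$ with $\|LTE\|_\infty \le \tfrac{|\Delta\beta|^l}{(l+1)!} M_{l+1}$ (the intermediate point is coordinate-dependent, but this bound is uniform over coordinates). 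Feeding this together with the Lipschitz bound on $T_l$ into \eqref{eq:local-to-global-truncation-error-for-Taylor-method}--\eqref{eq:single-step-error-increase-in-Taylor-method} yields the recursion $\|\bm{e}_{n+1}\|_\infty \le a\,\|\bm{e}_n\|_\infty + c$, with $a := 1 + |\Delta\beta| L$ and $c := \tfrac{|\Delta\beta|^{l+1}}{(l+1)!} M_{l+1}$.

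Then I would unroll the recursion from $n=0$: $\|\bm{e}_n\|_\infty \le a^n\|\bm{e}_0\|_\infty + c\sum_{k=0}^{n-1} a^k = a^n\|\bm{e}_0\|_\infty + c\,\tfrac{a^n-1}{a-1}$, the last term being read as its limit $cn$ when $L=0$. Since $a-1 = |\Delta\beta| L$, the second summand equals $\tfrac{a^n-1}{L}\cdot\tfrac{|\Delta\beta|^l}{(l+1)!} M_{l+1}$. To finish, bound $a^n$ uniformly: $a^n = (1+|\Delta\beta| L)^n \le e^{n|\Delta\beta| L} \le e^{(\beta_f-\beta_0)L}$, because a grid point $\beta_n\in[\beta_0,\beta_f]$ satisfies $n|\Delta\beta| \le \beta_f-\beta_0$. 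Substituting $a^n \le e^{(\beta_f-\beta_0)L}$ into both summands and maximizing over $\beta_n\in[\beta_0,\beta_f]$ gives exactly \eqref{eq:GTE-of-Taylor-method}.

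I expect no real obstacle: the computation is routine. The only points deserving a sentence of care are the coordinatewise use of the Lagrange remainder (handled above by the uniform bound) and the standing hypotheses that $T_l$ is Lipschitz with constant $L_l$ on $[\beta_0,\beta_f]$ and that $M_{l+1}<\infty$. In the RD application these are not merely assumed but established, in Sections \ref{part:details}.\ref{sub:computational-difficulty-of-RTRD}--\ref{sub:local-error-estimates-for-beta-derivs}, where a Lipschitz constant for $T_l$ is obtained from a bound on $D_{\inputmarginalVect} T_l$ (Proposition \ref{prop:Jacobian-of-high-order-beta-derivative}) and the implicit derivatives are bounded uniformly on compact subsets of the simplex interior.
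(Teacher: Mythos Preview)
Your proposal is correct and follows essentially the same approach as the paper: the paper itself does not give a formal proof of this theorem but sketches precisely the ingredients you use --- the Lagrange-form local truncation error, the Lipschitz bound yielding the single-step recursion \eqref{eq:single-step-error-increase-in-Taylor-method}, and a reference to \cite[Equation (5.11)]{atkinson2011numerical} for the telescoping/geometric-sum step --- and you have filled in that last step in the standard way.
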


We note that the Lipschitz constant $L_l$ may be taken \cite[Equation (3.9)]{atkinson2011numerical} as the supremum of the problem's linearization, $\sup \|D_{\bm{x}} T_l\|_\infty$, over the relevant domain in $\bm{x}$ and $\beta$. The matrix norm $\|D_{\bm{x}} T_l\|_\infty$ at a point can be used to estimate the \textit{local Lipschitz constant} of $T_l$; namely, its Lipschitz constant over an arbitrarily small neighborhood.

\subsection{The computational difficulty in root tracking for RD}
\label{sub:computational-difficulty-of-RTRD}

There is a computational difficulty in tracking operator roots with Taylor's method, a difficulty that stems from the presence of bifurcations.
For Taylor's method error analysis (Theorem \ref{thm:error-analysis-for-euler-method}), this is manifested in general via local Lipschitz constants. 
For, its error bounds explode when approaching a bifurcation.
The below Lemma \ref{lem:uniform-bound-on-BA-derivative-tensors} bounds RD derivative tensors, and Proposition \ref{prop:Jacobian-of-high-order-beta-derivative} (in Section \ref{sub:local-error-estimates-for-beta-derivs}) can be used to bound local Lipschitz constants. 
Using both, one can show that setting a cluster-vanishing threshold (as in Algorithm \ref{algo:taylor-method-for-RD-root-tracking}) restricts the computational difficulty.

\medskip
In its implicit ODE form, tracking an operator's root is defined by the initial value problem consisting of
\begin{equation}		\tag{\ref{eq:ODE-implicit-form}}
	\tfrac{d\bm{x}}{d\beta} = - \left(D_{\bm{x}} F \right)^{-1} D_\beta F \;,
\end{equation}
at a given root $(\bm{x}_0, \beta_0)$ of $F = \bm{0}$ \eqref{eq:solution-as-root-of-functional-eq}, so long that $D_{\bm{x}} F$ is non-singular.
Suppose that an eigenvalue of $D_{\bm{x}} F$ vanishes gradually as $\beta \to \beta_c$. e.g., when approaching a bifurcation.
The Jacobian $D_{\bm{x}} F$ then usually becomes ill-conditioned as a result\footnote{ Unless all its eigenvalues vanish at the same rate when $\beta \to \beta_c$.}.
The linearization of this differential equation would then in general be ill-conditioned\footnote{ Write $D_{\bm{x}} F^{-1}$ for the inverse of the Jacobian matrix, and differentiate $D_{\bm{x}} F \; D_{\bm{x}} F^{-1} = I$ with respect to $\bm{x}$, to obtain $D_{\bm{x}} F \; D_{\bm{x}}(D_{\bm{x}} F^{-1}) = -D^2_{\bm{x}, \bm{x}} F \; D_{\bm{x}} F^{-1}$, or equivalently $D_{\bm{x}}(D_{\bm{x}} F^{-1}) = -D_{\bm{x}} F^{-1} \; D^2_{\bm{x}, \bm{x}} F \; D_{\bm{x}} F^{-1}$. This shows that the linearization $-D_{\bm{x}} \left(D_{\bm{x}} F^{-1} D_\beta F\right)$ of \eqref{eq:ODE-implicit-form} can be written as $-D_{\bm{x}} F^{-1} M$ for $M := D^2_{\beta, \bm{x}}F - D^2_{\bm{x}, \bm{x}}F \; D_{\bm{x}} F^{-1} D_\beta F$ a matrix. So, it would be ill-conditioned if an eigenvalue $\lambda_\beta$ of $D_{\bm{x}} F$ vanishes as $\beta\to \beta_c$, unless no column of $M$ has a component in the $\lambda_\beta$-eigenspace of $D_{\bm{x}} F$.}, 
implying that it is \textit{stiff} \cite[Chapter 8]{atkinson2011numerical}.
While there is no widely accepted definition of stiff equations, their \emph{``most important common feature [...] is that when such equations are being solved with standard numerical methods, the step-size $|\Delta\beta|$ must be extremely small in order to maintain stability --- far smaller than would appear to be necessary from a consideration of the truncation error''}, \cite[Chapter 8]{atkinson2011numerical}.
See also \cite[Section 6]{butcher2000numerical}.

Indeed, while a finite Lipschitz constant is required for Taylor method's error analysis (Theorem \ref{thm:error-analysis-for-euler-method} above), 
it need not be bounded near a bifurcation\footnote{
	By the argument above, the linearization of \eqref{eq:ODE-implicit-form} need not have a finite matrix norm when an eigenvalue of $D_{\bm{x}} F$ vanishes gradually.
}. cf., its error bound \eqref{eq:single-step-error-increase-in-Taylor-method}.
The existence of a bifurcation is not just a technical hurdle in proving that the Theorem's conditions hold, but an essential one, impeding algorithms' performance there.
For Blahut-Arimoto's algorithm, this is manifested by critical slowing down near bifurcations, \cite{agmon2021critical}, while for tracking an operator's root it is manifested in the stiffness of the implicit ODE \eqref{eq:ODE-implicit-form}.
Both BA and our Algorithm \ref{algo:high-order-derivs-of-operator-roots} for RD suffer from reduced accuracy when approaching a bifurcation, as depicted by Figure \ref{fig:derivative-calculation-loses-accuracy-near-bifurcation}.

While there is much literature on stiff differential equations, stopping at a cluster mass threshold $\delta > 0$ as in Algorithm \ref{algo:taylor-method-for-RD-root-tracking} is a straightforward solution for guaranteeing convergence of Taylor's method.
The proof of Theorem \ref{thm:taylor-method-converges-for-RD-root-tracking-away-of-bifurcation} bounds the local Lipschitz constants so long that the bifurcation is at least $\delta$-far.
To show this, Lemma \ref{lem:uniform-bound-on-BA-derivative-tensors} below guarantees that RD derivative tensors are then bounded uniformly. While Proposition \ref{prop:Jacobian-of-high-order-beta-derivative} in Section \ref{sub:local-error-estimates-for-beta-derivs} allows to bound not only the implicit derivatives, but also the local Lipschitz constants.
Writing $\inputmarginalVect$ for a distribution in $\Delta[\hat{\mathcal{X}}]$, we have the following.

\begin{lem}[RD derivative tensors are bounded uniformly on compact subsets in simplex interior]	\label{lem:uniform-bound-on-BA-derivative-tensors}
	For any $\delta > 0$ small enough, the derivative tensors of $Id - BA_\beta$ \eqref{eq:RD-operator-def} (\eqref{eq:repeated-beta-deriv-in-thm} and \eqref{eq:mixed-BA-deriv-in-thm} in Theorem \ref{thm:high-order-derivs-of-BA-in-main-text}) are bounded uniformly on the closed $\delta$-interior of the simplex, under the supremum norm. 
	
	Explicitly, let an RD problem be defined by $p_X$ and $d(x, \hat{x})$ ($d$ for short), and let $\inputmarginalVect$ be a distribution in the closed $\delta$-interior of $\Delta[ \hat{\mathcal{X}} ]$. Then, for any orders $b, m \geq 0$ of differentiation (other than $b = m = 0$),
	\begin{equation}		\label{eq:uniform-upper-bound-on-deriv-tensor}
		\left| \left(D_{\beta^{b}, \inputmarginalVect^{m}}^{b + m} \left(Id - BA_\beta \right)\left[\inputmarginalVect\right](\hat{x}) \right)_{(i_1, i_2, \dots, i_m)} \right| \leq
		1 + \frac{1}{\delta^{m}} \cdot C(b, m; d, |\hat{\mathcal{X}}|)
	\end{equation}
	where 
	\begin{equation}
		C(b, m; d, M) := 2b \cdot (m + 1)! \; \binom{b + M - 1}{b} \left[ m! \; p(b) \cdot \left( 2^{b} b! \cdot d_{max}^{b^2} \right)^{1 + m} \right]^{M} \;, 
	\end{equation}
	with $d_{max} := \max \left\{ 1, \max_{x, \hat{x}} d(x, \hat{x}) \right\}$, and $p(m)$ is the partition function (see Section \ref{part:how-and-what}.\ref{sub:high-order-beta-derivatives-at-an-operator-root}).
\end{lem}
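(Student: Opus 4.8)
The plan is to bound separately each of the three ingredients of Theorem~\ref{thm:high-order-derivs-of-BA-in-main-text} --- the polynomials $P_k$, the functions $G(k,a;\intermediateencoderVect,d)$, and the combinatorial sums that assemble them into \eqref{eq:mixed-BA-deriv-in-thm} --- and to observe that the $\delta$-interior hypothesis is used in exactly one place: the factor $\bigl(\intermediateencoder{'}{}/\inputmarginal{'}\bigr)^{\bm{\alpha}_+}$ in \eqref{eq:mixed-BA-deriv-in-thm}, which is at most $\delta^{-m}$ because $\intermediateencoder{'}{}\le 1$ and $\inputmarginal{'}\ge\delta$. Everything else will be bounded uniformly over $\beta>0$ and over all $\inputmarginalVect\in\Delta[\hat{\mathcal X}]$, the conceptual point being that the encoder $\intermediateencoderVect$ defined by \eqref{eq:encoder-eq} is always a genuine conditional distribution; hence the only quantities feeding the $P_k$'s --- namely $d(x,\hat x)$ and the moments $\expectedDxWRTencoderK{j}=\sum_{\hat x'}\intermediateencoder{'}{}d(x,\hat x')^j$ --- are bounded by $d_{max}$ and $d_{max}^{\,j}$, with no $\beta$-dependence.

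First I would bound the abstract polynomials $P_k\in\bb{R}[x_0,x_1,\dots]$. Write $\lVert P\rVert_1$ for the sum of the absolute values of the coefficients of $P$. Applying the Leibniz rule to a monomial of degree $d$ shows $\lVert\dbar(\text{monomial})\rVert_1\le 2d\cdot\lvert\text{coeff}\rvert$, since each factor $\dbar x_j=x_1x_j-x_{j+1}$ contributes two monomials with coefficient $\pm i_j$; and multiplication by $x_1-x_0$ at most doubles $\lVert\cdot\rVert_1$. An easy induction from \eqref{eq:P_k-inductive-def} gives $\deg P_k\le k$ and that $P_k$ involves only $x_0,\dots,x_k$ (both $\dbar$ and multiplication by $x_1-x_0$ raise degree, resp.\ the top index, by at most one). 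The recursion then yields $\lVert P_{k+1}\rVert_1\le 2(k+1)\lVert P_k\rVert_1$, hence $\lVert P_k\rVert_1\le 2^k k!$ starting from $\lVert P_0\rVert_1=1$. Substituting $x_0=d(x,\hat x)\in[0,d_{max}]$ and $x_j=\expectedDxWRTencoderK{j}\in[0,d_{max}^{\,j}]$, every monomial of $P_k$ evaluates to at most $d_{max}^{k^2}$ (degree $\le k$, variable indices $\le k$, and $d_{max}\ge 1$), so
\[
\bigl\lvert P_k[\intermediateencoderVect;d](\hat x,x)\bigr\rvert\ \le\ \lVert P_k\rVert_1\,d_{max}^{k^2}\ \le\ 2^k k!\,d_{max}^{k^2}\ =:\ B_k ,
\]
uniformly in $\beta$ and $\inputmarginalVect$. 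Plugging this into \eqref{eq:combinatorial-G-in-terms-of-polynomials}, using $B_j/j!\le B_k$ for $j\le k$, $\lvert\bm t\rvert\le a$, and the fact that the admissible $\bm t$ number at most $p(k)$ each with coefficient $\le 1$, gives $\bigl\lvert G(k,a;\intermediateencoderVect,d)_{(\hat x,x)}\bigr\rvert\le p(k)\,B_k^{\,a}$.

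Then I would assemble. For the pure-$\beta$ tensor \eqref{eq:repeated-beta-deriv-in-thm} the bound is immediate: $\bigl\lvert\partialbetaK{}{b}(Id-BA_\beta)[\inputmarginalVect](\hat x)\bigr\rvert\le\sum_x p_X(x)\intermediateencoder{}{}\lvert P_b(\hat x,x)\rvert\le B_b$, dominated by $C(b,0;d,M)$. For the mixed tensor \eqref{eq:mixed-BA-deriv-in-thm}: the Kronecker term $\delta_{\bm{\alpha},\bm e_{\hat x}}$ accounts for the leading ``$1+$''; in the remainder, $(\lvert\bm{\alpha}_+\rvert-1)!\le(m-1)!$, $\bm{\alpha}!$ is a product of factorials of $b$ and of the $\alpha_j$'s (hence $\le b!\,m!$), the outer sum over $\bm k\in\bb{N}_0^M$ with $\lvert\bm k\rvert=b$ has $\binom{b+M-1}{b}$ terms, and $\sum_x p_X(x)=1$ absorbs the $x$-sum. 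For each $\bm k$, the product $\prod_{i\ne\hat x}\lvert G(k_i,\alpha_i)\rvert$ times the bracket is bounded, via the $G$-estimate and $B_j\le B_b$, by $p(b)^M B_b^{\,m+1}$ up to a polynomial-in-$m$ factor (the $a$-arguments of the $M$ occurring $G$-factors sum to $\lvert\bm{\alpha}_+\rvert+1=m+1$, and $\alpha_{\hat x}\le m$); finally $\bigl(\intermediateencoder{'}{}/\inputmarginal{'}\bigr)^{\bm{\alpha}_+}\le\delta^{-m}$. Multiplying these estimates and rounding the constants up generously to the stated closed form yields $1+\delta^{-m}C(b,m;d,M)$.

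I expect the main obstacle to be purely organizational rather than conceptual: threading the several nested sums of \eqref{eq:mixed-BA-deriv-in-thm} --- the sum over $\bm k$, the sums defining each $G$, and the $M$-fold product of $G$-factors --- so that the exponents of $B_b$, the powers of $p(b)$, the factorials in $b$ and $m$, and the counting factor $\binom{b+M-1}{b}$ are all dominated by $C(b,m;d,M)$, while scrupulously keeping the single $\delta$-dependent quantity $\bigl(\intermediateencoder{'}{}/\inputmarginal{'}\bigr)^{\bm{\alpha}_+}$ isolated. Beyond that, the only substantive remark is that $\intermediateencoderVect$ is always a bona fide distribution --- making all moment terms $\beta$-free and bounded by $d_{max}^{\,j}$ --- and that $\delta$-interiority enters nowhere except through $\inputmarginal{'}\ge\delta$.
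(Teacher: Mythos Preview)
Your proposal is correct and follows essentially the same route as the paper's proof: bound $|P_k|\le 2^k k!\,d_{max}^{k^2}$ (the paper isolates this as a separate lemma), feed that into $|G(k,a)|\le p(k)\,B_k^{\,a}$, use $\delta$-interiority solely to control $(\intermediateencoder{'}{}/\inputmarginal{'})^{\bm{\alpha}_+}\le\delta^{-m}$, and then aggregate the combinatorial factors of \eqref{eq:mixed-BA-deriv-in-thm}. Your inductive derivation of $\lVert P_k\rVert_1\le 2^k k!$ via the Leibniz rule is exactly the content of the paper's Lemma~\ref{lem:bound-on-the-complexity-of-P_K}, and your bookkeeping of the $\bm k$-sum, the $\bm\alpha!$ factor, and the $M$-fold product of $G$'s matches the paper's assembly step.
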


While the bound at \eqref{eq:uniform-upper-bound-on-deriv-tensor} can be improved, its purpose is to bound the derivative tensors of $Id - BA_\beta$ \eqref{eq:RD-operator-def} uniformly, with constants involving only the orders $b$ and $m$ of differentiation, and the problem's definition. 
See Section \ref{part:proofs}.\ref{sub:proof-of-lem:uniform-bound-on-BA-derivative-tensors} for its proof.
Note that the bound does \emph{not} depend on the variable $\beta$, even though $BA_\beta$ and its fixed points do depend on $\beta$.
For a derivative with respect to $\beta$ alone, $m = 0$, the bound is uniform on the entire simplex.

\subsection{Local Lipschitz constants of high-order implicit derivatives}
\label{sub:local-error-estimates-for-beta-derivs}

The previous Subsection \ref{sub:computational-difficulty-of-RTRD} shows that it need not be possible to bound Taylor method's local Lipschitz constants uniformly. In the same token, using a fixed order and step size in Algorithm \ref{algo:taylor-method-for-RD-root-tracking} is computationally inefficient. 
cf., Figure \ref{fig:7-points-example}, and Section \ref{part:how-and-what}.\ref{sub:efficient-RD-root-tracking} for improvements. 
For both purposes, it is useful to have an estimate of local Lipschitz constants, which we provide below.

\medskip
The computational inefficiency due to fixed order and step size (Section \ref{part:how-and-what}.\ref{sub:efficient-RD-root-tracking}) can be traced back to the variations in local Lipschitz constants, which explode at cluster-vanishing bifurcations.
Fitting a single value to all $\beta$ and $\bm{x}$ values is too conservative.
Instead, one could use $\|D_{\bm{x}} T_l \|_\infty \cdot \| \bm{x} - \bm{x}'\|_\infty$ to estimate an upper bound to $\| T_l(\bm{x}, \beta) - T_l(\bm{x}', \beta) \|_\infty$, if $\bm{x}$ and $\bm{x}'$ close enough.
This leads to a local estimate of the error's growth rate as in \eqref{eq:single-step-error-increase-in-Taylor-method}, up to replacing $L_l$ there with the matrix norm $\|D_{\bm{x}} T_l \rvert_{(\bm{x}_n, \beta_n)}\|_\infty$.
From the definition \eqref{eq:Taylor-poly-def-for-Taylor-method} of $T_l$, it is a sum of implicit derivative vectors $\tfrac{d^m \bm{x}}{d\beta^m}$ for $m = 1, \dots, l$. And so, to calculate the latter matrix norm it suffices to calculate the Jacobian matrices $D_{\bm{x}} \tfrac{d^m \bm{x}}{d\beta^m}$.
A direct calculation in Section \ref{part:proofs}.\ref{sub:proof-of-prop:Jacobian-of-high-order-beta-derivative} yields the formula below. 
While it involves many more summands than formula \eqref{eq:formula-for-high-order-beta-derivatives} for implicit derivatives (Theorem \ref{thm:formula-for-high-order-expansion-of-F-in-main-result-sect}), all the ingredients needed for $D_{\bm{x}} \tfrac{d^l \bm{x}}{d\beta^l}$ were already computed when calculating $\tfrac{d^l \bm{x}}{d\beta^l}$ (if $l \geq 2$). 

\begin{prop}			\label{prop:Jacobian-of-high-order-beta-derivative}
	Under the conditions of Theorem \ref{thm:formula-for-high-order-expansion-of-F-in-main-result-sect} (Section \ref{part:how-and-what}.\ref{sub:high-order-beta-derivatives-at-an-operator-root}), suppose further that the Jacobian matrix $D_{\bm{x}} F$ is invertible. Then, the Jacobian of the $l$-th order derivative is given by,
	\begin{equation}
		\begin{aligned}			\label{eq:formula-for-jacobian-of-high-order-beta-deriv}
			D_{\bm{x}} \tfrac{d^l \bm{x}}{d\beta^l} &= 
			- \left(D_{\bm{x}} F\right)^{-1} \left( D^2_{\bm{x}, \bm{x}} F \right) \tfrac{d^l \bm{x}}{d\beta^l} \\
			&- \left(D_{\bm{x}} F\right)^{-1} \sum_{\substack{\text{non-trivial} \\ \text{partitions}}}
			\sum_{b=0}^{m_1 \cdot \delta(p_1 = 1)}
			\frac{l!}{b! (m_1 - b)! m_2! \cdots m_s! \cdot (p_1!)^{m_1} \cdots (p_s!)^{m_s}} \\ &\cdot \Bigg\{ 
			D^m_{\beta^b, \bm{x}^{m-b {+1}}} F\Big[ 
			\left( \tfrac{d^{p_1}\bm{x}}{d\beta^{p_1}} \right)_{\times (m_1 - b)}, 
			\left( \tfrac{d^{p_2}\bm{x}}{d\beta^{p_2}} \right)_{\times m_2}, 
			\dots, \left( \tfrac{d^{p_s}\bm{x}}{d\beta^{p_s}} \right)_{\times m_s} \Big] 
			\\ &+
			{ (m_1 - b)} \cdot D^m_{\beta^b, \bm{x}^{m-b}} F\Big[ 
			{  D_{\bm{x}} \tfrac{d^{p_1}\bm{x}}{d\beta^{p_1}}},
			\left( \tfrac{d^{p_1}\bm{x}}{d\beta^{p_1}} \right)_{\times (m_1 - b {  - 1})}, 
			\left( \tfrac{d^{p_2}\bm{x}}{d\beta^{p_2}} \right)_{\times m_2}, 
			\dots, \left( \tfrac{d^{p_s}\bm{x}}{d\beta^{p_s}} \right)_{\times m_s} \Big] 
			\\ &+
			{ m_2} \cdot D^m_{\beta^b, \bm{x}^{m-b}} F\Big[ 
			\left( \tfrac{d^{p_1}\bm{x}}{d\beta^{p_1}} \right)_{\times (m_1 - b)}, 
			{ D_{\bm{x}} \tfrac{d^{p_2}\bm{x}}{d\beta^{p_2}} },
			\left( \tfrac{d^{p_2}\bm{x}}{d\beta^{p_2}} \right)_{\times (m_2 { - 1})}, 
			\dots, \left( \tfrac{d^{p_s}\bm{x}}{d\beta^{p_s}} \right)_{\times m_s} \Big] 
			\\ &+ \dots \\ &+
			{ m_s} \cdot D^m_{\beta^b, \bm{x}^{m-b}} F\Big[ 
			\left( \tfrac{d^{p_1}\bm{x}}{d\beta^{p_1}} \right)_{\times (m_1 - b)}, 
			\left( \tfrac{d^{p_2}\bm{x}}{d\beta^{p_2}} \right)_{\times m_2}, 
			\dots, { D_{\bm{x}} \tfrac{d^{p_s}\bm{x}}{d\beta^{p_s}}}, 
			\left( \tfrac{d^{p_s}\bm{x}}{d\beta^{p_s}} \right)_{\times (m_s { - 1})} \Big]
			\Bigg\}
		\end{aligned}
	\end{equation}
	where the summation is over non-trivial integer partitions $(m_1)\cdot p_1 + \dots + (m_s)\cdot p_s$ of $l$, and $m := m_1 + \dots + m_s$. 
\end{prop}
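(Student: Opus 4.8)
The plan is to differentiate the recursive formula \eqref{eq:formula-for-high-order-beta-derivatives} of Theorem \ref{thm:formula-for-high-order-expansion-of-F-in-main-result-sect} with respect to $\bm{x}$, treating every occurrence of $\tfrac{d^l\bm{x}}{d\beta^l}$ and of the lower-order derivatives $\tfrac{d^{p_i}\bm{x}}{d\beta^{p_i}}$ as functions of $\bm{x}$ (through the path $\bm{x}(\beta)$ passing through the varying base-point), and applying the product rule and the multilinearity of the derivative tensors. Concretely, rewrite \eqref{eq:formula-for-high-order-beta-derivatives} as $D_{\bm{x}}F\big[\tfrac{d^l\bm{x}}{d\beta^l}\big] = -\,R$, where $R$ is the sum over non-trivial partitions. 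Differentiating the left-hand side gives $D^2_{\bm{x},\bm{x}}F\big[\tfrac{d^l\bm{x}}{d\beta^l}, \cdot\big] + D_{\bm{x}}F\big[D_{\bm{x}}\tfrac{d^l\bm{x}}{d\beta^l}\big]$, so that solving for $D_{\bm{x}}\tfrac{d^l\bm{x}}{d\beta^l}$ produces the leading term $-\left(D_{\bm{x}}F\right)^{-1}\left(D^2_{\bm{x},\bm{x}}F\right)\tfrac{d^l\bm{x}}{d\beta^l}$ and the factor $-\left(D_{\bm{x}}F\right)^{-1}$ multiplying $D_{\bm{x}}R$.

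Next I would compute $D_{\bm{x}}R$ term by term. For a fixed non-trivial partition $(m_1)\cdot p_1 + \dots + (m_s)\cdot p_s$ and fixed $b$, the corresponding summand is a scalar coefficient times $D^m_{\beta^b,\bm{x}^{m-b}}F$ evaluated on a list of repeated arguments. Differentiating this with respect to $\bm{x}$ hits (i) the tensor $D^m_{\beta^b,\bm{x}^{m-b}}F$ itself, which raises its $\bm{x}$-differentiation order by one, yielding $D^m_{\beta^b,\bm{x}^{m-b+1}}F$ with the same argument list plus one open slot — this is the first line inside the braces of \eqref{eq:formula-for-jacobian-of-high-order-beta-deriv}; and (ii) each of the $m-b$ repeated vector arguments in turn, by the multilinear product rule. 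Since $\tfrac{d^{p_1}\bm{x}}{d\beta^{p_1}}$ appears $m_1-b$ times, differentiating one of those copies (and using symmetry of the tensor to collect the $m_1-b$ identical contributions) produces the coefficient $(m_1-b)$ in front of a tensor with one argument replaced by $D_{\bm{x}}\tfrac{d^{p_1}\bm{x}}{d\beta^{p_1}}$ and the multiplicity of $\tfrac{d^{p_1}\bm{x}}{d\beta^{p_1}}$ dropped to $m_1-b-1$; similarly $\tfrac{d^{p_i}\bm{x}}{d\beta^{p_i}}$ contributes the coefficient $m_i$ with multiplicity dropped to $m_i-1$, for $i\geq 2$. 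Collecting all of these over $i$ gives exactly the remaining lines inside the braces, and multiplying through by $-\left(D_{\bm{x}}F\right)^{-1}$ and reinstating the partition sum completes the identity. Invertibility of $D_{\bm{x}}F$ is used only to isolate $D_{\bm{x}}\tfrac{d^l\bm{x}}{d\beta^l}$ at the end, as in the statement's hypothesis.

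The bookkeeping subtleties I expect to be the main obstacle: correctly tracking the combinatorial coefficients when differentiating a repeated-argument multilinear form, and making sure no term is double-counted. The cleanest way is to temporarily un-merge the repeated arguments — write $D^m_{\beta^b,\bm{x}^{m-b}}F[\bm{v}_1,\dots,\bm{v}_{m-b}]$ with the understanding that $\bm{v}_1=\dots=\bm{v}_{m_1-b}=\tfrac{d^{p_1}\bm{x}}{d\beta^{p_1}}$, etc. — apply the ordinary Leibniz rule to get $m-b$ distinct terms plus the slot-adding term, then re-merge using the symmetry of $D^m_{\beta^b,\bm{x}^{m-b}}F$ (noted in the footnote after \eqref{eq:mixed-deriv-def-evaluated-applied-to-vectors}) to see that the $m_i$ terms in which a copy of $\tfrac{d^{p_i}\bm{x}}{d\beta^{p_i}}$ is differentiated all coincide, producing the factor $m_i$. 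One must also be careful that the chain-rule applied to each $\tfrac{d^{p_i}\bm{x}}{d\beta^{p_i}}$ as a function of the base-point $\bm{x}$ indeed yields $D_{\bm{x}}\tfrac{d^{p_i}\bm{x}}{d\beta^{p_i}}$ with no extra Jacobian factor — this holds because along the solution path the base-point variation is precisely the identity direction in the relevant sense, so that the derivatives $D_{\bm{x}}\tfrac{d^{p_i}\bm{x}}{d\beta^{p_i}}$ are the ones already available from the recursion at orders $p_i<l$. With those two points handled, the rest is a routine, if lengthy, verification.
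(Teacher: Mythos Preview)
Your proposal is correct and follows essentially the same approach as the paper's proof: rewrite \eqref{eq:formula-for-high-order-beta-derivatives} as $D_{\bm{x}}F[\tfrac{d^l\bm{x}}{d\beta^l}]=-S$, differentiate both sides in $\bm{x}$ to isolate $D_{\bm{x}}\tfrac{d^l\bm{x}}{d\beta^l}$ via the invertibility of $D_{\bm{x}}F$, and compute $D_{\bm{x}}S$ by the Leibniz rule on each multilinear summand, using the symmetry of the derivative tensors to collapse the $m_i$ identical contributions from each repeated argument into the displayed coefficients. Your ``un-merge, apply Leibniz, re-merge by symmetry'' bookkeeping is exactly what the paper does.
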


This proposition is the last building block needed for the proof of Theorem \ref{thm:taylor-method-converges-for-RD-root-tracking-away-of-bifurcation}, whose proof is brought at Section \ref{part:proofs}.\ref{sub:proof-of-thm:taylor-method-converges-for-RD-root-tracking-away-of-bifurcation}.
As with the implicit derivatives $\tfrac{d^l \bm{x}}{d\beta^l}$, the Jacobians $D_{\bm{x}} \tfrac{d^l \bm{x}}{d\beta^l}$ essentially contain $\left(D_{\bm{x}} F\right)^{-1}$ to the $(l+1)$-st power, due to the first term at \eqref{eq:formula-for-jacobian-of-high-order-beta-deriv}, and so lose their accuracy when approaching a bifurcation.
cf., the notes after Theorem \ref{thm:formula-for-high-order-expansion-of-F-in-main-result-sect} (in Section \ref{part:how-and-what}.\ref{sub:high-order-beta-derivatives-at-an-operator-root}).

\medskip
\section{Complexity of root-tracking and root-tracking for RD }

\label{sec:computational-complexities}

In this section, we present the main complexity results: of root-tracking and root-tracking for RD. 
We provide bounds for root-tracking both with and without tensor memorization. For RD, $F = Id - BA_\beta$ \eqref{eq:RD-operator-def}, the complexities of the derivatives tensors are broken down to their components, in Table \ref{tab:complexities-of-RD-deriv-tensors} below. 
Adding these to the complexities of root-tracking yields the complexity bounds of RD root tracking (Theorem \ref{thm:complexity-of-RD-root-tracking} in Section \ref{part:how-and-what}.\ref{sub:costs-and-error-to-cost-tradeoff-of-RD-root-tracking-in-main-results-section}). 
See Section \ref{part:proofs}.\ref{sec:proofs-for-complexity-at-appendix} for proofs of the below.
\medskip

Recall, $p(n)$ is the number of partitions of an integer $n$, with no restriction. One may restrict the number of parts of which a partition is comprised (its total multiplicity). We write $p_{\leq k}(n)$ for the number of partitions when no more than $k$ parts are allowed. e.g., $p_{\leq n}(n) = p(n)$ follows directly from the definition. 
See Equation \eqref{eq:recurrence-formula-for-partition-w-exact-part-num} ff. in Section \ref{sub:complexity-of-high-order-derivs-for-root-tracking} for details. 
Write $C(b, m)$ for the computational complexity of a derivatives tensor $D^{m}_{\beta^b, \bm{x}^{m-b}} F$. 
Where, $F(\cdot, \beta)$ is an unspecified operator on $\bb{R}^T$, as in Section \ref{part:how-and-what}.\ref{sub:beta-derivs-at-an-operator-root}. 
With this, the complexity of Algorithm \ref{algo:high-order-derivs-of-operator-roots} for computing implicit derivatives at an operator's root (Section \ref{part:how-and-what}.\ref{sub:high-order-beta-derivatives-at-an-operator-root}) is as follows (proofs in Section \ref{sub:complexity-of-high-order-derivs-for-root-tracking}).

\begin{prop}[Complexity of $l$-th order implicit derivative]
	\label{prop:computational-cost-of-high-order-derivs-when-all-tensors-are-evaluated}
	Assume that the derivatives $\tfrac{d^k \bm{x}}{d\beta^k}$ are known for all $0 < k < l$, and let $C(b, m)$ the complexity of a derivatives tensor $D^{m}_{\beta^b, \bm{x}^{m-b}} F$.
	Then, the computational complexity of the $l$-th order implicit derivative formula \eqref{eq:formula-for-high-order-beta-derivatives} for $\tfrac{d^l \bm{x}}{d\beta^l}$ is 
	\begin{equation}		\label{eq:computational-complexity-of-root-tracking-no-memorization}
		O(T^3) +
		O(T\cdot \sum_{j=1}^l p(j)) +
		\sum_{m=0}^l \sum_{b=0}^m p_{\leq m-b}(l-m) \Big[ O((m-b+1) T^{m-b+1}) + C(b, m) \Big]
	\end{equation}
\end{prop}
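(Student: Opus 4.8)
The plan is to walk through Algorithm \ref{algo:high-order-derivs-of-operator-roots} line by line under the standing hypothesis that the lower-order derivatives $\tfrac{d^k\bm{x}}{d\beta^k}$ ($0<k<l$) are already available, so that the recursive calls cost nothing, and to add up the work it does to assemble the right-hand side of \eqref{eq:formula-for-high-order-beta-derivatives} and then invert $D_{\bm{x}}F$ on it. The elementary operations are: (i) looping over the non-trivial integer partitions $(m_1)\cdot p_1+\dots+(m_s)\cdot p_s$ of $l$ and, for each, over $b=0,\dots,m_1\delta(p_1=1)$; (ii) for every such pair, calling the tensor subroutine to obtain $D^m_{\beta^b,\bm{x}^{m-b}}F$ at cost $C(b,m)$, contracting this $(m-b)$-linear tensor against the supplied derivative vectors, scaling by the partition coefficient, and accumulating into the working vector in $\bb{R}^T$; and (iii) forming $D_{\bm{x}}F$ and solving $D_{\bm{x}}F[\cdot]=\textit{result}$ for $\tfrac{d^l\bm{x}}{d\beta^l}$. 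Step (iii) contributes $C(0,1)+O(T^3)$, the cubic term being Gaussian elimination / computing a linear pre-image on a $T\times T$ matrix.

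For step (ii) I would first bound the per-pair arithmetic. Contracting a tensor with $m-b$ axes of length $T$ against $m-b$ vectors one axis at a time costs at most $O((m-b+1)T^{m-b+1})$; the subsequent scaling and subtraction into the working vector costs $O(T)$; and the coefficient $\tfrac{l!}{b!(m_1-b)!m_2!\cdots(p_s!)^{m_s}}$ is a quotient of factorials evaluable in $O(l)$ time once $0!,\dots,l!$ are tabulated (indeed it depends only on $l$ and the partition, hence may be precomputed once for all evaluation points), so it is absorbed. It then remains to count the pairs in each slot. Grouping by the total multiplicity $m=m_1+\dots+m_s$ and by $b$, a pair lands in slot $(m,b)$ exactly when the partition has $m$ parts of which at least $b$ equal $1$; deleting $b$ copies of the part $1$ and then subtracting $1$ from each of the remaining $m-b$ parts is a bijection onto the partitions of $l-b-(m-b)=l-m$ into at most $m-b$ parts, so slot $(m,b)$ contains $p_{\le m-b}(l-m)$ partitions — with the single exception that the trivial partition $(1)\cdot l$ (which has $m=1$) was excluded from the outer sum, so the nominal $(m,b)=(1,0)$ slot is over-counted by one. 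I keep that slot at its nominal value $p_{\le 1}(l-1)$ and use the surplus count to pay for the one call to the tensor subroutine that produces $D_{\bm{x}}F$ in step (iii), the accompanying $O(T^2)$ contraction cost being dominated by the $O(T^3)$ already present. Summing $p_{\le m-b}(l-m)\bigl[\,O((m-b+1)T^{m-b+1})+C(b,m)\,\bigr]$ over $0\le b\le m\le l$ yields the third term of \eqref{eq:computational-complexity-of-root-tracking-no-memorization}.

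Finally, for the accumulation term $O\bigl(T\sum_{j=1}^l p(j)\bigr)$ I would bound the total number of loop iterations, which equals $\sum_{\lambda\vdash l}\bigl(m_1(\lambda)+1\bigr)$ where $m_1(\lambda)$ is the multiplicity of $1$ in $\lambda$; by the classical identity $\sum_{\lambda\vdash l}m_1(\lambda)=\sum_{i\ge 1}p(l-i)=\sum_{i=0}^{l-1}p(i)$ this is $\sum_{i=0}^{l}p(i)=O\bigl(\sum_{j=1}^l p(j)\bigr)$, and multiplying by the $O(T)$ cost of one accumulation step gives the bound. Collecting (i)--(iii) gives \eqref{eq:computational-complexity-of-root-tracking-no-memorization}. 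The main obstacle is bookkeeping rather than analysis: reconciling the loop structure of Algorithm \ref{algo:high-order-derivs-of-operator-roots} (partitions of $l$ with a few parts of size $1$ siphoned off into $\beta$-derivatives) with the $(m,b)$-indexed double sum through the two partition bijections above, and tracking carefully the trivial-partition edge case together with where the cost of forming $D_{\bm{x}}F$ is charged; the contraction and linear-algebra estimates are routine.
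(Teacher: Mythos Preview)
Your proposal is correct and follows essentially the same approach as the paper: decomposing the cost into the linear solve $O(T^3)$, the accumulation over all inner--outer loop iterations $O(T\sum_{j=1}^l p(j))$, and the per-pair tensor computation plus contraction grouped by $(m,b)$. The only minor difference is in the partition bookkeeping: the paper derives the slot count $p_{\le m-b}(l-m)$ via Lemma~\ref{lem:num-of-partitions-with-restricted-parts-and-multiplicity-of-1} (a telescoping argument together with the recurrence~\eqref{eq:recurrence-formula-for-partition-w-exact-part-num}), and the total iteration count $\sum_{j=0}^l p(j)-1$ by the telescoping sum of Lemma~\ref{lem:number-of-summands-at-formula-for-high-order-beta-derivs}, whereas you obtain the same quantities by the direct ``remove $b$ ones, strip a staircase'' bijection and the classical identity $\sum_{\lambda\vdash l} m_1(\lambda)=\sum_{i=0}^{l-1} p(i)$; both routes are standard and equivalent.
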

The last summand in \eqref{eq:computational-complexity-of-root-tracking-no-memorization} with $C(b, m)$ stands for the cost of computing the derivative tensors, the third for evaluating the multilinear forms they define, the second for summing over the evaluated forms, and the first for finding a linear pre-image under $D_{\bm{x}}F$.

As seen already by the first few expansion orders \eqref{eq:first-order-expansion-of-operator-eq-implicit}-\eqref{eq:third-order-expansion-implicit} of $\tfrac{d^k F}{d\beta^k} = 0$ (in Section \ref{part:how-and-what}.\ref{sub:beta-derivs-at-an-operator-root}), derivative tensors are often re-used after their first appearance. 
Hence, it makes sense to memorize computed tensors so that they are computed only once. 
This is especially true when the computational costs $C(b, m)$ of derivative tensors $D^{m}_{\beta^b, \bm{x}^{m-b}} F$ are high, as in rate distortion problems.
cf., the example in Section \ref{part:how-and-what}.\ref{subsub:lines-intersecting-parabola-example} in contrast.

\begin{prop}[Complexity of $l$-th order implicit derivative, with tensor memorization]
	\label{prop:computational-cost-of-high-order-derivs-when-tensors-are-memorized}
	Assume that the derivatives $\tfrac{d^k \bm{x}}{d\beta^k}$ are known for all $0 < k < l$, and let $C(b, m)$ the complexity of calculating a derivative tensor $D^{m}_{\beta^b, \bm{x}^{m-b}} F$. Assume further that all the derivative tensors $D^{m}_{\beta^b, \bm{x}^{m-b}} F$ with $m < l$ have already been computed.
	Then, the computational complexity of formula \eqref{eq:formula-for-high-order-beta-derivatives} for $\tfrac{d^l \bm{x}}{d\beta^l}$ is 
	\begin{equation}		\label{eq:computational-costs-in-prop-for-recursive-formula}
		O(T^3) +
		O(T\cdot \sum_{j=1}^l p(j)) +
		\sum_{m=0}^l \sum_{b=0}^m p_{\leq m-b}(l-m) O((m-b+1) T^{m-b+1}) +
		\sum_{b=0}^l C(b, l)
	\end{equation}
	The memory complexity of storing all the derivative tensors $D^{m}_{\beta^b, \bm{x}^{m-b}} F$ with $0 \leq b \leq m < l$ is
	\begin{equation}		\label{eq:memory-costs-in-prop-for-recursive-formula}
		O\Big( l \cdot T^l \Big)
	\end{equation}
\end{prop}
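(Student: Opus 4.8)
The plan is to establish Proposition~\ref{prop:computational-cost-of-high-order-derivs-when-tensors-are-memorized} by bookkeeping over the recursive structure of Algorithm~\ref{algo:high-order-derivs-of-operator-roots}, reusing essentially the accounting already carried out for Proposition~\ref{prop:computational-cost-of-high-order-derivs-when-all-tensors-are-evaluated} and then peeling off the cost of tensors that are now assumed precomputed. Concretely, I would start from Theorem~\ref{thm:formula-for-high-order-expansion-of-F-in-main-result-sect}: the $l$-th order derivative $\tfrac{d^l\bm{x}}{d\beta^l}$ is obtained by summing evaluated multilinear forms $D^m_{\beta^b,\bm{x}^{m-b}}F[\cdots]$ over non-trivial partitions of $l$ and admissible $b$, then solving one linear system with matrix $D_{\bm{x}}F$. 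The four summands in \eqref{eq:computational-costs-in-prop-for-recursive-formula} correspond respectively to (i) the linear pre-image step $O(T^3)$, (ii) summing the $\sum_{j=1}^l p(j)$-many vectors of length $T$ aggregated across the recursion, (iii) evaluating each multilinear form $D^m_{\beta^b,\bm{x}^{m-b}}F[\cdots]$ — a contraction of an $(m-b+1)$-axis tensor against $m-b$ vectors, costing $O((m-b+1)T^{m-b+1})$, with the number of such terms bounded by $p_{\leq m-b}(l-m)$ as in Proposition~\ref{prop:computational-cost-of-high-order-derivs-when-all-tensors-are-evaluated}, and (iv) the cost $\sum_{b=0}^l C(b,l)$ of computing the derivative tensors of top order $m=l$ that were not available by hypothesis.

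The key observation driving the difference from \eqref{eq:computational-complexity-of-root-tracking-no-memorization} is that in the memorized regime every tensor $D^m_{\beta^b,\bm{x}^{m-b}}F$ with $m<l$ is already in the cache, so its computation cost $C(b,m)$ drops out of the inner double sum; only the $m=l$ tensors remain, and among those only the ones with $m-b \le l$ (equivalently all $0\le b\le l$) ever appear, giving exactly $\sum_{b=0}^l C(b,l)$. One subtlety to check carefully: the recursive calls for lower-order derivatives $\tfrac{d^k\bm{x}}{d\beta^k}$, $k<l$, also request tensors, but all of those have order $m<l$ and hence are free under the memorization hypothesis; so the recursion contributes nothing beyond the already-counted evaluation and summation costs. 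I would also note that the $O(T^3)$ pre-image and the $O(T\cdot\sum_{j=1}^l p(j)) $ summation terms are unchanged from the non-memorized version, since memorization concerns only tensor \emph{construction}, not tensor \emph{use}; this parallelism with Proposition~\ref{prop:computational-cost-of-high-order-derivs-when-all-tensors-are-evaluated} is what makes the derivation short.

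For the memory bound \eqref{eq:memory-costs-in-prop-for-recursive-formula}, the plan is a direct counting argument: storing all tensors $D^m_{\beta^b,\bm{x}^{m-b}}F$ with $0\le b\le m<l$ means storing, for each $m<l$, the $\lceil \cdot\rceil$ arrays indexed by the $\hat{x}$-output coordinate and $m-b$ further axes, each of length $T$; the dominant term is $m-b = m$, giving $O(T^{m+1})$ per tensor. Summing over $0\le b\le m \le l-1$ there are $O(l^2)$ tensors, but the sizes are geometrically dominated by the largest, $O(T^{l-1+1})=O(T^l)$, and the number of $(b,m)$ pairs with $m=l-1$ is $O(l)$, yielding $O(l\cdot T^l)$ after absorbing the lower-order terms. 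The main obstacle — really the only place needing care rather than routine bookkeeping — is justifying that the partition-counting factor $p_{\leq m-b}(l-m)$ is the correct bound on the number of evaluated forms of a given shape $(b,m)$ inside the expansion \eqref{eq:formula-for-high-order-beta-derivatives}, and that this bound is consistent with the one used in Proposition~\ref{prop:computational-cost-of-high-order-derivs-when-all-tensors-are-evaluated}; I would handle this by the same correspondence between non-trivial partitions of $l$ contributing a part pattern with total multiplicity $m$ (one of which is "consumed" by the $b$ $\beta$-derivatives) and partitions of $l-m$ into at most $m-b$ parts, which is presumably spelled out around \eqref{eq:recurrence-formula-for-partition-w-exact-part-num} in Section~\ref{sub:complexity-of-high-order-derivs-for-root-tracking}.
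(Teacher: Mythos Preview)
Your proposal is correct and follows essentially the same route as the paper: the paper also derives \eqref{eq:computational-costs-in-prop-for-recursive-formula} by starting from the non-memorized complexity \eqref{eq:computational-complexity-of-root-tracking-no-memorization} and observing that, since the total multiplicity $m$ of any partition of $k<l$ cannot reach $l$, the only tensors not yet cached are the $l+1$ top-order ones $D^{l}_{\beta^b,\bm{x}^{l-b}}F$ for $b=0,\dots,l$ (these all appear via the all-ones partition of $l$), so the $C(b,m)$ term collapses to $\sum_{b=0}^l C(b,l)$ while the evaluation, summation, and linear-solve costs are unchanged. For the memory bound the paper simply sums $\sum_{m=0}^{l-1}\sum_{t=0}^{m} T^{t+1}\le l\cdot T^{l}$ with $t:=m-b$, which is the same counting you describe.
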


Comparing the computational complexity \eqref{eq:computational-costs-in-prop-for-recursive-formula} to its counterpart \eqref{eq:computational-complexity-of-root-tracking-no-memorization} without tensor memorization (Proposition \ref{prop:computational-cost-of-high-order-derivs-when-all-tensors-are-evaluated}), the number of tensors computed for the $l$-th order derivative is sliced from $\sum_{j=1}^l p(j)$ to just the $l+1$ newly needed tensors. cf., Corollary \ref{cor:two-forms-for-the-total-number-of-summands-at-recursive-formula-for-beta-derivs} in Section \ref{part:proofs}.\ref{sub:complexity-of-high-order-derivs-for-root-tracking}.

Finally, the computational complexity of \textit{all} the implicit derivatives up to order $L$ (including) can be bounded as following:

\begin{prop}[Cumulative complexity of implicit derivatives, with tensor memorization]
	\label{prop:total-computational-complexity-with-memorization}
	Under the conditions of Proposition \ref{prop:computational-cost-of-high-order-derivs-when-tensors-are-memorized}, the computational complexity of all the implicit derivatives $\tfrac{d^k \bm{x}}{d\beta^k}$ for $0 < k \leq L$ is
	\vspace*{-9pt}
	\begin{equation}		\label{eq:total-computational-complexity-in-prop}
		O\left( e^{\nicefrac{9}{4} \cdot \ln L + (L+1) \ln T + \pi\sqrt{\nicefrac{2L}{3}}} \right) + 
		\sum_{l=1}^L \sum_{b=0}^l C(b, l) \;,
		\vspace*{-3pt}
	\end{equation}
	when $L \geq 2$. When $L = 1$, it is
	\begin{equation}		\label{eq:total-computational-complexity-in-prop-for-first-order}
		O\left( T^3 \right) + 
		C(0, 1) + C(1, 1) \;.
	\end{equation}
	The memory complexity is as at \eqref{eq:memory-costs-in-prop-for-recursive-formula}.
\end{prop}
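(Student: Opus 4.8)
The plan is to obtain the total cost by summing the per-order cost of Proposition \ref{prop:computational-cost-of-high-order-derivs-when-tensors-are-memorized} over $l = 1, \dots, L$, and then to absorb all the polynomial-and-partition overhead into the single hyper-exponential term of \eqref{eq:total-computational-complexity-in-prop}. First I would observe that, because tensor memorization builds each derivative tensor $D^{m}_{\beta^b, \bm{x}^{m-b}} F$ exactly once, the tensor-construction contributions across orders do not compound: the $l$-th invocation of Algorithm \ref{algo:high-order-derivs-of-operator-roots} adds only the $l+1$ genuinely new tensors $D^{l}_{\beta^b, \bm{x}^{l-b}} F$ for $b = 0, \dots, l$ (cf.\ Corollary \ref{cor:two-forms-for-the-total-number-of-summands-at-recursive-formula-for-beta-derivs}), so these costs telescope cleanly into $\sum_{l=1}^{L}\sum_{b=0}^{l} C(b,l)$, which is exactly the second summand of \eqref{eq:total-computational-complexity-in-prop}. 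What remains is the ``overhead'' in \eqref{eq:computational-costs-in-prop-for-recursive-formula}: the linear pre-image step $O(T^3)$, the partition-indexed accumulation $O(T\sum_{j=1}^{l} p(j))$, and the evaluation of the multilinear forms $\sum_{m=0}^{l}\sum_{b=0}^{m} p_{\le m-b}(l-m)\, O((m-b+1)T^{m-b+1})$, each summed over $l$.

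Next I would bound this overhead. Summing $O(T^3)$ over $l$ gives $O(LT^3)$, which for $L \ge 2$ and $T \ge 1$ is dominated by $T^{L+1}$. For the remaining two terms I would use three elementary facts: that $p$ is non-decreasing and $p_{\le k}(n) \le p(n)$; the Hardy--Ramanujan bound $p(n) \le e^{\pi\sqrt{2n/3}}$ (sharpened by $p(n) = O(n^{-1} e^{\pi\sqrt{2n/3}})$ together with the geometric-decay estimate $\sum_{n\le L} p(n) = O(\sqrt{L}\,p(L))$ to trim polynomial factors); and the crude size bounds $m-b+1 \le L+1$ and $T^{m-b+1} \le T^{L+1}$. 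Re-indexing by $k := m-b$ and $n := l-m$ identifies $p_{\le m-b}(l-m)$ as the number of partitions of $l$ into exactly $m$ parts of which at least $b$ equal $1$, whence the per-order count $\sum_{m,b} p_{\le m-b}(l-m)$ is $O(l^{3/2} p(l))$; multiplying by the $O((m-b+1)T^{m-b+1})$ weight and summing over $l \le L$ collects a polynomial-in-$L$ factor together with a factor $p(L)$, which with $T^{L+1}$ and the Hardy--Ramanujan estimate lies below $O\!\left(L^{9/4} T^{L+1} e^{\pi\sqrt{2L/3}}\right) = O\!\left(e^{(9/4)\ln L + (L+1)\ln T + \pi\sqrt{2L/3}}\right)$. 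The term $O(T\sum_{l}\sum_{j\le l}p(j)) = O\!\left(TL\sum_{j\le L}p(j)\right)$ is handled the same way and is likewise absorbed.

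For the base case $L = 1$ I would simply evaluate Proposition \ref{prop:computational-cost-of-high-order-derivs-when-tensors-are-memorized} at $l = 1$: here $\sum_{j=1}^{1}p(j) = 1$, the only derivative tensors are the first-order ones $D^{1}_{\beta^0,\bm{x}^1} F$ and $D^{1}_{\beta^1,\bm{x}^0} F$ with costs $C(0,1)$ and $C(1,1)$, and the only further work is the $O(T^3)$ linear pre-image under $D_{\bm{x}}F$, yielding \eqref{eq:total-computational-complexity-in-prop-for-first-order}. The memory claim is immediate from Proposition \ref{prop:computational-cost-of-high-order-derivs-when-tensors-are-memorized}: storing all tensors of orders $0 \le m < L$ already costs \eqref{eq:memory-costs-in-prop-for-recursive-formula} (the top-order tensors dominate the storage), and adjoining the order-$L$ tensors and the few quantities cached en route by the algorithm does not change this order, so the overall memory is still \eqref{eq:memory-costs-in-prop-for-recursive-formula}.

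The main obstacle I anticipate is purely bookkeeping: getting the accumulated polynomial-in-$L$ factor down to the stated exponent $9/4$ (a naive count of summands and tensor axes gives something like $L^4$), which forces the use of the sharp Hardy--Ramanujan asymptotic for $p(n)$ together with the $\sum_{n\le L}p(n) = O(\sqrt{L}\,p(L))$ estimate rather than the crude $\sum_{n\le L}p(n) \le L\,p(L)$. Everything else --- the re-indexing, the telescoping of memorized-tensor costs, and the base case --- is routine.
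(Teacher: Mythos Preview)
Your overall strategy is correct and identical to the paper's: sum the per-order cost \eqref{eq:computational-costs-in-prop-for-recursive-formula} over $l=1,\dots,L$, observe that the tensor-construction costs telescope to $\sum_{l}\sum_{b}C(b,l)$, and bound the remaining overhead. The base case $L=1$ and the memory claim are handled exactly as you say.

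Where you diverge from the paper is in the bookkeeping for the multilinear-form term, and here you are working harder than necessary. You anticipate that a naive count yields a polynomial factor $L^4$ and that reaching $L^{9/4}$ therefore requires the sharp Hardy--Ramanujan asymptotic $p(n)=O(n^{-1}e^{\pi\sqrt{2n/3}})$ together with the geometric-sum estimate $\sum_{n\le L}p(n)=O(\sqrt{L}\,p(L))$. The paper avoids both. It first bounds $p_{\le m-b}(l-m)\le p(L)$ (using monotonicity of $p_{\le k}$ in $n$), which decouples the partition count from the weight $(m-b+1)T^{m-b+1}$; then it reindexes the remaining double sum by $t:=m-b$ to obtain $\sum_{t}(L-t+1)(t+1)T^{t+1}=O(L^2T^{L+1})$, giving $O(L^3\,p(L)\,T^{L+1})$ overall. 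Only then does it apply Pribitkin's elementary bound $p(n)<e^{\pi\sqrt{2n/3}}/n^{3/4}$, whose $n^{-3/4}$ factor reduces $L^3$ to $L^{9/4}$ directly. The moral is that you should not crudely bound $T^{m-b+1}\le T^{L+1}$ before summing; doing the $t$-reindexing first saves a factor of $L$ and makes the weaker Pribitkin bound sufficient. Your route would also work (and may even yield a slightly smaller exponent), but it invokes sharper analytic tools than the statement requires.
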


For RD, our operator is the $M$-dimensional $Id - BA_\beta$ \eqref{eq:RD-operator-def} (namely, $T = M$), defined via $BA_\beta$ \eqref{eq:BA-operator-def} in marginal coordinates (Section \ref{part:how-and-what}.\ref{sub:high-order-deriv-tensors-of-BA}). 
Where, $N := |\mathcal{X}|$ and $M := |\hat{\mathcal{X}}|$ are the source and reproduction alphabet sizes of the given RD problem. 
Now that we have complexity results for root tracking, Table \ref{tab:complexities-of-RD-deriv-tensors} summarizes the complexity of computing RD derivative tensors (see Section \ref{part:proofs}.\ref{sub:complexity-of-deriv-tensors-for-RD}). 
Combining the complexities of both (in Section \ref{part:proofs}.\ref{sub:complexity-of-RD-root-tracking}) yields the complexity bounds for RD root tracking at Theorem \ref{thm:complexity-of-RD-root-tracking} (Section \ref{part:how-and-what}.\ref{sub:costs-and-error-to-cost-tradeoff-of-RD-root-tracking-in-main-results-section}). 
We comment that the hyper-exponential terms $2^k k!$ in Table \ref{tab:complexities-of-RD-deriv-tensors} are only exponential in practice. 
These result from the loose bounds we have provided for the algebraic properties of the $P_k$ polynomials \eqref{eq:P_0-def}-\eqref{eq:P_k-inductive-def}. 
In particular, the bounds provided here depend the dimensions of the RD problem at hand, but \textit{not} on its details.

\begin{table}[h!]
	\begin{center}
		\setlength{\tabcolsep}{8pt}
		\renewcommand{\arraystretch}{1.6}
		\begin{tabular}{ccc}
			Quantity	&	Computations	&	Memory	\\
			\hline
			$P_k$ \eqref{eq:P_0-def}-\eqref{eq:P_k-inductive-def}	&
			Irrelevant	&
			$O\left( 2^k k! (k+1) \log_2 (k+1) \right)$  	\\[7pt]
			$\expectedDxWRTencoderK{k}$ \eqref{eq:expected-k-th-power-distortion-def}	&
			$O(MN)$		&		$O(N)$			\\[7pt]
			$P_k[\intermediateencoderVect; d]$ \eqref{eq:P_k-by-abuse-of-notation}	&
			$O(MNk \; 2^k k!)$		&
			$O(MN)$								\\[7pt]
			$G\big(k, a\big)$ \eqref{eq:combinatorial-G-in-terms-of-polynomials}	&
			$O(MNL \sum_{k=0}^L p(k))$	&
			$O(MN (L + 1) (L + 2) )$			\\[7pt]
			$D^b_{\beta^b} (Id - BA_\beta)[\inputmarginalVect]$ \eqref{eq:repeated-beta-deriv-in-thm} 	&
			$O(MN)$		&		$O(M)$	\\[7pt]
			$D^{b+m}_{\beta^b, \inputmarginalVect^m} (Id - BA_\beta)[\inputmarginalVect]$ \eqref{eq:mixed-BA-deriv-in-thm} &
			$\begin{aligned}
				O(\genfrac{(}{)}{0pt}{1}{m + M - 1}{m} \genfrac{(}{)}{0pt}{1}{b + M - 1}{b} \cdot M^2 N) \\ 
				+ O(\genfrac{(}{)}{0pt}{1}{m + M - 1}{m} m! M) \\
			\end{aligned}$	&
			$O(M^{m+1})$
		\end{tabular}
	\end{center}
	\vspace{-5pt}
	\caption{
		\textbf{Breakdown of the complexity of RD derivative tensors}. 
		For implicit derivatives up to order $L$ (including), one needs to compute $P_k$, $\expectedDxWRTencoderK{k}$ and $P_k[\intermediateencoderVect; d]$ for $k \leq L$; and the matrices $G(k, a)$. The derivative tensors for $b \leq L$ and for $b+m \leq L$ can then be computed. 
		The hyper-exponential terms $2^k k!$ at $P_k$ and at $P_k[\intermediateencoderVect; d]$ are very loose, and are roughly exponential in practice. 
		For $k \leq 25$, the memory needed to store $P_k$ is roughly $O(1.73^k)$, while the computational complexity of $P_k[\intermediateencoderVect; d]$ is roughly $O(MNk \; 1.56^k)$; see \ref{subsub:problem-ind-initial-computations} and \ref{subsub:initial computations-at-a-point} for details.
	}
	\label{tab:complexities-of-RD-deriv-tensors}
\end{table}

\ifdefined\compileappendices
\newpage
\appendix

\part{Proofs and technical details}
\label{part:proofs}

\medskip
\section{Calculations for the line-intersecting-parabola example}
\label{sec:calculations-for-lines-intersecting-parabola-example-appendix}

We elaborate on the calculations for the example in \ref{subsub:lines-intersecting-parabola-example}, in Part \ref{part:how-and-what}.

\medskip
Write $\bm{p} := \left(x, y\right)^t$ for the coordinates, $\tfrac{d\bm{p}}{d\beta}$ for the vector of derivatives $(\tfrac{dx}{d\beta}, \tfrac{dy}{d\beta})^t$. For $i = 1, 2$ and vectors $\bm{u} = (u_x, u_y), \bm{v} = (v_x, v_y)$,
\begin{equation}			\label{eq:line-parabola-example-calc-2nd-order-term}
	D_{\bm{p}, \bm{p}}F_i[\bm{u}, \bm{v}] = 
	\frac{\partial^2 F_i}{\partial x \partial y} \cdot \left( u_x v_y + u_y v_x\right) + 
	\frac{\partial^2 F_i}{\partial x^2} \cdot u_x v_x + 
	\frac{\partial^2 F_i}{\partial y^2} \cdot u_y v_y
\end{equation}
From the definition \eqref{eq:line-intersecting-parabola-example}, $\frac{\partial^2 F_1}{\partial x^2} = 2b$ is the only derivative of $F$ at \eqref{eq:line-parabola-example-calc-2nd-order-term} which does not vanish, and so 
\begin{equation}
	D_{\bm{p}, \bm{p}}F[\tfrac{d\bm{p}}{d\beta}, \tfrac{d\bm{p}}{d\beta}] = 
	\left(\dbeta{x}\right)^2 \cdot \mat{2 b \\ 0} \;,
\end{equation}
where $\tfrac{dx}{d\beta}$ is the first coordinate of $\tfrac{d\bm{p}}{d\beta}$. Similarly, $D_{\bm{p}, \bm{p}}F[\tfrac{d^2\bm{p}}{d\beta^2}, \tfrac{d\bm{p}}{d\beta}] = \dbeta{x} \dbetaK{x}{2} \cdot (2 b, 0)^t $.
Thus, the first few expansion orders \eqref{eq:first-order-expansion-of-operator-eq-implicit}-\eqref{eq:third-order-expansion-implicit} around $(x_0, y_0; \beta_0)$ are,
\begin{align}
	0 &= \mat{2 b x_0 + c &  -1 \\ a & -1} \frac{d\bm{p}}{d\beta} + \mat{0 \\ 1}		\\
	0 &= \mat{2 b x_0 + c &  -1 \\ a & -1} \frac{d^2\bm{p}}{d\beta^2} + 
	\left(\dbeta{x}\right)^2 \cdot \mat{2 b \\ 0} \\
	0 &= \underset{D_{\bm{p}}F}{\underbrace{\mat{2 b x_0 + c &  -1 \\ a & -1}}} \frac{d^3\bm{p}}{d\beta^3} + 
	3 \dbeta{x} \dbetaK{x}{2} \cdot \mat{2 b \\ 0} 
\end{align}
while the other derivative tensors at \eqref{eq:first-order-expansion-of-operator-eq-implicit}-\eqref{eq:third-order-expansion-implicit} vanish.

The Jacobian $D_{\bm{p}}F$ is invertible whenever its determinant does not vanish, which is to say that the slope $2 b x_0 + c$ of the parabola at the intersection point differs from the slope $a$ of the line. Its inverse is then $(D_{\bm{p}}F)^{-1} = \frac{1}{\Delta} \mat{1 & -1 \\ a & -2 b x_0 - c}$, where $\Delta := 2 b x_0 + c - a$. A straightforward calculation yields, 
\begin{align}
	\frac{d\bm{p}}{d\beta}		&= \frac{1}{\Delta} \mat{1 \\ a +\Delta}, \label{eq:line-parabola-example-1st-ord-deriv} \\
	\frac{d^2\bm{p}}{d\beta^2}	&= -\frac{2 b}{\Delta^3} \mat{1 \\ a}, \label{eq:line-parabola-example-2nd-ord-deriv} \\
	\frac{d^3\bm{p}}{d\beta^3}	&= \frac{12 b^2}{\Delta^5} \mat{1 \\ a}, \quad \text{and} \label{eq:line-parabola-example-3rd-ord-deriv} \\
	\frac{d^4\bm{p}}{d\beta^4}	&= -\frac{120 b^3}{\Delta^7} \mat{1 \\ a} \;,		\label{eq:line-parabola-example-4th-ord-deriv}
\end{align}
where the fourth-order derivative \eqref{eq:line-parabola-example-4th-ord-deriv} follows by a similar calculation. 
Combining these yields the fourth-order Taylor expansion \eqref{eq:line-parabola-example-fourth-order-sol}, in Section \ref{subsub:lines-intersecting-parabola-example}. 


By requiring $x(\beta_0) = x_0$, one can see that $\Delta^2$ is the discriminant of the polynomial which defines the exact solution \eqref{eq:line-parabola-example-exact-solution}. Therefore, $\Delta$ vanishes if and only if \eqref{eq:line-parabola-example-exact-solution} has exactly one solution, which is to say that $F = 0$ \eqref{eq:line-intersecting-parabola-example} undergoes a bifurcation. For this example, this is also equivalent the Jacobian of $F$ being singular. 

	%

\medskip
\section{Proofs for high-order implicit derivatives of an operator's root}
\label{sec:proof-of-high-ord-beta-deriv-of-F-appendix}

\medskip
\subsection{Proof of the formula for an operator's high-order $\beta$-expansion, Theorem \ref{thm:formula-for-high-order-expansion-of-F}}
\label{sub:proof-of-formula-for-high-order-expansion-of-F}

We prove formula \eqref{eq:formula-for-high-order-expansion-of-F} (of Theorem \ref{thm:formula-for-high-order-expansion-of-F}) for a the expansion of $\tfrac{d^l }{d\beta^l} F(\bm{x}(\beta), \beta)$. The preliminaries for this subsection are provided in Section \ref{part:details}.\ref{sec:multivariate-faa-di-brunos-formula}.

\medskip
To get a grip on the proof, we calculate $\dbetaK{F_i}{l}$ for $l=1$ and $2$ directly from Fa\`a di Bruno's formula \cite{ma2009higher}, recovering the first two expansion orders \eqref{eq:first-order-expansion-of-operator-eq-implicit}-\eqref{eq:second-order-expansion-of-operator-eq-implicit} in Section \ref{part:how-and-what}.\ref{sub:beta-derivs-at-an-operator-root}. 
Using the notation of Section \ref{part:details}.\ref{sec:multivariate-faa-di-brunos-formula}, write $\cal{D}$ for $(T+1)$-decompositions of an integer $l > 0$.
Recall, a $(T+1)$-decomposition $(s, \bm{p}, \bm{m})$ of $l$ to $s$ parts, $1 \leq s \leq l$, is comprised of parts $0 < p_1 < \dots < p_s \in \bb{N}$ and multiplicities $\bm{m}_1, \dots, \bm{m}_s \in \bb{N}_0^{T+1}$, satisfying the decomposition equation \eqref{eq:mFDB-decomposition-equation},
\begin{equation}		\label{eq:mFDB-decomposition-eq-for-F-proof}
	l = |\bm{m}_1| p_1 + |\bm{m}_2| p_2 + \dots + |\bm{m}_s| p_s \;.
\end{equation}
We denote the $j$-th coordinate of the vector $\bm{m}_k$ by $m_{k, j}$, where $k = 1, \dots, s$ and $j = 0, \dots, T$.
Write $\bm{m} := \bm{m}_1 + \dots + \bm{m}_s \in \bb{N}_0^{T+1}$ for the total multiplicity \eqref{eq:mFDB-total-multiplicity-is-sum-of-multiplicities}.
Considering $F_i$ as a function of $\big(\beta, \bm{x}\big)\in \bb{R}^{T+1}$,
we index its $\beta$-coordinate by zero, and those of $\bm{x}$ by $1, \dots, T$. For a multi-index $\bm{m}\in \bb{N}_0^{T+1}$, write $m_0$ for its zeroth coordinate and $\bm{m}_+$ for its other $T$ entries, as above.
With this, the proof below of Theorem \ref{thm:formula-for-high-order-expansion-of-F} starts with
\begin{equation}		\tag{\ref{eq:high-order-beta-deriv-of-F-in-proof}}
	\dbetaK{}{l}F_i(\bm{x}(\beta), \beta) =
	l! \sum_{(s, \bm{p}, \bm{m})\in\cal{D}} \frac{\partial^{|\bm{m}|} F_i}{\partial \beta^{m_0} \partial \bm{x}^{\bm{m}_+}}
	\prod_{k=1}^s \frac{1}{\bm{m}_k!} \left[ \frac{1}{p_k!} \dbetaK{\beta}{p_k} \right]^{m_{k,0}}
	\left[ \frac{1}{p_k!} \dbetaK{\bm{x}}{p_k} \right]^{\bm{m}_{k+}} 
\end{equation}

Set $l = 1$. By the decomposition equation \eqref{eq:mFDB-decomposition-eq-for-F-proof}, there is only one part of size $p_1 = 1$, and multiplicity $|\bm{m}_1| = 1$. In particular, we must have $\bm{m}_1 = \bm{e}_j$, for $\bm{e}_j$ a standard basis vector. Where, either $\bm{m}_1 = \bm{e}_0$ points to the $\beta$-coordinates, or $\bm{m}_1 = \bm{e}_j$ points to one of the $\bm{x}$-coordinates, $j = 1, \dots, T$.	Since these are all the $(T + 1)$-decompositions for $l = 1$, 
\begin{equation}
	\dbeta{F_i} \overset{\eqref{eq:high-order-beta-deriv-of-F-in-proof}}{=}
	\sum_{\substack{\bm{m} = \bm{e}_j: \\ j=0, 1, \dots, T}} \frac{\partial F_i}{\partial \beta^{m_{0}} \partial \bm{x}^{\bm{m}_{+}}}
	\left[ \dbeta{\beta} \right]^{m_{0}} \left[ \dbeta{\bm{x}} \right]^{\bm{m}_{+}} =
	\partialbeta{F_i} + \sum_{j=1}^T \frac{\partial F_i}{\partial x_j} \dbeta{x_j} \;.
\end{equation}
This recovers the implicit first-order expansion $\tfrac{d^1 F}{d\beta^1} = D_{\bm{x}} F[\tfrac{d\bm{x}}{d\beta}] + D_\beta F$ \eqref{eq:first-order-expansion-of-operator-eq-implicit}.

For $l = 2$, setting $p_1 = 2$ again necessitates $\bm{m}_1 = \bm{e}_j$ (one part of size 2). Since $p_1 > 1$, the term $\left[ \frac{1}{p_1!} \dbetaK{\beta}{p_1} \right]^{m_{1,0}}$ at \eqref{eq:high-order-beta-deriv-of-F-in-proof} vanishes unless $m_{1,0} = 0$, and so only decompositions with $\bm{m}_1 = \bm{e}_j$ for $j > 0$ contribute. This yields the term $D_x F_i[\tfrac{d^2 \bm{x}}{d\beta^2}]$ in the implicit second-order expansion \eqref{eq:second-order-expansion-of-operator-eq-implicit},
\begin{equation}
	2! \sum_{j=1}^T \frac{\partial F_i}{\partial x_j} \left[ \frac{1}{2!} \dbetaK{x_j}{2} \right] \;.
\end{equation}
When $p_1 = 1$, the decomposition equation \eqref{eq:mFDB-decomposition-eq-for-F-proof} implies that there is only $s = 1$ part, of multiplicity $|\bm{m}_1| = 2$. Since $p_1 = 1$ we do not have the above restriction on the value of $m_{1,0}$, and so $\bm{m}_1 = \bm{e}_j + \bm{e}_k$, for $0 \leq j \leq k \leq T$. Further sub-dividing to the cases $j = k = 0, j = 0 < k$ and $0 < j \leq k$, we obtain 
\begin{multline}
	2! \partialbetaK{F_i}{2} \frac{1}{2!} \left[ 1 \right]^2 +
	2! \sum_{k=1}^T \frac{\partial^2 F_i}{\partial \beta \partial x_k} \dbeta{x_k} \\ +
	2! \sum_{0 < j < k \leq T} \frac{\partial^2 F_i}{\partial x_j \partial x_k} \dbeta{x_j} \dbeta{x_k} +
	2! \sum_{0 < j = k \leq T} \frac{1}{2!} \frac{\partial^2 F_i}{\partial x_j \partial x_k} \dbeta{x_j} \dbeta{x_k}
\end{multline}
The first two terms account respectively for $D^2_{\beta, \beta} F_i$ and $2 D^2_{\beta, x} F_i [\tfrac{d\bm{x}}{d\beta}]$ at \eqref{eq:second-order-expansion-of-operator-eq-implicit}, while the last line sums up to the remaining term $D^2_{x, x} F_i[\tfrac{d\bm{x}}{d\beta}, \tfrac{d\bm{x}}{d\beta}] = \sum_{j, k} \tfrac{\partial^2 F_i}{\partial x_j \partial x_k} \dbeta{x_j} \dbeta{x_k}$ there. This can either be seen directly, or as a special case of the following lemma, on rearranging summation order.

\begin{lem}		\label{lem:summation-reoder-M-ball-on-grid-vs-all-M-tuples}
	Let $f$ be a function on $\bb{N}_0^T$, and $0 < M \in \bb{N}$. Then,
	\begin{equation}
		\sum_{\bm{m}\in \bb{N}_0^T: \; |\bm{m}| = M} \frac{1}{\bm{m}!} f(\bm{m}) = 
		\frac{1}{M!} \sum_{1 \leq i_1, \dots, i_M \leq T} f(\bm{e}_{i_1} + \dots + \bm{e}_{i_M})
	\end{equation}
\end{lem}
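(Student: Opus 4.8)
The plan is to prove the identity by partitioning the index set of the right-hand side according to the multi-index determined by each $M$-tuple. Given a tuple $(i_1, \dots, i_M) \in \{1, \dots, T\}^M$, associate to it the multi-index $\bm{m} = \bm{e}_{i_1} + \dots + \bm{e}_{i_M} \in \bb{N}_0^T$, whose $j$-th coordinate $m_j$ records how many of the $i_k$ equal $j$. By construction $|\bm{m}| = M$, and the summand $f(\bm{e}_{i_1} + \dots + \bm{e}_{i_M}) = f(\bm{m})$ depends only on this associated multi-index, not on the ordering of the tuple. So the strategy is to re-sum: first sum over all multi-indices $\bm{m}$ with $|\bm{m}| = M$, then over all tuples $(i_1, \dots, i_M)$ mapping to that fixed $\bm{m}$.

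The key combinatorial step is counting, for a fixed $\bm{m}$ with $|\bm{m}| = M$, the number of tuples $(i_1, \dots, i_M)$ with $\bm{e}_{i_1} + \dots + \bm{e}_{i_M} = \bm{m}$. This is precisely the number of ways to arrange a word of length $M$ using the letter $1$ exactly $m_1$ times, the letter $2$ exactly $m_2$ times, and so on, i.e. the multinomial coefficient $\binom{M}{m_1, m_2, \dots, m_T} = \frac{M!}{m_1! \, m_2! \cdots m_T!} = \frac{M!}{\bm{m}!}$, using the notation $\bm{m}! = \prod_j m_j!$ from \eqref{eq:multivariate-notation-defs}. Substituting this count gives
\begin{equation}
	\sum_{1 \leq i_1, \dots, i_M \leq T} f(\bm{e}_{i_1} + \dots + \bm{e}_{i_M}) =
	\sum_{\bm{m} \in \bb{N}_0^T: \; |\bm{m}| = M} \frac{M!}{\bm{m}!} f(\bm{m}) \;,
\end{equation}
and dividing both sides by $M!$ yields the claimed identity.

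There is no serious obstacle here; the only point requiring mild care is making the "re-summation" rigorous, namely that the map $(i_1, \dots, i_M) \mapsto \bm{e}_{i_1} + \dots + \bm{e}_{i_M}$ is a well-defined surjection from $\{1,\dots,T\}^M$ onto the finite set $\{\bm{m} \in \bb{N}_0^T : |\bm{m}| = M\}$ with fibres of size $M!/\bm{m}!$; since all sums are finite, reordering is unconditionally valid. (Alternatively, one could note this is exactly the content of the multinomial theorem applied to $(x_1 + \dots + x_T)^M$ with a formal weight $f$, but the direct counting argument is self-contained.)
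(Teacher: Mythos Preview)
Your proof is correct and follows essentially the same approach as the paper: both arguments partition the set of $M$-tuples $(i_1,\dots,i_M)$ according to the multi-index $\bm{m}=\bm{e}_{i_1}+\dots+\bm{e}_{i_M}$ they determine, and then count that the fibre over a given $\bm{m}$ has size $M!/\bm{m}!$. The paper builds up this count by first treating the case of distinct indices and then indicating how repetitions reduce the count, whereas you invoke the multinomial coefficient directly; but this is a difference in presentation, not in substance.
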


\begin{proof}[Proof of Lemma \ref{lem:summation-reoder-M-ball-on-grid-vs-all-M-tuples}]
	We would like to rewrite the sum over $\{\bm{m}\in \bb{N}_0^T: |\bm{m}| = M\}$ as a sum over all $M$-tuples $1 \leq i_1, \dots, i_M \leq T$. Write $\bm{m}\in \bb{N}_0^T$ with $|\bm{m}| = M$ as a sum
	\begin{equation}		\label{eq:multi-index-as-sum-of-standard-basis-vectors}
		\bm{m} = \bm{e}_{i_1} + \dots + \bm{e}_{i_M}
	\end{equation}
	of standard basis vectors $\bm{e}_i$, with each index $1 \leq i_j \leq T$ for $j = 1, \dots, M$. When the indices $i_1, \dots, i_M$ are distinct, then permuting them yields a different $M$-tuple of indices, without affecting their sum $\bm{m}$ at \eqref{eq:multi-index-as-sum-of-standard-basis-vectors}. A single $\bm{m}$ value then corresponds to $M!$ distinct permutation of $(i_1, \dots, i_M)$. The indices are distinct if and only if no coordinate of $\bm{m}$ is larger than 1, $m_k \leq 1 \; \forall k=1, \dots, T$. So,
	\begin{equation}
		\sum_{\substack{\bm{m}\in \bb{N}_0^T: \; |\bm{m}| = M,\\ \forall k, \; m_k \leq 1}} M!\; f(\bm{m}) = 
		\sum_{\substack{1 \leq i_1, \dots, i_M \leq T:\\ i_1, \dots, i_M \text{ are distinct} }} f(\bm{e}_{i_1} + \dots + \bm{e}_{i_M})
	\end{equation}
	When exactly two indices $i_{j_1}$ and $i_{j_2}$ for $j_1 \neq j_2$ are the same $i_{j_1} = i_{j_2}$, with all the others distinct, then swapping $i_{j_1}$ with $i_{j_2}$ leaves the $M$-tuple $(i_1, \dots, i_M)$ unchanged. Hence, there are only $\nicefrac{M!}{2!}$ distinct $M$-tuples corresponding to the sum $\bm{m}$ \eqref{eq:multi-index-as-sum-of-standard-basis-vectors}. Since $i_{j_1} = i_{j_2}$ are the only identical indices, then $m_{k} = 2$ for $k := i_{j_1}$ and otherwise $m_k \leq 1$.
	Proceeding in this manner, one has in general 
	\begin{equation}
		\sum_{\bm{m}\in \bb{N}_0^T: \; |\bm{m}| = M} \frac{M!}{\bm{m}!} f(\bm{m}) =
		\sum_{1 \leq i_1, \dots, i_M \leq T} f(\bm{e}_{i_1} + \dots + \bm{e}_{i_M})
	\end{equation}
\end{proof}

Lemma \ref{lem:summation-reoder-M-ball-on-grid-vs-all-M-tuples} will allow us to exchange summations over high-dimensional decompositions $(s, \bm{p}, \bm{m})$ with summations over integer partitions and multi-linear differentials of $F_i$.

\begin{proof}[Proof of Theorem \ref{thm:formula-for-high-order-expansion-of-F}]
	The $i$-th coordinate of $F$ can be written as a composition
	\begin{equation}
		\xymatrix@C=3pc{
			\beta \in \bb{R} \ar[r] 								&
			\big( \beta, \bm{x}(\beta) \big)\in \bb{R}^{T + 1} \ar[r]	&
			F_i\big(\bm{x}(\beta), \beta\big) \in \bb{R} \;.
		}
	\end{equation}
	We shall fully expand the $l$-th order derivative $\dbetaK{F_i}{l}$ of $F_i\big(\bm{x}(\beta), \beta\big)$ with respect to $\beta$, using the multivariate Fa\`a di Bruno's formula \cite{ma2009higher} (Theorem \ref{thm:mFDB-from-Ma} in Section \ref{part:details}.\ref{sec:multivariate-faa-di-brunos-formula}).
	By the formula,
	\begin{multline}		\label{eq:high-order-beta-deriv-of-F-in-proof}
		\dbetaK{F_i}{l} \overset{\eqref{eq:mFDB-Ma-concise-form}}{=}
		l! \sum_{(s, \bm{p}, \bm{m})\in\cal{D}} \frac{\partial^{|\bm{m}|} F_i}{\partial \beta^{m_0} \partial \bm{x}^{\bm{m}_+}}
		\prod_{k=1}^s \frac{1}{\bm{m}_k!} \left[ \frac{1}{p_k!} \dbetaK{(\beta, \bm{x})}{p_k} \right]^{\bm{m}_k} 
		\\ \overset{\eqref{eq:multivariate-notation-defs}}{=}
		l! \sum_{(s, \bm{p}, \bm{m})\in\cal{D}} \frac{\partial^{|\bm{m}|} F_i}{\partial \beta^{m_0} \partial \bm{x}^{\bm{m}_+}}
		\prod_{k=1}^s \frac{1}{\bm{m}_k!} \left[ \frac{1}{p_k!} \dbetaK{\beta}{p_k} \right]^{m_{k,0}}
		\left[ \frac{1}{p_k!} \dbetaK{\bm{x}}{p_k} \right]^{\bm{m}_{k+}} 
	\end{multline}
	where $\bm{m} := \bm{m}_1 + \dots + \bm{m}_s \in \bb{N}_0^{T+1}$ is the total multiplicity \eqref{eq:mFDB-total-multiplicity-is-sum-of-multiplicities}, and the last equality since we index the $\beta$-coordinate of $\bb{R}^{T+1}$ by 0, and those of $\bm{x}$ by $1, \dots, T$; see \eqref{eq:mFDB-decomposition-eq-for-F-proof} ff.
	
	With the notation preceding Theorem \ref{thm:mFDB-from-Ma}, the parts $p_k$ of a decomposition are positive integers, strictly increasing in $k = 1, \dots, s$. Hence, only the first part can be of size 1, $p_1 \geq 1$, while the others are of size 2 at least: $p_k > 2$ for $k>1$.
	Note that the same differential operator $\dbetaK{}{p_k}$ appears for all the $T+1$ coordinates of $\big(\beta, \bm{x}(\beta)\big)$ at \eqref{eq:high-order-beta-deriv-of-F-in-proof}. At the zeroth entry of the $k$-th part, the multiplicand $\left[ \frac{1}{p_k!} \dbetaK{\beta}{p_k} \right]^{m_{k,0}}$ vanishes unless either $m_{k,0} = 0$ or $p_k = 1$. Since $p_k$ can be 1 only for $k=1$, the latter is to say $p_1 = 1$. In particular, for the $k$-th summand to contribute, $m_{k,0}$ must vanish when $k > 1$.
	
	To simplify notation, write $b$ instead of $m_{1, 0}$. When a summand does not vanish, $b$ is the number of times $F_i$ is differentiated with respect to $\beta$ (since $m_{k,0}=0$ except perhaps for $k = 1$).
	With these observations, decompositions that contribute to \eqref{eq:high-order-beta-deriv-of-F-in-proof} are as follows:
	\begin{enumerate}
		\item Decompositions with $p_1 > 1$ and $b = 0$.						\label{type:decompositions-wout-beta}
		\item Decompositions with $p_1 = 1$ and any $b = 0, \dots, |\bm{m}_1|$.		\label{type:decompositions-with-beta}
	\end{enumerate}
	
	To proceed, write $M_1 := |\bm{m}_1|, \dots, M_s := |\bm{m}_s|$. The decomposition equation \eqref{eq:mFDB-decomposition-eq-for-F-proof} can then be read as a partition of the integer $l$, to $M_1$ sets of sizes $p_1$ up to $M_s$ sets of size $p_s$ (and no sets of other sizes),
	\begin{equation}		\label{eq:decomposition-as-partition-eq-in-proof}
		l = M_1 \cdot p_1 + \dots + M_s \cdot p_s \;.
	\end{equation}
	With this, the summation over $(s, \bm{p}, \bm{m})\in \cal{D}$ in \eqref{eq:high-order-beta-deriv-of-F-in-proof} can be rewritten as a sum over integer partitions of $l$. Given a partition \eqref{eq:decomposition-as-partition-eq-in-proof} to parts $p_1, \dots, p_s$ of respective multiplicities $M_1, \dots, M_s$, we need to sum over all the integral vectors $\bm{m}_1, \dots, \bm{m}_s\in \bb{N}_0^{T+1}$ with $|\bm{m}_1| = M_1, \dots, |\bm{m}_s| = M_s$. This yields,
	\begin{multline}		\label{eq:rewriting-decomposition-as-sum-over-partits-in-proof}
		\dbetaK{F_i}{l} \overset{\eqref{eq:high-order-beta-deriv-of-F-in-proof}}{=}
		l! \sum_{(s, \bm{p}, \bm{m})\in\cal{D}} \frac{\partial^{|\bm{m}|} F_i}{\partial \beta^{m_0} \partial \bm{x}^{\bm{m}_+}}
		\prod_{k=1}^s \frac{1}{\bm{m}_k!} \left[ \frac{1}{p_k!} \dbetaK{\beta}{p_k} \right]^{m_{k,0}}
		\left[ \frac{1}{p_k!} \dbetaK{\bm{x}}{p_k} \right]^{\bm{m}_{k+}} \\ =
		l! \sum_{\substack{\text{partitions} \\ \eqref{eq:decomposition-as-partition-eq-in-proof} \text{ of } l }} \; \sum_{\bm{m}_1: |\bm{m}_1| = M_1} \dots \sum_{\bm{m}_s: |\bm{m}_s| = M_s}
		\frac{\partial^{|\bm{m}|} F_i}{\partial \beta^{b} \; \partial \bm{x}^{\bm{m}_+}}
		\prod_{k=1}^s \frac{1}{\bm{m}_k! \; p_k!^{|\bm{m}_k|}} \left[ \dbetaK{\beta}{p_k} \right]^{m_{k,0}}
		\left[ \dbetaK{\bm{x}}{p_k} \right]^{\bm{m}_{k+}} 
	\end{multline}
	where in the last line we replaced $m_{1, 0}$ by $b$, using the fact that $m_{k, 0}$ must vanish for $k > 1$, and so $m_0 = m_{1, 0} + m_{2, 0} + \dots + m_{s, 0} = b$.
	
	The summation over $\bm{m}_1$ can be broken to two, isolating its zeroth component $b$ from the rest, as $|\bm{m}_1| = b + |\bm{m}_{1+}|$ by definition. 
	Recall that decompositions with $b > 0$ contribute only when $p_1 = 1$, as in \ref{type:decompositions-with-beta}, so
	\begin{equation}		\label{eq:breaking-sum-over-m1}
		\sum_{\bm{m}_1: |\bm{m}_1| = M_1} \dots \; = \;
		\sum_{b = 0}^{M_1 \cdot \delta(p_1 = 1)} \; \sum_{\bm{m}_{1+}: |\bm{m}_{1+}| = M_1 - b} \dots
	\end{equation}
	Other than $m_{1,0} =: b$, $m_{k, 0}$ always vanishes. So, we lose nothing by replacing our $(T+1)$-dimensional multiplicity vectors $\bm{m}_k$ with smaller ones $\bm{m}_{k+} \in \bb{N}_0^T$. So,
	\begin{multline}			\label{eq:breaking-summation-over-beta-deriv-from-m1-in-proof}
		\dbetaK{F_i}{l} \underset{\eqref{eq:breaking-sum-over-m1}}{\overset{\eqref{eq:rewriting-decomposition-as-sum-over-partits-in-proof}}{=}}
		\sum_{\text{partitions of } l} \sum_{b = 0}^{M_1 \cdot \delta(p_1 = 1)} 
		\frac{l!}{b! \; (p_1!)^{m_1} \cdots (p_s!)^{m_s}}
		\sum_{\bm{m}_{1+}: |\bm{m}_{1+}| = M_1 - b} \;
		\sum_{\bm{m}_{2+}: |\bm{m}_{2+}| = M_2} 
		\cdots \\ \cdots \sum_{\bm{m}_{s+}: |\bm{m}_{s+}| = M_s}
		\frac{\partial^{|\bm{m}|} F_i}{\partial \beta^{b} \; \partial x^{\bm{m}_+}}
		\prod_{k=1}^s \frac{1}{\bm{m}_{k+}! } \left[ \dbetaK{\bm{x}}{p_k} \right]^{\bm{m}_{k+}} 
	\end{multline}
	where the $\nicefrac{1}{b!}$ coefficient in the first line is due to $\bm{m}_1! = b! \cdot \bm{m}_{1+}!$, by definition \eqref{eq:multivariate-notation-defs}.
	
	Next, for $s > 1$, the derivative term $\partial^{|\bm{m}|}F_i$ in the last line of \eqref{eq:breaking-summation-over-beta-deriv-from-m1-in-proof} can be written as
	\begin{equation}		\label{eq:isolating-irrelevant-derivs-in-proof}
		\frac{\partial^{|\bm{m}|} F_i}{\partial \beta^{b} \; \partial \bm{x}^{\bm{m}_+}} =
		\frac{\partial^{M_s}}{\partial \bm{x}^{\bm{m}_{s+}}} \frac{\partial^{|\bm{m}| - M_s} F_i}{\partial \beta^{b} \; \partial \bm{x}^{\bm{m}_{1+}} \cdots \partial \bm{x}^{\bm{m}_{(s-1)+}}} =: 
		\frac{\partial^{M_s}}{\partial \bm{x}^{\bm{m}_{s+}}} G \;.
	\end{equation}
	While $G$ itself does \textit{not} depend on the multi-index $\bm{m}_{s+}$, we consider $\frac{\partial^{M_s}}{\partial \bm{x}^{\bm{m}_{s+}}} G$ as a function of $\bm{m}_{s+}$.
	This allows us to invoke Lemma \ref{lem:summation-reoder-M-ball-on-grid-vs-all-M-tuples} on the last summation at \eqref{eq:breaking-summation-over-beta-deriv-from-m1-in-proof}, as follows:
	\begin{multline}		\label{eq:applying-the-lemma-to-change-summation-order}
		\sum_{\bm{m}_{s+}: |\bm{m}_{s+}| = M_s}
		\frac{\partial^{|\bm{m}|} F_i}{\partial \beta^{b} \; \partial \bm{x}^{\bm{m}_+}}
		\prod_{k=1}^s \frac{1}{\bm{m}_{k+}! } \left[ \dbetaK{\bm{x}}{p_k} \right]^{\bm{m}_{k+}} 
		\\ \overset{\eqref{eq:isolating-irrelevant-derivs-in-proof}}{=}
		\left(\prod_{k=1}^{s-1} \frac{1}{\bm{m}_{k+}! } \left[ \dbetaK{\bm{x}}{p_k} \right]^{\bm{m}_{k+}} \right) \cdot
		\sum_{\bm{m}_{s+}: |\bm{m}_{s+}| = M_s} \frac{1}{\bm{m}_{s+}! } \cdot 
		\frac{\partial^{M_s} G}{\partial \bm{x}^{\bm{m}_{s+}}} \left[ \dbetaK{\bm{x}}{p_s} \right]^{\bm{m}_{s+}} 
		\\ \overset{\text{Lemma } \ref{lem:summation-reoder-M-ball-on-grid-vs-all-M-tuples}}{=} \;
		\left(\prod_{k=1}^{s-1} \frac{1}{\bm{m}_{k+}! } \left[ \dbetaK{\bm{x}}{p_k} \right]^{\bm{m}_{k+}} \right) \cdot
		\frac{1}{M_S!} \sum_{1 \leq i_1, \dots, i_{M_s} \leq T} 
		\frac{\partial^{M_s} G}{\partial x_{i_1} \cdots \partial x_{i_{M_s}}} \cdot
		\dbetaK{x_{i_1}}{p_s} \cdots \dbetaK{x_{i_{M_s}}}{p_s} \\ \overset{\eqref{eq:mixed-deriv-def-evaluated-applied-to-vectors}}{=}
		\frac{1}{M_S!} \left(\prod_{k=1}^{s-1} \frac{1}{\bm{m}_{k+}! } \left[ \dbetaK{\bm{x}}{p_k} \right]^{\bm{m}_{k+}} \right) \cdot
		D^{M_s}_{\underset{M_s \text{ times}}{\underbrace{\bm{x}, \dots, \bm{x}}}} \; G\big[\underset{M_s \text{ times}}{\underbrace{\tfrac{d^{p_s} \bm{x}}{d\beta^{p_s}}, \dots, \tfrac{d^{p_s} \bm{x}}{d\beta^{p_s}}}}\big]
	\end{multline}
	where the last line is by the definition of a multivariate derivative tensor, \eqref{eq:mixed-deriv-def-evaluated-applied-to-vectors} in Section \ref{part:how-and-what}.\ref{sub:beta-derivs-at-an-operator-root}.
	
	The exact same manipulations as in \eqref{eq:applying-the-lemma-to-change-summation-order} can be applied to all the other partition parts $k = 1, \dots, s-1$, with one caveat. For $k=1$, $|\bm{m}_{1+}|$ is $M_1 - b$ rather than $M_1$, while the remaining $b$ degrees are consumed by the derivative with respect to $\beta$, as can be seen from \eqref{eq:isolating-irrelevant-derivs-in-proof}. 
	Thus, the coefficient at the last line of \eqref{eq:applying-the-lemma-to-change-summation-order} ends being $\left( (M_1 - b)! \cdot M_2! \cdots M_s! \right)^{-1}$, and the differentiation corresponding to $k=1$ is $b$-times with respect to $\beta$, and only $(M_1 - b)$ times with respect to $\bm{x}$. In particular, it is then an $(M_1 - b)$-multilinear form, involving $\tfrac{d^{p_1} \bm{x}}{d\beta^{p_1}}$ only $(M_1 - b)$ times. 
	Gathering these back to \eqref{eq:breaking-summation-over-beta-deriv-from-m1-in-proof} for all parts $k = 1, \dots, s$ completes the proof, yielding \eqref{eq:formula-for-high-order-expansion-of-F}.
\end{proof}

\medskip
\subsection{Proof of the formula for the derivative's Jacobian, Proposition \ref{prop:Jacobian-of-high-order-beta-derivative}}
\label{sub:proof-of-prop:Jacobian-of-high-order-beta-derivative}

\begin{proof}[Proof of Proposition \ref{prop:Jacobian-of-high-order-beta-derivative}]
	To avoid clutter, we write $D_{\bm{x}} F^{-1}$ in this proof for the inverse of the Jacobian matrix $D_{\bm{x}} F$.
	Rewrite formula \eqref{eq:formula-for-high-order-beta-derivatives} (Theorem \ref{thm:formula-for-high-order-expansion-of-F-in-main-result-sect}) compactly as,
	\begin{equation}		\label{eq:short-notation-for-high-order-beta-deriv}
		D_{\bm{x}} F \; \tfrac{d^l \bm{x}}{d\beta^l} = - S
	\end{equation}
	where for short,
	\begin{equation}		\label{eq:short-notation-for-summand-over-partits-in-formula-for-high-ord-beta-derivs}
		S := \sum_{\substack{\text{non-trivial} \\ \text{partitions}}}
		\sum_{b=0}^{m_1 \cdot \delta(p_1 = 1)} C \cdot 
		D^m_{\beta^b, \bm{x}^{m-b}} F\Big[ 
		\left( \tfrac{d^{p_1}\bm{x}}{d\beta^{p_1}} \right)_{\times (m_1 - b)}, 
		\left( \tfrac{d^{p_2}\bm{x}}{d\beta^{p_2}} \right)_{\times m_2}, 
		\dots, \left( \tfrac{d^{p_s}\bm{x}}{d\beta^{p_s}} \right)_{\times m_s} \Big] \;,
	\end{equation}
	and $C$ stands for the coefficient $\frac{l!}{b! (m_1 - b)! m_2! \cdots m_s! \cdot (p_1!)^{m_1} \cdots (p_s!)^{m_s}}$ at \eqref{eq:formula-for-high-order-beta-derivatives}. Differentiating both sides of \eqref{eq:short-notation-for-high-order-beta-deriv} with respect to the coordinates $\bm{x}$,
	\begin{equation}
		D^2_{\bm{x}, \bm{x}} F \; \tfrac{d^l \bm{x}}{d\beta^l} +
		D_{\bm{x}} F \; D_{\bm{x}} \tfrac{d^l \bm{x}}{d\beta^l} = 
		- D_{\bm{x}} S
	\end{equation}
	That is,
	\begin{equation}		\label{eq:after-handling-deriv-of-inverse-in-proof}
		D_{\bm{x}} \tfrac{d^l \bm{x}}{d\beta^l} =
		- D_{\bm{x}} F^{-1} \left( D^2_{\bm{x}, \bm{x}} F \right) \tfrac{d^l \bm{x}}{d\beta^l}
		- D_{\bm{x}} F^{-1} D_{\bm{x}} S
	\end{equation}
	To complete the proof, it suffices to calculate $D_{\bm{x}} S$.
	
	The differentiation of a single addend in $S$ \eqref{eq:short-notation-for-summand-over-partits-in-formula-for-high-ord-beta-derivs} is a sum of 2-tensors (matrices). Each addend in this sum involves extra $\bm{x}$-differentiations: once of $F$ itself, and once for each of its $m$ arguments. Differentiating a single addend,
	\begin{multline}
		D_{\bm{x}} \left( D^m_{\beta^b, \bm{x}^{m-b}} F\Big[ 
		\left( \tfrac{d^{p_1}\bm{x}}{d\beta^{p_1}} \right)_{\times (m_1 - b)}, 
		\left( \tfrac{d^{p_2}\bm{x}}{d\beta^{p_2}} \right)_{\times m_2}, 
		\dots, \left( \tfrac{d^{p_s}\bm{x}}{d\beta^{p_s}} \right)_{\times m_s} \Big] \right) 
		\\ =
		D^m_{\beta^b, \bm{x}^{m-b {+1}}} F\Big[ 
		\left( \tfrac{d^{p_1}\bm{x}}{d\beta^{p_1}} \right)_{\times (m_1 - b)}, 
		\left( \tfrac{d^{p_2}\bm{x}}{d\beta^{p_2}} \right)_{\times m_2}, 
		\dots, \left( \tfrac{d^{p_s}\bm{x}}{d\beta^{p_s}} \right)_{\times m_s} \Big] 
		\\ +
		D^m_{\beta^b, \bm{x}^{m-b}} F\Big[ { D_{\bm{x}} \tfrac{d^{p_1}\bm{x}}{d\beta^{p_1}}}, 
		\underset{(m_1 - b - 1) \text{ times}}{\underbrace{\tfrac{d^{p_1}\bm{x}}{d\beta^{p_1}}, \dots, \tfrac{d^{p_1}\bm{x}}{d\beta^{p_1}}}},
		\left( \tfrac{d^{p_2}\bm{x}}{d\beta^{p_2}} \right)_{\times m_2}, 
		\dots, \left( \tfrac{d^{p_s}\bm{x}}{d\beta^{p_s}} \right)_{\times m_s} \Big] 
		\\ +
		D^m_{\beta^b, \bm{x}^{m-b}} F\Big[ \; \underset{\text{once}}{\underbrace{\tfrac{d^{p_1}\bm{x}}{d\beta^{p_1}}}}, {  D_{\bm{x}} \tfrac{d^{p_1}\bm{x}}{d\beta^{p_1}}}, 
		\underset{(m_1 - b - 2) \text{ times}}{\underbrace{\tfrac{d^{p_1}\bm{x}}{d\beta^{p_1}}, \dots, \tfrac{d^{p_1}\bm{x}}{d\beta^{p_1}}}},
		\left( \tfrac{d^{p_2}\bm{x}}{d\beta^{p_2}} \right)_{\times m_2}, 
		\dots, \left( \tfrac{d^{p_s}\bm{x}}{d\beta^{p_s}} \right)_{\times m_s} \Big]
		+ \dots \\ +
		D^m_{\beta^b, \bm{x}^{m-b}} F\Big[ \underset{(m_1 - b - 1) \text{ times}}{\underbrace{\tfrac{d^{p_1}\bm{x}}{d\beta^{p_1}}, \dots, \tfrac{d^{p_1}\bm{x}}{d\beta^{p_1}}}}, {  D_{\bm{x}} \tfrac{d^{p_1}\bm{x}}{d\beta^{p_1}}}, 
		\left( \tfrac{d^{p_2}\bm{x}}{d\beta^{p_2}} \right)_{\times m_2}, 
		\dots, \left( \tfrac{d^{p_s}\bm{x}}{d\beta^{p_s}} \right)_{\times m_s} \Big] 
		\\ + \dots \\ +
		D^m_{\beta^b, \bm{x}^{m-b}} F\Big[ 
		\left( \tfrac{d^{p_1}\bm{x}}{d\beta^{p_1}} \right)_{\times (m_1 - b)}, 
		\left( \tfrac{d^{p_2}\bm{x}}{d\beta^{p_2}} \right)_{\times m_2}, 
		\dots, { D_{\bm{x}} \tfrac{d^{p_s}\bm{x}}{d\beta^{p_s}}}, \underset{m_s - 1 \text{ times}}{\underbrace{\tfrac{d^{p_s}\bm{x}}{d\beta^{p_s}}, \dots, \tfrac{d^{p_s}\bm{x}}{d\beta^{p_s}}}} \Big] 
		\\ +
		D^m_{\beta^b, \bm{x}^{m-b}} F\Big[ 
		\left( \tfrac{d^{p_1}\bm{x}}{d\beta^{p_1}} \right)_{\times (m_1 - b)}, 
		\left( \tfrac{d^{p_2}\bm{x}}{d\beta^{p_2}} \right)_{\times m_2}, 
		\dots, \underset{\text{once}}{\underbrace{\tfrac{d^{p_s}\bm{x}}{d\beta^{p_s}}}},
		{ D_{\bm{x}} \tfrac{d^{p_s}\bm{x}}{d\beta^{p_s}}}, \underset{m_s - 2 \text{ times}}{\underbrace{\tfrac{d^{p_s}\bm{x}}{d\beta^{p_s}}, \dots, \tfrac{d^{p_s}\bm{x}}{d\beta^{p_s}}}} \Big] 
		+ \dots \\ +
		D^m_{\beta^b, \bm{x}^{m-b}} F\Big[ 
		\left( \tfrac{d^{p_1}\bm{x}}{d\beta^{p_1}} \right)_{\times (m_1 - b)}, 
		\left( \tfrac{d^{p_2}\bm{x}}{d\beta^{p_2}} \right)_{\times m_2}, 
		\dots, \underset{m_s - 1 \text{ times}}{\underbrace{\tfrac{d^{p_s}\bm{x}}{d\beta^{p_s}}, \dots, \tfrac{d^{p_s}\bm{x}}{d\beta^{p_s}}}}, {  D_{\bm{x}} \tfrac{d^{p_s}\bm{x}}{d\beta^{p_s}}} \Big] 
	\end{multline}
	Since derivative tensors of high-order are symmetric (e.g., \cite[Section 10.3]{aguilar2021analysis}), permuting its arguments has no effect. So, the above simplifies to
	\begin{multline}		\label{eq:x-deriv-of-summand-in-proof}
		D_{\bm{x}} \left( D^m_{\beta^b, \bm{x}^{m-b}} F\Big[ 
		\left( \tfrac{d^{p_1}\bm{x}}{d\beta^{p_1}} \right)_{\times (m_1 - b)}, 
		\left( \tfrac{d^{p_2}\bm{x}}{d\beta^{p_2}} \right)_{\times m_2}, 
		\dots, \left( \tfrac{d^{p_s}\bm{x}}{d\beta^{p_s}} \right)_{\times m_s} \Big] \right) 
		\\ =
		D^m_{\beta^b, \bm{x}^{m-b {+1}}} F\Big[ 
		\left( \tfrac{d^{p_1}\bm{x}}{d\beta^{p_1}} \right)_{\times (m_1 - b)}, 
		\left( \tfrac{d^{p_2}\bm{x}}{d\beta^{p_2}} \right)_{\times m_2}, 
		\dots, \left( \tfrac{d^{p_s}\bm{x}}{d\beta^{p_s}} \right)_{\times m_s} \Big] 
		\\ +
		{ (m_1 - b)} \cdot D^m_{\beta^b, \bm{x}^{m-b}} F\Big[ 
		{  D_{\bm{x}} \tfrac{d^{p_1}\bm{x}}{d\beta^{p_1}}},
		\left( \tfrac{d^{p_1}\bm{x}}{d\beta^{p_1}} \right)_{\times (m_1 - b {  - 1})}, 
		\left( \tfrac{d^{p_2}\bm{x}}{d\beta^{p_2}} \right)_{\times m_2}, 
		\dots, \left( \tfrac{d^{p_s}\bm{x}}{d\beta^{p_s}} \right)_{\times m_s} \Big] 
		\\ +
		{ m_2} \cdot D^m_{\beta^b, \bm{x}^{m-b}} F\Big[ 
		\left( \tfrac{d^{p_1}\bm{x}}{d\beta^{p_1}} \right)_{\times (m_1 - b)}, 
		{ D_{\bm{x}} \tfrac{d^{p_2}\bm{x}}{d\beta^{p_2}} },
		\left( \tfrac{d^{p_2}\bm{x}}{d\beta^{p_2}} \right)_{\times (m_2 { - 1})}, 
		\dots, \left( \tfrac{d^{p_s}\bm{x}}{d\beta^{p_s}} \right)_{\times m_s} \Big] 
		\\ + \dots \\ +
		{ m_s} \cdot D^m_{\beta^b, \bm{x}^{m-b}} F\Big[ 
		\left( \tfrac{d^{p_1}\bm{x}}{d\beta^{p_1}} \right)_{\times (m_1 - b)}, 
		\left( \tfrac{d^{p_2}\bm{x}}{d\beta^{p_2}} \right)_{\times m_2}, 
		\dots, { D_{\bm{x}} \tfrac{d^{p_s}\bm{x}}{d\beta^{p_s}}}, 
		\left( \tfrac{d^{p_s}\bm{x}}{d\beta^{p_s}} \right)_{\times (m_s { - 1})} \Big] 
	\end{multline}
	Combining the latter \eqref{eq:x-deriv-of-summand-in-proof} with the definition \eqref{eq:short-notation-for-summand-over-partits-in-formula-for-high-ord-beta-derivs} of $S$ and Equation \eqref{eq:after-handling-deriv-of-inverse-in-proof} yields the required result \eqref{eq:formula-for-jacobian-of-high-order-beta-deriv}.
\end{proof}

\medskip
\section{Derivations of high-order derivatives of the Blahut-Arimoto operator}
\label{sec:appendix-derivations-of-BA-high-order-deriv-in-marginal-coords}

In this section, we calculated derivatives of the Blahut-Arimoto operators, mainly those presented in Section \ref{sec:high-order-derivs-of-BA-in-marginal-coords}.

\medskip
\subsection{Proof of Proposition \ref{prop:repeated-encoder-deriv-wrt-coords}, formula for the encoder's repeated marginal derivatives}

\label{sub:proof-of-prop:repeated-encoder-deriv-wrt-coords}

To prove Proposition \ref{prop:repeated-encoder-deriv-wrt-coords}, we rewrite it in an equivalent form which is more convenient for proofs by induction. Rather than writing high-order derivatives in multi-index notation, one could write it as a sequence of differentiations. Setting $M = 2$ for example, a third-order derivative represented by $\bm{\alpha}_+ = \left(2, 1\right)$ can be written equivalently as
\begin{equation}		\label{eq:equivalent-forms-to-consider-a-high-order-deriv}
	\frac{\partial^{3}}{\partial \inputmarginalVect^{(2, 1)}} \intermediateencoder{}{} =
	\frac{\partial^{3}}{\partial \inputmarginal{_1} \partial \inputmarginal{_1} \partial \inputmarginal{_2}} \intermediateencoder{}{} \;.
\end{equation}
While the left-hand side of \eqref{eq:equivalent-forms-to-consider-a-high-order-deriv} is understood \eqref{eq:multivariate-notation-defs} as an application of the differential operator $\left(\frac{\partial}{\partial\inputmarginal{_1}}\right)^2 \cdot \left(\frac{\partial}{\partial\inputmarginal{_2}}\right)$, its right-hand side can be considered as a sequence of differentiations: first differentiate with respect to $\inputmarginal{_2}$, and then twice with respect to $\inputmarginal{_1}$. 

Write $\left<\cdot, \cdot\right>$ for the usual scalar product on $\bb{R}^M$, $\bm{e}_{\hat{x}_i}$ for the $i$-th standard basis vector. Then, $\left<\bm{\alpha}_+, \bm{e}_{\hat{x}_i}\right> = \alpha_i$ is the number differentiations with respect to the $i$-th coordinate $\inputmarginal{_i}$, and $|\bm{\alpha}_+|$ the number of differentiations in total. 
When differentiating with respect to $\inputmarginal{_{i_1}}, \inputmarginal{_{i_2}}$ up to $\inputmarginal{_{i_k}}$, the total number $\alpha_{\hat{x}'}$ of differentiations with respect to a particular coordinate $\hat{x}'$ can be written as $\sum_{j=1}^k \delta_{\hat{x}', \hat{x}_{i_j}} = \delta_{\hat{x}', \hat{x}_{i_1}} + \dots + \delta_{\hat{x}', \hat{x}_{i_k}}$.
Thus, Proposition \ref{prop:repeated-encoder-deriv-wrt-coords} is equivalent to the following.

\begin{prop}			\label{prop:repeated-encoder-deriv-wrt-coords-in-appendix}
	For $k > 0$, the repeated encoder \eqref{eq:encoder-eq} derivative with respect to $\inputmarginalVect$ is,
	\begin{equation}		\label{eq:repeated-encoder-deriv-wrt-coords-in-prop-in-appendix}
		\frac{\partial^k}{\partial \inputmarginal{_{i_1}} \cdots \partial \inputmarginal{_{i_k}}} \intermediateencoder{}{} =
		\frac{(-1)^{k-1} (k-1)! \; e^{-\beta \sum_{j=1}^k d(x, \hat{x}_{i_j})}  }{Z^k(x, \beta)}\cdot \left[
		\sum_{j=1}^k \delta_{\hat{x}, \hat{x}_{i_j}} 
		- k \cdot\intermediateencoder{}{}
		\right]
	\end{equation}
	where $1 \leq i_1, \dots, i_k\leq M$ need \emph{not} be distinct. 
\end{prop}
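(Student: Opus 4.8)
The plan is to prove Proposition \ref{prop:repeated-encoder-deriv-wrt-coords-in-appendix} by induction on $k$, the number of differentiations. The base case $k=1$ is a direct computation: differentiating $\intermediateencoder{}{} = \inputmarginal{} e^{-\beta d(x,\hat{x})}/Z(x,\beta)$ with respect to $\inputmarginal{_{i_1}}$, one uses $\partial \inputmarginal{}/\partial \inputmarginal{_{i_1}} = \delta_{\hat{x},\hat{x}_{i_1}}$ and $\partial Z/\partial \inputmarginal{_{i_1}} = e^{-\beta d(x,\hat{x}_{i_1})}$, which after collecting terms gives exactly the claimed formula with the prefactor $(-1)^0 0! \, e^{-\beta d(x,\hat{x}_{i_1})}/Z$ and bracket $[\delta_{\hat{x},\hat{x}_{i_1}} - \intermediateencoder{}{}]$.

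For the inductive step, assume \eqref{eq:repeated-encoder-deriv-wrt-coords-in-prop-in-appendix} holds for some $k$, and differentiate both sides once more with respect to $\inputmarginal{_{i_{k+1}}}$. The right-hand side is a product of three factors: the constant-in-$\inputmarginalVect$ combinatorial/exponential coefficient $(-1)^{k-1}(k-1)! \, e^{-\beta\sum_{j=1}^k d(x,\hat{x}_{i_j})}$, the power $Z^{-k}(x,\beta)$, and the bracket $[\sum_{j=1}^k \delta_{\hat{x},\hat{x}_{i_j}} - k\intermediateencoder{}{}]$. I would apply the Leibniz product rule: differentiating $Z^{-k}$ gives $-k Z^{-k-1}\, e^{-\beta d(x,\hat{x}_{i_{k+1}})}$, which supplies the new exponential factor $e^{-\beta d(x,\hat{x}_{i_{k+1}})}$, bumps the power of $Z$ to $k+1$, and combined with $(-1)^{k-1}(k-1)!\cdot(-k) = (-1)^k k!$ produces the new combinatorial prefactor. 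Differentiating the bracket gives $-k \cdot \partial \intermediateencoder{}{}/\partial \inputmarginal{_{i_{k+1}}}$, for which I substitute the $k=1$ formula $\partial \intermediateencoder{}{}/\partial \inputmarginal{_{i_{k+1}}} = \frac{e^{-\beta d(x,\hat{x}_{i_{k+1}})}}{Z}[\delta_{\hat{x},\hat{x}_{i_{k+1}}} - \intermediateencoder{}{}]$.

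The main obstacle — really the only nontrivial bookkeeping — is showing that after these two contributions are added and a common factor of $(-1)^k k! \, e^{-\beta\sum_{j=1}^{k+1} d(x,\hat{x}_{i_j})}/Z^{k+1}$ is pulled out, the residual bracket collapses to $\sum_{j=1}^{k+1}\delta_{\hat{x},\hat{x}_{i_j}} - (k+1)\intermediateencoder{}{}$. Concretely, one term contributes $\sum_{j=1}^{k}\delta_{\hat{x},\hat{x}_{i_j}} - k\intermediateencoder{}{}$ (from differentiating $Z^{-k}$) and the other contributes $\delta_{\hat{x},\hat{x}_{i_{k+1}}} - \intermediateencoder{}{}$ (from differentiating the bracket, the sign of $-k$ cancelling against the sign already extracted); summing these gives precisely $\sum_{j=1}^{k+1}\delta_{\hat{x},\hat{x}_{i_j}} - (k+1)\intermediateencoder{}{}$, completing the induction. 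Finally, I would note that the formula is manifestly symmetric under permutations of $i_1,\dots,i_k$ (both the exponential sum and the Kronecker-delta sum are symmetric), so the sequence-of-differentiations form is consistent with the multi-index form, and translating back via $\alpha_i = \langle \bm{\alpha}_+, \bm{e}_{\hat{x}_i}\rangle$ and $\langle \bm{\alpha}_+, d(x,\hat{x})\rangle = \sum_j d(x,\hat{x}_{i_j})$ recovers the statement of Proposition \ref{prop:repeated-encoder-deriv-wrt-coords} as originally phrased.
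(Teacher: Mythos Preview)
Your proposal is correct and follows essentially the same approach as the paper: induction on $k$, with the base case a direct quotient-rule computation and the inductive step obtained by applying the product rule to the factors $Z^{-k}$ and the bracket $[\sum_j \delta_{\hat{x},\hat{x}_{i_j}} - k\,\intermediateencoder{}{}]$, then substituting the $k=1$ formula for $\partial \intermediateencoder{}{}/\partial \inputmarginal{_{i_{k+1}}}$. Your remark on symmetry and the translation back to multi-index notation mirrors the paper's discussion preceding the proposition.
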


\begin{proof}[Proof of Proposition \ref{prop:repeated-encoder-deriv-wrt-coords-in-appendix}]
	We prove \eqref{eq:repeated-encoder-deriv-wrt-coords-in-prop-in-appendix} by induction on $k > 0$. First, note that
	\begin{equation}		\label{eq:first-partial-deriv-of-partition-func}
		\frac{\partial}{\partial \inputmarginal{}} Z(x, \beta) =
		\frac{\partial}{\partial \inputmarginal{}} \sum_{\hat{x}'} \inputmarginal{'} e^{-\beta d(x, \hat{x}')} =
		e^{-\beta d(x, \hat{x})}
	\end{equation}
	So, for $k > 0$ we have
	\begin{equation}		\label{eq:deriv-of-partition-func-inv-power}
		\frac{\partial}{\partial \inputmarginal{}} \frac{1}{Z^k(x, \beta)} \overset{\eqref{eq:first-partial-deriv-of-partition-func}}{=}
		- \frac{k e^{-\beta d(x, \hat{x})}}{Z^{k+1}(x, \beta)}
	\end{equation}
	Hence,
	\begin{equation}		\label{eq:deriv-of-partition-func-inv-with-coord}
		\frac{\partial}{\partial \inputmarginal{'}} \left( \frac{\inputmarginal{} }{Z^k(x, \beta)} \right) =
		\frac{\delta_{\hat{x}, \hat{x}'} }{Z^k(x, \beta)} - \frac{k e^{-\beta d(x, \hat{x}')}}{Z^{k+1}(x, \beta)} \cdot \inputmarginal{}
	\end{equation}
	Thus, for the first derivative of the encoder \eqref{eq:encoder-eq} we have
	\begin{multline}		\label{eq:first-order-deriv-of-encoder-by-marginal}
		\frac{\partial}{\partial \inputmarginal{_{i_1}}} \intermediateencoder{}{} \overset{\eqref{eq:encoder-eq}}{=}
		e^{-\beta d(x, \hat{x})}\cdot \frac{\partial}{\partial \inputmarginal{_{i_1}}} \left( \frac{\inputmarginal{} }{Z(x, \beta)} \right) 
		\\ \overset{\eqref{eq:deriv-of-partition-func-inv-with-coord}}{=} 
		e^{-\beta d(x, \hat{x})}\cdot \left\{ \frac{\delta_{\hat{x}, \hat{x}_{i_1}} }{Z(x, \beta)} -
		\frac{e^{-\beta d(x, \hat{x}_{i_1})}}{Z^2(x, \beta)} \cdot \inputmarginal{} \right\} \\ =
		\frac{e^{-\beta d(x, \hat{x}_{i_1})}}{Z(x, \beta)}\cdot \bigg\{ e^{-\beta d(x, \hat{x})}\cdot \frac{\delta_{\hat{x}, \hat{x}_{i_1}} }{e^{-\beta d(x, \hat{x}_{i_1})}} -
		\frac{e^{-\beta d(x, \hat{x})} }{Z(x, \beta)} \cdot \inputmarginal{} \bigg\} 
		\\ \overset{\eqref{eq:encoder-eq}}{=}
		\frac{e^{-\beta d(x, \hat{x}_{i_1})}}{Z(x, \beta)}\cdot \bigg\{ \delta_{\hat{x}, \hat{x}_{i_1}} -
		q(\hat{x}|x) \bigg\}
	\end{multline}
	This is the induction basis $k = 1$ for \eqref{eq:repeated-encoder-deriv-wrt-coords-in-prop-in-appendix}.
	
	Next, assume that \eqref{eq:repeated-encoder-deriv-wrt-coords-in-prop-in-appendix} holds for any derivative up to order $k$. Then, for $k + 1$ we have,
	\begin{multline}
		\frac{\partial}{\partial \inputmarginal{_{i_{k+1}}}} \frac{\partial^k}{\partial \inputmarginal{_{i_1}} \cdots \partial \inputmarginal{_{i_k}}} \intermediateencoder{}{} \\ \overset{\eqref{eq:repeated-encoder-deriv-wrt-coords-in-prop-in-appendix}}{=}
		\frac{\partial}{\partial \inputmarginal{_{i_{k+1}}}} \left[ 
		\frac{(-1)^{k-1} (k-1)! \; e^{-\beta \sum_{j=1}^k d(x, \hat{x}_{i_j})} }{Z^k(x, \beta)}\cdot \left\{ 
		\sum_{j=1}^k \delta_{\hat{x}, \hat{x}_{i_j}}
		- k \cdot \intermediateencoder{}{}
		\right\} \right]
		\\ \overset{\eqref{eq:deriv-of-partition-func-inv-power}}{\underset{\eqref{eq:first-order-deriv-of-encoder-by-marginal}}{=}} 
		\frac{(-1)^{k} \; k! \; e^{-\beta \sum_{j=1}^k d(x, \hat{x}_{i_j})} }{Z^{k+1}(x, \beta)} \cdot e^{-\beta d(x, \hat{x}_{i_{k+1}})} \cdot  \left\{ 
		\sum_{j=1}^k \delta_{\hat{x}, \hat{x}_{i_j}}
		- k \cdot \intermediateencoder{}{}
		\right\} \\ +
		\frac{(-1)^{k-1} (k-1)! \; e^{-\beta \sum_{j=1}^k d(x, \hat{x}_{i_j})} }{Z^k(x, \beta)} \cdot (-k) \cdot 
		\frac{e^{-\beta d(x, \hat{x}_{i_{k+1}})}}{Z(x, \beta)}\cdot \bigg\{ \delta_{\hat{x}, \hat{x}_{i_{k+1}}} -
		q(\hat{x}|x) \bigg\} \\ =
		\frac{(-1)^{k} \; k! \; e^{-\beta \sum_{j=1}^{k+1} d(x, \hat{x}_{i_j})} }{Z^{k+1}(x, \beta)} \cdot \left\{ 
		\sum_{j=1}^{k+1} \delta_{\hat{x}, \hat{x}_{i_j}}
		- (k + 1) \cdot \intermediateencoder{}{}
		\right\}
	\end{multline}
	This completes the proof of the induction step.
\end{proof}

\medskip
\subsection{Proof of Proposition \ref{prop:repeated-beta-derivatives-of-encoder}, formula for the encoder's repeated $\beta$-derivatives}

\label{sub:proof-of-prop:repeated-beta-derivatives-of-encoder}

\begin{proof}[Proof of Proposition \ref{prop:repeated-beta-derivatives-of-encoder}]
	Note that,
	\begin{multline}		\label{eq:Z-deriv-wrt-beta}
		\partialbeta{} e^{-\beta d(x, \hat{x})} = -d(x, \hat{x}) e^{-\beta d(x, \hat{x})} \\ \Longrightarrow \quad
		\partialbeta{} Z(x, \beta) = \partialbeta{} \sum_{\hat{x}'} \inputmarginal{'} e^{-\beta d(x, \hat{x}')} =
		-\sum_{\hat{x}'} d(x, \hat{x}') \inputmarginal{'} e^{-\beta d(x, \hat{x}')}
	\end{multline}
	So, 
	\begin{multline}		\label{eq:Z-inverse-deriv-wrt-beta}
		\partialbeta{} \frac{1}{Z(x, \beta)} \overset{\eqref{eq:Z-deriv-wrt-beta}}{=}
		\frac{1}{Z(x, \beta)} \cdot \sum_{\hat{x}'} d(x, \hat{x}') \frac{\inputmarginal{'} e^{-\beta d(x, \hat{x}')}}{Z(x, \beta)} \overset{\eqref{eq:encoder-eq}}{=}
		\frac{1}{Z(x, \beta)} \cdot \sum_{\hat{x}'} \intermediateencoder{'}{} d(x, \hat{x}') 
	\end{multline}
	Thus, for the encoder's first-order $\beta$-derivative,
	\begin{multline}		\label{eq:first-beta-deriv-of-encoder-in-proof}
		\partialbeta{\intermediateencoder{}{}} \overset{\eqref{eq:encoder-eq}}{=} 
		\inputmarginal{}\cdot \partialbeta{} \frac{e^{-\beta d(x, \hat{x})}}{Z(x, \beta)}
		\underset{\eqref{eq:Z-inverse-deriv-wrt-beta}}{\overset{\eqref{eq:Z-deriv-wrt-beta}}{=}}
		\inputmarginal{}\cdot \left[
		\frac{e^{-\beta d(x, \hat{x})}}{Z(x, \beta)} \cdot \sum_{\hat{x}'} \intermediateencoder{'}{} d(x, \hat{x}') -
		d(x, \hat{x})\cdot \frac{e^{-\beta d(x, \hat{x})}}{Z(x, \beta)} 
		\right]
		\\ \overset{\eqref{eq:encoder-eq}}{=}
		\intermediateencoder{}{}\cdot \sum_{\hat{x}'} \Big( \intermediateencoder{'}{} - \delta_{\hat{x}, \hat{x}'} \Big) d(x, \hat{x}') =
		\intermediateencoder{}{}\cdot \Big(
		\expectedDxWRTencoder - d(x, \hat{x})
		\Big)
	\end{multline}
	where we have denoted $\expectedDxWRTencoderK{k} := \sum_{\hat{x}'} q(\hat{x}'|x) d(x, \hat{x}')^k$, for $k > 0$. Writing $x_0$ for $d(x, \hat{x})$ and $x_k$ for $\expectedDxWRTencoderK{k}$ when $k > 0$, this proves the first-order version of formula \eqref{eq:formula-for-repeated-beta-deriv-in-prop-recursive-form}, with $P_1(x_0, x_1) := x_1 - x_0$ \eqref{eq:P_1_derived}.
	
	Unlike the encoder $\intermediateencoderVect$, the distortion $d(x, \hat{x})$ does \emph{not} depend on $\beta$. So, for $k > 0$,
	\begin{multline}		\label{eq:beta-deriv-of-high-order-d-moments}
		\partialbeta{} \expectedDxWRTencoderK{k} =
		\sum_{\hat{x}'} d(x, \hat{x}')^k \; \partialbeta{\intermediateencoder{'}{}} \\
		\overset{\eqref{eq:first-beta-deriv-of-encoder-in-proof}}{=}
		\sum_{\hat{x}'} d(x, \hat{x}')^k \; \intermediateencoder{'}{} \cdot 
		\sum_{\hat{x}''} \big( \intermediateencoder{''}{} - \delta_{\hat{x}', \hat{x}''}\big) d(x, \hat{x}'') \\ =
		\sum_{\hat{x}', \hat{x}''} d(x, \hat{x}')^k \; d(x, \hat{x}'')
		\intermediateencoder{'}{} \big( \intermediateencoder{''}{} - \delta_{\hat{x}', \hat{x}''}\big) \\ =
		\expectedDxWRTencoder \cdot \expectedDxWRTencoderK{k} -\expectedDxWRTencoderK{k+1}
	\end{multline}
	That is, when writing $x_k := \expectedDxWRTencoderK{k}$ for $k > 0$, the $x_k$'s satisfy the relations $\dbar x_k = x_1\cdot x_k - x_{k+1}$ for $k > 0$ and $\dbar x_0 = 0$ \eqref{eq:variable-deriv-def-for-recursive-beta-formula}, where $\dbar$ is written in place of $\nicefrac{\partial}{\partial \beta}$ to emphasize the algebraic properties of this definition. 
	
	Next, assuming that the derivative formula \eqref{eq:formula-for-repeated-beta-deriv-in-prop-recursive-form} holds for $k$, we prove it for $k + 1$. With $x_0$ and $x_k$ for $k >0$ as before,
	\begin{multline}
		\partialbetaK{\intermediateencoder{}{}}{k+1} =
		\partialbeta{} \partialbetaK{\intermediateencoder{}{}}{k} \overset{\eqref{eq:formula-for-repeated-beta-deriv-in-prop-recursive-form}}{=} 
		\partialbeta{} \Big[\intermediateencoder{}{} \cdot P_k\Big( x_0, x_1, \dots, x_k\Big)\Big] \\ =
		\partialbeta{\intermediateencoder{}{}} \cdot P_k\Big( x_0, x_1, \dots, x_k\Big) +
		\intermediateencoder{}{} \cdot \tfrac{\partial}{\partial\beta} P_k\Big( x_0, x_1, \dots, x_k\Big) 
		\\ \overset{\eqref{eq:first-beta-deriv-of-encoder-in-proof}}{=}
		\intermediateencoder{}{} \cdot (x_1 - x_0) \cdot P_k\Big( x_0, x_1, \dots, x_k\Big) +
		\intermediateencoder{}{} \cdot \dbar P_k\Big( x_0, x_1, \dots, x_k\Big) \\ =
		\intermediateencoder{}{} \cdot \Big( (x_1 - x_0) \cdot P_k  + \dbar P_k \Big)
	\end{multline}
	Where we have replaced $\nicefrac{\partial}{\partial \beta}$ by $\dbar$, as before.
	So, setting $P_{k+1} := (x_1 - x_0) \cdot P_k + \dbar P_k$ \eqref{eq:P_k-inductive-def} completes the proof.
\end{proof}

\medskip
\subsection{Proof of Corollary \ref{cor:BA-beta-deriv}, for the partial $\beta$-derivative of $BA_\beta$}

\label{sub:proof-of-cor:beta-deriv-of-BA-op}

\begin{proof}[Proof of Corollary \ref{cor:BA-beta-deriv}]
	The encoder's first partial $\beta$-derivative is 
	\begin{equation}		\label{eq:first-beta-deriv-of-encoder}
		\partialbeta{\intermediateencoder{}{}} 
		\underset{\eqref{eq:P_1_derived}}{\overset{\eqref{eq:formula-for-repeated-beta-deriv-in-prop-recursive-form}}{=}}
		\intermediateencoder{}{} \cdot \big( \expectedDxWRTencoder - d(x, \hat{x}) \big)
	\end{equation}
	Plugging this back into formula \eqref{eq:high-order-BA-deriv-only-enc-deriv-implicit} for the derivative of $BA_\beta$,
	\begin{multline}
		\frac{\partial BA_\beta\left[r\right](\hat{x})}{\partial \beta}
		\overset{\eqref{eq:high-order-BA-deriv-only-enc-deriv-implicit}}{=}
		\sum_{x} p_X(x) \; \frac{\partial \intermediateencoder{}{}}{\partial \beta }
		\overset{\eqref{eq:first-beta-deriv-of-encoder}}{=}
		\sum_{x} p_X(x) \; \left[ 
		\sum_{\hat{x}'} \intermediateencoder{}{} \intermediateencoder{'}{} d(x, \hat{x}') -
		\intermediateencoder{}{} d(x, \hat{x})
		\right] \\ =
		\bb{E}_{\intermediateencoder{'}{}p_X(x)} \left[\intermediateencoder{}{} d(x, \hat{x}') \right]
		- \bb{E}_{p_X(x)} \left[\intermediateencoder{}{} d(x, \hat{x}) \right]
	\end{multline}
	Since the identity operator does not depend on $\beta$, this yields the result.
\end{proof}

\medskip
\subsection{Proof of Proposition \ref{prop:mixed-high-order-enc-deriv}, formula for mixed high-order encoder derivatives}

\label{sub:proof-of-prop:mixed-high-order-enc-deriv}

\begin{proof}[Proof of Proposition \ref{prop:mixed-high-order-enc-deriv}]
	Let $\bm{\alpha} = (\alpha_0, \bm{\alpha}_+) \in \bb{N}_0^{M+1}$ with $\bm{\alpha}_+ \neq 0$ be given, and an input marginal $\inputmarginalVect$ outside the simplex boundary, $\forall \hat{x} \; \inputmarginal{} > 0$.
	Let $\intermediateencoderVect$ be the encoder defined by $\inputmarginalVect$ \eqref{eq:encoder-eq}.
	For a fixed $x$ coordinate, consider $d(x, \hat{x})$ and $\intermediateencoder{}{}$ as $\hat{x}$-indexed vectors. 
	Thus, by the multivariate vector-power notation in Section \ref{part:details}.\ref{sec:multivariate-faa-di-brunos-formula}, 
	\begin{equation}			\label{eq:encoder-over-marginal-in-multivariate-notation}
		\frac{e^{-\beta d(x, \hat{x}')}}{Z(x, \beta)} \overset{\eqref{eq:encoder-eq}}{=}
		\frac{\intermediateencoder{'}{} }{\inputmarginal{'} }
		\quad \Longrightarrow \quad
		\left(\frac{e^{-\beta d(x, \hat{x})}}{Z(x, \beta)} \right)^{\bm{\alpha}_+} \overset{\eqref{eq:multivariate-notation-defs}}{=}
		\frac{e^{-\beta \left< \bm{\alpha}_+, d(x, \hat{x}) \right>}}{Z^{|\bm{\alpha}_+|}(x, \beta)} =
		\frac{\intermediateencoder{}{}^{\bm{\alpha}_+} }{\inputmarginalVect^{\bm{\alpha}_+} }
	\end{equation}
	Hence, using Proposition \ref{prop:repeated-encoder-deriv-wrt-coords},
	\begin{multline}		\label{eq:enc-mixed-repeated-deriv-midway}
		\frac{\partial^{|\bm{\alpha}|}}{\partial \beta^{\alpha_0} \; \partial \inputmarginalVect^{\bm{\alpha}_+}} \intermediateencoder{'}{} 
		\\ \overset{\eqref{eq:repeated-encoder-deriv-wrt-coords-in-prop}}{=}
		\partialbetaK{}{\alpha_0} \left\{
		\frac{(-1)^{|\bm{\alpha}_+|-1} (|\bm{\alpha}_+|-1)! \; e^{-\beta \left<\bm{\alpha}_+, d(x, \hat{x})\right> }  }{Z^{|\bm{\alpha}_+|}(x, \beta)}\cdot \Big[
		\left< \bm{\alpha}_+, \bm{e}_{\hat{x}'} \right>
		- |\bm{\alpha}_+| \cdot\intermediateencoder{'}{}
		\Big]
		\right\}
		\\ \overset{\eqref{eq:encoder-over-marginal-in-multivariate-notation}}{=}
		\frac{(-1)^{|\bm{\alpha}_+|-1} (|\bm{\alpha}_+|-1)!}{ \inputmarginalVect^{\bm{\alpha}_+} } \cdot \partialbetaK{}{\alpha_0} \left\{
		\intermediateencoder{}{}^{\bm{\alpha}_+}
		\Big[
		\alpha_{\hat{x}'} - |\bm{\alpha}_+| \cdot\intermediateencoder{'}{}
		\Big]
		\right\}
	\end{multline}
	Where, $\intermediateencoder{}{}^{\bm{\alpha}_+}$ in the last line is considered as an $\hat{x}$-indexed vector for $x$ fixed, $\bm{\alpha}_+ \in \bb{N}_0^M$.
	
	To proceed, we need a generalization of the Leibniz rule \eqref{eq:Leibniz-rule-for-product} to multiple factors $f_1, \dots, f_m$. It is a direct exercise by induction \cite{thaheem2003classroom} to see that \newline
	\begin{equation}		\label{eq:generalized-Leibniz-rule-for-prodcut}
		\left(f_1 f_2 \cdots f_m\right)^{(n)} = 
		\sum_{|\bm{k}| = n} \frac{n!}{\bm{k}!} \;
		\prod_{i=1}^{m} f_i^{(k_i)}
	\end{equation}
	where the sum is over all $m$-tuples $\bm{k} := (k_1, \dots, k_m)$ of non-negative integers with $|\bm{k}| = \sum_{i=1}^{m} k_i = n$, and $\nicefrac{n!}{\bm{k}!}$ is the multinomial coefficient,
	\begin{equation}
		\frac{n!}{\bm{k}!} = \frac{n!}{k_1 ! \; k_2 ! \; \cdots \; k_m !} := 
		\binom{n}{k_1, k_2, \dots, k_m} \;.
	\end{equation}
	Next, break the derivative of a product  $\partialbetaK{}{\alpha_0} \intermediateencoder{}{}^{\bm{\alpha}_+}$ \eqref{eq:enc-mixed-repeated-deriv-midway} to a product of derivatives,
	\begin{equation}		\label{eq:high-order-beta-deriv-of-encoder-power-intermid-step}
		\partialbetaK{}{\alpha_0} \intermediateencoder{}{}^{\bm{\alpha}_+} \overset{\eqref{eq:multivariate-notation-defs}}{=}
		\partialbetaK{}{\alpha_0} \Big( \prod_{i=1}^M \intermediateencoder{_i}{}^{\alpha_i} \Big) \overset{\eqref{eq:generalized-Leibniz-rule-for-prodcut}}{=}
		\sum_{\bm{k}: \; |\bm{k}| = \alpha_0} \frac{\alpha_0!}{\bm{k}!}
		\prod_{i=1}^{M} \partialbetaK{}{k_i} \big( \intermediateencoder{_i}{}^{\alpha_i} \big)
	\end{equation}
	where we have set $n := \alpha_0$ and $m := M$ at \eqref{eq:generalized-Leibniz-rule-for-prodcut}, so that $\bm{k} \in \bb{N}_0^M$.
	Note that the inner multiplicand to the right vanishes whenever $\alpha_i = 0$ but $k_i > 0$.
	
	To calculate a multiplicand $\partialbetaK{}{k_i} \intermediateencoder{_i}{}^{\alpha_i}$ at \eqref{eq:high-order-beta-deriv-of-encoder-power-intermid-step}, we apply the univariate Fa\`a di Bruno's formula \eqref{eq:univariate-faa-di-bruno-formula} to $x\mapsto x^{\alpha_i}$ composed after $\beta\mapsto \intermediateencoder{_i}{}$. When $k_i > 0$, we have for $\bm{t} \in \bb{N}_0^{k_i}$ 
	\begin{equation}		\label{eq:repeated-deriv-of-polynomial-in-proof}
		\frac{d^{|\bm{t}|}}{dx^{|\bm{t}|}} x^{\alpha_i} =
		\delta_{|\bm{t}|\leq \alpha_i} \cdot \frac{\alpha_i!}{(\alpha_i-|\bm{t}|)!} x^{\alpha_i-|\bm{t}|}
	\end{equation}
	where $\delta_{|\bm{t}|\leq \alpha_i}$ is the Kronecker delta.
	So, summing over all integer partitions $\bm{t} = \left(t_1, \dots, t_{k_i}\right)$ of $k_i$ to at most $|\bm{t}|\leq \alpha_i$ subsets, $\sum_{j=1}^{k_i} j\cdot t_j = k_i$, we obtain
	\begin{multline}		\label{eq:high-order-beta-deriv-of-coord-of-encoder-power-intermid-step}
		\partialbetaK{}{k_i} \intermediateencoder{_i}{}^{\alpha_i} \underset{\eqref{eq:repeated-deriv-of-polynomial-in-proof}}{\overset{\eqref{eq:univariate-faa-di-bruno-formula}}{=}}
		\sum \frac{k_i!}{\bm{t}! \cdot \left(1!^{t_1} \cdots k_i!^{t_{k_i}}\right)} 
		\cdot \frac{\alpha_i!}{(\alpha_i - |\bm{t}|)!} \; \intermediateencoder{_i}{}^{\alpha_i - |\bm{t}|} \cdot
		\prod_{j=1}^{k_i} \left( \partialbetaK{\intermediateencoder{_i}{}}{j} \right)^{t_j} \\
		\overset{\eqref{eq:formula-for-repeated-beta-deriv-in-prop-recursive-form}}{=}
		\sum \frac{k_i!}{\bm{t}! \cdot \left(1!^{t_1} \cdots k_i!^{t_{k_i}}\right)} 
		\cdot \frac{\alpha_i!}{(\alpha_i - |\bm{t}|)!} \; \intermediateencoder{_i}{}^{\alpha_i - |\bm{t}|} \cdot
		\prod_{j=1}^{k_i} \intermediateencoder{_i}{}^{t_j} \cdot P_j(\hat{x}_i, x)^{t_j} \\ =
		\intermediateencoder{_i}{}^{\alpha_i } \cdot \sum 
		\frac{k_i!}{\bm{t}! } \cdot \frac{\alpha_i!}{(\alpha_i - |\bm{t}|)!} 
		\prod_{j=1}^{k_i} \left( \frac{P_j(\hat{x}_i, x)}{j!} \right)^{t_j}
	\end{multline}
	Where, in the second equality we used formula \eqref{eq:formula-for-repeated-beta-deriv-in-prop-recursive-form} for the encoder's repeated partial $\beta$-derivative (Proposition \ref{prop:repeated-beta-derivatives-of-encoder}), and the last simplifies thanks to $\prod_j \intermediateencoder{_i}{}^{t_j} = {\intermediateencoder{_i}{}}^{\sum_j t_j} = {\intermediateencoder{_i}{}}^{|\bm{t}|}$.
	Note that if $k_i = 0$, then $\bm{t} = \bm{0}$ is the only integer partition of $k_i$, and so the end result of \eqref{eq:high-order-beta-deriv-of-coord-of-encoder-power-intermid-step} correctly reduces to $\intermediateencoder{_i}{}^{\alpha_i}$.
	Plugging the latter back into \eqref{eq:high-order-beta-deriv-of-encoder-power-intermid-step},
	\begin{multline}		\label{eq:high-order-beta-deriv-of-encoder-power-intermid-explicit}
		\partialbetaK{}{\alpha_0} \intermediateencoder{}{}^{\bm{\alpha}_+} \overset{\eqref{eq:high-order-beta-deriv-of-encoder-power-intermid-step}}{=}
		\sum_{\bm{k}: \; |\bm{k}| = \alpha_0} \frac{\alpha_0!}{\bm{k}!}
		\prod_{i=1}^{M} \partialbetaK{}{k_i} \big( \intermediateencoder{_i}{}^{\alpha_i} \big) 
		\\ \overset{\eqref{eq:high-order-beta-deriv-of-coord-of-encoder-power-intermid-step}}{=}
		\sum_{\bm{k}: \; |\bm{k}| = \alpha_0} \frac{\alpha_0!}{\bm{k}!}
		\prod_{i=1}^{M} \intermediateencoder{_i}{}^{\alpha_i } \sum_{\substack{\bm{t}: \;|\bm{t}|\leq \alpha_i, \\ \sum_{j} j\cdot t_j = k_i}} 
		\frac{k_i!}{\bm{t}! } \cdot \frac{\alpha_i!}{(\alpha_i - |\bm{t}|)!} 
		\prod_{j=1}^{k_i} \left( \frac{P_j(\hat{x}_i, x)}{j!} \right)^{t_j} \\ =
		\bm{\alpha}! \; \intermediateencoder{}{}^{\bm{\alpha}_+ } 
		\sum_{\bm{k}: \; |\bm{k}| = \alpha_0} \; \prod_{i=1}^{M} \; \sum_{\substack{\bm{t}: \;|\bm{t}|\leq \alpha_i, \\ \sum_{j} j\cdot t_{j} = k_i}} \frac{1}{ \bm{t}! \; (\alpha_i - |\bm{t}|)!} 
		\prod_{j=1}^{k_i} \left( \frac{P_j(\hat{x}_i, x)}{j!} \right)^{t_{j}} 
		\\ =: 
		\bm{\alpha}! \; \intermediateencoder{}{}^{\bm{\alpha}_+ } \sum_{\bm{k}: \; |\bm{k}| = \alpha_0} \; \prod_{i=1}^{M} 
		G\big( k_i, \alpha_i; \intermediateencoderVect, d \big)_{(\hat{x}_i, x)}
	\end{multline}
	Where, the second equality follows by taking $k_i, \alpha_i$ and $\intermediateencoder{_i}{}^{\alpha_i }$ out of the product over $i$, using the multivariate notation \eqref{eq:multivariate-notation-defs}, and the last equality defines $G$.
	To summarize,
	\begin{equation}		\label{eq:high-order-beta-deriv-of-powered-product-in-terms-of-F}
		\partialbetaK{}{\alpha_0} \intermediateencoder{}{}^{\bm{\alpha}_+} =
		\bm{\alpha}! \; \intermediateencoder{}{}^{\bm{\alpha}_+ } 
		\sum_{\bm{k}: \; |\bm{k}| = \alpha_0} \; \prod_{i=1}^{M} G\big( k_i, \alpha_i; \intermediateencoderVect, d \big)_{(\hat{x}_i, x)}
	\end{equation}
	where $G\big( k, a; \intermediateencoderVect, d \big)$ is a function on integers $0 \leq k, a$, whose values are $M\times N$ matrices. We set $G(k, a; \intermediateencoderVect, d) = 0$ if $a = 0 < k$ due to the comment after \eqref{eq:high-order-beta-deriv-of-encoder-power-intermid-step}, and otherwise
	\begin{equation}		\label{eq:combinatorial-F-in-terms-of-polynomials-in-proof}
		G\big( k, a; \intermediateencoderVect, d \big)_{(\hat{x}, x)} := 
		\sum_{\substack{\bm{t}: \;|\bm{t}|\leq a, \\ \sum_{j} j\cdot t_{j} = k}} \frac{1}{ \bm{t}! \; (a - |\bm{t}|)!} 
		\prod_{j=1}^{k} \left( \frac{P_j(\hat{x}, x)}{j!} \right)^{t_{j}} 
	\end{equation}
	where $\bm{t} \in \bb{N}_0^k$. We write below $G(k, a)$ for short. It depends on $\intermediateencoderVect$ and the distortion $d$ since so does $P_j(\hat{x}, x)$ \eqref{eq:P_k-by-abuse-of-notation}.
	It might sometimes be more convenient computationally to have $\bm{t} \in \bb{N}_0^l$ for some pre-fixed value $l$. As $\bm{t}$ represents a partition of $k$, any integer $l \geq k$ would do. 
	
	To complete the calculation at \eqref{eq:enc-mixed-repeated-deriv-midway}, denote $\bm{\alpha}'_+ := \bm{\alpha}_+ +  \bm{e}_{\hat{x}'}$ for $\bm{\alpha}$ with 1 added to its $\hat{x}'$-coordinate,
	where $\bm{e}_{\hat{x}'} \in \bb{R}^M$ is the standard basis vector at $\hat{x}'$. 
	In coordinates, $\alpha'_i = \alpha_i + \delta_{i, \hat{x}'}$.
	This does not affect the $\beta$ coordinate $\alpha_0 = \alpha'_0$. 
	Also, note that $\intermediateencoder{'}{} \cdot \intermediateencoder{}{}^{\bm{\alpha}_+} = \intermediateencoder{}{}^{\bm{\alpha}_+ \; + \; \bm{e}_{\hat{x}'}} = \intermediateencoder{}{}^{\bm{\alpha}'_+}$, where $\intermediateencoder{}{}^{\bm{\alpha}_+}$ is considered as an $\hat{x}$-indexed vector for $x$ fixed, as before. Thus,
	\begin{multline}
		\frac{\partial^{|\bm{\alpha}|}}{\partial \beta^{\alpha_0} \; \partial \inputmarginalVect^{\bm{\alpha}_+}} \intermediateencoder{'}{} 
		\overset{\eqref{eq:enc-mixed-repeated-deriv-midway}}{=} 
		\frac{(-1)^{|\bm{\alpha}_+|-1} (|\bm{\alpha}_+|-1)!}{ \inputmarginalVect^{\bm{\alpha}_+} } \cdot 
		\partialbetaK{}{\alpha_0} \left\{ \intermediateencoder{}{}^{\bm{\alpha}_+}
		\Big[ \alpha_{\hat{x}'} - |\bm{\alpha}_+| \cdot\intermediateencoder{'}{} \Big] \right\}
		\\ \overset{\eqref{eq:high-order-beta-deriv-of-powered-product-in-terms-of-F}}{=}
		\frac{(-1)^{|\bm{\alpha}_+|-1} (|\bm{\alpha}_+|-1)!}{ \inputmarginalVect^{\bm{\alpha}_+} } \cdot \left\{
		\alpha_{\hat{x}'} \cdot \bm{\alpha}! \; \intermediateencoder{}{}^{\bm{\alpha}_+ } 
		\sum_{\bm{k}: \; |\bm{k}| = \alpha_0} \; \prod_{i=1}^{M} G\big( k_i, \alpha_i \big)_{(\hat{x}_i, x)}
		\right. \\ \left. - 
		|\bm{\alpha}_+| \cdot \bm{\alpha}'! \; \intermediateencoder{}{}^{\bm{\alpha}'_+ } 
		\sum_{\bm{k}: \; |\bm{k}| = \alpha_0} \; \prod_{i=1}^{M} G\big( k_i, \alpha'_i \big)_{(\hat{x}_i, x)}
		\right\}
	\end{multline}
	Where we applied formula \eqref{eq:high-order-beta-deriv-of-powered-product-in-terms-of-F} for $\partialbetaK{}{\alpha_0} \intermediateencoder{}{}^{\bm{\alpha}_+}$ once to $\intermediateencoder{}{}^{\bm{\alpha}_+}$ and once to $\intermediateencoder{'}{} \cdot \intermediateencoder{}{}^{\bm{\alpha}_+}$.
	From the definition \eqref{eq:multivariate-notation-defs}, $\bm{\alpha}'! = \left(1 + \alpha_{\hat{x}'}\right) \cdot \bm{\alpha}!$.
	Proceeding with the calculation,
	\begin{multline}
		= \frac{(-1)^{|\bm{\alpha}_+|-1} (|\bm{\alpha}_+|-1)!}{ \inputmarginalVect^{\bm{\alpha}_+} } \cdot \bm{\alpha}! \; \intermediateencoder{}{}^{\bm{\alpha}_+ }  \cdot \left\{
		\alpha_{\hat{x}'} \cdot 
		\sum_{\bm{k}: \; |\bm{k}| = \alpha_0} \; \prod_{i=1}^{M} G\big( k_i, \alpha_i \big)_{(\hat{x}_i, x)}
		\right. \\ \left. - 
		|\bm{\alpha}_+| \cdot \left( 1 + \alpha_{\hat{x}'} \right) \cdot \intermediateencoder{'}{}
		\sum_{\bm{k}: \; |\bm{k}| = \alpha_0} \; \prod_{i=1}^{M} G\big( k_i, \alpha_i + \delta_{i, \hat{x}'} \big)_{(\hat{x}_i, x)}
		\right\} \\ =
		(-1)^{|\bm{\alpha}_+|-1} (|\bm{\alpha}_+|-1)! \; \bm{\alpha}! \cdot  \left(\frac{\intermediateencoder{}{}}{\inputmarginal{}}\right)^{\bm{\alpha}_+ } \cdot
		\sum_{\bm{k}: \; |\bm{k}| = \alpha_0} \left(\prod_{i\neq\hat{x}'} G\big( k_i, \alpha_i \big)_{(\hat{x}_i, x)} \right)
		\\ \cdot \left\{
		\alpha_{\hat{x}'} \cdot G\big( k_{\hat{x}'}, \alpha_{\hat{x}'} \big)_{(\hat{x}', x)} 
		- |\bm{\alpha}_+| \cdot \left( 1 + \alpha_{\hat{x}'} \right) \cdot \intermediateencoder{'}{} \cdot 
		G\big( k_{\hat{x}'}, 1 + \alpha_{\hat{x}'} \big)_{(\hat{x}', x)}
		\right\}
	\end{multline}
	Where, the last equality follows since the two products with $G$ are identical at all but the $\hat{x}'$ multiplicand, so that $\prod_{i\neq\hat{x}'} G\big( k_i, \alpha_i \big)_{(\hat{x}_i, x)}$ can be taken out of the curly brackets.
	This is formula \eqref{eq:mixed-high-order-enc-deriv-in-prop}, completing the proof.
\end{proof}

\medskip
\subsection{Proof of Proposition \ref{prop:Jacobian-of-BA-in-direct-encoder-coordinates}, Blahut-Arimoto's Jacobian in encoder coordinates}
\label{sub:proof-of-prop:Jacobian-of-BA-in-direct-encoder-coordinates}

Unlike the previous subsections in Section \ref{sec:appendix-derivations-of-BA-high-order-deriv-in-marginal-coords} which take the input marginal $\inputmarginalVect$ as the variable, here we consider the encoder $\intermediateencoderVect$ as the variable.

\begin{proof}[Proof of Proposition \ref{prop:Jacobian-of-BA-in-direct-encoder-coordinates}]
	We re-state the Blahut-Arimoto equations, with the encoder now playing the role of input and output distributions. Starting at the $i$-th iteration with $p_i(\hat{x}|x)$, set
	\begin{equation}		\label{eq:marginal-eq-in-BA-enc-proof}
		p_i(\hat{x}) := \sum_x p_X(x) p_i(\hat{x}|x) \;.
	\end{equation}
	Then, output
	\begin{equation}		\label{eq:encoder-eq-in-BA-enc-proof}
		p_{i+1}(\hat{x}|x) := \frac{p_{i}(\hat{x}) e^{-\beta d(x, \hat{x})}}{\sum_{\hat{x}'} p_{i}(\hat{x}') e^{-\beta d(x, \hat{x}')}} \;.
	\end{equation}
	For particular input coordinates $\hat{x}', x'$ and output coordinates $\hat{x}, x$, we would like to calculate
	\begin{equation}		\label{eq:multiv-chain-rule-in-BA-enc-proof}
		\frac{\partial p_{i+1}(\hat{x}|x)}{\partial p_{i}(\hat{x}'|x')} =
		\sum_{\hat{x}''} \frac{\partial p_{i+1}(\hat{x}|x)}{\partial p_{i}(\hat{x}'')}
		\frac{\partial p_{i}(\hat{x}'')}{\partial p_{i}(\hat{x}'|x')} \;,
	\end{equation}
	where the equality is due to the multivariate chain rule.
	
	For the first integrand at \eqref{eq:multiv-chain-rule-in-BA-enc-proof},
	\begin{multline}		\label{eq:enc-by-marginal-deriv}
		\frac{\partial p_{i+1}(\hat{x}|x)}{\partial p_{i}(\hat{x}'')} \overset{(\ref{eq:encoder-eq-in-BA-enc-proof})}{=}
		\frac{\partial }{\partial p_{i}(\hat{x}'')}
		\frac{p_{i}(\hat{x}) e^{-\beta d(x, \hat{x})}}{\sum_{\hat{x}'} p_{i}(\hat{x}') e^{-\beta d(x, \hat{x}')}} = \\ =
		\frac{\delta_{\hat{x}, \hat{x}''} \; e^{-\beta d(x, \hat{x})}}{\sum_{\hat{x}'} p_{i}(\hat{x}') e^{-\beta d(x, \hat{x}')}} -
		\frac{p_{i}(\hat{x}) e^{-\beta d(x, \hat{x})}}{(\sum_{\hat{x}'} p_{i}(\hat{x}') e^{-\beta d(x, \hat{x}')})^2} \cdot
		\overset{e^{-\beta d(x, \hat{x}'')}}{\overbrace{\frac{\partial }{\partial p_{i}(\hat{x}'')} \sum_{\hat{x}'} p_{i}(\hat{x}') e^{-\beta d(x, \hat{x}')}}} = \\ =
		\frac{e^{-\beta d(x, \hat{x}'')}}{\sum_{\hat{x}'} p_{i}(\hat{x}') e^{-\beta d(x, \hat{x}')}} 
		\left[ \delta_{\hat{x}, \hat{x}''} - \frac{p_{i}(\hat{x}) e^{-\beta d(x, \hat{x})}}{\sum_{\hat{x}'} p_{i}(\hat{x}') e^{-\beta d(x, \hat{x}')}}	\right]
		\overset{(\ref{eq:encoder-eq-in-BA-enc-proof})}{=} \\ =
		\frac{e^{-\beta d(x, \hat{x}'')}}{\sum_{\hat{x}'} p_{i}(\hat{x}') e^{-\beta d(x, \hat{x}')}} 
		\left[ \delta_{\hat{x}, \hat{x}''} - p_{i+1}(\hat{x}|x) \right]
	\end{multline}
	For the second,
	\begin{equation}		\label{eq:marginal-by-enc-deriv}
		\frac{\partial p_{i}(\hat{x}'')}{\partial p_i(\hat{x}'|x')} \overset{(\ref{eq:marginal-eq-in-BA-enc-proof})}{=}
		\frac{\partial}{\partial p_i(\hat{x}'|x')} \sum_{x} p_X(x) p_i(\hat{x}''|x) =
		\delta_{\hat{x}', \hat{x}''} \; p_X(x')
	\end{equation}
	Combining the results,
	\begin{multline}		\label{eq:BA-derived-as-enc-by-enc}
		\frac{\partial p_{i+1}(\hat{x}|x)}{\partial p_{i}(\hat{x}'|x')} \overset{\eqref{eq:multiv-chain-rule-in-BA-enc-proof}}{=}
		\sum_{\hat{x}''} \frac{\partial p_{i+1}(\hat{x}|x)}{\partial p_{i}(\hat{x}'')}
		\frac{\partial p_{i}(\hat{x}'')}{\partial p_{i}(\hat{x}'|x')} \overset{(\ref{eq:enc-by-marginal-deriv})}{\underset{(\ref{eq:marginal-by-enc-deriv})}{=}} \\ =
		\sum_{\hat{x}''} \frac{e^{-\beta d(x, \hat{x}'')}}{\sum_{\hat{x}'''} p_{i}(\hat{x}''') e^{-\beta d(x, \hat{x}''')}}  \left[ \delta_{\hat{x}, \hat{x}''} - p_{i+1}(\hat{x}|x) \right] \cdot \delta_{\hat{x}', \hat{x}''} \; p_X(x') = \\ =
		\frac{e^{-\beta d(x, \hat{x}')}}{\sum_{\hat{x}''} p_{i}(\hat{x}'') e^{-\beta d(x, \hat{x}'')}} 
		\left[ \delta_{\hat{x}, \hat{x}'} - p_{i+1}(\hat{x}|x) \right] p_X(x')
	\end{multline}
	As $\frac{p_{i+1}(\hat{x}'|x)}{p_i(\hat{x}')} = \frac{e^{-\beta d(x, \hat{x}')}}{\sum_{\hat{x}''} p_{i}(\hat{x}'') e^{-\beta d(x, \hat{x}'')}}$ by \eqref{eq:encoder-eq-in-BA-enc-proof} whenever $p_{i}(\hat{x}') \neq 0$ this yields the result.
\end{proof}

\medskip
\section{Proofs for error analysis}
\label{sec:proof-for-error-analysis}

The following lemma is used for error analysis and for assessing computational costs of RD derivative tensors.
While, the rest of this section contains results used only for error analysis.

\begin{lem}[Bounds on the complexity of $P_k$]			\label{lem:bound-on-the-complexity-of-P_K}
	Each polynomial $P_k$ \eqref{eq:P_0-def}-\eqref{eq:P_k-inductive-def} is of degree $k$ at most, and can be written as a sum of at most $2^k k!$ monomials in the coefficients $1$ and $-1$. In particular, $P_k$ has no more than $2^k k!$ monomials.
\end{lem}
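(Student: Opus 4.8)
The plan is to prove both claims simultaneously by induction on $k$, tracking the three quantities of interest: the degree of $P_k$, the coefficients appearing in $P_k$, and the number of monomials of $P_k$. The base case $k = 0$ is immediate, since $P_0 = 1$ has degree $0$, a single monomial, and coefficient $1 = 2^0 \cdot 0!$.

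For the inductive step, I would assume $P_k$ has degree at most $k$, at most $2^k k!$ monomials, and all coefficients in $\{1, -1\}$, and then analyze $P_{k+1} = (x_1 - x_0) P_k + \dbar P_k$ via \eqref{eq:P_k-inductive-def}. The term $(x_1 - x_0) P_k$ contributes at most $2 \cdot 2^k k!$ monomials, each of degree at most $k + 1$, still with coefficients $\pm 1$ (multiplication by $x_1$ or $-x_0$ only shifts exponents and flips signs). For $\dbar P_k$, I would use the explicit monomial formula \eqref{eq:differentiating-a-monomial-in-P_k}: differentiating a single monomial $x_0^{i_0} x_1^{i_1} \cdots x_k^{i_k}$ of degree $d \leq k$ produces a sum indexed by $j = 1, \dots, k$, where the $j$-th summand (when $i_j > 0$) is $i_j$ times $x_0^{i_0} \cdots x_j^{i_j - 1} \cdots x_k^{i_k} \cdot (x_1 x_j - x_{j+1})$ (for $j > 0$; the $j = 0$ term vanishes). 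Expanding $(x_1 x_j - x_{j+1})$ gives two monomials, so each monomial of $P_k$ yields at most $2k$ monomials under $\dbar$ — and crucially, since a monomial has at most $k$ distinct variables with positive exponent (as its degree is $\le k$), at most $k$ of the summands are nonzero, and the degree of each resulting monomial is $d - 1 + 2 = d + 1 \leq k + 1$. The integer coefficients $i_j$ could in principle exceed $1$, so the cleaner bookkeeping is to observe that $i_j x_j^{i_j - 1} \dbar x_j = \dbar x_j^{i_j}$ expands as a telescoping-free sum but one may simply count $i_j \le d \le k$ copies of $\pm 1$-coefficient monomials; alternatively, note $i_j x_0^{i_0}\cdots x_j^{i_j-1}\cdots (x_1 x_j - x_{j+1})$ is a sum of $i_j$ identical terms of each sign, so after expansion $\dbar P_k$ is a sum of at most $2 k \cdot k \cdot (2^k k!)$ monomials with coefficients $\pm 1$. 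Combining, $P_{k+1}$ is a sum of at most $2 \cdot 2^k k! + 2k^2 \cdot 2^k k! \le 2^{k+1}(k+1)! \cdot C$ monomials; I will need to check the constant works out, which is the one routine point requiring care.

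The main obstacle is the bookkeeping for the coefficient bound: the factors $i_j$ arising in \eqref{eq:differentiating-a-monomial-in-P_k} are genuine integers larger than $1$, so the statement "coefficients $1$ and $-1$" must be read as "$P_k$ can be \emph{written} as a sum of $\pm 1$-coefficient monomials (with repeats)", not "the coefficient of each distinct monomial is $\pm 1$." I would make this explicit: when differentiating, replace $i_j x_0^{i_0}\cdots x_j^{i_j - 1}\cdots x_k^{i_k}\cdot(x_1 x_j - x_{j+1})$ by the sum of $i_j$ copies of $x_0^{i_0}\cdots x_j^{i_j-1}\cdots x_k^{i_k} x_1 x_j$ and $i_j$ copies of $-x_0^{i_0}\cdots x_j^{i_j-1}\cdots x_k^{i_k} x_{j+1}$, so all listed coefficients are $\pm 1$. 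Then the count of listed monomials from $\dbar$ applied to one degree-$d$ monomial is $\sum_{j : i_j > 0} 2 i_j \le 2 d \le 2k$ (since $\sum_j i_j = d$), which actually gives the sharper bound $2k$ rather than $2k^2$ per monomial; this comfortably keeps the total under $2^{k+1}(k+1)!$. Finally I would verify the degree claim propagates ($\deg$ increases by exactly $1$ in both contributions) and assemble the arithmetic $2 \cdot 2^k k! + 2k \cdot 2^k k! = 2^{k+1} k!(1 + k) = 2^{k+1}(k+1)!$, which closes the induction exactly.
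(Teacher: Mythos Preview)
Your proposal is correct and follows essentially the same approach as the paper: both argue by induction, bound the degree increase by $1$ for each of $(x_1 - x_0)P_k$ and $\dbar P_k$, count at most $2k$ $\pm 1$-coefficient monomials from applying $\dbar$ to a single degree-$\le k$ monomial, and close the recurrence $l_{k+1} \le 2 l_k + 2k\, l_k = 2(k+1) l_k$ to obtain $l_k \le 2^k k!$. Your final bookkeeping via $\sum_{j\ge 1} 2 i_j \le 2d \le 2k$ is exactly the paper's count, phrased in exponent form rather than as a product of $d$ variable instances; the intermediate detour through $2k^2$ is unnecessary but harmless.
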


\begin{proof}[Proof of Lemma \ref{lem:bound-on-the-complexity-of-P_K}]
	First, for the degree, note that the first addend $(x_1 - x_0) \cdot P_k$ in the inductive definition \eqref{eq:P_k-inductive-def} of $P_{k+1}$ increases the degree by 1, while by the definition \eqref{eq:variable-deriv-def-for-recursive-beta-formula} of $\dbar$, deriving $P_k$ increases its degree by 1 at most, $\deg \dbar P_k \leq 1 + \deg P_k$. 
	
	Second, the derivative of a monomial $x_{i_1} \cdot x_{i_2} \cdot \dots \cdot x_{i_k}$ of degree $k$ is a sum of at most $k$ differentiations, each of which is a sum of two monomials. Explicitly, if neither variable $x_{i_j}$ is $x_0$, then
	\begin{equation}			\label{eq:derivation-of-a-monic-monomial-in-lemmas-proof}
		\dbar \Big( \prod_{j=1}^k x_{i_j} \Big) =
		\sum_{l=1}^k \Big(\prod_{j\neq l}^k x_{i_j}\Big) \dbar x_{i_l} 
		\overset{\eqref{eq:variable-deriv-def-for-recursive-beta-formula}}{=}
		\sum_{l=1}^k \Big(\prod_{j\neq l}^k x_{i_j}\Big) \left( x_1 \cdot x_{i_l} - x_{i_l+1} \right)
	\end{equation}
	which is a sum of at most $2k$ monomials. The sum is shorter if any $i_j$ is $0$, or if the monomial we've started with is of degree a smaller than $k$.
	Denote by $l_k$ the minimal number of monomials in $P_k$, when represented as a sum with coefficients $\pm 1$. By the inductive definition \eqref{eq:P_k-inductive-def} of $P_k$,
	\begin{equation}		\label{eq:inductive-formula-for-length-of-P_k-in-proof}
		l_{k+1} \leq 2\cdot l_k + 2k\cdot l_k  = 2(k+1) \cdot l_k \;.
	\end{equation}
	Where, the first term $2\cdot l_k$ bounds the number of monomials in $(x_1 - x_0) \cdot P_k$, and by \eqref{eq:derivation-of-a-monic-monomial-in-lemmas-proof}, $2k\cdot l_k$ bounds that in $\dbar P_k$.
	Applying \eqref{eq:inductive-formula-for-length-of-P_k-in-proof} inductively starting at $l_0 = 1$ \eqref{eq:P_0-def}, we obtain $l_{k} \leq \left(2\cdot 1\right) \cdot \left(2 \cdot 2\right) \cdots \left(2\cdot k\right) = 2^k k!$.
	
	When the coefficients are not restricted to $\pm 1$, then identical monomials can be grouped together, showing that the minimal number of monomials in $P_k$ is smaller.
\end{proof}

\medskip
\subsection{Proof that RD derivative tensors are bounded uniformly, Lemma \ref{lem:uniform-bound-on-BA-derivative-tensors}}
\label{sub:proof-of-lem:uniform-bound-on-BA-derivative-tensors}

\begin{proof}[Proof of Lemma \ref{lem:uniform-bound-on-BA-derivative-tensors}]
	We show that each of the quantities in Section \ref{sub:high-order-deriv-tensors-of-BA} is bounded uniformly on the closed $\delta$-interior of the simplex $\Delta[ \hat{\mathcal{X}} ]$, by a bound which depends only on the orders of differentiation $b$ and $m$, and on the problem's properties, via $d$ and $p_X$.
	
	To synchronize the Lemma's notation at \eqref{eq:uniform-upper-bound-on-deriv-tensor} with the explicit forms \eqref{eq:repeated-beta-deriv-in-thm} and \eqref{eq:mixed-BA-deriv-in-thm} of the derivative tensors of $Id - BA_\beta$ \eqref{eq:RD-operator-def} (Theorem \ref{thm:high-order-derivs-of-BA-in-main-text}), define $\bm{\alpha} := \left( \alpha_0, \bm{\alpha}_+ \right) \in \bb{N}_0^{M+1}$ as following. Set $\alpha_0 := b$, the number of differentiations with respect to $\beta$. Next, using the tensor indices $(i_1, i_2, \dots, i_m)$ at \eqref{eq:uniform-upper-bound-on-deriv-tensor}, define $\bm{\alpha}_+ \in \bb{N}_0^M$ by $\bm{\alpha}_+ := \bm{e}_{i_1} + \bm{e}_{i_2} + \dots + \bm{e}_{i_m}$, where $\bm{e}_j$ is the standard $j$-th basis vector. Carefully note that $|\bm{\alpha}_+| \overset{\eqref{eq:multivariate-notation-defs}}{=} m$ by definition, and so $|\bm{\alpha}| = m + b$.
	cf., the comments after Equation \eqref{eq:mixed-deriv-def-evaluated-applied-to-vectors}, for the two different notations of high-order derivatives.
	
	Next, from definition \eqref{eq:expected-k-th-power-distortion-def},
	\begin{equation}			\label{eq:bound-on-variables-max-value-in-proof}
		\expectedDxWRTencoderK{k} = 
		\sum_{\hat{x}'} \intermediateencoder{'}{} d(x, \hat{x}')^k \leq
		\sum_{\hat{x}'} \intermediateencoder{'}{} d_{max}^k = d_{max}^k
	\end{equation}
	for \emph{any} conditional distribution $\intermediateencoder{}{}$.
	Therefore, by the definition \eqref{eq:P_k-by-abuse-of-notation} of the matrices $P_k[\bm{q}; d]$, each of its entries $(\hat{x}, x)$ is bounded,
	\begin{equation}		\label{eq:uniform-bound-on-P_k-in-proof}
		| P_k[\bm{q}; d](\hat{x}, x) | \leq 2^k k! \cdot (d_{max}^k)^k \;.
	\end{equation}
	For, by Lemma \ref{lem:bound-on-the-complexity-of-P_K}, each $P_k$ can be written as a sum of no more than $2^k k!$ monomials, each of degree $k$ at most, with the value of each variable bounded by \eqref{eq:bound-on-variables-max-value-in-proof}.	
	This immediately shows that repeated partial $\beta$-derivatives are uniformly bounded on the entire simplex,
	\begin{equation}		\label{eq:upper-bound-for-repeated-beta-derivs-in-proof}
		\left| \partialbetaK{}{\alpha_0} \big( Id - BA_\beta \big)[\inputmarginalVect](\hat{x}) \right| \overset{\eqref{eq:repeated-beta-deriv-in-thm} }{\leq}
		\sum_x p_X(x) \intermediateencoder{}{} \cdot \left| P_{\alpha_0}(\hat{x}, x) \right| \overset{\eqref{eq:uniform-bound-on-P_k-in-proof}}{\leq}
		2^{\alpha_0} {\alpha_0}! \cdot d_{max}^{\alpha_0^2}
	\end{equation}
	
	For mixed derivatives, first note that for each $k, a$,
	\begin{equation}		\label{eq:upper-bound-on-G}
		\left| G\big( k, a\big)_{(\hat{x}, x)} \right| \overset{\eqref{eq:combinatorial-G-in-terms-of-polynomials}}{\leq}
		\sum_{\substack{\bm{t}: \;|\bm{t}|\leq a, \\ \sum_{j} j\cdot t_{j} = k}} 1 \cdot \prod_{j=1}^{k} \left( \frac{ \left| P_j(\hat{x}, x) \right| }{j!} \right)^{t_{j}}
		\overset{\eqref{eq:uniform-bound-on-P_k-in-proof}}{\leq}
		p(k) \cdot \left( 2^k k! \cdot d_{max}^{k^2} \right)^a 
	\end{equation}
	Where, $p(k)$ stands for the partition function (the number of integer partitions of $k$), and we have discarded the factorials at each denominator. For the last inequality, note that $\sum_j t_j \leq a$.
	
	Next, by our assumption that $\inputmarginal{'}\geq \delta$ for all $\hat{x}'$ and the definition \eqref{eq:multivariate-notation-defs} of vector power, it follows that $\left(\nicefrac{1}{\inputmarginal{'}}\right)^{\bm{\alpha}_+} \leq \frac{1}{\delta^{|\bm{\alpha}_+|}}$. Each coordinate $\alpha_{\hat{x}}$ of $\bm{\alpha}_+$ is bounded by $|\bm{\alpha}_+|$, and so $\bm{\alpha}!$ is bounded by $\alpha_0 \cdot |\bm{\alpha}_+|!^M$. This allows to bound $\left|(-1)^{|\bm{\alpha_+}|-1} (|\bm{\alpha_+}|-1)! \; \bm{\alpha}! \sum_x p_X(x) \left(\frac{\intermediateencoder{'}{}}{\inputmarginal{'}}\right)^{\bm{\alpha_+} }\right|$ in \eqref{eq:mixed-BA-deriv-in-thm} from above by $\tfrac{(|\bm{\alpha_+}|-1)! \; \alpha_0 \; |\bm{\alpha}_+|!^M}{\delta^{|\bm{\alpha}_+|}}$.
	Similarly, each coordinate $k_{\hat{x}}$ of $\bm{k} \in \bb{N}_0^M$ with $|\bm{k}| = \alpha_0$ is bounded by $\alpha_0$, and so $|G(k, 1+a)|$ can be bounded by evaluating the upper bound \eqref{eq:upper-bound-on-G} at $(\alpha_0, 1+|\bm{\alpha}_+|)$.
	From the combinatorial definition of the binomial coefficient, there are $\binom{\alpha_0 + M - 1}{\alpha_0}$ integral vectors $\bm{k} \in \bb{N}_0^M$ with $|\bm{k}| = \alpha_0$.
	Using these, the formula \eqref{eq:mixed-BA-deriv-in-thm} for the mixed derivatives can be bounded by,
	\begin{multline}		\label{eq:upper-bound-for-mixed-deriv-tensor-in-proof}
		\left| \frac{\partial^{|\bm{\alpha}|} }{\partial \beta^{\alpha_0} \partial \inputmarginalVect^{\bm{\alpha_+}}} \left(Id - BA_\beta \right)\left[\inputmarginalVect\right](\hat{x}) \right| 
		\\ \overset{\eqref{eq:mixed-BA-deriv-in-thm}}{\leq}
		1 + \frac{2(|\bm{\alpha_+}| + 1)! \; \alpha_0}{\delta^{|\bm{\alpha}_+|}} 
		\binom{\alpha_0 + M - 1}{\alpha_0} \left[ 
		|\bm{\alpha}_+|! \; p(\alpha_0) \cdot \left( 2^{\alpha_0} \alpha_0! \cdot d_{max}^{\alpha_0^2} \right)^{1 + |\bm{\alpha}_+|} 
		\right]^{M}
	\end{multline}
	
	Since \eqref{eq:upper-bound-for-mixed-deriv-tensor-in-proof} bounds the right-hand side of \eqref{eq:upper-bound-for-repeated-beta-derivs-in-proof} from above, this completes the proof.
\end{proof}

\medskip
\subsection{Proof that Taylor method converges between RD bifurcations, Theorem \ref{thm:taylor-method-converges-for-RD-root-tracking-away-of-bifurcation}}

\label{sub:proof-of-thm:taylor-method-converges-for-RD-root-tracking-away-of-bifurcation}

\begin{proof}[Proof of Theorem \ref{thm:taylor-method-converges-for-RD-root-tracking-away-of-bifurcation}]
	In this proof, denote $\partial_\delta \Delta[\hat{\mathcal{X}}] := \{p\in \Delta[\hat{\mathcal{X}}]: \; \exists \hat{x} \; p(\hat{x}) \leq \delta \}$ for the closed $\delta$-boundary of the simplex, and	$\overline{\Delta[\hat{\mathcal{X}}] \setminus \partial_\delta \Delta[\hat{\mathcal{X}}]}$ for the closed $\delta$-interior.
	
	For the first claim of this Theorem, set $\beta_f(\delta)$ to be the $\beta$ value of the first time that $\inputmarginalVect_\beta$ reaches the $\delta$-boundary. That is, the	largest $\beta$ such that $\beta < \beta_0$ and $\inputmarginalVect_\beta \in \partial_\delta \Delta[\hat{\mathcal{X}}]$.
	This is well defined, since $\partial_\delta \Delta[\hat{\mathcal{X}}]$ is compact and $\inputmarginalVect_\beta$ is a continuous function of $\beta$, by Assumption \ref{assumption:solution-is-smooth-in-beta} in Section \ref{part:how-and-what}.\ref{sub:beta-derivs-at-an-operator-root}. If $\inputmarginalVect_\beta$ never reaches the $\delta$-boundary of $\Delta[\hat{\mathcal{X}}]$ for $\beta \in [0, \beta_0]$ then set $\beta_f(\delta) = 0$.
	
	For the second claim, we set to prove that the conditions of Theorem \ref{thm:error-analysis-for-euler-method} in Section \ref{part:details}.\ref{sub:error-analysis-of-Taylor-method-background} are met; namely, of error analysis for an $l$-th order Taylor method.
	To invoke that Theorem, it suffices to show $(i)$ that the derivative's norm $\left\|\dbetaK{\inputmarginalVect_\beta}{l+1}\right\|_\infty$ at the true solution $\inputmarginalVect_\beta$ is bounded uniformly on $[\beta_f(\delta), \beta_0]$, and $(ii)$ that the Taylor polynomial $T_l$ \eqref{eq:Taylor-poly-def-for-Taylor-method} has a finite Lipschitz constant $L_l$.
	The assumptions of that Theorem require that the Lipschitz condition holds for any $\beta\in [\beta_f(\delta), \beta_0]$ and $\tilde{\inputmarginalVect} \in \bb{R}^M$. However, the condition $\tilde{\inputmarginalVect} \in \bb{R}^M$ may be relaxed to only requiring that $\tilde{\inputmarginalVect}$ is not too far away from a true solution $\inputmarginalVect_\beta$, so long that the step-size will eventually be taken to be small enough, \cite[Working assumption 210A]{butcher2016numerical}.
	We may choose $\delta'$ with $0 < \delta' < \delta$, and prove that the Lipschitz condition $(ii)$ holds for $(\tilde{\inputmarginalVect}, \beta)$ with $\beta \in [\beta_f(\delta), \beta_0]$ and $\|\tilde{\inputmarginalVect} - \inputmarginalVect_\beta \|_\infty \leq \eta := \delta - \delta'$. Namely, we may consider only points in a ``tube'' $T_\eta := \left\{(\tilde{\inputmarginalVect}, \beta): \; \beta \in [\beta_f(\delta), \beta_0] \text{ and }\|\tilde{\inputmarginalVect} - \inputmarginalVect_\beta \|_\infty \leq \eta \right\}$ around the true solution $\inputmarginalVect_\beta$.
	For $\eta = 0$, $T_0$ is simply the graph of $\inputmarginalVect_\beta$.
	Note that by the first claim, if $0 < \eta < \delta$, then the $\tilde{\inputmarginalVect}$ coordinate of $T_\eta$ is contained in the closed $\delta'$-interior, and thus in the interior of the simplex.
	$T_\eta$ is then essentially the product of two compact spaces, the closed $\eta$-ball (around $\inputmarginalVect_\beta \in \Delta[\hat{\mathcal{X}}]$) and the interval $[\beta_f(\delta), \beta_0]$.
	
	Write $J(\tilde{\inputmarginalVect}, \beta) := D_{\inputmarginalVect} (Id - BA_\beta)\rvert_{(\tilde{\inputmarginalVect}, \beta)}$ for the Jacobian matrix.
	Its general form (\textit{not} necessarily at a fixed point) at a distribution $\tilde{\inputmarginalVect}$ of full support is given by Corollary \ref{cor:BA-jacobian} in Section \ref{part:details}.\ref{sub:encoders-marginal-derivatives} (formula \eqref{eq:BA-jacobian} there). If $\inputmarginalVect_\beta$ is in addition a fixed point of $BA_\beta$, then the basic properties of $J(\inputmarginalVect_\beta, \beta)$ are given by Theorem \ref{thm:properties-of-BA-jacobian-from-ISIT-CSD-paper} there.
	In particular, it is non-singular so long that $\inputmarginalVect_\beta$ is in the simplex interior, which by the first claim holds for $\beta \in [\beta_f(\delta), \beta_0]$.
	
	Before proving $(i)$ and $(ii)$, we shall show that $\eta > 0$ can be chosen small enough, such that $J(\tilde{\inputmarginalVect}, \beta)$ is non-singular for every $(\tilde{\inputmarginalVect}, \beta) \in T_\eta$. From this, it shall follow that the matrix norm $\|J(\tilde{\inputmarginalVect},\beta)^{-1}\|_\infty$ of its inverse is well-defined on $T_\eta$, and thus bounded uniformly. For, it is a continuous real-valued function on the compact set $T_\eta$, and so obtains a maximal value.
	e.g., \cite[1.3]{ortega1990numerical} for matrix norms.
	When $\tilde{\inputmarginalVect}$ is set to the true solution $\inputmarginalVect_\beta$, this is rather straightforward from Theorem \ref{thm:properties-of-BA-jacobian-from-ISIT-CSD-paper}. While for $\tilde{\inputmarginalVect}$ slightly off a true solution, this follows from continuity and compactness, as we show next.
	e.g., \cite[Chapter 3]{munkres2000topology} on compactness.
	
	First, we prove that $J$ is non-singular when evaluated at the true solution $\inputmarginalVect_\beta$.
	This follows since the composition
	\begin{equation}		\label{eq:J-composition-in-proof}
		\beta \mapsto \inputmarginalVect_\beta \mapsto J(\inputmarginalVect_\beta, \beta) \mapsto 
		|\det J(\inputmarginalVect_\beta, \beta) |
	\end{equation}
	is continuous on $[\beta_f(\delta), \beta_0]$ (shown below), and so obtains a minimal value $d' \geq 0$ at some point $\beta'$ there. 
	$J(\inputmarginalVect_{\beta'}, \beta')$ is non-singular by the Jacobian's properties mentioned above, and so $d'$ is strictly positive. In particular, $d'$ does not depend on the value of $\eta \geq 0$.
	This shows that $J(\inputmarginalVect_\beta, \beta)$ is non-singular on $[\beta_f(\delta), \beta_0]$. 
	Each function in the composition \eqref{eq:J-composition-in-proof} is indeed continuous. For the first $\beta \mapsto \inputmarginalVect_\beta$ this is by Assumption \ref{assumption:solution-is-smooth-in-beta}.
	For the second $\inputmarginalVect_\beta \mapsto J(\inputmarginalVect_\beta, \beta)$, by formula \eqref{eq:BA-jacobian} of Corollary \ref{cor:BA-jacobian}, the entries of $J(\tilde{\inputmarginalVect}, \beta)$ are continuous in both $\beta$ and $\tilde{\inputmarginalVect}$, so long that $\tilde{\inputmarginalVect}$ is in the interior of $\Delta[\hat{\mathcal{X}}]$, which holds at $\inputmarginalVect_\beta$ by the first claim.
	Finally, the determinant of a matrix is continuous, as it is a sum of products of matrix entries.
	
	Second, we show that $\eta > 0$ can be chosen small enough that $J$ is non-singular also at points in $T_\eta$ other than the true solution.
	Fix some $\eta$ with $0 < \eta < \delta$, and define a function $f$ on $T_\eta$ by
	\begin{equation}
		f(\tilde{\inputmarginalVect}, \beta) :=
		|\det J(\tilde{\inputmarginalVect}, \beta) |
	\end{equation}
	By the note after $T_\eta$'s definition, its projection onto $\Delta[\hat{\mathcal{X}}]$ is in the interior of the simplex. Hence, $f$ is well-defined and continuous in $(\tilde{\inputmarginalVect}, \beta)$, by formula \eqref{eq:BA-jacobian}. 
	It satisfies $f(\inputmarginalVect_\beta, \beta) \geq d' > 0$ by the argument above.
	By the definition of continuity, \cite[$\mathsection$18]{munkres2000topology}, the inverse image $f^{-1}((\nicefrac{d'}{2}, \infty])$ is open in $T_\eta$; it contains $T_0$ (the graph of $\inputmarginalVect_\beta$).
	As $T_\eta$ is a product of compact spaces, then by the tube lemma \cite[Lemma 26.8]{munkres2000topology} there is $0 < \eta' \leq \eta$ such that $T_0 \subset T_{\eta'} \subseteq f^{-1}((\nicefrac{d'}{2}, \infty])$. To see this, note that any open neighborhood of 0 in the $\eta$-ball contains an open $\eta'$-ball around 0, from the definition of a basis for a topology.
	Therefore, $|\det J|$ is at least $\nicefrac{d'}{2} > 0$ on $T_{\eta'}$, as argued.
	
	Summarizing the above, we have shown that $0 < \eta' < \delta$ can be chosen such that $J(\tilde{\inputmarginalVect}, \beta)$ is non-singular on $T_{\eta'}$. Therefore, matrix inversion is well-defined on it. As matrix inversion and norm are continuous, then $\|J(\tilde{\inputmarginalVect}, \beta)^{-1}\|_\infty$ is continuous on the compact set $T_{\eta'}$, and so obtains a maximum value on it. That is, the Jacobian's inverse is of uniformly bounded matrix norm, for distributions $\tilde{\inputmarginalVect}$ at most $\eta'$-far from $\inputmarginalVect_\beta$.
	
	\medskip 
	Aided by the above, we turn to prove $(i)$ and $(ii)$.
	For $(i)$, let $\dbetaK{\tilde{\inputmarginalVect}}{l}$ denote the numerical derivative calculated from formula \eqref{eq:formula-for-high-order-beta-derivatives} (Theorem \ref{thm:formula-for-high-order-expansion-of-F-in-main-result-sect} in Section \ref{part:how-and-what}.\ref{sub:high-order-beta-derivatives-at-an-operator-root}) at an approximation $\tilde{\inputmarginalVect}$ of the true solution $\inputmarginalVect_\beta$, such that $\tilde{\inputmarginalVect}$ is at most $\eta'$-far from $\inputmarginalVect_\beta$.
	We prove by induction on $l > 0$ that its norm $\left\|\dbetaK{\tilde{\inputmarginalVect}}{l}\right\|_{\infty}$ is bounded uniformly on $T_{\eta'}$.
	Since $T_{\eta'}$ contains the graph of $\inputmarginalVect_\beta$, this shall suffice to prove $(i)$.
	Assume that it holds for any $0 < k < l$, for the norms $\left\|\dbetaK{\tilde{\inputmarginalVect}}{k}\right\|_{\infty}$. We would like to prove that so is the $l$-th derivative $\left\|\dbetaK{\tilde{\inputmarginalVect}}{l}\right\|_{\infty}$.
	The assumption is of course vacuous when $l = 1$. Indeed, the first-order implicit derivative $\tfrac{d \tilde{\inputmarginalVect}}{d\beta}$ is the only one that does not involve implicit derivatives of lower order in its calculation.
	
	For the induction step, note that the $l$-th derivative $\dbetaK{\tilde{\inputmarginalVect}}{l}$ involves derivative tensors $D^m_{\beta^b, \inputmarginalVect^{m-b}} (Id - BA_\beta)\big\rvert_{(\tilde{\inputmarginalVect}, \beta)}$, with $0 \leq b \leq m \leq l$; see Equation \eqref{eq:formula-for-high-order-beta-derivatives} (Theorem \ref{thm:formula-for-high-order-expansion-of-F-in-main-result-sect}). 
	By Lemma \ref{lem:uniform-bound-on-BA-derivative-tensors} in Section \ref{part:details}.\ref{sub:computational-difficulty-of-RTRD}, the entries of these tensors are bounded uniformly on $\overline{\Delta[\hat{\mathcal{X}}] \setminus \partial_{\delta'} \Delta[\hat{\mathcal{X}}]}$ (regardless of $\beta$), for $\delta' = \delta - \eta' > 0$, which in turn contains $\tilde{\inputmarginalVect} \in T_{\eta'}$. 
	By the induction's hypothesis, the lower-order derivatives are also bounded uniformly on $T_{\eta'}$.
	Therefore, the right-hand side of \eqref{eq:formula-for-high-order-beta-derivatives} is bounded uniformly, as it involves only a (particular) finite sum of these tensors, which are evaluated at implicit derivatives of bounded coordinates.
	That is, it is a sum of products of quantities that are bounded uniformly on $T_{\eta'}$.
	To complete the induction step, both sides of \eqref{eq:formula-for-high-order-beta-derivatives} need to be multiplied by the inverse Jacobian. Yet, $\|(D_{\inputmarginalVect} (Id - BA_\beta)\rvert_{(\tilde{\inputmarginalVect}, \beta)})^{-1}\|_{\infty}$ is bounded uniformly on $T_{\eta'}$ (as shown before), and thus so is $\left\|\dbetaK{\tilde{\inputmarginalVect}}{l}\right\|_{\infty}$, as required.
	
	For $(ii)$, Lipschitz continuity can be established by means of a supremum over the derivative's matrix norm, e.g., Equation (3.9) at \cite{atkinson2011numerical},
	\begin{equation}		\label{eq:Lipschitz-constant-of-T_l-in-proof}
		L_l := \sup \| D_{\inputmarginalVect } T_l \|_\infty \;,
	\end{equation}
	where the supremum is over $T_{\eta'}$. A differentiable functions $T_l$ is Lipschitz continuous if the supremum \eqref{eq:Lipschitz-constant-of-T_l-in-proof} is finite. 
	
	From the definition \eqref{eq:Taylor-poly-def-for-Taylor-method} of Taylor method,
	\begin{equation}
		D_{\inputmarginalVect } T_l(\tilde{\inputmarginalVect}, \beta, \Delta \beta) =
		\frac{1}{1!} \cdot D_{\inputmarginalVect } \frac{d \tilde{\inputmarginalVect}}{d\beta} +
		\frac{\Delta \beta}{2!} \cdot D_{\inputmarginalVect } \frac{d^2 \tilde{\inputmarginalVect}}{d\beta^2} + \dots +
		\frac{\Delta \beta^{l-1}}{l!} \cdot D_{\inputmarginalVect } \frac{d^l \tilde{\inputmarginalVect}}{d\beta^l}
	\end{equation}
	As matrix norms are sub-additive,
	\begin{equation}			\label{eq:bounding-operator-norm-as-a-function-of-delta-beta}
		\left\| D_{\inputmarginalVect } T_l(\tilde{\inputmarginalVect}, \beta, \Delta\beta) \right\|_{\infty} \leq 
		\frac{1}{1!} \cdot \left\| D_{\inputmarginalVect } \frac{d \tilde{\inputmarginalVect}}{d\beta} \right\|_{\infty} +
		\frac{|\Delta \beta|}{2!} \cdot \left\| D_{\inputmarginalVect } \frac{d^2 \tilde{\inputmarginalVect}}{d\beta^2} \right\|_{\infty} + \dots +
		\frac{|\Delta \beta|^{l-1}}{l!} \cdot \left\| D_{\inputmarginalVect } \frac{d^l \tilde{\inputmarginalVect}}{d\beta^l} \right\|_{\infty}
	\end{equation}
	Thus, to prove that $T_l$ has a finite Lipschitz constant, it suffices to show that the matrix norms of the matrices $D_{\inputmarginalVect } \tfrac{d^k \tilde{\inputmarginalVect}}{d\beta^k}$ are bounded uniformly on $T_{\eta'}$. By Proposition \ref{prop:Jacobian-of-high-order-beta-derivative} in Section \ref{part:details}.\ref{sub:local-error-estimates-for-beta-derivs}, the Jacobian $D_{\inputmarginalVect } \tfrac{d^l \tilde{\inputmarginalVect}}{d\beta^l}$ for $l > 0$ can be expressed using derivative tensors $D^m_{\beta^b, \inputmarginalVect^{m-b}} (Id - BA_\beta)\big\rvert_{(\tilde{\inputmarginalVect}, \beta)}$ with $0 \leq b \leq m \leq l$, derivatives $\tfrac{d^k \tilde{\inputmarginalVect}}{d\beta^k}$ of lesser or equal degree $0 < k \leq l$, Jacobians $D_{\inputmarginalVect } \tfrac{d^k \tilde{\inputmarginalVect}}{d\beta^k}$ of strictly lower degree $0 < k < l$, and the inverse-Jacobian $\left( D_{\inputmarginalVect} (Id - BA_\beta)\rvert_{(\tilde{\inputmarginalVect}, \beta)} \right)^{-1}$. As in the proof of $(i)$, all these quantities are bounded uniformly on $T_{\eta'}$, showing that \eqref{eq:bounding-operator-norm-as-a-function-of-delta-beta} is indeed bounded uniformly, as required.
\end{proof}

\medskip
\section{Proof for the complexity of root-tracking and of RD root-tracking}
\label{sec:proofs-for-complexity-at-appendix}

We analyze the computational and memory complexities of root tracking in general and of root tracking for RD problems in particular. As the complexity of Taylor's method is determined by its complexity at a point, this boils down to analyzing the complexity of the Algorithm \ref{algo:high-order-derivs-of-operator-roots} --- both of its general form and of its specialization to RD. 

This section is structured as follows. In Section \ref{sub:complexity-of-high-order-derivs-for-root-tracking}, we analyze the cost of the recursive formula \eqref{eq:formula-for-high-order-beta-derivatives} for implicit derivatives (Algorithm \ref{algo:high-order-derivs-of-operator-roots}), assuming that the computational costs of the derivative tensors $D^{m}_{\beta^b, \bm{x}^{m-b}} F$ are given. 
When derivative tensors are computationally expensive (as in RD), it might be preferable to memorize them, so as to avoid computing a tensor more than once. The complexity is discussed both with and without tensor memorization, at Propositions \ref{prop:computational-cost-of-high-order-derivs-when-all-tensors-are-evaluated}, \ref{prop:computational-cost-of-high-order-derivs-when-tensors-are-memorized} and \ref{prop:total-computational-complexity-with-memorization}.
In Section \ref{sub:complexity-of-deriv-tensors-for-RD}, we analyze the complexity of the various quantities needed to compute RD derivative tensors, those of $Id - BA_\beta$ \eqref{eq:RD-operator-def}.
The results are combined in Section \ref{sub:complexity-of-RD-root-tracking}, proving the upper bounds for the complexity of root tracking for RD, Theorem \ref{thm:complexity-of-RD-root-tracking}.

The upper bounds we provide are often loose. In practice, some of the memory and computational costs grow at a much lower rate. cf., Figures \ref{fig:length-of-string-repr-of-P_k} and \ref{fig:number-of-monomials-in-P_k} for example.
For actual computational measures timed with our implementation see Figure \ref{fig:err-to-computational-cost-tradeoff}.

\medskip
\subsection{Complexity of root tracking: implicit derivatives of operator roots}
\label{sub:complexity-of-high-order-derivs-for-root-tracking}

We consider the complexity of the recursive formula \eqref{eq:formula-for-high-order-beta-derivatives} for $\tfrac{d^l \bm{x}}{d\beta^l}$ (Theorem \ref{thm:formula-for-high-order-expansion-of-F-in-main-result-sect}). Note that its right-hand side is comprised of an outer and of an inner summation. The outer summation is over the $p(l) - 1$ non-trivial partitions, while the inner summation is over the number of parts of size 1 in a given partition, $M_1 \cdot \delta(p_1 = 1)$. 
That is, the inner one is over the multiplicity of 1 in a partition. 
For example, 1 is of multiplicity 3 in the partition $1 + 1 + 1$ of 3, of multiplicity 1 in $2 + 1$, and of multiplicity 0 in the trivial partition 3.

Integer partitions can be grouped by their multiplicity of 1. For, ``set aside'' a single part of size 1, and consider partitions of $l-1$:
\begin{equation}		\label{eq:setting-aside-a-part-of-size-1-in-a-partition}
	l = 1 + \underset{\text{a partition of } l-1}{\underbrace{\left(\; \dots \quad \dots \; \right)}}
\end{equation}
Any partition of $l$ with at least one part of size 1 can be written as ``$1 + $  a partition of $(l-1)$'', as in \eqref{eq:setting-aside-a-part-of-size-1-in-a-partition}.
Thus, there are exactly $p(l-1)$ partitions of $l$ in which 1 is of multiplicity $\geq 1$. Hence, there are $p(l) - p(l-1)$ partitions with no part of size 1.
Setting aside two parts of size 1 shows that there are exactly $p(l-2)$ partitions of $l$ in which 1 is of multiplicity $\geq 2$. Hence, there are $p(l-1) - p(l-2)$ partitions which have exactly one part of size 1.
Proceeding in this manner, we have
\begin{lem}
	For $0 \leq j \leq l$, the number of partitions of $l$ with exactly $j$ parts of size 1 is
	\begin{equation}
		p(l - j) - p(l - j - 1) \;,
	\end{equation}
	where $p(-1)$ is defined to be 0, and $p(0) := 1$.
\end{lem}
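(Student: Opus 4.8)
The plan is to prove the claim by setting up a bijection between partitions of $l$ with exactly $j$ parts of size $1$ and a certain difference of sets of partitions, then applying the preceding bijective identity repeatedly. The key observation, already hinted at in the text around \eqref{eq:setting-aside-a-part-of-size-1-in-a-partition}, is that ``setting aside'' a fixed number of parts of size $1$ gives a bijection: for any $k$ with $0 \leq k \leq l$, the map that prepends $k$ copies of the part $1$ to a partition of $l-k$ is a bijection from the set of all partitions of $l-k$ onto the set of partitions of $l$ having at least $k$ parts of size $1$. Hence the number of partitions of $l$ with multiplicity of $1$ at least $k$ equals $p(l-k)$, with the conventions $p(0) = 1$ and $p(m) = 0$ for $m < 0$ handling the boundary cases (when $k = l$ there is exactly one such partition, namely $(l)\cdot 1$, consistent with $p(0)=1$; when $k > l$ there are none).

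First I would state and justify this bijection carefully: given a partition $\mu$ of $l-k$, the multiset $\mu \cup \{1^k\}$ is a partition of $l$ with at least $k$ ones; conversely, given a partition $\lambda$ of $l$ with at least $k$ ones, removing $k$ of its ones yields a partition of $l-k$. These two operations are mutually inverse, so both sets have the same cardinality, which is $p(l-k)$ by definition of the partition function. Denote by $A_{\geq k}$ the number of partitions of $l$ with at least $k$ parts of size $1$; we have just shown $A_{\geq k} = p(l-k)$.

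Next I would finish by inclusion–exclusion on a single integer: the number of partitions of $l$ with \emph{exactly} $j$ parts of size $1$ is $A_{\geq j} - A_{\geq j+1}$, since every partition with at least $j$ ones either has at least $j+1$ ones or has exactly $j$. Substituting the formula $A_{\geq k} = p(l-k)$ gives
\[
A_{\geq j} - A_{\geq j+1} = p(l-j) - p(l-j-1),
\]
which is the desired expression. I would also remark that the stated conventions $p(-1) := 0$ and $p(0) := 1$ are exactly what make the endpoint cases $j = l-1$ (one part of size $1$: count $p(1) - p(0) = 1 - 1 = 0$, correct since $1 + (l-1)$ for $l \geq 2$ actually\ldots wait, one must double-check small cases here) and $j = l$ (all parts of size $1$: count $p(0) - p(-1) = 1$) come out right.

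There is no serious obstacle; the only point requiring a little care is making the bijection precise at the extreme values of $k$ and verifying the edge conventions against a couple of small explicit cases (e.g. $l = 3$: $j=0$ gives $p(3)-p(2)=3-2=1$, namely the partition $3$; $j=1$ gives $p(2)-p(1)=2-1=1$, namely $1+2$; $j=3$ gives $p(0)-p(-1)=1$, namely $1+1+1$), which I expect to be the main sanity check rather than a genuine difficulty. The argument is purely combinatorial and self-contained, relying only on the definition of the partition function recalled in Section \ref{part:how-and-what}.\ref{sub:high-order-beta-derivatives-at-an-operator-root}.
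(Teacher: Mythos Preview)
Your proposal is correct and takes essentially the same approach as the paper: both arguments use the bijection ``set aside $k$ parts of size $1$'' to show that the number of partitions of $l$ with at least $k$ ones is $p(l-k)$, and then obtain the count for exactly $j$ ones as the difference $p(l-j)-p(l-j-1)$. Your parenthetical slip at $j=l-1$ (writing ``one part of size $1$'' when you mean $l-1$ parts) is a momentary confusion you already flag, and your subsequent sanity checks for $l=3$ are all correct; the formula indeed gives $p(1)-p(0)=0$ at $j=l-1$, matching the fact that no partition of $l\geq 2$ has exactly $l-1$ ones.
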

A partition with $j$ parts of size 1 contributes $j+1$ summands to the inner summation at \eqref{eq:formula-for-high-order-beta-derivatives}. Grouping partitions by the multiplicity $j$ of 1 in them, the total number of summands at the right-hand side of \eqref{eq:formula-for-high-order-beta-derivatives} is
\begin{multline}
	1\cdot \Big(p(l) - p(l-1)\Big) + 2 \cdot \Big( p(l-1) - p(l-2) \Big) + \dots + l \cdot \Big(p(1) - p(0)\Big) + (l+1)\cdot p(0)- 1 \\
	= p(l) + p(l-1) + \dots + p(1) + p(0) - 1 \;.
\end{multline}
\begin{lem}			\label{lem:number-of-summands-at-formula-for-high-order-beta-derivs}
	The number of summands at the right-hand side of \eqref{eq:formula-for-high-order-beta-derivatives} is $\sum_{j=0}^l p(j) - 1$.
\end{lem}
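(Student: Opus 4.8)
The plan is to carry out the bookkeeping sketched in the paragraph preceding the statement, making the telescoping explicit. First I would recall the shape of the right-hand side of \eqref{eq:formula-for-high-order-beta-derivatives}: it is an outer sum over the non-trivial integer partitions of $l$, and for each such partition an inner sum over $b = 0, 1, \dots, m_1\cdot\delta(p_1=1)$, i.e. over the multiplicity of the part $1$. Hence a partition in which $1$ occurs with multiplicity $j$ contributes exactly $j+1$ summands (one summand, for $b=0$, when $j=0$). So the total number of summands equals $\sum_{j\ge 0}(j+1)\,N_j$, where $N_j$ is the number of \emph{non-trivial} partitions of $l$ having exactly $j$ parts equal to $1$.

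Next I would invoke the immediately preceding (unnumbered) lemma, which states that the number of \emph{all} partitions of $l$ with exactly $j$ parts of size $1$ is $p(l-j)-p(l-j-1)$, with the conventions $p(0):=1$ and $p(-1):=0$. For $l\ge 2$ the trivial partition $(1)\cdot l$ has no part of size $1$, so $N_0=\big(p(l)-p(l-1)\big)-1$, while every partition with at least one part of size $1$ is non-trivial, giving $N_j=p(l-j)-p(l-j-1)$ for $1\le j\le l$. Therefore $\sum_{j\ge 0}(j+1)\,N_j = \sum_{j=0}^{l}(j+1)\big(p(l-j)-p(l-j-1)\big)-1$. (For $l=1$ the only partition is the trivial one and the theorem reduces to $D_{\bm{x}}F[\tfrac{d\bm{x}}{d\beta}]=-D_\beta F$, which still has the claimed count $\sum_{j=0}^1 p(j)-1=1$.)

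It then remains to evaluate the telescoping sum $S:=\sum_{j=0}^{l}(j+1)\big(p(l-j)-p(l-j-1)\big)$. Writing $a_j:=p(l-j)$ for $0\le j\le l$ and $a_{l+1}:=p(-1)=0$, this is $\sum_{j=0}^{l}(j+1)(a_j-a_{j+1})$; summing by parts — reindexing the second sum and using that consecutive coefficients $j+1$ differ by $1$ — gives $S=a_0+\sum_{j=1}^{l}a_j=\sum_{j=0}^{l}p(l-j)=\sum_{k=0}^{l}p(k)$. Subtracting the $1$ coming from the trivial partition absent from the outer sum of \eqref{eq:formula-for-high-order-beta-derivatives} yields the asserted count $\sum_{k=0}^{l}p(k)-1$.

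The argument is entirely elementary; the only points requiring care are (i) keeping straight the ``a partition with $j$ ones contributes $j+1$ summands'' accounting, and (ii) correctly removing the contribution of the trivial partition, which appears in the preceding lemma's count but not in \eqref{eq:formula-for-high-order-beta-derivatives}. A sanity check against $l=3$ (Table \ref{tab:details-of-third-order-expansion-of-F}: six summands besides $D_{\bm{x}}F[\tfrac{d^3\bm{x}}{d\beta^3}]$, and $p(0)+p(1)+p(2)+p(3)-1 = 1+1+2+3-1 = 6$) confirms the formula, so I expect no substantive obstacle.
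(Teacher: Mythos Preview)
Your proposal is correct and follows essentially the same approach as the paper: group partitions by the multiplicity $j$ of the part $1$, use the preceding unnumbered lemma to count them as $p(l-j)-p(l-j-1)$, weight by $j+1$, subtract one for the excluded trivial partition, and telescope. The paper simply writes the telescoped identity in one line without spelling out the summation-by-parts step, whereas you make the telescoping and the $l=1$ edge case explicit; but the underlying argument is identical.
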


At least for our case of interest $F := Id - BA_\beta$ \eqref{eq:RD-operator-def}, the complexity of a derivative tensor $D^m_{\beta^b, \bm{x}^{m-b}} F$ is determined by $m$ and $b$; see Section \ref{sub:complexity-of-deriv-tensors-for-RD}. Thus, we would like to group partitions not only by the multiplicity of 1 in them, but also by their total multiplicity $m$.
Denote $p_k(n)$ for the number of partitions of $n$ to exactly $k$ parts. From its definition, $p(n) = \sum_{k=0}^n p_k(n)$. 
For small $k, n$ values, it can be calculated using the recurrence relation \cite[Section 1.7]{stanley2011enumerative} 
\begin{equation}		\label{eq:recurrence-formula-for-partition-w-exact-part-num}
	p_k(n) = p_{k-1}(n-1) + p_k(n-k)
\end{equation}
and $p_0(0) = 1$; see Table \ref{tab:p_k-n-for-several-values} for example.
\begin{table}[h!]
	\begin{center}
		\setlength{\tabcolsep}{12pt}
		\begin{tabular}{c|ccccccc}
			$k, n$	&	0	&	1	&	2	&	3	&	4	&	5	&	6 \\
			\hline
			0	&	 1	&	0	&	0	&	0	&	0	&	0	&	0	\\
			1	&	  	&	1	&	1	&	1	&	1	&	1	&	1	\\
			2	&	  	&	 	&	1	&	1	&	2	&	2	&	3	\\
			3	&	  	&	 	&	 	&	1	&	1	&	2	&	3	\\
			4	&	  	&	 	&	 	&	 	&	1	&	1	&	2	\\
			5	&	  	&	 	&	 	&	 	&	 	&	1	&	1	\\
			6	&	  	&	 	&	 	&	 	&	 	&	 	&	1
		\end{tabular}
	\end{center}
	\caption{$p_k(n)$ for several small $n, k$ values.}
	\label{tab:p_k-n-for-several-values}
\end{table}

By the same reasoning as at \eqref{eq:setting-aside-a-part-of-size-1-in-a-partition}, there are $p_{k-1}(l-1)$ partitions of $l$ to $k$ parts in which 1 is of multiplicity $\geq 1$, $p_{k-2}(l-2)$ partitions of $l$ to $k$ parts in which 1 is of multiplicity $\geq 2$, and so forth. Therefore, we have as before:
\begin{lem}		\label{lem:num-of-partitions-with-restricted-parts-and-multiplicity-of-1}
	For $0 \leq j \leq k \leq l$, the number of partitions of $l$ to $k$ parts of which exactly $j$ are of size 1 is
	\begin{equation}
		p_{k-j}(l - j) - p_{k-j-1}(l - j - 1) \overset{\eqref{eq:recurrence-formula-for-partition-w-exact-part-num}}{=}
		p_{k-j}(l-k)\;,
	\end{equation}
	where $p_k(n) := 0$ if either $k < 0$ or $n < 0$.
\end{lem}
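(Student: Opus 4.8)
The plan is to reuse the ``set aside the parts of size $1$'' bijection behind \eqref{eq:setting-aside-a-part-of-size-1-in-a-partition}, now with the total number of parts held fixed at $k$, and then to reduce the resulting count of partitions with all parts $\geq 2$ to an \emph{unrestricted} partition count by a uniform shift of each part. Throughout we adopt the conventions stated in the lemma, $p_k(n):=0$ whenever $k<0$ or $n<0$, and $p_0(0):=1$.

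First I would establish a bijection between (i) partitions of $l$ into exactly $k$ parts of which exactly $j$ equal $1$, and (ii) partitions of $l-j$ into exactly $k-j$ parts, every one of which is $\geq 2$. The forward map deletes the $j$ parts equal to $1$; its inverse appends $j$ parts equal to $1$. Both are well defined: after deletion one is left with $k-j$ parts summing to $l-j$, and since the original partition had \emph{exactly} $j$ ones, none of the surviving parts equals $1$; conversely, appending $j$ ones to a partition of $l-j$ into $k-j$ parts each $\geq 2$ produces a partition of $l$ into $k$ parts with exactly $j$ ones. Hence the quantity we want to count equals the number of partitions of $n:=l-j$ into exactly $m:=k-j$ parts with every part $\geq 2$.

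Next I would count this latter family in two ways. (a) Subtract $1$ from every part; since each part was $\geq 2$, this is a bijection onto partitions of $n-m$ into exactly $m$ positive parts, so the count is $p_m(n-m)=p_{k-j}\bigl((l-j)-(k-j)\bigr)=p_{k-j}(l-k)$. (b) Alternatively, among the $p_m(n)$ partitions of $n$ into $m$ parts, those having at least one part equal to $1$ are in bijection (delete one such part) with partitions of $n-1$ into $m-1$ parts, of which there are $p_{m-1}(n-1)$; subtracting gives $p_m(n)-p_{m-1}(n-1)=p_{k-j}(l-j)-p_{k-j-1}(l-j-1)$. Equating (a) and (b) yields exactly the claimed identity, and the equality of the two expressions is precisely the recurrence \eqref{eq:recurrence-formula-for-partition-w-exact-part-num} with $(k,n)$ replaced by $(k-j,\,l-j)$, as annotated in the lemma. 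Finally I would check the degenerate cases: if $k<0$ or $n<0$ every set above is empty and $p_{\cdot}(\cdot)=0$ is consistent; and $j=k$ (all parts equal $1$) forces $l=k$, in which case the right-hand side is $p_0(l-k)$, equal to $1$ if $l=k$ and $0$ otherwise, matching the obvious count.

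I do not expect a genuine obstacle here: the argument is a direct iteration of the bijection already used before Lemma \ref{lem:number-of-summands-at-formula-for-high-order-beta-derivs}. The only point demanding care is verifying that the index shifts in steps (a) and (b) stay within the range where the bijections and the recurrence \eqref{eq:recurrence-formula-for-partition-w-exact-part-num} are valid, and that the boundary situations ($j=k$, $l=k$, negative indices) are correctly absorbed by the stated conventions $p_k(n):=0$ for $k<0$ or $n<0$ and $p_0(0):=1$.
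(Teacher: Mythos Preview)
Your proposal is correct and follows essentially the same approach as the paper. The paper argues, as you do, by setting aside the parts of size $1$ (so that partitions of $l$ into $k$ parts with at least $j$ ones correspond to $p_{k-j}(l-j)$, and with at least $j+1$ ones to $p_{k-j-1}(l-j-1)$), obtains the difference expression by subtraction, and then invokes the recurrence \eqref{eq:recurrence-formula-for-partition-w-exact-part-num} to get $p_{k-j}(l-k)$; your step (a) is simply the standard bijective proof of that recurrence, and your step (b) reproduces the paper's subtraction.
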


Write $C(b, m)$ for the complexity of calculating a derivative tensor $D^{m}_{\beta^b, \bm{x}^{m-b}} F$, and $p_{\leq k}(n)$ for the number of partitions of $n$ to at most $k$ parts. 
By definition, $p_{\leq k}(n) = \sum_{i=0}^k p_i(n)$. Using Lemma \ref{lem:num-of-partitions-with-restricted-parts-and-multiplicity-of-1}, we rearrange the outer summation over partitions at \eqref{eq:formula-for-high-order-beta-derivatives}. First, by the partitions' total multiplicity $m$ (number of parts), and then by the number $m_1$ of parts of size 1. With this, the complexity of calculating the derivative tensors is
\begin{multline}		\label{eq:complexity-of-recursive-formula-for-high-order-beta-derivs}
	\sum_{m=0}^l \sum_{m_1=0}^m p_{m-m_1}(l-m) \sum_{b=0}^{m_1} C(b, m) =
	\sum_{m=0}^l \sum_{m_1=0}^m \sum_{b=0}^{m_1} p_{m-m_1}(l-m) C(b, m) \\ =
	\sum_{m=0}^l \sum_{b=0}^{m} \sum_{m_1=b}^m p_{m-m_1}(l-m) C(b, m) =
	\sum_{m=0}^l \sum_{b=0}^m C(b, m) \sum_{m_1=b}^m p_{m-m_1}(l-m) \\ =
	\sum_{m=0}^l \sum_{b=0}^m C(b, m) \sum_{j=0}^{m-b} p_j(l-m) =
	\sum_{m=0}^l \sum_{b=0}^m p_{\leq m-b}(l-m) C(b, m) 
\end{multline}
The second equality above follows since, for a given $m$, we are summing over all the integers $b$ and $m_1$ with $0 \leq b \leq m_1 \leq m$. At the fourth equality we exchange $m_1$ with $j := m - m_1$. The last equality follows from the definition of $p_{\leq k}(n)$. 

Since \eqref{eq:complexity-of-recursive-formula-for-high-order-beta-derivs} adds up the complexity $C(b, m)$ once for each summand at the right-hand side of \eqref{eq:formula-for-high-order-beta-derivatives}, then setting $C(b, m) := 1$ gives an alternative formula for the number of summands there. Combined with Lemma \ref{lem:number-of-summands-at-formula-for-high-order-beta-derivs}, we have
\begin{cor}			\label{cor:two-forms-for-the-total-number-of-summands-at-recursive-formula-for-beta-derivs}
	\begin{equation}
		\sum_{m=0}^l \sum_{b=0}^m p_{\leq m-b}(l-m) = \sum_{j=1}^l p(j) \;.
	\end{equation}
\end{cor}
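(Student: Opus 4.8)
The plan is to extract the identity for free from the two independent bookkeeping counts of the summands of \eqref{eq:formula-for-high-order-beta-derivatives} that the section has already carried out, rather than to mount a fresh combinatorial argument. The key observation is that \eqref{eq:complexity-of-recursive-formula-for-high-order-beta-derivs} was produced precisely by regrouping the double sum on the right-hand side of \eqref{eq:formula-for-high-order-beta-derivatives}: the outer partition sum was sorted first by the total multiplicity $m$ of a partition and then by its number $m_1$ of parts equal to $1$, with Lemma \ref{lem:num-of-partitions-with-restricted-parts-and-multiplicity-of-1} supplying the count of partitions in each class, and the inner $b$-sum was carried along unchanged. Consequently every individual term of \eqref{eq:formula-for-high-order-beta-derivatives} contributes exactly one occurrence of the cost $C(b,m)$ of the derivative tensor $D^m_{\beta^b,\bm x^{m-b}}F$ appearing in it.

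Concretely, the first step is to specialize \eqref{eq:complexity-of-recursive-formula-for-high-order-beta-derivs} to the constant weight $C(b,m)\equiv 1$. Under this substitution its left-hand side collapses to $\sum_{m=0}^l\sum_{b=0}^m p_{\le m-b}(l-m)$, which is exactly the left-hand side of the claimed identity, while by the previous observation the same expression now simply counts the summands of \eqref{eq:formula-for-high-order-beta-derivatives}, one per term. The second step is to invoke Lemma \ref{lem:number-of-summands-at-formula-for-high-order-beta-derivs}, which already gives that count in closed form as $\sum_{j=0}^l p(j)-1$. Equating the two and using $p(0)=1$, so that $\sum_{j=0}^l p(j)-1=\sum_{j=1}^l p(j)$, yields the stated formula. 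This reduces the corollary to a couple of lines once \eqref{eq:complexity-of-recursive-formula-for-high-order-beta-derivs} and Lemma \ref{lem:number-of-summands-at-formula-for-high-order-beta-derivs} are in hand.

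The main point to nail down — and the only place an error could hide — is that the two counts are taken over exactly the same collection of terms. This is a matter of fixing consistent conventions at the boundary of the summation, namely how the trivial partition $(1)\cdot l$ and the degenerate values $p_{\le k}(0)$ are treated: the trivial partition produces the left-hand term $D_{\bm x}F[\tfrac{d^l\bm x}{d\beta^l}]$ of \eqref{eq:formula-for-high-order-beta-derivatives} rather than a right-hand summand, and this is precisely what the ``$-1$'' in Lemma \ref{lem:number-of-summands-at-formula-for-high-order-beta-derivs} (equivalently, the $p(0)$ term) absorbs. I would make this matching explicit and check the identity directly for $l=1,2,3$ against $p(1)$, $p(1)+p(2)$ and $p(1)+p(2)+p(3)$, both to confirm that no off-by-one slips in and to pin down the intended reading of $p_{\le k}(0)$. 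If a self-contained derivation were preferred instead, the same identity can be reached by substituting $r:=m-b$ in the inner sum, expanding $p_{\le r}=\sum_{k\le r}p_k$, interchanging the order of summation, and telescoping with the recurrence \eqref{eq:recurrence-formula-for-partition-w-exact-part-num}; but routing the proof through the summand count is shorter and fits the narrative of the section.
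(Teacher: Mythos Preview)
Your approach is essentially identical to the paper's: the paper's entire argument is the single sentence that since \eqref{eq:complexity-of-recursive-formula-for-high-order-beta-derivs} adds up $C(b,m)$ once per summand on the right-hand side of \eqref{eq:formula-for-high-order-beta-derivatives}, setting $C(b,m):=1$ yields an alternative count which, combined with Lemma~\ref{lem:number-of-summands-at-formula-for-high-order-beta-derivs}, gives the corollary. Your proposal spells out exactly this reasoning, and your caution about matching the two counts at the boundary (the trivial partition and the values $p_{\le k}(0)$) is well placed, since this is indeed where any discrepancy would arise.
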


Write $F = (F_1, F_2, \dots, F_T)$ for the operator's coordinates, as at \eqref{eq:mixed-deriv-def-evaluated-applied-to-vectors}.
On top of the complexity \eqref{eq:complexity-of-recursive-formula-for-high-order-beta-derivs} for calculating the derivative tensors, we need to account for the complexity of evaluating the multilinear forms \eqref{eq:mixed-deriv-def-evaluated-applied-to-vectors} they define, of summing the evaluated forms, and of finding a linear pre-image under $D_{\bm{x}} F$. 
Indeed, a tensor $D^{m}_{\beta^b, \bm{x}^{m-b}} F$ has $T^{m-b+1}$ entries, and so its evaluation is $O((m-b+1) T^{m-b+1})$ operations; this is multiplied by the number of tensors of each rank, as in \eqref{eq:complexity-of-recursive-formula-for-high-order-beta-derivs}.
An evaluated multilinear form has $T$ entries. Thus, by Lemma \ref{lem:number-of-summands-at-formula-for-high-order-beta-derivs}, the complexity of their summation at the right of \eqref{eq:formula-for-high-order-beta-derivatives} is $O(T \cdot \sum_{j=1}^l p(j))$. Finding a linear pre-image is no more than $O(T^3)$ operations, e.g., using Gaussian elimination.
This is summarized by Proposition \ref{prop:computational-cost-of-high-order-derivs-when-all-tensors-are-evaluated} in Section \ref{part:details}.\ref{sec:computational-complexities}, on the complexity of $l$-th order derivative.

The computational cost is reduced drastically (at the expense of memory complexity) if every derivative tensor is computed only once and then memorized. 
By taking the all-ones partition $1 + 1 + \dots + 1$ of $l$, the $l$-th order formula \eqref{eq:formula-for-high-order-beta-derivatives} can be seen to contain \textit{all} the $l$-th order derivative tensors: $D^{l}_{\beta^0, \bm{x}^{l}} F, D^{l}_{\beta^1, \bm{x}^{l-1}} F, \dots, D^{l}_{\beta^l, \bm{x}^{0}} F$. These appear in the $l$-th order formula for the first time. For, the total multiplicity $m$ cannot exceed the partitioned integer $l$, and so neither of the $l$-th order tensors can appear when formula \eqref{eq:formula-for-high-order-beta-derivatives} is used with orders $< l$.
As $D^{m}_{\beta^b, \bm{x}^{m-b}} F$ has $T^{m-b+1}$ coordinates, memorizing all the tensors used up to the $l$-th order of the recursive formula \eqref{eq:formula-for-high-order-beta-derivatives} boils down to memorizing
\begin{equation}
	\sum_{m=0}^{l-1} \sum_{t=0}^m T^{t+1} 
	\leq l \cdot T^l
\end{equation}
coordinates. Where, we've written $t$ for $m-b$.
We thus obtain Proposition \ref{prop:computational-cost-of-high-order-derivs-when-tensors-are-memorized} in Section \ref{part:details}.\ref{sec:computational-complexities}, on the complexity of the $l$-th order derivative with tensor memorization.

The cumulative computational and memory costs with memorization are summarized by Proposition \ref{prop:total-computational-complexity-with-memorization} in Section \ref{part:details}.\ref{sec:computational-complexities}. Its proof is provided below.

\begin{proof}[Proof of Proposition \ref{prop:total-computational-complexity-with-memorization}]
	We sum the computational complexity \eqref{eq:computational-costs-in-prop-for-recursive-formula} per $l$ value (Proposition \ref{prop:computational-cost-of-high-order-derivs-when-tensors-are-memorized}) over $l = 1, \dots, L$.
	
	For the first summand at \eqref{eq:computational-costs-in-prop-for-recursive-formula}, $\sum_{l=1}^L O(T^3) = O(L\cdot T^3)$. For the second,
	\begin{equation}
		\sum_{l=1}^L T\cdot \sum_{j=1}^l p(j) = 
		T\cdot \sum_{l=1}^L \sum_{j=1}^l p(j) \leq
		L^2T \cdot p(L) \leq
		L^{\nicefrac{5}{4}} T e^{\pi\sqrt{\tfrac{2}{3}L}}
	\end{equation}
	Where, the first inequality follows by replacing $p(j)$ with the maximal summand $p(L)$, and
	the last follows from the upper bound $p(n) < \frac{e^{\pi\sqrt{\nicefrac{2n}{3}}}}{n^{\nicefrac{3}{4}}}$ at \cite{pribitkin2009paritit_bound}.
	
	Next, note that $p_{\leq k}(n)$ is non-decreasing in $n$. For, to a partition $\lambda_1 \leq \lambda_2 \leq \dots \leq \lambda_j$ of $n$ to with $j \leq k$ parts, one can injectively match the partition of $n+1$ with the same number of parts, defined by replacing $\lambda_j$ with $\lambda_j + 1$. Thus,
	\begin{equation}		\label{eq:bounding-number-of-int-partits-up-to-in-proof}
		p_{\leq m-b}(l-m) \leq p_{\leq m-b}(L) \leq p(L)
	\end{equation}
	where the last inequality follow from the definition of $p_{\leq k}(n)$. So, for the third summand at \eqref{eq:computational-costs-in-prop-for-recursive-formula},
	\begin{multline}
		\sum_{l=1}^L \sum_{m=0}^l \sum_{b=0}^m p_{\leq m-b}(l-m) O((m-b+1) T^{m-b+1}) =
		L \; p(L) \cdot \sum_{m=0}^L \sum_{b=0}^m O((m-b+1) T^{m-b+1}) \\ =
		L \; p(L) \cdot \sum_{0\leq t \leq L} (L - t + 1) O((t + 1) T^{t + 1}) =
		O(L^3 T^{L+1} p(L) ) =
		O(L^{\nicefrac{9}{4}} T^{L+1} e^{\pi\sqrt{\tfrac{2}{3}L}} )
	\end{multline}
	Where \eqref{eq:bounding-number-of-int-partits-up-to-in-proof} was used at the first equality, carrying out the summation over $l$. Next, $m-b$ was replaced by $t$. The last equality again uses the bound at \cite{pribitkin2009paritit_bound}.
	
	For $L = 1$, the cost of the terms above adds up to
	\begin{equation}
		O(L\cdot T^3) +
		O(L^{\nicefrac{5}{4}} T e^{\pi\sqrt{\tfrac{2}{3}L}}) +
		O(L^{\nicefrac{9}{4}} T^{L+1} e^{\pi\sqrt{\tfrac{2}{3}L}} ) =
		O(T^3 ) \;,
	\end{equation}
	proving \eqref{eq:total-computational-complexity-in-prop-for-first-order}.
	While for $L \geq 2$,
	\begin{multline}
		O(L\cdot T^3) +
		O(L^{\nicefrac{5}{4}} T e^{\pi\sqrt{\tfrac{2}{3}L}}) +
		O(L^{\nicefrac{9}{4}} T^{L+1} e^{\pi\sqrt{\tfrac{2}{3}L}} ) \\ =
		O(L^{\nicefrac{9}{4}} T^{L+1} e^{\pi\sqrt{\tfrac{2}{3}L}} ) =
		O\left( e^{\nicefrac{9}{4} \cdot \ln L + (L+1) \ln T + \pi\sqrt{\nicefrac{2L}{3}}} \right)
	\end{multline}
	which proves \eqref{eq:total-computational-complexity-in-prop}.
\end{proof}

\medskip 
\subsection{Complexity of high-order derivative tensors of $Id - BA_\beta$}
\label{sub:complexity-of-deriv-tensors-for-RD}

We analyze the complexity of the formulae for the derivative tensors $D^{b+m}_{\beta^b, \bm{x}^m} \left(Id - BA_\beta \right)\left[\inputmarginalVect\right](\hat{x})$ (Theorem \ref{thm:high-order-derivs-of-BA-in-main-text} in Section \ref{sub:high-order-deriv-tensors-of-BA}.). To that end, we distinguish between three kinds of computations: initial computations \ref{subsub:problem-ind-initial-computations} which depend neither on the RD problem at hand nor on the point of evaluation; initial computations \ref{subsub:initial computations-at-a-point} at a particular point; and the cost of a derivative tensor \ref{subsub:complexity-of-RD-deriv-tensor} given the above initial computations.
The analysis roughly follows the order of the results' presentation in Section \ref{part:how-and-what}.\ref{sub:high-order-deriv-tensors-of-BA}, and the comments in \ref{part:details}.\ref{sub:computing-high-order-derivatives-efficiently} on efficient computation of RD derivative tensors.
The results below are summarized in Table \ref{tab:complexities-of-RD-deriv-tensors}, in Section \ref{part:details}.\ref{sec:computational-complexities}.

\medskip
\subsubsection{Problem-independent initial-computations}
\label{subsub:problem-ind-initial-computations}

\begin{figure}[h!]
	\vspace{-15pt}
	\centering
	\includegraphics[trim={0 0 0 1cm}, clip, width=0.5\textwidth]{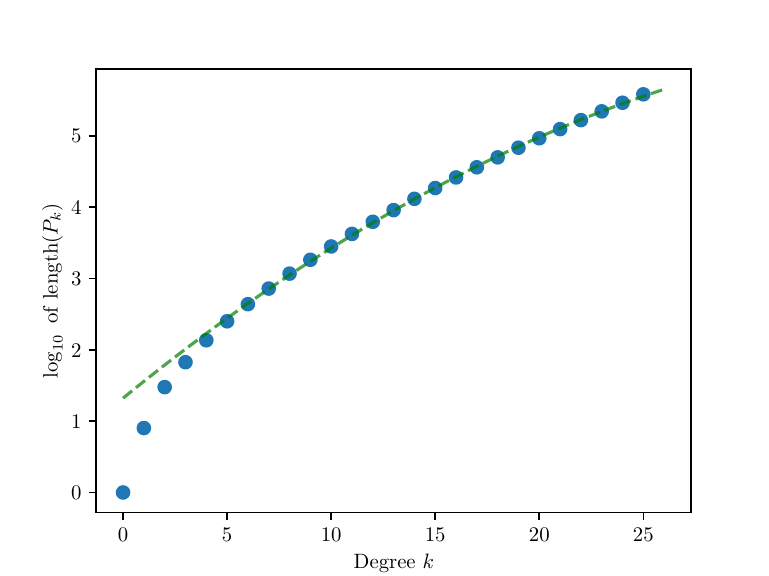}
	\caption{Length of the default string representation of $P_k$ used by our implementation.
		An approximation $-0.002739 k^2 + 0.2377 k + 1.323$ of the log-length is in dashed green.
	}
	\label{fig:length-of-string-repr-of-P_k}
	\vspace{-5pt}
\end{figure}

The algebraic form of the polynomials $P_k$ \eqref{eq:P_0-def}-\eqref{eq:P_k-inductive-def} needs to be computed only once and for all, as it depends neither on the problem details nor on the point of evaluation. We therefore ignore its computational complexity, and consider only its memory complexity.

By Lemma \ref{lem:bound-on-the-complexity-of-P_K} (in Section \ref{sec:proof-for-error-analysis}), each $P_k$ is of degree $k$ at most, and can be written as a sum of at most $2^k k!$ monomials, even when only $1$ and $-1$ are allowed as coefficients. 
Each of these monomials can be encoded using $1 + (k+1) \log_2 (k+1)$ bits: each of the $k+1$ variables is of degree $0, \dots, k$, and one bit for the coefficient. Hence, the memory needed to store $P_k$ is $O\left((k+1) \log_2 (k+1) 2^k k!\right)$.

In practice, the length of $P_k$ is roughly $1.73^k$, as demonstrated by Figure \ref{fig:length-of-string-repr-of-P_k}. Our implementation stores the first 20 polynomials in about 110 kilobytes in compressed form (LZMA) and the first 25 polynomials in 544 KBs.

\medskip
\subsubsection{Initial computations at a point}
\label{subsub:initial computations-at-a-point}

Recall, we write $N := |\mathcal{X}|$ and $M := |\hat{\mathcal{X}}|$ for the source and reproduction alphabet sizes, respectively.
Given an encoder $\intermediateencoder{}{}$ of an RD problem defined by $\big(d(x, \hat{x}), p(x)\big)$, we evaluate the cost of the components needed to compute the $M$-by-$N$ matrices $G\big(k, a; \intermediateencoderVect \big)$ \eqref{eq:combinatorial-G-in-terms-of-polynomials}. These are indexed by $0 \leq k \leq L$ and $0 \leq a \leq 1 + L$, where $L$ is the maximal order of derivation. Clearly, its memory complexity is $O(MN (1 + L) (2 + L) )$.

\begin{figure}[h!]
	\vspace{-15pt}
	\centering
	\includegraphics[trim={0 0 0 1cm}, clip, width=0.5\textwidth]{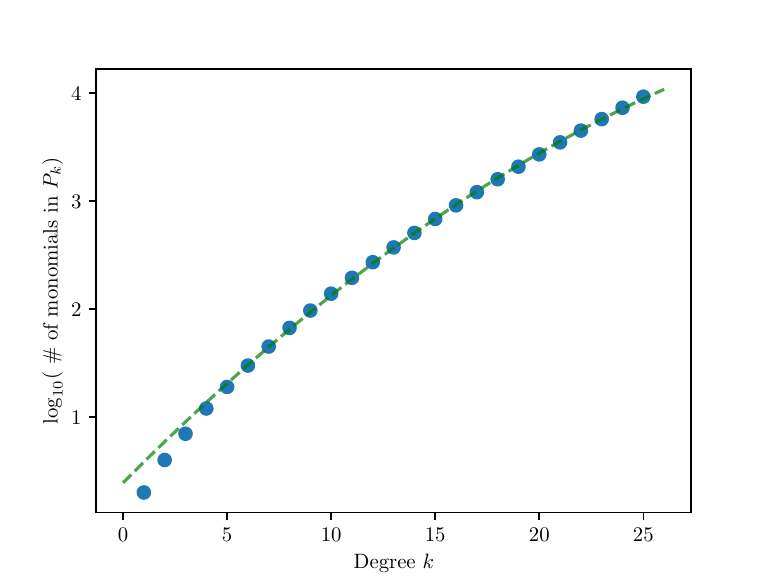}
	\caption{The number of monomials in $P_k$.
		An approximation $-0.002084 k^2 + 0.1944 k + 0.3911$ of $\log_{10} (\# \text{ monomials})$  is in dashed green.
	}
	\label{fig:number-of-monomials-in-P_k}
	\vspace{-10pt}
\end{figure}

The $P_k[\intermediateencoderVect; d]$ matrices \eqref{eq:P_k-by-abuse-of-notation} are needed to compute $G$; these require the expectations $\expectedDxWRTencoderK{k}$ \eqref{eq:expected-k-th-power-distortion-def}. The powers $d(x, \hat{x})^k$ may be computed only once per RD problem, and so we neglect their computational cost. Thus, the computation of $\expectedDxWRTencoderK{k}$ costs $O(MN)$ operations, as it involves only multiplication by $\intermediateencoder{}{}$ and a summation over $\hat{x}$. 
The zeroth variable $d(x, \hat{x})$ of $P_k$ has $MN$ coordinates, while the others $\expectedDxWRTencoderK{k}$ have $N$ coordinates, for $k \geq 1$.
As $P_k$ is of degree $k$ at most, Lemma \ref{lem:bound-on-the-complexity-of-P_K} in Section \ref{sec:proof-for-error-analysis}, evaluating a single monomial at $P_k[\intermediateencoderVect; d]$ \eqref{eq:P_k-by-abuse-of-notation} is $O(MNk)$ operations.
By the Lemma, $P_k$ has no more than $2^k k!$ monomials, so the total cost is $O(MNk \; 2^k k!)$.
In practice, there are roughly $1.56^k$ monomials in $P_k$, as seen in Figure \ref{fig:number-of-monomials-in-P_k} for $k\leq 25$, and so the actual cost is much smaller.

For $G\big(k, a; \intermediateencoderVect \big)$, computing $\left( \frac{P_j(\hat{x}, x)}{j!} \right)^{t_{j}}$ at \eqref{eq:combinatorial-G-in-terms-of-polynomials} for a given $\bm{t}$ and $j$ is $O(MN)$ operations. 
Accounting for the pointwise product over $j=1,\dots,k$, the cost for each $\bm{t}$ is $O(MNk)$. 
The entries of $G$ can be computed by iterating only once over the partitions $\bm{t}$ of all the integers $k$ with $0 \leq k \leq L$. For, given $\bm{t}$, an $M$-by-$N$ integrand $\prod_j \left( \frac{P_j(\hat{x}, x)}{j!} \right)^{t_{j}}$ can be added to $G(k, a)$ for all $a \geq |\bm{t}|$, at $O(MNL)$ operations. That is, the entire cost for a particular $\bm{t}$ is $O(MNk)+O(MNL) = O(MNL)$, and so $O(MNL \sum_{k=0}^L p(k))$ when iterating over all the integer partitions $\bm{t}$.


\medskip
\subsubsection{Complexity of a derivative tensor}
\label{subsub:complexity-of-RD-deriv-tensor}

We analyze the cost of the various derivative tensors of $Id - BA_\beta$ \eqref{eq:RD-operator-def}, assuming that the initial computations in \ref{subsub:problem-ind-initial-computations} and \ref{subsub:initial computations-at-a-point} were already done.

The memory complexity of $D^{b+m}_{\beta^b, \bm{x}^m} (Id - BA_\beta)[\inputmarginalVect](\hat{x})$ \eqref{eq:mixed-BA-deriv-in-thm} is $O(M^{m+1})$. 
The computational complexity of a derivative tensor $D^b_{\beta^b} (Id - BA_\beta)[\inputmarginalVect](\hat{x})$ \eqref{eq:repeated-beta-deriv-in-thm} with respect to $\beta$ alone is $O(MN)$.
For a mixed derivatives tensor $D^{b+m}_{\beta^b, \bm{x}^m} (Id - BA_\beta)[\inputmarginalVect](\hat{x})$ with $m > 0$, fix $\bm{\alpha} = (b, \bm{\alpha}_+) \in \bb{N}_0^{M+1}$ with $\bm{\alpha}_+ \neq 0$. For any $\hat{x}$ and $\bm{k} \in \bb{N}_0^M$ with $|\bm{k}| = b$, the complexity of the integrand under the summation $\sum$ at \eqref{eq:mixed-BA-deriv-in-thm} is $O(MN)$ operations.
It is an $N$-vector. Summing over the $\binom{b + M - 1}{b}$ choices for $\bm{k}$, the sum has a complexity of $O(\binom{b + M - 1}{b} \cdot MN)$ operations. 
The other operations at \eqref{eq:mixed-BA-deriv-in-thm} have a comparatively negligible cost.
The above needs to be performed for each $\hat{x}$ and $\bm{\alpha}_+$ with $|\bm{\alpha}_+|=m$, which is $M \cdot \binom{m + M - 1}{m}$ times.
Thus, the complexity of the mixed partial derivatives formula \eqref{eq:mixed-BA-deriv-in-thm} is $O(\binom{m + M - 1}{m} \binom{b + M - 1}{b} \cdot M^2 N)$ operations.

To a multi-index $\bm{\alpha}_+ \in \bb{N}_0^{M}$ may correspond multiple coordinates of a derivative tensor. 
For, a multi-index $\bm{\alpha}_+$ counts the number of derivations with respect to each of the $M$ coordinates of $\inputmarginalVect$, while a tensor entry stands for a particular order of the derivatives.
e.g., for a reproduction alphabet of size $M = 2$, $\bm{\alpha}_+ = (2, 1)$ represents taking two derivatives with respect to $\inputmarginalSymbol_1$, and one with respect to $\inputmarginalSymbol_2$. 
To it, correspond the tensor entries 
$\tfrac{\partial^3}{\partial \inputmarginalSymbol_2 \partial \inputmarginalSymbol_1 \partial \inputmarginalSymbol_1}$, $\tfrac{\partial^3}{\partial \inputmarginalSymbol_1 \partial \inputmarginalSymbol_2 \partial \inputmarginalSymbol_1}$ and $\tfrac{\partial^3}{\partial \inputmarginalSymbol_1 \partial \inputmarginalSymbol_1 \partial \inputmarginalSymbol_2}$ at \eqref{eq:mixed-deriv-def-evaluated-applied-to-vectors}, standing respectively for the $(2, 1, 1), (1, 2, 1)$ and $(1, 1, 2)$ entries of a derivative tensor with $m=3$. 
The number of tensor entries corresponding to a multi-index $\bm{\alpha}_+$ is $m!$ at most, with $O(M)$ copy operations for each tensor entry. Doing so for each $\bm{\alpha}_+$ is $O(\binom{m + M - 1}{m} m! M)$ operations.
While copying out partial derivatives to the various tensor entries need not be the most efficient solution, it is simple and straightforward to implement.

All in all, bounds on the complexities of RD derivative tensors are summarized in Table \ref{tab:complexities-of-RD-deriv-tensors} in Section \ref{part:details}.\ref{sec:computational-complexities}.

\medskip
\subsection{Complexity of root tracking for RD}
\label{sub:complexity-of-RD-root-tracking}

In this subsection we prove Theorem \ref{thm:complexity-of-RD-root-tracking} on the complexities of root tracking for RD. We do so by compiling the results of Sections \ref{sub:complexity-of-high-order-derivs-for-root-tracking} on the complexity of root-tracking and of \ref{sub:complexity-of-deriv-tensors-for-RD} on the complexity of RD derivative tensors (Table \ref{tab:complexities-of-RD-deriv-tensors} in Section \ref{part:details}.\ref{sec:computational-complexities}). The proof boils down to adding the various costs associated with the derivatives at a point, up to order $L$.

\begin{proof}[Proof of Theorem \ref{thm:complexity-of-RD-root-tracking}]
	In Section \ref{sub:complexity-of-deriv-tensors-for-RD}, we divided the cost of RD derivative tensors to initial computations and the calculation of the derivative tensors themselves.
	Using Table \ref{tab:complexities-of-RD-deriv-tensors} in Section \ref{part:details}.\ref{sec:computational-complexities}, initial computations at a point require the algebraic form of $P_k$, the expectations $\expectedDxWRTencoderK{k}$ and the matrices $P_k[\intermediateencoderVect; d]$ for $k = 0, \dots, L$; and the matrices $G(k,a)$. Summing over the respective table elements, their memory cost is:
	\begin{multline}			\label{eq:memory-cost-of-initial-computations}
		\sum_{k=0}^L \Big[ O\left( 2^k k! (k+1) \log_2 (k+1) \right) +	O(N) +	O(MN) \Big] + O(MN (L + 1) (L + 2) ) \\ =
		O\left( 2^L L! L^2 \ln L \right) + O(MNL^2) <
		O\left( e^{(L + \nicefrac{5}{2})\ln L - c_1 L + \ln \ln L + c_2} \right) + O(MNL^2) \\ <
		O\left( L^{(L + \nicefrac{5}{2}) } \ln L  \right) + O(MNL^2) 
	\end{multline}	
	Where, the first expression at the first line is bounded by its maximal value at $k = L$, the first inequality follows from $n! \leq 
	e^{(n + \nicefrac{1}{2})\ln n - n + c_2}$ \cite{robbins1955remark}, with $c_2 := \tfrac{1}{12} + \tfrac{1}{2} \ln (2\pi) \approx 1$, and by setting $c_1 := 1 - \ln 2 \approx 0.31$.
	
	For the initial computations,
	\begin{multline}			\label{eq:computational-cost-of-initial-computations}
		\sum_{k=0}^L \Big[ O(MN) + O(MNk \; 2^k k!) + O(MNL \; p(k)) \Big] \\ =
		\sum_{k=0}^L \Big[ O(MNL \; 2^k k!) + O(MNL \; p(k)) \Big] =
		\sum_{k=0}^L O(MNL \; 2^k k!) \\ =
		O(MN 2^L L! L^2) =
		O\left( MN \cdot e^{(L + \nicefrac{5}{2})\ln L - c_1 L + c_2} \right)
	\end{multline}
	Where, $k$ is bounded by $L$ at the first equality, $O(2^k k!) + O(p(k)) = O(2^k k!)$ by the bound $p(k) < \frac{e^{\pi \sqrt{2k/3}}}{k^{3/4}}$ \cite{pribitkin2009paritit_bound} at the second equality, the maximal element is taken at the third, and the above bound of \citeauthor{robbins1955remark} for factorial is used at the last.
	We comment that the tighter $O(MN L^2 1.56^L)$ can be seen to approximate the initial complexity accurately for $L \leq 25$. 
	cf., Table \ref{tab:complexities-of-RD-deriv-tensors} in Section \ref{part:details}.\ref{sec:computational-complexities}. 
	
	The cumulative memory cost of tensor memorization is given by \eqref{eq:memory-costs-in-prop-for-recursive-formula} (Proposition \ref{prop:computational-cost-of-high-order-derivs-when-tensors-are-memorized}).
	By \eqref{eq:total-computational-complexity-in-prop} at Proposition \ref{prop:total-computational-complexity-with-memorization}, for the computational complexity it suffices to calculate $\sum_{m=1}^L \sum_{b=0}^m C(b, m)$, where $C(b, m)$ is the complexity of $D^{m}_{\beta^b, \bm{x}^{m-b}} (Id - BA_\beta)$. We write,
	\begin{equation}			\label{eq:upper-bound-on-deriv-tensor-complexities}
		\sum_{m=1}^L \sum_{b=0}^{m-1} C(b, m) =
		\sum_{m=1}^L \sum_{b=0}^{m-1} \Big[ O(\genfrac{(}{)}{0pt}{1}{m - b + M - 1}{m - b} \genfrac{(}{)}{0pt}{1}{b + M - 1}{b} \cdot M^2 N)
		+ O(\genfrac{(}{)}{0pt}{1}{m - b + M - 1}{m - b} (m-b)! M) \Big]
	\end{equation}
	Where, we have ignored the complexity $O(MN)$ of the $L$ derivative tensors with respect to $\beta$ alone, those with $b = m$, as they are negligible compared to the below.
	We comment that the first summand in \eqref{eq:upper-bound-on-deriv-tensor-complexities} pertains to the cost of computing mixed partial derivatives, while the second summand corresponds to copy operations; see Section \ref{part:proofs}.\ref{subsub:complexity-of-RD-deriv-tensor}. 
	
	For the first summand at \eqref{eq:upper-bound-on-deriv-tensor-complexities}, by the formula $\sum_{k=0}^r \binom{r-k}{m} \binom{s+k}{n} = \binom{r+s+1}{m+n+1}$ \cite[Eq. (25) in 1.2.6]{knuth97art_vol1}, we have
	\begin{multline}			\label{eq:upper-bound-on-complexity-of-tensor-calculation}
		\sum_{m=1}^L \sum_{b=0}^{m-1} \binom{m+M-1-b}{M-1} \binom{M-1+b}{M-1} \leq 
		\sum_{m=1}^L \binom{m+2M-1}{2M-1} \\ \leq
		L \cdot \binom{2M+L-1}{2M-1} =
		L \cdot \frac{2M}{2M+L} \cdot \binom{2M+L}{2M} \\ \leq
		e^{(2M+L) H_e\left(\rho\right) + \tfrac{1}{2} \ln \tfrac{\rho L}{\pi}} 
	\end{multline}
	Where the first inequality follows by setting $r := m + M - 1$ and $s := M-1$ in the above formula, summing only over $b = 0, \dots, m-1$ rather than up to $r$. The second inequality follows by bounding with the largest summand $m = L$. For the last inequality, set $\rho := \tfrac{2M}{2M+L}$. It then follows from \cite[Lemma 17.5.1]{Cover2006}, up to straightforward modifications to measure entropy $H_e(\cdot)$ in nats instead of bits.
	
	For the second summand at \eqref{eq:upper-bound-on-deriv-tensor-complexities}, 
	\begin{multline}			\label{eq:upper-bound-on-spreading-tensor-coords}
		\sum_{m=1}^L \sum_{b=0}^{m-1} \binom{m - b + M - 1}{m-b} (m-b)!
		= \frac{1}{(M-1)!} \sum_{m=1}^L \sum_{b=0}^{m-1} (m - b + M - 1)! \\
		\leq
		L^2 \cdot \frac{M}{M+L} \frac{(M + L )!}{M!} \leq
		M L^2 \cdot (M+L)^{L-1} =
		e^{(L-1) \ln (M+L) + \ln(ML^2)}
	\end{multline}
	Where the inequalities follow by bounding $m - b$ by its largest value $L$, and $\tfrac{(M + L )!}{M!} = (M+1)\cdot (M+2) \cdots (M+L)$ by $(M+L)^L$.
	
	Plugging \eqref{eq:upper-bound-on-complexity-of-tensor-calculation} and \eqref{eq:upper-bound-on-spreading-tensor-coords} into \eqref{eq:upper-bound-on-deriv-tensor-complexities} and then back into formula \eqref{eq:total-computational-complexity-in-prop} for the computational complexity of root-tracking (Proposition \ref{prop:total-computational-complexity-with-memorization}), and adding the cost \eqref{eq:computational-cost-of-initial-computations} of initial computational, we obtain
	\begin{multline}		\label{eq:collecting-computational-costs}
		O\left( e^{(2M+L) H_e\left(\rho\right) + \tfrac{1}{2} \ln \tfrac{\rho L}{\pi} + \ln (M^2N) } \right) +
		O\left( e^{(L-1) \ln (M+L) + 2\ln(ML)} \right) \\ +
		O\left( e^{\nicefrac{9}{4} \cdot \ln L + (L+1) \ln M + \pi\sqrt{\nicefrac{2L}{3}}} \right) +
		O\left( e^{(L + \nicefrac{5}{2})\ln L - c_1 L + c_2 + \ln MN } \right)
	\end{multline}
	for the total computational cost when $L \geq 2$. These are bounded from above by,
	\begin{multline}
		O\left( e^{(2M+L) H_e(\rho) + (L + \nicefrac{5}{2}) \ln L + (L+1) \ln M + (L - 1) \ln (M+L) + \ln N + \pi \sqrt{\nicefrac{2L}{3}} } \right) \\ <
		O\left( e^{(2M+L) H_e(\rho) + (3L + \nicefrac{5}{2}) \ln (M + L) + \ln N + \pi \sqrt{\nicefrac{2L}{3}} } \right) \\ =
		O\left( N \cdot (M + L)^{(3L + \nicefrac{5}{2})} e^{(2M+L) H_e(\rho) + \pi \sqrt{\nicefrac{2L}{3}} } \right) 
	\end{multline}
	which completes the proof of \eqref{eq:computational-complexity-of-RD-root-tracking-for-L-geq-2-in-thm} in Theorem \ref{thm:complexity-of-RD-root-tracking}.
	
	When $L = 1$, one can see directly from \eqref{eq:total-computational-complexity-in-prop-for-first-order} of Proposition \ref{prop:total-computational-complexity-with-memorization} and from Table \ref{tab:complexities-of-RD-deriv-tensors} in Section \ref{part:details}.\ref{sec:computational-complexities} that the computational complexity in this case is
	\begin{equation}		\label{eq:collecting-computational-costs-for-L=1}
		O\left( M^3 \right) +
		O\left( MN \right) +
		O\left( M^3 N \right) +
		O\left( M^2 \right) =
		O\left( M^3 N \right) \;,
	\end{equation}
	proving \eqref{eq:computational-complexity-of-RD-root-tracking-for-L-=1-in-thm} at the Theorem.
	
	For the total memory cost \eqref{eq:memory-complexity-of-RD-root-tracking-in-thm}, we add the memory cost \eqref{eq:memory-cost-of-initial-computations} of the initial computations to that needed for storing the derivative tensors,	\eqref{eq:memory-costs-in-prop-for-recursive-formula} in Proposition \ref{prop:computational-cost-of-high-order-derivs-when-tensors-are-memorized}. This yields
	\begin{equation}
		O\left( L^{(L + \nicefrac{5}{2}) } \ln L \right) + O\left(MNL^2\right) +
		O\left( M^L L \right) \;,
	\end{equation}
	completing the proof.
\end{proof}

\medskip
\section{Binary Source with a Hamming distortion measure: an analytical solution}
\label{sec:binary-source-with-hamming-dist-appendix}

In this section, we develop the explicit equations of one of the simplest rate-distortion problems: a binary source with a Hamming distortion measure. As it admits an analytical solution, this problem can be used to verify the correctness of the theory and of its implementation. cf., Figure \ref{fig:derivative-calculation-loses-accuracy-near-bifurcation} for example.
This problem was used throughout with parameters $p = 0.3$ and $2^{-1} \leq \beta \leq 2^5$, unless stated otherwise.
\medskip

The \textit{Hamming distortion} is defined by
\begin{equation}		\label{eq:hamming-distortion-measure}
	d(x, \hat{x}) := \begin{cases}
		0 & \quad \text{if } x= \hat{x}, \\
		1 & \quad \text{if } x\neq \hat{x} \;.
	\end{cases}
\end{equation}
It is also called a \textit{probability of error distortion}, as $\bb{E}[d(X, \hat{X})] = Pr(X \neq \hat{X})$.
The proof of \cite[Theorem 10.3.1]{Cover2006} shows that the achieving distribution of a binary source $X\sim$ Bernoulli$(p)$, $p < \nicefrac{1}{2}$, with a Hamming distortion measure \eqref{eq:hamming-distortion-measure} is
\begin{equation}			\label{eq:analytical-solution-for-binary-source-with-hamming-at-appendix}
	Pr(\hat{X} = 1) = \frac{p-D}{1-2D} 
\end{equation}
when $D \leq p$, and $Pr(\hat{X} = 1) = 0$ otherwise (Eq. (10.21) there).
The rate-distortion function for this problem is
\begin{equation}			\label{eq:RD-func-for-binary-source-with-hamming-dist}
	R(D) = H(p) - H(D) \quad \text{for } 0 \leq D \leq \min\{p, 1-p\} \;,
\end{equation}
and zero-rate otherwise.

To exchange variables from $D$ to $\beta$, we use the relation $R'(D) = -\beta$ \cite[Theorem 2.5.1]{berger71}, when information is expressed in nats (logarithms taken in the natural basis). Plugging \eqref{eq:RD-func-for-binary-source-with-hamming-dist} in implies,
\begin{equation}
	-\beta = \ln \frac{D}{1-D}		\quad \Longleftrightarrow \quad
	D = \frac{1}{1 + e^{\beta}}
\end{equation}
for $\beta > 0$. Plugging this back into \eqref{eq:analytical-solution-for-binary-source-with-hamming-at-appendix} yields an analytical solution in terms of $\beta$,
\begin{equation}		\label{eq:explicit-solution-by-beta}
	Pr(\hat{X} = 1) = \frac{1 - p\cdot (1 + e^{\beta})}{1 - e^{\beta}}
\end{equation}
This problem has a unique bifurcation, occurring when $Pr(\hat{X} = 1)$ first hits 0, at
\begin{equation}		\label{eq:bifurcation-beta-of-binary-source-with-Hamming-distortion}
	\beta_c = \ln \frac{1 - p}{p} \;.
\end{equation}
So long that $\beta \geq \beta_c$, the exact solution is given by \eqref{eq:explicit-solution-by-beta}, and is otherwise constant.

Having an analytical solution \eqref{eq:explicit-solution-by-beta} in terms of $\beta$, one can easily evaluate its derivatives with respect to $\beta$ of any order, at any point.

\fi

\newpage
\bibliographystyle{plain}
\bibliography{my_bib}

\begin{thebibliography}{50}
\providecommand{\natexlab}[1]{#1}
\providecommand{\url}[1]{\texttt{#1}}
\expandafter\ifx\csname urlstyle\endcsname\relax
  \providecommand{\doi}[1]{doi: #1}\else
  \providecommand{\doi}{doi: \begingroup \urlstyle{rm}\Url}\fi

\bibitem[Agmon et~al.(2021)Agmon, Benger, Ordentlich, and
  Tishby]{agmon2021critical}
S.~Agmon, E.~Benger, O.~Ordentlich, and N.~Tishby.
\newblock Critical slowing down near topological transitions in rate-distortion
  problems.
\newblock In \emph{2021 IEEE International Symposium on Information Theory
  (ISIT)}, pages 2625--2630. IEEE, 2021.

\bibitem[Aguilar(2021)]{aguilar2021analysis}
C.~O. Aguilar.
\newblock {An Introduction to Real Analysis}.
\newblock
  \url{https://www.geneseo.edu/~aguilar/public/assets/courses/324/real-analysis-notes.pdf},
  2021.
\newblock [Online; accessed 14-October-2021].

\bibitem[Andrews(1998)]{andrews1998theory}
G.~E. Andrews.
\newblock \emph{The theory of partitions}.
\newblock Number~2. Cambridge university press, 1998.

\bibitem[Arimoto(1972)]{arimoto1972}
S.~Arimoto.
\newblock An algorithm for computing the capacity of arbitrary discrete
  memoryless channels.
\newblock \emph{IEEE Transactions on Information Theory}, 18\penalty0
  (1):\penalty0 14--20, 1972.

\bibitem[Atkinson et~al.(2011)Atkinson, Han, and
  Stewart]{atkinson2011numerical}
K.~Atkinson, W.~Han, and D.~E. Stewart.
\newblock \emph{Numerical solution of ordinary differential equations}, volume
  108.
\newblock John Wiley \& Sons, 2011.

\bibitem[Baker~Jr(1975)]{baker1975essentials}
G.~A. Baker~Jr.
\newblock Essentials of pad{\'e} approximants.
\newblock 1975.

\bibitem[Berger(1971)]{berger71}
T.~Berger.
\newblock \emph{Rate Distortion Theory: A Mathematical Basis for Data
  Compression}.
\newblock Prentice-Hall, 1971.

\bibitem[Blahut(1972)]{blahut1972}
R.~Blahut.
\newblock Computation of channel capacity and rate-distortion functions.
\newblock \emph{IEEE Transactions on Information Theory}, 18\penalty0
  (4):\penalty0 460--473, 1972.

\bibitem[Blau and Michaeli(2019)]{blau2019rethinking}
Y.~Blau and T.~Michaeli.
\newblock Rethinking lossy compression: The rate-distortion-perception
  tradeoff.
\newblock In \emph{International Conference on Machine Learning}, pages
  675--685. PMLR, 2019.

\bibitem[Butcher(2000)]{butcher2000numerical}
J.~C. Butcher.
\newblock Numerical methods for ordinary differential equations in the 20th
  century.
\newblock \emph{Journal of Computational and Applied Mathematics}, 125\penalty0
  (1-2):\penalty0 1--29, 2000.

\bibitem[Butcher(2016)]{butcher2016numerical}
J.~C. Butcher.
\newblock \emph{Numerical methods for ordinary differential equations}.
\newblock John Wiley \& Sons, 3rd edition, 2016.

\bibitem[Constantine and Savits(1996)]{constantine1996multivariate}
G.~Constantine and T.~Savits.
\newblock A multivariate faa di bruno formula with applications.
\newblock \emph{Transactions of the American Mathematical Society},
  348\penalty0 (2):\penalty0 503--520, 1996.

\bibitem[Cover and Thomas(2006)]{Cover2006}
T.~M. Cover and J.~A. Thomas.
\newblock \emph{{Elements of Information Theory, 2nd Edition}}.
\newblock John Wiley \& Sons, 2006.
\newblock ISBN 0471241954.

\bibitem[Csisz{\'a}r(1974)]{csiszar1974computation}
I.~Csisz{\'a}r.
\newblock On the computation of rate-distortion functions (corresp.).
\newblock \emph{IEEE Transactions on Information Theory}, 20\penalty0
  (1):\penalty0 122--124, 1974.

\bibitem[de~Azevedo~Pribitkin(2009)]{pribitkin2009paritit_bound}
W.~de~Azevedo~Pribitkin.
\newblock Simple upper bounds for partition functions.
\newblock \emph{The Ramanujan Journal}, 18\penalty0 (1):\penalty0 113--119,
  2009.

\bibitem[de~Oliveira(2014)]{de2014implicit}
O.~de~Oliveira.
\newblock The implicit and the inverse function theorems: easy proofs.
\newblock \emph{Real Analysis Exchange}, 39\penalty0 (1):\penalty0 207--218,
  2014.

\bibitem[Dieudonn{\'e}(1969)]{dieudonne1969foundations}
J.~Dieudonn{\'e}.
\newblock \emph{Foundations of modern analysis}.
\newblock Academic Press, third printing, 1969.

\bibitem[Dummit and Foote(2004)]{dummit2004abstract_alg}
D.~S. Dummit and R.~M. Foote.
\newblock \emph{Abstract algebra}.
\newblock John Wiley \& Sons, Inc., 3rd edition, 2004.

\bibitem[Fa{\`a}~di Bruno(1855)]{faa1855sullo}
F.~Fa{\`a}~di Bruno.
\newblock Sullo sviluppo delle funzioni.
\newblock \emph{Annali di scienze matematiche e fisiche}, 6:\penalty0 479--480,
  1855.

\bibitem[Fa{\`a}~di Bruno(1857)]{faadi1857note}
F.~Fa{\`a}~di Bruno.
\newblock Note sur une nouvelle formule de calcul diff{\'e}rentiel.
\newblock \emph{Quarterly J. Pure Appl. Math}, 1\penalty0 (359-360):\penalty0
  12, 1857.

\bibitem[Gilad-Bachrach et~al.(2003)Gilad-Bachrach, Navot, and
  Tishby]{bachrach2003}
R.~Gilad-Bachrach, A.~Navot, and N.~Tishby.
\newblock {An information theoretic tradeoff between complexity and accuracy}.
\newblock \emph{Learning Theory and Kernel}, 2003.
\newblock URL
  \url{http://link.springer.com/chapter/10.1007/978-3-540-45167-9_43}.

\bibitem[Gottlieb(2006)]{gottlieb2006}
S.~Gottlieb.
\newblock Lecture notes on "{E}uler's {M}ethod, {T}aylor {S}eries {M}ethod,
  {R}unge {K}utta {M}ethods, {M}ulti-{S}tep {M}ethods and {S}tability".
\newblock \url{http://www.cfm.brown.edu/people/sg/AM35odes.pdf}, 2006.

\bibitem[Hardy(2006)]{hardy2006combinatorics}
M.~Hardy.
\newblock Combinatorics of partial derivatives.
\newblock \emph{The Electronic Journal of Combinatorics}, 13\penalty0
  (R1):\penalty0 1, 2006.

\bibitem[Johnson(2002)]{johnson2002curious}
W.~P. Johnson.
\newblock The curious history of fa{\`a} di bruno's formula.
\newblock \emph{The American mathematical monthly}, 109\penalty0 (3):\penalty0
  217--234, 2002.

\bibitem[Kielh{\"o}fer(2012)]{kielhofer2011bifurcation}
H.~Kielh{\"o}fer.
\newblock \emph{Bifurcation Theory An Introduction with Applications to Partial
  Differential Equations}.
\newblock Springer, 2nd edition, 2012.
\newblock \doi{10.1007/978-1-4614-0502-3}.

\bibitem[Knuth(1997)]{knuth97art_vol1}
D.~E. Knuth.
\newblock \emph{The Art of Computer Programming, Volume 1 (3rd Ed.):
  Fundamental Algorithms}.
\newblock Addison Wesley Longman Publishing Co., Inc., USA, 1997.
\newblock ISBN 0201896834.

\bibitem[Kuznetsov(2004)]{kuznetsov2004elements}
Y.~A. Kuznetsov.
\newblock \emph{Elements of applied bifurcation theory}, volume 112.
\newblock Springer Science \& Business Media, 3rd edition, 2004.

\bibitem[Ma(2009)]{ma2009higher}
T.-W. Ma.
\newblock Higher chain formula proved by combinatorics.
\newblock \emph{the electronic journal of combinatorics}, pages N21--N21, 2009.

\bibitem[Matz and Duhamel(2004)]{matz2004information}
G.~Matz and P.~Duhamel.
\newblock Information geometric formulation and interpretation of accelerated
  blahut-arimoto-type algorithms.
\newblock In \emph{Information theory workshop}, pages 66--70. IEEE, 2004.

\bibitem[Munkres(2000)]{munkres2000topology}
J.~R. Munkres.
\newblock \emph{Topology}.
\newblock Prentice Hall, second edition, 2000.

\bibitem[Nakagawa et~al.(2021)Nakagawa, Takei, Hara, and
  Watabe]{nakagawa2021analysis}
K.~Nakagawa, Y.~Takei, S.-i. Hara, and K.~Watabe.
\newblock Analysis of the convergence speed of the arimoto-blahut algorithm by
  the second-order recurrence formula.
\newblock \emph{IEEE Transactions on Information Theory}, 67\penalty0
  (10):\penalty0 6810--6831, 2021.

\bibitem[Ortega(1990)]{ortega1990numerical}
J.~M. Ortega.
\newblock \emph{Numerical analysis: a second course}.
\newblock SIAM, 1990.

\bibitem[Parker et~al.(2010)Parker, Dimitrov, and Gedeon]{parker2010symmetry}
A.~E. Parker, A.~G. Dimitrov, and T.~Gedeon.
\newblock Symmetry breaking in soft clustering decoding of neural codes.
\newblock \emph{IEEE transactions on information theory}, 56\penalty0
  (2):\penalty0 901--927, 2010.

\bibitem[Robbins(1955)]{robbins1955remark}
H.~Robbins.
\newblock A remark on stirling's formula.
\newblock \emph{The American mathematical monthly}, 62\penalty0 (1):\penalty0
  26--29, 1955.

\bibitem[Roman(1980)]{roman1980formula}
S.~Roman.
\newblock The formula of faa di bruno.
\newblock \emph{The American Mathematical Monthly}, 87\penalty0 (10):\penalty0
  805--809, 1980.

\bibitem[Rose(1994)]{rose1994mapping}
K.~Rose.
\newblock A mapping approach to rate-distortion computation and analysis.
\newblock \emph{IEEE Transactions on Information Theory}, 40\penalty0
  (6):\penalty0 1939--1952, 1994.

\bibitem[Rose(1998)]{rose1998deterministic}
K.~Rose.
\newblock Deterministic annealing for clustering, compression, classification,
  regression, and related optimization problems.
\newblock \emph{Proceedings of the IEEE}, 86\penalty0 (11):\penalty0
  2210--2239, 1998.

\bibitem[Rose et~al.(1990)Rose, Gurewitz, and Fox]{rose1990deterministic}
K.~Rose, E.~Gurewitz, and G.~Fox.
\newblock A deterministic annealing approach to clustering.
\newblock \emph{Pattern Recognition Letters}, 11:\penalty0 589--594, 1990.
\newblock \doi{10.1016/0167-8655(90)90010-Y}.

\bibitem[Rotman(2017)]{rotman2017advanced}
J.~J. Rotman.
\newblock \emph{Advanced Modern Algebra, Part 2}, volume 180.
\newblock American Mathematical Soc., 3rd edition, 2017.

\bibitem[Sayir(2000)]{sayir2000iterating}
J.~Sayir.
\newblock Iterating the arimoto-blahut algorithm for faster convergence.
\newblock In \emph{2000 IEEE International Symposium on Information Theory
  (Cat. No. 00CH37060)}, page 235. IEEE, 2000.

\bibitem[Shannon(1948)]{shannon1948mathematical}
C.~E. Shannon.
\newblock A mathematical theory of communication.
\newblock \emph{The Bell system technical journal}, 27\penalty0 (3):\penalty0
  379--423, 1948.

\bibitem[Shannon(1959)]{shannon1959fidelity}
C.~E. Shannon.
\newblock Coding theorems for a discrete source with a fidelity criterion.
\newblock \emph{IRE Nat. Conv. Rec}, 4\penalty0 (142-163):\penalty0 1, 1959.

\bibitem[Sims(2016)]{sims2016rate}
C.~R. Sims.
\newblock Rate--distortion theory and human perception.
\newblock \emph{Cognition}, 152:\penalty0 181--198, 2016.

\bibitem[Stanley(2011)]{stanley2011enumerative}
R.~P. Stanley.
\newblock Enumerative combinatorics, volume 1 second edition.
\newblock \emph{Cambridge studies in advanced mathematics}, 2011.

\bibitem[Sutter et~al.(2015)Sutter, Sutter, Esfahani, and
  Lygeros]{sutter2015efficient}
T.~Sutter, D.~Sutter, P.~M. Esfahani, and J.~Lygeros.
\newblock Efficient approximation of channel capacities.
\newblock \emph{IEEE Transactions on Information Theory}, 61\penalty0
  (4):\penalty0 1649--1666, 2015.

\bibitem[Thaheem and Laradji(2003)]{thaheem2003classroom}
A.~Thaheem and A.~Laradji.
\newblock Classroom note: A generalization of leibniz rule for higher
  derivatives.
\newblock \emph{International Journal of Mathematical Education in Science and
  Technology}, 34\penalty0 (6):\penalty0 905--907, 2003.

\bibitem[Tishby et~al.(1999)Tishby, Pereira, and Bialek]{tishby1999}
N.~Tishby, F.~C. Pereira, and W.~Bialek.
\newblock The information bottleneck method.
\newblock In \emph{The 37th annual Allerton Conference on Communication,
  Control, and Computing}, pages 368--377, 1999.

\bibitem[Weisstein()]{wolframPadeApprox}
E.~W. Weisstein.
\newblock Pad\'{e} {A}pproximant.
\newblock URL \url{https://mathworld.wolfram.com/PadeApproximant.html}.
\newblock Visited on April 3rd, 2022.

\bibitem[{Yu}(2010)]{Yu10squeezing}
Y.~{Yu}.
\newblock Squeezing the {A}rimoto-{B}lahut algorithm for faster convergence.
\newblock \emph{IEEE Transactions on Information Theory}, 56\penalty0
  (7):\penalty0 3149--3157, 2010.

\bibitem[Zemel(2019)]{zemel2019combinatorics}
S.~Zemel.
\newblock The combinatorics of higher derivatives of implicit functions.
\newblock \emph{Monatshefte f{\"u}r Mathematik}, 188\penalty0 (4):\penalty0
  765--784, 2019.

\end{thebibliography}

\end{document}